\definecolor{Gray}{gray}{0.9}
\definecolor{LightCyan}{rgb}{0.88,1,1}
\newcolumntype{g}{>{\columncolor{Gray}}c}
\newtheorem{theorem}{Theorem}
\newtheorem{corollary}{Corollary}
\newtheorem{lemma}{Lemma}
\newtheorem{definition}{Definition}
\newtheorem{assumption}{Assumption}
\newtheorem{proposition}{Proposition}
\begin{document}

\title{Online Combinatorial Auctions for Resource Allocation with Supply Costs and Capacity Limits\thanks{This paper was published by IEEE Journal on Selected Areas in Communications, vol. 38, no. 4, pp. 655-668, April 2020. A preliminary version of this paper was presented at ACM MAMA Workshop -- in conjunction with SIGMETRICS 2019. (\textit{Corresponding Author: Xiaoqi Tan})}
}

\author{Xiaoqi Tan\thanks{Department of Computing Science, University of Alberta. Email: {\tt xiaoqi.tan@ualberta.ca.} This work was done while the author was with the University of Toronto. } 
\and Alberto Leon-Garcia\thanks{University of Toronto. Email: {\tt alberto.leongarcia@utoronto.ca}} 
\and Yuan Wu\thanks{University of Macau. Email: {\tt yuanwu@um.edu.mo}} 
\and Danny H.K. Tsang\thanks{Hong Kong University of Science and Technology. Email: {\tt eetsang@ust.hk}} 
}

\date{}

\begin{titlepage}
\maketitle
\thispagestyle{empty}

\begin{abstract}
We study a general online combinatorial auction problem in algorithmic mechanism design. A provider allocates multiple types of capacity-limited resources to customers that arrive in a sequential and arbitrary manner. Each customer has a private valuation function on bundles of resources that she can purchase (e.g., a combination of different resources such as CPU and RAM in cloud computing). The provider charges payment from customers who purchase a bundle of resources and incurs an increasing supply cost with respect to the totality of resources allocated. The goal is to maximize the social welfare, namely, the total valuation of customers for their purchased bundles, minus the total supply cost of the provider for all the resources that have been allocated. We adopt the competitive analysis framework and provide posted-price mechanisms with optimal competitive ratios.  Our pricing mechanism is \textit{optimal} in the sense that no other online algorithms can achieve a better competitive ratio. We validate the theoretic results via empirical studies of online resource allocation in cloud computing. Our numerical results demonstrate that the proposed pricing mechanism is competitive and robust against system uncertainties and outperforms existing benchmarks.
\end{abstract}

\end{titlepage}

\section{Introduction}
Many auction problems involve allocation of distinct types of resources concurrently. For example, customers in auction-based cloud computing platforms can bid on virtual machines or containers with a package of resources such as CPU and RAM. In these problems, customers often have preferences for bundles or combinations of  different items, instead of a single one \cite{CA_survey}. For this reason, pricing and allocating resources to customers with combinatorial preferences or valuations, termed as combinatorial auctions (CAs) \cite{AMD2001} \cite{CA_PNAS}, play a critical role in enhancing economic efficiency. This is also considered a hard-core problem in algorithmic mechanism design \cite{AMD2001}. 

In this paper, we study an online version of CAs for resource allocation with supply costs and capacity limits. A single provider allocates multiple types of capacity-limited resources to customers that  arrive in a sequential and arbitrary manner. Each customer has a valuation function on possible bundles of resources that she wants to purchase.  
The provider charges payment from customers who purchase a bundle of resources and incurs an increasing marginal supply cost (i.e., the derivative of the supply cost function) per unit of consumed resource.  The goal is to  maximize the social welfare, namely, the total valuation of customers for their purchased bundles, minus the supply cost of the provider for all the resources allocated. 

When online CAs are subject to increasing supply costs and capacity limits, a fundamental challenge is how to properly price the resources in the absence of future information. Specifically, if the resources are sold too cheaply (i.e., too aggressive), then an excessive portion of them may be purchased by earlier customers with low valuations. This will increase the total cost for the provider and thus the price, which consequently prevents later customers from purchasing the resources even if their valuations are higher than the earlier ones. On the other hand, if the price is set too high (i.e., too conservative), then the provider may lose customers, leading to poor performance as well. This paper tackles this challenge by proposing pricing  mechanisms that achieve an optimal balance between aggressiveness and conservativeness without future information, leading to  the best-achievable competitive ratios under arbitrary increasing marginal cost functions.

Our results are applicable to a variety of resource allocation problems in the emerging paradigms of networking and computing systems.  For example, for auction-based resource allocation in infrastructure-as-a-service clouds,  providers can charge their users with a certain payment mechanism while also paying a considerable amount of energy costs to maintain the computing servers \cite{data_center_power}.  Another example is 5G network slicing, one of the key elements of 5G communications \cite{network_slicing}. The ultimate goal of network slicing is to dynamically package different types of network resources (e.g., the base stations and the spectrum channels) for different customers. Here,  the network operator needs to consider the cost for providing these resources. In this regard, the model studied in this paper offers a promising option to address such resource allocation problems in 5G network slicing.

\subsection{Related Work}
Online CAs without supply costs, which is essentially an online set-packing problem \cite{CA_survey},  has been widely studied, including online auctions \cite{Bartal2003}, \cite{Buchbinder2015}, online  matching \cite{b_matching} \cite{OM_concave}, AdWords problems \cite{Adword2009}, \cite{Mehta2013},  online covering  and packing problems \cite{covering_packing}, \cite{Buchbinder2009}, and online knapsack problems \cite{knapsack}. Among them, the authors of \cite{Bartal2003} studied an online CA problem and proposed an  $ O(\log(v_{\max}/v_{\min})) $-competitive online algorithm when there are $\Omega(\log(v_{\max}/v_{\min}))$ copies of each item and each customer's  valuation is assumed to be within the interval of $ [v_{\min},v_{\max} ] $.  Similar results have also been reported for online knapsack problems \cite{knapsack}. By assuming that the weight of each item is much smaller than the capacity of the knapsack, and that the value-to-weight ratio of each item is bounded within the interval of $ [L,U] $, the authors of \cite{knapsack} proposed  an algorithm which is $ (1+\ln(U/L)) $-competitive. 

One of the common assumptions made in the above papers is that the resources  can be allocated without incurring an increasing supply cost for the provider. Although this assumption is reasonable for the allocation of digital goods \cite{Mehta2013}, it may not hold for most paradigms of network resource managements, where the production cost or the operational cost is an increasing function of the allocated resources. Motivated by this, Blum et al. \cite{Blum2011} pioneered the study of online CAs with an increasing production cost.  In this setting, the provider can produce any number of copies of the items being sold (i.e., without capacity limit), but needs to pay an increasing marginal production cost per copy. Blum et al. proposed a pricing scheme called \textit{twice-the-index} for several reasonable marginal production cost functions such as linear, lower-degree polynomial and logarithmic functions. For each of these functions, a constant competitive ratio was derived. Huang et al. \cite{Huang2018} later studied a similar problem and achieved a tighter competitive ratio with a unified pricing framework. In particular, for power cost functions, they proved that the optimal competitive ratio can be achieved when there is no capacity limit. In contrast to \cite{Blum2011} and \cite{Huang2018}, in this work we prove that in the capacity-limited case, direct application of the pricing schemes designed in \cite{Blum2011} and \cite{Huang2018} is suboptimal, and a tighter (and optimal) competitive ratio can be achieved by our newly proposed pricing schemes.

\subsection{Major Contributions}
We develop an optimal posted-price mechanism (PPM), dubbed $\text{PPM}_\phi$, for online CAs with supply costs and capacity limits. $\text{PPM}_\phi$ is  optimal in the sense that no other online algorithms can achieve a tighter/better competitive ratio. One of the key elements in $\text{PPM}_\phi$ is a strategically-designed pricing function $ \phi $ that determines the selling price based on the current resource utilization levels only. In the general case where the supply cost function is convex and differentiable, we prove that the necessary and sufficient conditions for $ \text{PPM}_\phi $ to be competitive are related to the existence of an increasing pricing function $ \phi $ for a group of first-order two-point boundary value problems (BVPs) in the field of ordinary differential equations (ODEs) \cite{Perko2001, ODE1973}. We derive structural results based on these BVPs that lead to a fundamental characterization of the optimal competitive ratios and the optimal pricing functions. To validate our structural results,  we perform a case study when the supply cost function is a power function (e.g., $ f(y) = ay^s $),  which is an important case  widely exploited \cite{Blum2011, Huang2018, Adam_speed_scaling, speed_scaling}, and show that both the optimal competitive ratios and the corresponding optimal pricing functions can be characterized in analytical forms with some low-complexity numerical computations. Our optimal analytical results for the power cost function improve or generalize the  results in several previous studies, e.g., \cite{Blum2011}, \cite{Huang2018}, \cite{XZhang_ToN}, \cite{bo2018}. Moreover, our structural results can also be extended to general settings of online resource allocation with heterogeneous supply costs and multiple time slots. 

\section{The Basic Resource Allocation Model}
This section presents the basic model, the technical assumptions and the definition of competitive ratios for online resource allocation with supply costs and capacity limits. 

\subsection{The Basic Model}
\label{problem_setup}
We consider a single provider who allocates a set $ \mathcal{K}=\{1,\cdots,K\} $ of $ K $ types of resources to its customers. Each type of resource $ k\in\mathcal{K} $ is associated with a cost function $ f_k(y) $, where $ f_k(y) $ denotes the total supply cost of providing $ y $ units of resource $ k $. For example, if resource $ k $ represents the computing cycles in cloud/fog/edge computing \cite{fog_computing}, then the supply cost $ f_k(y) $ can represent the electricity cost of maintaining the computing servers. In the following we will also frequently use the derivative of $ f_k $, i.e., the marginal cost function $ f_k' $. \textit{For simplicity of exposition, we assume that the cost functions are identical for all types of resources, and thus we drop the index $ k $ and simply use $ f $ to denote the supply cost function of all resource types}. Our results are applicable to general cases with heterogeneous cost functions, and we will provide our general results in Section \ref{extension}. 

We consider an online setting where customers arrive one at a time in some arbitrary manner. In particular, for a set of customers $ \mathcal{N}=\{1,2,\cdots,N\} $, we denote the arrival time of customer $ n $ by $ t_n $. Meanwhile, we assume without loss of generality that $ t_1\leq t_2\leq \cdots\leq t_N $, where ties are broken arbitrarily if multiple customers arrive simultaneously. Each customer $ n $ wants to get a bundle of resources $ b\in\mathcal{B} $ based on their own preferences, where $ \mathcal{B} $ denotes the set of all the possible bundles (including the empty bundle $ \emptyset $). A bundle $ b $ of resources is denoted by the vector $ (r_1^b,\cdots,r_K^b) $, where $ r_k^b $ denotes the number of units for resource $ k\in\mathcal{K} $.  We consider the case of limited-supply, and normalize the capacity limit to be 1 for each resource type. Therefore,  $ r_k^b $ is also normalized to be the proportion of the capacity limit accordingly. Each customer $ n $ has a private valuation function $ v_n: \mathcal{B}\rightarrow \mathbb{R} $, where $ v_n(b) $  denotes the valuation of  customer $ n $  for getting bundle $ b\in\mathcal{B} $. For simplicity of notations, we denote the valuation by $ v_n^b = v_n(b) $  if customer $ n $ gets bundle $ b\in\mathcal{B} $. In the following we may use $ v_n^b $ and $ v_n(b) $ interchangeably.  We do not make any assumption regarding the valuation functions (except that $ v_n(\emptyset) = 0 $, i.e., valuation of the empty bundle is zero). 

In the standard setting of online CAs, the provider does not have any information about the customers. Upon the arrival of each customer $ n\in\mathcal{N} $, the customer reports a valuation function $ \hat{v}_n $ to the provider. The valuation function $ \hat{v}_n $ may or may not be the true valuation of customer $ n $ (i.e., customers may  strategically  manipulate their  bids). 
The provider collects the valuation function $ \hat{v}_n $ from customer $ n $ and decides an irrevocable decision about whether to accept this customer or not. The provider will wait for the next customer $ n+1 $  if customer $ n $ is rejected (or customer $ n $ gets an empty bundle $ \emptyset $). Otherwise, the provider needs to determine the payment $ \pi_n $ to be collected from customer $ n $ based on the  known information (including current valuation function $ \hat{v}_n $ and all the previous valuation functions before customer $ n $), 
and then allocates a bundle  $ b_n\in\mathcal{B} $ of resources to customer $ n $. The resulting \textit{payment rule}  (i.e., the determination of $ \{\pi_n\}_{\forall n} $)  and the \textit{allocation rule} (i.e., the determination of $ \{b_n\}_{\forall n} $) constitute an  \textit{online mechanism}. An important economic objective in mechanism design is \textit{incentive compatibility}. Specifically, a mechanism is incentive compatible or truthful if each customer maximizes its own quasilinear utility, i.e., $ v_n(b_n) - \pi_n $, by reporting the true valuation function, namely,  $ \hat{v}_n = v_n $. 

The objective is to design the payment rule  to incentivize customers to truthfully report their valuation functions, and the allocation rule
to maximize the social welfare $ \sum_{n\in\mathcal{N}} v_n(b_n) - \sum_{k\in\mathcal{K}}f(y_k) $. 

\subsection{Assumptions}\label{assumptions}
We make the following assumptions throughout the paper. 

\begin{assumption}\label{assumption_cost_function}
	The cost function 
	$ f(y) $ is differentiable and strictly-convex in $ y\in [0,1] $ and  $ f(0) = 0 $. 
\end{assumption}

If we denote the set of all differentiable and strictly-convex cost functions with $ f(0) = 0 $ by $ \mathcal{F} $, then \textit{Assumption \ref{assumption_cost_function} states that we only focus on the cases when $ f\in\mathcal{F} $}. In the following we will frequently use the minimum and maximum marginal costs defined as follows:
\begin{equation}
\underline{c} \triangleq f'(0), \overline{c} \triangleq f'(1).
\end{equation}
Intuitively, if $ f $ is known to the provider, then $ \underline{c} $ and $ \overline{c} $ are known to the provider as well. Note that a given cost function $ f\in  \mathcal{F} $ always has a strictly-increasing marginal cost $ f' $. 

\begin{assumption}\label{assumption_small}
	For each resource type $ k\in\mathcal{K} $, the number of units in each bundle $ b\in\mathcal{B} $ is much smaller than the total capacity limit, i.e.,  $ r_k^b\ll 1$.
\end{assumption}

Assumption \ref{assumption_small} states that allocating a bundle of resources to a single customer does not substantially influence the overall system and market (i.e., each customer's demand is very small), and thus allows us to focus on the online nature of the problem with mathematical convenience. In large-scale systems (e.g., when $ N $ is large), Assumption \ref{assumption_small} naturally holds. 

\begin{assumption}\label{assumption_bound}
	The \textit{per-unit-valuation} (PUV) of all customers, defined as $ v_n^b/r_k^b$,  is upper bounded by $ \overline{p} $, namely,
	\begin{equation}
	\max_{ n\in\mathcal{N},b\in\mathcal{B},k\in\mathcal{K}, r_k^b\neq 0}\ \left\{v_n^b/r_k^b\right\} \leq \overline{p}.
	\end{equation}
\end{assumption}

We will refer to $ \overline{p} $ as the \textit{upper bound} hereinafter.   Since $ r_k^b $ is finite, Assumption \ref{assumption_bound} states that the outputs of the value function $ v_n(\cdot) $ are upper bounded, and thus it helps to eliminate those irrational cases with extremely-high valuations. Alternatively, $ \overline{p} $ can be interpreted as the maximum price customers are willing to pay for purchasing a single unit of resource. Throughout the paper we also assume $ \overline{p}>\underline{c} $ in order to ensure that the problem setup is interesting. Otherwise, no resources will be allocated.

\subsection{Competitive Analysis}
We categorize all the parameters defined previously into the following two groups:
\begin{enumerate}
	\item The \textit{Setup} $\mathcal{S}$: all the parameters known at the beginning,  including the cost function $ f\in\mathcal{F} $, the upper bound $ \overline{p} $, the set of resource types $ \mathcal{K} $, and the set of bundles $ \mathcal{B} $.   
	
	\item The \textit{Arrival Instance} $ \mathcal{A} $: all the parameters revealed over time, including the set of customers $ \mathcal{N} $, their arrival times $ \{t_n\}_{\forall n\in\mathcal{N}} $, and the valuation functions $ \{v_n(\cdot)\}_{\forall n\in\mathcal{N}} $. 
\end{enumerate}

An arrival instance $ \mathcal{A} $ consists of all the information in the customer side that is not known to the provider a priori.  In the offline setting when we assume a complete knowledge of $ \mathcal{A} $,  the optimal social welfare $ W_{\textsf{opt}}(\mathcal{A}) $ can be obtained by solving the following mixed-integer  program:
\begin{subequations}\label{SWM}
	\begin{alignat}{2}
	W_{\textsf{opt}}(\mathcal{A}) =\   
	& \underset{\mathbf{x,y}}{\mathrm{maximize}} &  &   \sum_{n\in\mathcal{N}}\sum_{b\in\mathcal{B}} v_n^b x_n^b - \sum_{k\in\mathcal{K}} f\left(y_k\right), \\
	&\mathrm{subject\ to}\ & & \sum_{n\in\mathcal{N}}\sum_{b\in\mathcal{B}} r_k^b x_n^b = y_k, \forall k, \label{total_utilization_original}\\ 
	&    & & \sum_{b\in\mathcal{B}} x_n^b \leq 1,\forall n,\label{at_most_one_bundle}\\
	&    & & 0 \leq y_k \leq 1, \forall k,\label{capacity_limits}\\
	&    & & x_n^b\in \{0,1\},  \forall n,b,
	\end{alignat}
\end{subequations}
where $ x_n^b\in \{0,1\} $ is a binary variable that represents the status of bundle $ b $ for customer $ n $, and $ y_k $ denotes the total resource consumption of resource type $ k $ in the end. In particular, $ x_n^b = 1 $ means that bundle $ b $ is allocated to customer $ n $, and $ x_n^b = 0 $ otherwise. It is possible that $ x_n^b = 0 $ for all $ b\in\mathcal{B} $, meaning that customer $ n $ will leave without making any purchase. Constraint \eqref{at_most_one_bundle} indicates that at most one bundle will be allocated to each customer. Constraint \eqref{capacity_limits} denotes the normalized capacity limit for resource type $ k\in\mathcal{K} $. 

In the online setting when customers are revealed one-by-one in a sequential manner, the social welfare performance, denoted by $ W_{\textsf{online}}(\mathcal{A}) $, can be quantified via the standard competitive analysis framework \cite{Borodin1998}. Specifically, an online mechanism is $ \alpha $-competitive if
\begin{equation}\label{def_cr}
W_{\textsf{online}}(\mathcal{A}) \geq \frac{1}{\alpha}W_{\textsf{opt}}(\mathcal{A}) 
\end{equation}   
holds for all possible arrival instances $ \mathcal{A} $, where $ \alpha\geq 1 $. Our target is to design an online mechanism such that $ W_{\textsf{online}} $ is as close to $ W_{\textsf{opt}} $ as possible, i.e., $\alpha$ is as close to 1 as possible.  

\begin{algorithm}
	\caption{PPM with Pricing Function $ \phi $ ($ \text{PPM}_{\phi} $) }\label{online_mechanism}	
	\begin{algorithmic}[1]
		\STATE \textbf{Input:} Setup $ \mathcal{S} $ and $ \phi$, and initialize $ y_k^{(0)} = 0, \forall k$.   
		
		\WHILE{a new customer $ n $ arrives}
		\STATE Offer resource $ k\in\mathcal{K} $ at price $ p_k^{(n)}$ as follows: \label{publish_price}
		\begin{align}\label{pricing_function}
			p_k^{(n)} = 
			\phi(y_k^{(n-1)}). 
		\end{align}
		
		\STATE Customer chooses the utility-maximizing bundle\\ $ b_* $ 
		by solving the following problem:\label{decision_making}
		\begin{equation}\label{utility_maximizing}
			b_* = \arg\max_{b\in\mathcal{B}}\ v_n^b - \sum\nolimits_{k\in\mathcal{K}} p_k^{(n)} r_k^b,
		\end{equation}
		where $ r_k^b $ denotes the units of resource $ k $ in bundle \\ $ b $,  
		and then calculates the potential payment 
		\begin{equation}\label{payment}
			\pi_n =  \sum\nolimits_{k\in\mathcal{K}} p_k^{(n)}r_k^{b_*}.
		\end{equation}
		
		\IF{$v_n^{b_*} - \pi_n <  0$ or $ y_k^{(n-1)}  + r_k^{b_*} > 1 $ holds for any $ k\in\mathcal{K} $}
		\STATE Customer $ n $  leaves without purchasing anything (i.e., $ x_n^b = 0 $ for all $ b\in\mathcal{B} $).	
		\ELSE
		\STATE Customer $ n $ chooses bundle $ b_* $ and pays  $ \pi_n $
		to the provider (i.e., $ x_n^{b_*} = 1 $ and $ x_n^b = 0 $, $ \forall  b\in \mathcal{B}\backslash\{b_*\} $). \label{payment_transaction}
		\STATE Provider updates the resource consumption by  \label{update}
		\begin{equation}\label{update_resource}
			y_k^{(n)}  = y_k^{(n-1)}  + r_k^{b_*}, \forall k\in\mathcal{K}.
		\end{equation}	
		\ENDIF		
		\ENDWHILE
	\end{algorithmic}
	\label{PM}
\end{algorithm}

\section{PPM and Structural Results}\label{section_design_pricing_function}
In this section, we introduce our proposed online mechanism $ \text{PPM}_\phi  $, and present the necessary and sufficient conditions for  $ \text{PPM}_\phi  $ to be $ \alpha $-competitive. Based on these conditions, we derive structural results to characterize the minimum value of $ \alpha $.  

\subsection{$ \text{PPM}_\phi $: An Overview of How It Works}
We focus on the setting of posted-price \cite{PPM2010}  and propose  $ \text{PPM}_\phi $ in Algorithm \ref{PM}. In posted-price, the provider cannot ask the customers to submit their valuation functions, and thus cannot run Vickrey–Clarke–Groves auctions \cite{AMD2001}. Instead, the provider posts prices upon arrival of each customer $ n\in\mathcal{N} $, and lets customer $ n $ make her own decision on whether to purchase or not based on the posted prices. In this regard,  posted-price is privacy-preserving since it does not require customers to reveal their private valuation functions.  Meanwhile, by virtue of posted-price, our proposed $ \text{PPM}_\phi $ is incentive compatible since false reports naturally vanish \cite{PPM2010}.

In Algorithm \ref{PM}, at each round when there is a new arrival of customer $ n\in\mathcal{N} $, the provider offers her the prices $ \{p_k^{(n)}\}_{\forall  k} $ by Eq. \eqref{pricing_function}, where $ \phi $ is referred to as the \textbf{pricing function}  and $ y_k^{(n-1)} $ denotes the utilization of resource type $ k\in\mathcal{K} $ upon the arrival of customer $ n $, i.e., after processing customer $ n-1 $.  Note that when $ n = 1 $, the posted price  for the first customer is given by  $  p_k^{(1)} = \phi(y_k^{0}) $, where $ y_k^{(0)} $ is the resource utilization before processing the first customer, and thus is initialized to be zero.  Based on the offered prices $ \{p_k^{(n)}\}_{\forall  k} $, customer $ n $ selects the utility-maximizing bundle by solving the problem in Eq. \eqref{utility_maximizing} and calculates the potential payment in Eq. \eqref{payment}. If the maximum utility of customer $ n $, i.e., $ v_n^{b_*} - \pi_n $, is less than zero (i.e., negative utility), or the capacity limit constraint \eqref{capacity_limits} is violated, then customer $ n $ will leave without purchasing anything\footnote{ 
We assume  that customers are rational and will not purchase any bundle if they suffer from negative utilities. This is known as the individual rationality in economics and is 
a common design objective in mechanism design \cite{AMD2001}.} and the provider will wait for the next customer $ n+1 $. Otherwise, customer $ n $ will choose bundle $ b_* $. The provider will charge  this customer the payment $ \pi_n $ and update the total resource utilization level $ y_k $ in Eq. \eqref{update_resource}. The process repeats upon arrival of customer $ n+1 $. 

The above processes show that  the solutions found by $ \text{PPM}_\phi $, namely, $ \{x_n^b\}_{\forall n,b} $  and $ \{y_k^{(n)}\}_{\forall n,k} $, are always feasible to Problem \eqref{SWM}. Another observation is that the pricing function $ \phi $ plays a critical role in $ \text{PPM}_\phi $. Indeed, it is $ \phi $ that determines the posted prices in line \ref{publish_price}, and then influences each customer's decision-making  in line \ref{decision_making}-line \ref{payment_transaction}, which ultimately influences the social welfare achieved by $ \text{PPM}_\phi $, i.e., 
\begin{equation}\label{SW_online}
W_{\textsf{online}}(\mathcal{A}) = \sum_{n\in\mathcal{N}} v_n^{b_*} x_n^{b_*} - \sum_{k\in\mathcal{K}} f\big(y_k^{(N)}\big).
\end{equation}
In Eq. \eqref{SW_online}, $ y_k^{(N)} $ denotes the final resource utilization of resource type $ k\in\mathcal{K} $, and $ x_n^{b_*} $ denotes the status of the utility-maximized bundle $ b_* $ for customer $ n $, i.e., $ x_n^{b_*} = 1 $ denotes that customer $ n $ obtains bundle $ b_* $, and $ x_n^{b_*} = 0 $ otherwise. Note that both $ \{x_n^{b_*}\}_{\forall n} $ and $ \{y_k^{(N)}\}_{\forall k} $ depend on the pricing function $ \phi $, and thus the final competitive ratio of $ \text{PPM}_\phi $ depends on $ \phi $ as well.

\subsection{Conditions for $ \text{PPM}_\phi $ to Be $ \alpha $-Competitive} 
\label{section_theorem_1}
An important result in this paper is the development of the following Theorem \ref{a_unified_BVP}, which characterizes the sufficient and necessary conditions for the pricing function $ \phi $ such that $ \text{PPM}_\phi $ can be $ \alpha $-competitive. 

\begin{theorem} \label{a_unified_BVP}
	Given a setup $ \mathcal{S} $ with $ \overline{p}\in (\underline{c},+\infty) $, we have:
	\begin{itemize}
		\item \underline{\normalfont\textsf{Low-Uncertainty Case (LUC)}: $ \overline{p}\in (\underline{c},\overline{c}] $}. 
		\begin{itemize}
			\item {\normalfont\textbf{Sufficiency}}.  For any given $ \alpha\geq 1 $, if $ \phi(y) $ is a solution to the following first-order BVP:
			\begin{align*}
			\normalfont \textsf{L}(\alpha)
			\begin{cases}
			\phi'(y) =  \alpha\cdot \frac{\phi(y) - f'(y)}{f'^{-1}(\phi(y))}, y\in (0,v), \\
			\phi(0) = \underline{c}, 
			\phi\left(v\right) \geq  \overline{p},
			\end{cases}
			\end{align*}
			where $ v \triangleq f'^{-1}(\overline{p}) $ and $ f'^{-1} $ denotes the inverse of $ f' $, then $ \text{PPM}_{\phi}  $ is $ \alpha $-competitive. 
			
			\item {\normalfont\textbf{Necessity}}.  If there exists an $ \alpha$-competitive online algorithm, then there must exist a strictly-increasing function $ \phi(y) $ that satisfies $ \normalfont \textsf{L}(\alpha) $. 
		\end{itemize}
		
		\item \underline{\normalfont \textsf{High-Uncertainty Case (HUC)}: $ \overline{p}\in (\overline{c},+\infty) $}.
		\begin{itemize}
			\item {\normalfont\textbf{Sufficiency}}. For any given $ \alpha\geq 1 $, if $ \phi(y) $ is a solution to the following two first-order BVPs simultaneously:
			\begin{subequations}\label{BVP_high}
				\begin{align*}
				&\normalfont \textsf{H}_1(u,\alpha)
				\begin{cases}
				\phi'(y) =  \alpha\cdot \frac{\phi(y) - f'(y)}{f'^{-1}(\phi(y))}, y\in (0,u), \\
				\phi(0) = \underline{c}, 
				\phi(u) = \overline{c}.
				\end{cases}
				\\
				&\normalfont \textsf{H}_2(u,\alpha)
				\begin{cases}
				\phi'(y) =  \alpha\cdot \left(\phi(y) - f'(y)\right), y\in (u,1), \\
				\phi(u) = \overline{c}, 
				\phi(1) \geq \overline{p},
				\end{cases}
				\end{align*}
			\end{subequations}
			where $ u\in (0,1) $ is the resource utilization level such that $ \phi(u) = \overline{c} $, 
			then $ \text{PPM}_{\phi}  $ is $ \alpha $-competitive.
			
			\item {\normalfont\textbf{Necessity}}. If there exists an $ \alpha$-competitive online algorithm, then there must exist a resource utilization level $ u\in (0,1) $ and a strictly-increasing function $ \phi(y) $ such that $ \phi(y) $ satisfies $ \normalfont \{\textsf{H}_1(u,\alpha), \normalfont\textsf{H}_2(u,\alpha)\}$.
		\end{itemize} 
	\end{itemize}
\end{theorem}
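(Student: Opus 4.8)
The plan is to prove both directions by collapsing the competitiveness requirement to a single per-resource integral inequality whose tightest (equality) case is exactly the stated ODE and whose endpoints force the boundary conditions. The separability of the cost $\sum_k f(y_k)$ and of the posted prices $p_k^{(n)}=\phi(y_k^{(n-1)})$ is what makes this reduction possible, so the combinatorial and multi-resource structure enters only through the utility-maximizing rule \eqref{utility_maximizing} and through Assumption \ref{assumption_small}.

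For sufficiency I would run a dual-fitting (online primal-dual) argument. Relax the integrality of $\{x_n^b\}$ in Problem \eqref{SWM} and form its Lagrangian dual, $\min \sum_n u_n+\sum_k g(p_k)$ with $g(p)\triangleq\max_{0\le y\le 1}[\,py-f(y)\,]$ and constraints $u_n\ge v_n^b-\sum_k p_k r_k^b$; weak duality gives $W_{\textsf{opt}}(\mathcal A)\le\sum_n u_n+\sum_k g(p_k)$ for any dual-feasible $(u,p)$. I set $p_k=\phi(y_k^{(N)})$, the final posted prices, and let $u_n$ be the surplus of customer $n$ evaluated at these prices. Monotonicity of $\phi$ and of the utilization levels implies the price faced by customer $n$ at arrival never exceeds $\phi(y_k^{(N)})$, so the realized online surplus dominates $u_n$; together with \eqref{utility_maximizing} this certifies dual feasibility and yields $W_{\textsf{opt}}\le S+\sum_k g(\phi(y_k^{(N)}))$, where $S$ is the total customer surplus. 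Rewriting \eqref{SW_online} as $W_{\textsf{online}}=S+\sum_k\big[\int_0^{y_k^{(N)}}\phi(t)\,dt-f(y_k^{(N)})\big]$ — using $r_k^b\ll1$ from Assumption \ref{assumption_small} to replace the payment sum by an integral — reduces the whole claim, since $\alpha\ge1$, to the per-resource inequality $\int_0^z\phi(t)\,dt-f(z)\ge\tfrac1\alpha\,g(\phi(z))$ at every reachable $z$. Differentiating its equality case and using that the maximizer in $g$ equals $f'^{-1}(\phi(z))$ when $\phi(z)\le\overline{c}$ and equals $1$ when $\phi(z)>\overline{c}$ reproduces exactly the ODEs of $\textsf{L}(\alpha)$ and of $\{\textsf{H}_1,\textsf{H}_2\}$; the condition $\phi(0)=\underline{c}$ makes both sides vanish at $z=0$, while $\phi(v)\ge\overline{p}$ (resp. $\phi(1)\ge\overline{p}$) forces the price to choke off all demand before utilization leaves the verified range, which also guarantees that a positive-surplus customer is never rejected by the capacity test in Algorithm \ref{PM} and thus reconciles the online surplus with the dual variable $u_n$.

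For necessity I would construct adversarial instances confined to a single resource with infinitesimal per-unit demands (Assumption \ref{assumption_small}), presenting customers in order of strictly increasing per-unit valuation up to a stopping level $\theta$, with abundant demand at the top value. Against such a stream any online algorithm is, up to an exchange argument, summarized by a nondecreasing effective price $\phi(y)$ — the marginal valuation it is just willing to accept at utilization $y$ — so that stopping at $\theta$ leaves utilization $z=\phi^{-1}(\theta)$ with online welfare $\int_0^z\phi(t)\,dt-f(z)$, while the offline optimum, which fills with top-value customers to the efficient level, equals $g(\theta)$. Demanding $W_{\textsf{online}}\ge\tfrac1\alpha W_{\textsf{opt}}$ at every stopping level $\theta\in(\underline{c},\overline{p}]$ yields the same per-resource integral inequality for all reachable $z$; the tightest competitive algorithm meets it with equality, so differentiation forces $\phi$ to solve the stated ODE, letting $\theta\downarrow\underline{c}$ forces $\phi(0)=\underline{c}$, and $\theta\uparrow\overline{p}$ forces $\phi$ to reach $\overline{p}$ by utilization $v=f'^{-1}(\overline{p})$ (resp. by $y=1$). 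The split into LUC and HUC is dictated by whether the reachable utilization stays within $[0,f'^{-1}(\overline{p})]$, where the maximizer of $g$ is interior and the $f'^{-1}(\phi)$ factor appears, or instead reaches the capacity limit once $\theta>\overline{c}$, where the maximizer sticks at $y=1$ and the simpler equation $\textsf{H}_2$ governs; the crossover utilization is precisely the $u$ with $\phi(u)=\overline{c}$.

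I expect the main obstacle to be the necessity reduction: rigorously arguing that an arbitrary, history-dependent (possibly randomized) online algorithm can be replaced without loss by a single monotone pricing curve $\phi$, and that the family of competitiveness constraints indexed by the stopping level $\theta$ can be differentiated into a pointwise ODE with the correct monotonicity and boundary behavior. A secondary technical point, on the sufficiency side, is controlling the discretization error incurred when passing from the payment sum to $\int_0^{y_k^{(N)}}\phi$ and when treating indivisible bundles, as well as handling the capacity boundary so that the dual surplus never exceeds the online surplus; both rely on Assumption \ref{assumption_small} together with the boundary conditions that keep the price prohibitive near saturation.
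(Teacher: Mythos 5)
Your proposal is correct and follows essentially the same route as the paper: sufficiency via an online primal-dual/dual-fitting argument with the convex conjugate $f_{\texttt{\#}}(p)=\max_{y}\,py-\bar f(y)$, reduction to the per-resource integral inequality $\int_0^z\phi(t)\,dt-f(z)\ge \tfrac1\alpha f_{\texttt{\#}}(\phi(z))$ whose differentiation (using the piecewise form of $f_{\texttt{\#}}'$) yields the \textsf{LUC}/\textsf{HUC} split and the boundary conditions, and necessity via a continuum adversarial instance indexed by a stopping per-unit valuation, summarized by a monotone allocation curve whose inverse is the pricing function. The only (cosmetic) difference is that you fit the dual once at termination with the final prices, whereas the paper verifies an incremental inequality $P_n-P_{n-1}\ge\tfrac1\alpha(D_n-D_{n-1})$ round by round; integrating the paper's inequality recovers exactly your terminal inequality, and the remaining issue you flag (upgrading the algorithm's nondecreasing allocation curve to a strictly increasing solution meeting the equality) is handled in the paper by an explicit $\psi_*$ construction along the lines you anticipate.
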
 
\begin{proof}
	The terms \textsf{LUC} and \textsf{HUC} arise from the fact that $ \overline{p} $ indicates the uncertainty level of the PUVs in the arrival instance $ \mathcal{A} $, namely, a larger $ \overline{p} $ implies a wider range of the  PUV distribution (note that the PUVs are randomly distributed within $ [0,\overline{p}]$ based on Assumption \ref{assumption_bound}), and vice versa. We emphasize that the division into cases \textsf{LUC} and \textsf{HUC} is not artificial, but arise from a principled online primal-dual analysis of Problem \eqref{SWM}. 
	The detailed proof is given in Appendix \ref{proof_a_unified_BVP}. 
\end{proof}

Theorem \ref{a_unified_BVP} consists of the conditions that are both sufficient and necessary. 
The sufficiency in  Theorem \ref{a_unified_BVP} argues that $ \text{PPM}_\phi $ is $ \alpha $-competitive as long as the pricing function $ \phi $ is a strictly-increasing solution to the corresponding BVPs in \textsf{LUC} and \textsf{HUC}. Hence, the discussion is within the domain of PPMs. 
The necessity of Theorem \ref{a_unified_BVP} argues that if there exists any $ \alpha $-competitive online algorithm, then there must exist a strictly-increasing solution to the corresponding BVPs. Therefore, the necessity of Theorem \ref{a_unified_BVP} is not restricted to PPMs only, and thus is more general.

(\textbf{Intuition of Theorem \ref{a_unified_BVP}}) In Fig. \ref{three_pricing_schemes}, we illustrate two pricing functions for both \textsf{LUC}  and \textsf{HUC}.  Fig. \ref{three_pricing_schemes}(a) illustrates a special case in \textsf{LUC} when $ \phi(v) = \overline{p} $, where $ v = f'^{-1}(\overline{p}) $  denotes the maximum-possible resource utilization level for $ \text{PPM}_\phi $. Here we use the pricing function illustrated in Fig. \ref{three_pricing_schemes}(a) to briefly explain the intuition behind the BVP of $ \textsf{L}(\alpha) $. The rationality of the two BVPs in \textsf{HUC} follows the same principle. Note that the ODE of $ \textsf{L}(\alpha) $ in Theorem \ref{a_unified_BVP} can be reorganized as
\begin{align}\label{ODE_LUC}
\phi(y) - f'(y) = \frac{1}{\alpha}\phi'(y)f'^{-1}\left(\phi(y)\right),  y\in (0, v).
\end{align}
The left-hand-side of Eq. \eqref{ODE_LUC} is illustrated by the grey area in Fig. \ref{three_pricing_schemes}(a). 
Since $ f(0) = 0 $, $ \phi(0) = \underline{c} $, and $ \phi(v) =  \overline{p} $, integrating both sides of Eq. \eqref{ODE_LUC} for $ y\in [0,v] $ leads to
\begin{align}
\int_{0}^{v}\phi(y)dy - f(v) =  \frac{1}{\alpha} \int_{0}^{v} \phi'(y)f'^{-1}\left(\phi(y)\right) dy = \frac{1}{\alpha} \int_{\underline{c}}^{\overline{p}} f'^{-1}\left(\phi\right) d\phi. \label{ODE_LUC_integration}
\end{align}
Notice that the last integration  in Eq. \eqref{ODE_LUC_integration} is over the inverse of the marginal cost function, which can be solved in analytical form so Eq. \eqref{ODE_LUC_integration} is equivalently written as
\begin{equation}\label{ODE_LUC_inequality}
\alpha = \frac{\overline{p} v - f(v)}{\int_{0}^{v}\phi(y)dy - f(v) }.
\end{equation}
Next we show that Eq. \eqref{ODE_LUC_inequality} essentially captures the worst-case ratio between the optimal offline social welfare and the social welfare achieved by  $ \text{PPM}_\phi $ under a special arrival instance.

\begin{figure}
	\centering
	\subfigure[\textsf{LUC}: $ \overline{p}\in {(\underline{c},\overline{c}]} $]{\includegraphics[width= 4.7 cm]{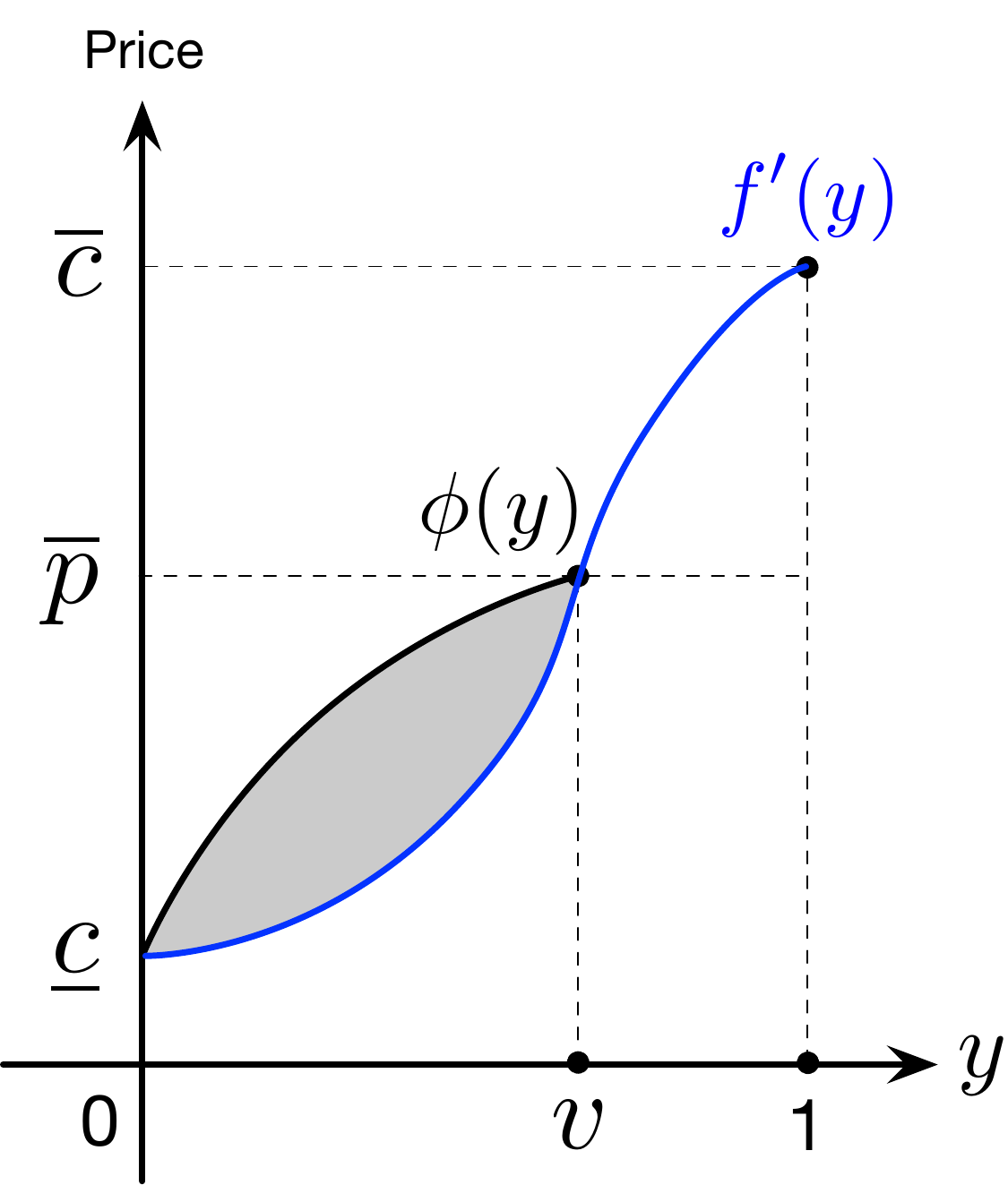}}
	\qquad 
	\subfigure[\textsf{HUC}: $ \overline{p}\in {(\overline{c},+\infty)} $]{\includegraphics[width= 4.7 cm]{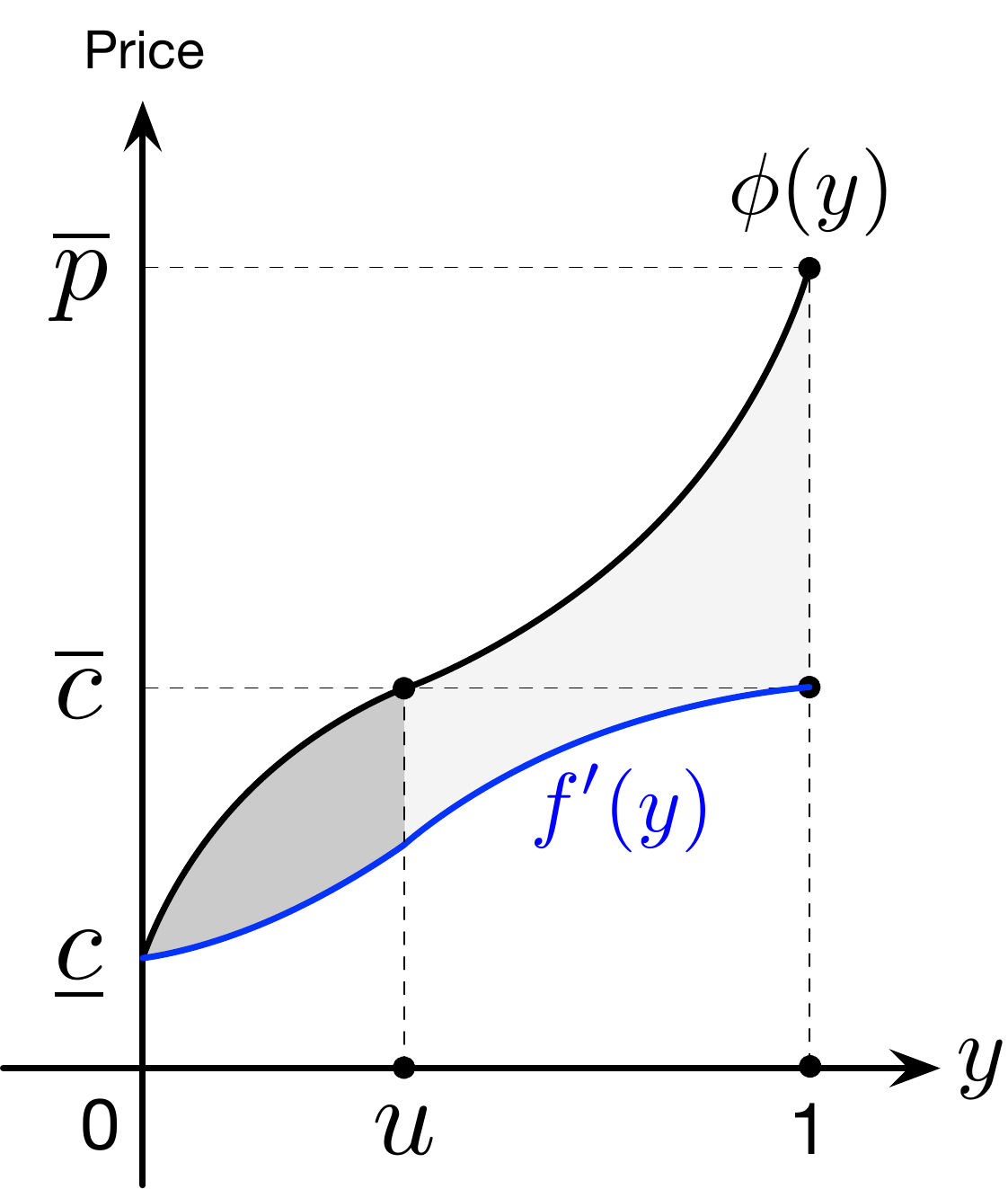}}
	\caption{Illustration of the pricing function $ \phi $ in \textsf{LUC} and \textsf{HUC}.}
	\label{three_pricing_schemes}
\end{figure} 

Suppose we have an arrival instance $ \mathcal{A}_v $ given as follows: \textit{for all $ y\in [0,v] $, there is a continuum of customers, indexed by $ y\in [0,v] $, whose valuations are given by $ v_y = \phi(y) \Delta y $, where $ \Delta y $ denotes the units of resources that are purchased by customer $ y $ and is infinitesimally small. For $ y\in (v,2v] $, there is another continuum of customers whose valuations are given by $ v_y = \overline{p}\Delta y $}. Given the arrival instance $ \mathcal{A}_v $, $ \text{PPM}_\phi $  will accept all the customers indexed by $ y\in [0,v] $. Thus, the social welfare achieved by $ \text{PPM}_\phi $ is the denominator of the right-hand-side of Eq. \eqref{ODE_LUC_inequality}, namely, the total valuation of all the accepted customers less the supply cost $ f(v) $.  The optimal offline social welfare, however, is to reject all the customers indexed by $ y\in [0,v] $ but only accept the second continuum of customers indexed by $ y\in (v,2v] $. Therefore, the optimal offline social welfare in hindsight is given by $  \overline{p}v - f(v) $, which is exactly the numerator of the right-hand-side of Eq. \eqref{ODE_LUC_inequality}. Therefore, a pricing function $ \phi(y) $ that satisfies $ \textsf{L}(\alpha) $ leads to the quotient in Eq. \eqref{ODE_LUC_inequality}, which captures the worst-case ratio  $ \alpha $ between the social welfare achieved by the optimal offline algorithm and $ \text{PPM}_\phi $. Based on the competitive ratio definition in Eq. \eqref{def_cr}, we can see that $ \text{PPM}_\phi $ is $ \alpha $-competitive.

(\textbf{Dividing Threshold})
Note that for the case of  \textsf{LUC}  in Fig. \ref{three_pricing_schemes}(a), the capacity limit 1 will never be reached. Otherwise, the system may suffer from negative social welfare (i.e., added valuations are smaller than the increased supply costs). In contrast, Fig. \ref{three_pricing_schemes}(b) illustrates a pricing function in \textsf{HUC} with $ \phi(0)=\underline{c} $ and $ \phi(1)=\overline{p} $. In this case, the capacity limit 1 can be reached as long as we have enough customers. In particular, there exists a threshold $ u\in (0,1) $ such that $ \phi(u) = f'(1) = \overline{c} $. In the following we refer to $ u\in (0,1) $ as the \textit{dividing threshold} of pricing function $ \phi $. The formal definition is given as follows.

\begin{definition}\label{def_dividing_threshold}
	Given a continuous pricing function $ \phi $ with $ \phi(0) < \overline{c} $ and $ \phi(1)>\overline{c} $, the \textit{dividing threshold} of $ \phi $ is the resource utilization level $ u\in (0,1) $  so that $ \phi(u) = f'(1) = \overline{c} $.
\end{definition}

In \textsf{HUC}, for any dividing threshold $ u\in (0,1) $, the whole interval of $ [0,1] $ is divided into segments $ [0,u] $ and $ [u,1] $.  When the lower and upper bounds of $ \phi $ are fixed, e.g., $ \phi(0)= \underline{c} $ and $ \phi(1) = \overline{p} $ in Fig. \ref{case_ABC}(b), the dividing threshold $ u $ has a strong impact on the curvature of $ \phi $. A smaller dividing threshold $ u $ indicates a steeper pricing curve in $ [0,u] $, and thus will perform better for arrival instances with high-PUVs. In contrast, a larger dividing threshold $ u $ indicates a less steep pricing curve within $ [0,u] $ and thus will perform better for arrival instances with low-PUVs. When there is no future information, we need to find a balance between these two so that the resulting online mechanism $ \text{PPM}_\phi $ has a stable performance regardless of arrival instances.  Theorem \ref{a_unified_BVP} captures this intuition by explicitly discriminating the pricing function design in $ [0,u] $ and $ [u,1] $ with two different BVPs in \textsf{HUC}. The next subsection shows that if the dividing threshold $ u$ is strategically chosen, the competitive ratio of $ \text{PPM}_\phi $ can be minimized.

\subsection{Structural Analysis for Optimal Design}
Recall that our objective is to design online mechanisms to achieve the value of $ \alpha $ which is as small as possible.  To quantify how small $ \alpha $ can possibly be, we define the \textbf{optimal competitive ratio}  in the following Definition \ref{def_opt_cr}.

\begin{definition}
	\label{def_opt_cr}
	Given a setup $ \mathcal{S} $, the competitive ratio $ \alpha $ is optimal if no other online algorithms can achieve a smaller competitive ratio under  Assumption \ref{assumption_cost_function}-Assumption \ref{assumption_bound}.
\end{definition}

Based on the necessity in Theorem \ref{a_unified_BVP}, to find the optimal competitive ratio for a given setup $ \mathcal{S} $, it suffices to find the minimum $ \alpha $ so that there exist strictly-increasing solutions to the BVPs in Theorem \ref{a_unified_BVP}. Hence, we give Proposition \ref{optimality} below. 
\begin{proposition}
	\label{optimality}
	Given a setup $ \mathcal{S} $, if $\alpha_*(\mathcal{S}) $ is defined as follows:
	\begin{align*}
	\normalfont
	\alpha_*(\mathcal{S}) \triangleq \inf\left\{ \alpha\ \Bigg|\ 
	\begin{matrix*}[l]
	\mbox{there exists a strictly-increasing}\\
	\mbox{function $ \phi $ so that i) if $ \overline{p}\in (\underline{c},\overline{c}] $,}\\
	\mbox{$ \phi $ is a solution to \normalfont $\textsf{L}(\alpha) $, or ii) if}\\
	\mbox{$ \overline{p}>\overline{c} $, $ \phi $ is a solution to \normalfont $\textsf{H}_1(u,\alpha) $}\\
	\mbox{and $\textsf{H}_2(u,\alpha)\} $ with a feasible}\\
	\mbox{dividing threshold $ u\in (0,1) $.}
	\end{matrix*}
	\right\},
	\end{align*}
	then $ \alpha_*(\mathcal{S}) $ is the optimal competitive ratio achievable by all online algorithms.
\end{proposition}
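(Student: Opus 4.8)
The plan is to derive Proposition \ref{optimality} almost entirely as a corollary of the two-sided characterization in Theorem \ref{a_unified_BVP}, reserving the genuine work for showing that the infimum defining $\alpha_*(\mathcal{S})$ is actually attained. Write $\mathcal{R}$ for the set of competitive ratios achievable by \emph{some} online algorithm (not necessarily a PPM), and $\mathcal{D}$ for the set of $\alpha$ admitting a strictly-increasing $\phi$ that solves the relevant BVPs --- that is, the set over which $\alpha_*(\mathcal{S})$ is the infimum. The first goal is to prove $\mathcal{R}=\mathcal{D}$, from which $\inf\mathcal{R}=\alpha_*(\mathcal{S})$ is immediate.

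For the inclusion $\mathcal{R}\subseteq\mathcal{D}$ (the lower bound) I would invoke the necessity half of Theorem \ref{a_unified_BVP} verbatim: if any online algorithm is $\alpha$-competitive, then in \textsf{LUC} there is a strictly-increasing $\phi$ satisfying $\textsf{L}(\alpha)$, and in \textsf{HUC} there exist an interior dividing threshold $u\in(0,1)$ and a strictly-increasing $\phi$ satisfying $\{\textsf{H}_1(u,\alpha),\textsf{H}_2(u,\alpha)\}$. In either case $\alpha\in\mathcal{D}$, so $\mathcal{R}\subseteq\mathcal{D}$ and no online algorithm can be $\alpha$-competitive with $\alpha<\alpha_*(\mathcal{S})$. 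Because necessity is stated for arbitrary online algorithms, this lower bound is not restricted to PPMs. For the reverse inclusion $\mathcal{D}\subseteq\mathcal{R}$ (achievability) I would apply the sufficiency half: any $\alpha\in\mathcal{D}$ supplies a strictly-increasing $\phi$ solving the BVPs, and then $\text{PPM}_\phi$ --- itself an online algorithm --- is $\alpha$-competitive, so $\alpha\in\mathcal{R}$. Combining the two inclusions gives $\mathcal{R}=\mathcal{D}$ and hence $\inf\mathcal{R}=\alpha_*(\mathcal{S})$.

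The remaining --- and genuinely delicate --- step is to upgrade this infimum to a minimum, i.e. to show $\alpha_*(\mathcal{S})\in\mathcal{D}$ so that the optimal ratio is attained by a concrete $\text{PPM}_\phi$. My plan is a monotonicity-plus-continuity argument on the one-parameter family of ODE solutions. For each $\alpha$ let $\phi_\alpha$ solve the initial-value problem given by the ODE with $\phi_\alpha(0)=\underline{c}$. Since the right-hand side scales linearly in $\alpha$ and the solution stays above the marginal-cost curve $f'$, increasing $\alpha$ strictly increases $\phi_\alpha'$, hence increases the terminal value --- $\phi_\alpha(v)$ in \textsf{LUC}, and $\phi_\alpha(1)$ in \textsf{HUC} (after the forced matching $\phi_\alpha(u(\alpha))=\overline{c}$ through $\textsf{H}_1$). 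Feasibility then reduces to the single scalar inequality $\phi_\alpha(v)\geq\overline{p}$ (resp. $\phi_\alpha(1)\geq\overline{p}$), whose left-hand side is continuous in $\alpha$ by standard continuous dependence of ODE solutions on parameters. An intermediate-value argument locates $\alpha_*(\mathcal{S})$ as the value at which this inequality holds with equality, and continuity guarantees the limiting $\phi_{\alpha_*}$ still solves the BVP at $\alpha=\alpha_*(\mathcal{S})$.

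I expect the main obstacle to be verifying that the qualitative constraints --- strict monotonicity of $\phi$, and interiority of the dividing threshold in \textsf{HUC} --- survive the limit $\alpha\downarrow\alpha_*(\mathcal{S})$. For strict monotonicity I would show $\phi_{\alpha_*}(y)>f'(y)$ on the open interval, which keeps $\phi_{\alpha_*}'>0$; the subtle point is the singular behaviour at $y=0$, where $\phi=f'$ and $f'^{-1}(\phi)=0$ render the ODE's right-hand side indeterminate, so the separation $\phi>f'$ must be obtained from a local expansion near the origin rather than from the sign of $\phi'$ alone. For \textsf{HUC} I would additionally confirm $0<u(\alpha_*)<1$: the bound $u>0$ follows from $\phi(0)=\underline{c}<\overline{c}$, while $u<1$ follows from the requirement that $\phi$ must still climb from $\overline{c}$ to $\overline{p}>\overline{c}$ on $[u,1]$, leaving the threshold strictly interior. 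Once these are secured, $\alpha_*(\mathcal{S})\in\mathcal{D}=\mathcal{R}$, establishing that $\alpha_*(\mathcal{S})$ is simultaneously a lower bound for and attained by online algorithms --- exactly the claim.
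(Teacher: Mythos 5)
Your first two paragraphs are exactly the paper's argument: the paper disposes of Proposition \ref{optimality} in one line (``directly follows the necessity of Theorem \ref{a_unified_BVP}''), and your identification $\mathcal{R}=\mathcal{D}$ via the necessity direction (lower bound, valid for arbitrary online algorithms) plus the sufficiency direction (achievability via $\text{PPM}_\phi$) is the same reasoning spelled out. Where you diverge is in treating attainment of the infimum as part of the claim. The paper does not do this here --- its companion Corollary \ref{lower_bound_alpha_f_p} only asserts the non-existence of $(\alpha_*(\mathcal{S})-\epsilon)$-competitive algorithms, and attainment is established only later, in the power-cost case study, where Theorems \ref{lower_bound_LUC}--\ref{lower_bound_2} show the feasible sets of $\alpha$ are closed half-lines $\alpha\geq\underline{\alpha}$. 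Your extra step is a reasonable strengthening, but as sketched it has a technical problem: you parametrize the family by the \emph{forward} initial-value problem with $\phi_\alpha(0)=\underline{c}$, yet at $y=0$ the right-hand side of the ODE is the indeterminate form $0/0$ (since $\phi(0)-f'(0)=0$ and $f'^{-1}(\phi(0))=0$), so Picard--Lindel\"{o}f fails there and ``the'' forward solution $\phi_\alpha$ is not well defined --- indeed the paper's Theorem \ref{optimal_pricing_functions_theorem} exhibits infinitely many distinct solutions through that singular point at the optimal $\alpha$. The paper circumvents this by integrating \emph{backward} from the terminal condition (the $\text{\textbf{IVP}}(\varphi;\alpha,u)$ construction in Appendix \ref{proof_lower_bound_1}) and testing whether the limit at $y\to 0^+$ hits zero; if you want your continuity-and-monotonicity argument to go through in general, you would need to recast it in that backward framework rather than appeal to a forward IVP anchored at the singular origin.
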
 

Proposition \ref{optimality} directly follows the necessity of Theorem \ref{a_unified_BVP}. Based on Proposition \ref{optimality}, we have the following corollary.
\begin{corollary}\label{lower_bound_alpha_f_p}
	Given a setup $ \mathcal{S} $, there exists no $ (\alpha_*(\mathcal{S})-\epsilon) $-competitive online algorithm, $ \forall \epsilon>0 $. 
\end{corollary}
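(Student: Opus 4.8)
The plan is to argue by contradiction, leveraging the necessity direction of Theorem~\ref{a_unified_BVP} together with the characterization of $\alpha_*(\mathcal{S})$ as an infimum in Proposition~\ref{optimality}. Suppose, toward a contradiction, that for some $\epsilon>0$ there exists an online algorithm that is $(\alpha_*(\mathcal{S})-\epsilon)$-competitive; write $\alpha'\triangleq \alpha_*(\mathcal{S})-\epsilon$ for brevity. The goal is to show that the mere existence of such an algorithm forces $\alpha'\geq \alpha_*(\mathcal{S})$, which is absurd since $\epsilon>0$. Because $\epsilon$ is arbitrary, this rules out every $(\alpha_*(\mathcal{S})-\epsilon)$-competitive algorithm.

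First I would dispose of a degenerate regime. By the competitive-ratio definition in Eq.~\eqref{def_cr}, every competitive ratio satisfies $\alpha\geq 1$, so no algorithm can be $\beta$-competitive for any $\beta<1$ (this would require $W_{\textsf{online}}(\mathcal{A})>W_{\textsf{opt}}(\mathcal{A})$, which is impossible). Since $\alpha_*(\mathcal{S})\geq 1$ as well, if $\alpha'<1$ — equivalently $\epsilon>\alpha_*(\mathcal{S})-1$ — the claimed algorithm cannot exist and the statement holds trivially. It therefore remains only to treat the case $1\leq \alpha'<\alpha_*(\mathcal{S})$.

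In this remaining case I would invoke the necessity part of Theorem~\ref{a_unified_BVP} with the parameter $\alpha=\alpha'$. Since the hypothesized algorithm is $\alpha'$-competitive, the necessity statement guarantees a strictly-increasing function $\phi$ that either (i) solves $\textsf{L}(\alpha')$ when $\overline{p}\in(\underline{c},\overline{c}]$, or (ii) solves $\{\textsf{H}_1(u,\alpha'),\textsf{H}_2(u,\alpha')\}$ for some feasible dividing threshold $u\in(0,1)$ when $\overline{p}>\overline{c}$. In either case $\alpha'$ belongs to precisely the set over which the infimum defining $\alpha_*(\mathcal{S})$ in Proposition~\ref{optimality} is taken. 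As the infimum is a lower bound of that set, we obtain $\alpha'\geq \alpha_*(\mathcal{S})$, contradicting $\alpha'=\alpha_*(\mathcal{S})-\epsilon<\alpha_*(\mathcal{S})$. This completes the contradiction, and hence no $(\alpha_*(\mathcal{S})-\epsilon)$-competitive algorithm exists.

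The argument is short because all the substantive work is already packaged in the necessity of Theorem~\ref{a_unified_BVP}, which applies to arbitrary online algorithms rather than only to $\text{PPM}_\phi$. The one point that needs care — and the main, if modest, obstacle — is the distinction between infimum and minimum: since $\alpha_*(\mathcal{S})$ is defined as an infimum, it may or may not itself be attained by a strictly-increasing solution, but the corollary only asserts that nothing strictly below $\alpha_*(\mathcal{S})$ is achievable. This is exactly the lower-bound property of the infimum, so no attainment argument is required; one must simply ensure the necessity direction is applied with the matching parameter $\alpha'$ and for the correct case (\textsf{LUC} versus \textsf{HUC}) dictated by $\overline{p}$.
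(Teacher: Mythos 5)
Your proof is correct and follows essentially the same route as the paper: the corollary is an immediate consequence of the necessity direction of Theorem~\ref{a_unified_BVP} combined with the infimum characterization of $\alpha_*(\mathcal{S})$ in Proposition~\ref{optimality}, which is exactly the argument the paper intends. Your extra care about the degenerate case $\alpha'<1$ and the infimum-versus-minimum distinction is sound but does not change the substance.
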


Based on Proposition \ref{optimality}, to obtain the optimal competitive ratio $ \alpha_*(\mathcal{S}) $, we just need to characterize the existence conditions of strictly-increasing solutions to the BVPs in Theorem \ref{a_unified_BVP}.  Note that in \textsf{LUC}, for a given setup $ \mathcal{S} $, $ \mathsf{L}(\alpha) $ is not indexed by any other parameters except the competitive ratio parameter $ \alpha $, and thus, $ \alpha_*(\mathcal{S}) $ is the minimum $ \alpha $ so that there exists a strictly-increasing solution to  $ \mathsf{L}(\alpha) $. However, in \textsf{HUC}, both the two BVPs are indexed by the dividing threshold $ u $, which is a design variable that can be flexibly chosen within $ (0,1) $. As a result, the minimum $ \alpha $ to guarantee the existence of  strictly-increasing solutions to  $ \{\textsf{H}_1(u,\alpha),  \textsf{H}_2(u,\alpha)\} $ will depend on $ u $. To characterize this dependency, we define the lower bound of $ \alpha $ for each given $ u\in (0,1) $ as follows.

\begin{definition}[\textbf{Lower Bound of $ \alpha $ in} \textsf{HUC}]\label{def_achievable_region}
	Given a setup $ \mathcal{S} $ with $ \overline{p}\in (\overline{c},+\infty) $, the lower bound of $ \alpha $ for any given $ u\in (0,1) $, denoted by $ \underline{\alpha}(u) $, is defined as follows:
	\begin{align*}\normalfont
	\underline{\alpha}(u) \triangleq 
	\inf \left\{ \alpha\ \Bigg|\ 
	\begin{matrix*}[l]
	\mbox{There exists a strictly-increasing}\\
	\mbox{pricing function $ \phi(y) $ that is a}\\
	\mbox{solution to \normalfont $\{\textsf{H}_1(u,\alpha),\textsf{H}_2(u,\alpha)\} $.}
	\end{matrix*}
	\right\}.
	\end{align*}
\end{definition}

Based on Definition \ref{def_achievable_region}, the optimal competitive ratio can be calculated as follows:
\begin{equation}\label{opt_cr_equal_R}
\alpha_*(\mathcal{S}) = \underline{\alpha}(u_*), \text{ where } u_* = \arg\min_{u\in (0,1)} \underline{\alpha}(u),
\end{equation} 
where $ u_* $ denotes the optimal dividing threshold. 

\begin{algorithm}
	\caption{Principles of Optimal Design}  
	\begin{algorithmic}[1]
		\STATE \textbf{Input}: the setup $ \mathcal{S} $ with $ \overline{p}\in (\underline{c},+\infty) $.
		\IF{$\overline{p}\in (\underline{c},\overline{c}]$}
		\STATE Get the minimum $ \alpha $, denoted by $\alpha_*(\mathcal{S}) $, so that \\
		there exists a strictly-increasing solution to $ \textsf{L}(\alpha) $.  \label{lower_bound_1_alg}
		\STATE Solve $ \textsf{L}(\alpha_*(\mathcal{S})) $ and get the optimal pricing \\
		function $ \phi $ so that $ \text{PPM}_\phi $ is $ \alpha_*(\mathcal{S}) $-competitive. \label{solve_L_opt}
		\ELSE 
		\STATE Get the lower bound $ \underline{\alpha}(u) $ based on Definition \ref{def_achievable_region}.  \label{lower_bound_2_alg}
		\STATE Obtain $ \alpha_*(\mathcal{S}) $ and $ u_*\in (0,1) $ based on Eq. \eqref{opt_cr_equal_R}. \label{solve_lower_bound_opt} 
		
		\STATE Solve $\{ \textsf{H}_1(u_*,\underline{\alpha}(u_*)), \textsf{H}_2(u_*,\underline{\alpha}(u_*))\} $ and get \\ 
		the optimal pricing function $ \phi $ so that $ \text{PPM}_\phi $ is $ \alpha_*(\mathcal{S}) $-competitive or $ \underline{\alpha}(u_*) $-competitive.	
		\ENDIF	
		
		\STATE \textbf{Output}: $ \alpha_*(\mathcal{S}) $ and optimal pricing functions. 
	\end{algorithmic}
	\label{principles_optimal_design}
\end{algorithm}

Algorithm \ref{principles_optimal_design} summarizes the above  structural results and provides a principled way to characterize the optimal competitive ratio and the corresponding optimal pricing function for any given setup $\mathcal{S}$. The key steps in Algorithm \ref{principles_optimal_design} are line \ref{lower_bound_1_alg} and line \ref{lower_bound_2_alg}, in which we need to characterize the conditions for the existence of strictly-increasing solutions to the BVPs in Theorem \ref{a_unified_BVP}. We emphasize that characterizing such existence conditions heavily depends on the cost function $ f $.  The next section will demonstrate how such conditions can be derived in analytical forms when $ f $ is a power function.

\section{Case Study: $ f(y) = ay^s $}
We now perform a case study for  $ f(y) = ay^s $ (i.e., power function), and show how to use Algorithm \ref{principles_optimal_design} to obtain the minimum value of $ \alpha $, the optimal dividing threshold $ u_* $, and the corresponding optimal pricing functions.  At the end of this section, we will discuss some important structural properties about the optimal pricing functions.

\subsection{Preliminaries: The BVPs in Both Cases}
We consider $ f(y) = ay^s $ with $ a>0 $ and $ s > 1$ so that the marginal cost $ f'(y) = asy^{s-1} $ is strictly increasing.   Such power cost functions are often used for modeling the  costs that are diseconomies-of-scale (i.e., no volume discounts). For example, when $ s \geq 2 $, $ f(y) $ is a classic power-rate curve, reflecting the power consumption of a general networking and computing device with the capability of speed-scaling \cite{Adam_speed_scaling, speed_scaling}, e.g., CPU, edge router, and communication link. It is also common to use $ s= 1\sim3 $ to model the power consumption of data centers in cloud computing \cite{data_center_power, XZhang_ToN}.

When $ f(y) = ay^s $,  the minimum marginal cost is $ \underline{c} = f'(0)=  0 $ and the maximum marginal cost is $ \overline{c} = f'(1) = as $. Based on Theorem \ref{a_unified_BVP},    $ \textsf{L}(\alpha),  $
$ \textsf{H}_1(u,\alpha) $, and $ \textsf{H}_2(u,\alpha) $ can be written as follows:
\begin{itemize}
	\item \underline{\textsf{LUC}: $ \overline{p}\in (\underline{c},\overline{c}] $}.  $ \textsf{L}(\alpha) $ is given by
	\begin{align}\label{BVP_power_LUC}
	\begin{cases}
	\phi'(y) =  \alpha\cdot \frac{\phi(y) - f'(y)}{(\phi(y)/\overline{c})^{\frac{1}{s-1}}}, y\in (0,v), \\
	\phi(0) = 0, 
	\phi(v) \geq  \overline{p},
	\end{cases}
	\end{align}
	where $ v = f'^{-1}(\overline{p}) = (\overline{p}/\overline{c})^{\frac{1}{s-1}} $. 
	
	\item \underline{\textsf{HUC}: $ \overline{p}\in (\overline{c},+\infty) $}. $\{ \textsf{H}_1(u,\alpha), \textsf{H}_2(u,\alpha)\}$ are given by
	\begin{subequations}\label{two_BVP_power}
		\begin{align}\label{two_BVP_power_1}
		&\begin{cases}
		\phi'(y) =  \alpha\cdot \frac{\phi(y) - f'(y)}{(\phi(y)/\overline{c})^{\frac{1}{s-1}}}, y\in (0,u), \\
		\phi(0) = 0, 
		\phi(u) = \overline{c},
		\end{cases}\\
		&\begin{cases}
		\phi'(y) =   \alpha\cdot \left(\phi(y) - \overline{c}y^{s-1}\right), y\in (u,1), \\
		\phi(u) = \overline{c},
		\phi(1) \geq   \overline{p},
		\end{cases} \label{two_BVP_power_2}
		\end{align} 
	\end{subequations}
    where Problem \eqref{two_BVP_power_1} corresponds to $ \textsf{H}_1(u,\alpha) $, and Problem \eqref{two_BVP_power_2} corresponds to $ \textsf{H}_2(u,\alpha) $. 
\end{itemize}

Following lines \ref{lower_bound_1_alg} and \ref{lower_bound_2_alg} in Algorithm \ref{principles_optimal_design}, the next subsection will characterize the conditions for the existence of strictly-increasing solutions to the BVPs in Eq. \eqref{BVP_power_LUC} and Eq. \eqref{two_BVP_power}.

\subsection{Lower Bound of $ \alpha $ in \textsf{LUC} and \textsf{HUC}}

\subsubsection{Lower Bound of $ \alpha $ in \textsf{LUC}}
We first focus on \textsf{LUC} and give the following Theorem \ref{lower_bound_LUC}.

\begin{theorem}
	\label{lower_bound_LUC}
	Given a setup $ \mathcal{S} $ with $ f(y) = ay^s$ and $ \overline{p}\in (\underline{c},\overline{c}]$, there exist strictly-increasing solutions to Problem \eqref{BVP_power_LUC} if and only if $ \alpha \geq \alpha_s^{\min} $, where $ \alpha_s^{\min} = s^{\frac{s}{s-1}} $.
\end{theorem}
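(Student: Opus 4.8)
The plan is to exploit the scale invariance of the power cost function. Since $f'(y)=\overline{c}\,y^{s-1}$ and $f'^{-1}(z)=(z/\overline{c})^{1/(s-1)}$, the ODE in Problem \eqref{BVP_power_LUC} is invariant under the rescaling $(y,\phi)\mapsto(\lambda y,\lambda^{s-1}\phi)$, which singles out the self-similar ansatz $\phi(y)=\beta y^{s-1}$. This ansatz automatically meets the initial condition $\phi(0)=0$ and is strictly increasing whenever $\beta>0$, so it is the natural candidate for a solution emanating from the singular point $y=0$, where the ODE has the indeterminate form $0/0$.

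For sufficiency, I would substitute $\phi(y)=\beta y^{s-1}$ into the ODE. The powers of $y$ cancel and one is left with a purely algebraic relation
\[
\beta(s-1)=\alpha(\beta-\overline{c})(\overline{c}/\beta)^{1/(s-1)},\qquad\text{i.e.}\qquad \alpha=\frac{(s-1)\,\beta^{s/(s-1)}}{(\beta-\overline{c})\,\overline{c}^{1/(s-1)}}.
\]
Because $\phi'>0$ forces $\phi>f'$, the relevant range is $\beta>\overline{c}$, on which the terminal condition holds automatically since $\phi(v)=\beta\,v^{s-1}=\beta\overline{p}/\overline{c}>\overline{p}$. It then remains to minimize $\alpha$ as a function of $\beta$ over $(\overline{c},\infty)$: the expression blows up as $\beta\to\overline{c}^{+}$ and as $\beta\to\infty$, its logarithmic derivative has the unique interior zero $\beta=s\overline{c}$, and evaluating there yields $\alpha=s^{s/(s-1)}$. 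Since $\alpha(\beta)$ is continuous with range $[s^{s/(s-1)},\infty)$, every $\alpha\geq\alpha_s^{\min}$ is attained by some admissible $\beta>\overline{c}$, giving a strictly-increasing solution and proving the ``if'' direction.

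The harder direction is necessity, where I must rule out \emph{all} strictly-increasing solutions---not merely the self-similar ones---when $\alpha<\alpha_s^{\min}$. The key device is the substitution $g(y)=\phi(y)/y^{s-1}$, which converts the scale-invariant ODE into the autonomous equation $y\,g'(y)=\Psi(g)$ with $\Psi(g)=\alpha(g-\overline{c})(\overline{c}/g)^{1/(s-1)}-(s-1)g$. The zeros of $\Psi$ on $(\overline{c},\infty)$ are exactly the admissible coefficients $\beta$ from the sufficiency step, so when $\alpha<\alpha_s^{\min}$ the function $\Psi$ has no zero there and in fact satisfies $\Psi<0$ throughout $(\overline{c},\infty)$. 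Since strict monotonicity of $\phi$ on $(0,v)$ forces $g(y)>\overline{c}$, the quotient $g$ is strictly decreasing in $y$; letting $y\to0^{+}$ and using the asymptotics $\Psi(g)\sim-(s-1)g$ as $g\to\infty$ shows that $\phi(0^{+})=\lim_{y\to0^{+}}g(y)\,y^{s-1}$ is a strictly positive constant, contradicting $\phi(0)=0$. Hence no strictly-increasing solution can exist for $\alpha<\alpha_s^{\min}$.

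I expect the main obstacle to be precisely this boundary analysis at $y=0$: because the initial value problem is singular there, one cannot simply invoke uniqueness, and one must instead argue via the autonomous reduction that any admissible trajectory either converges as $y\to0^{+}$ to one of the fixed points $\beta$ of $\Psi$ (the self-similar solutions of the sufficiency step) or escapes to $\phi(0^{+})>0$. The minimization in the sufficiency step, by contrast, is routine single-variable calculus, and the value $s^{s/(s-1)}$ drops out after the $(s-1)$ and $\overline{c}$ factors cancel at $\beta=s\overline{c}$.
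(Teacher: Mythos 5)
Your proposal is correct, and it reaches the result by a genuinely different route from the paper. The paper proves this theorem as a special case of its Theorem \ref{lower_bound_1}: it rescales to $\varphi=(\phi/\overline{c})^{1/(s-1)}$, introduces the characteristic polynomial $P_s(\eta;\alpha)=\eta^s-\tfrac{\alpha}{s-1}\eta^{s-1}+\tfrac{\alpha}{s-1}$, and then runs an IVP-from-the-right-endpoint analysis (Picard--Lindel\"{o}f uniqueness, monotonicity in $\alpha$ and $u$, convexity/concavity classification, and a squeeze between the two linear solutions $yR_s^{\pm}(\alpha)$) to show that $\lim_{y\to0^+}\varphi=0$ forces the limiting slope to be a positive root of $P_s$, which exists iff $\alpha\geq s^{s/(s-1)}$. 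Your sufficiency step is more elementary and fully explicit: the monomial ansatz $\phi=\beta y^{s-1}$ is exactly the paper's linear scaled solution $\varphi=y/u$ in original coordinates (your relation $\alpha(\beta)$ is $P_s((\beta/\overline{c})^{1/(s-1)};\alpha)=0$ in disguise, and $\beta=s\overline{c}$ recovers the paper's $\phi_w=sf'$), and the single-variable minimization cleanly yields the threshold and the surjectivity of $\alpha(\beta)$ onto $[s^{s/(s-1)},\infty)$. Your necessity step via the autonomous reduction $g=\phi/y^{s-1}$, $yg'=\Psi(g)$, is the same underlying mechanism as the paper's Proposition \ref{limit_finite} plus Lemma \ref{positive_roots} (zeros of $\Psi$ on $(\overline{c},\infty)$ are exactly the positive roots of the characteristic polynomial), but it bypasses the convexity and squeeze machinery: sign-definiteness of $\Psi$ forces $g\to\infty$ and hence $\phi(0^+)>0$. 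The only place your sketch is thinner than a full proof is the final asymptotic step: to conclude $\phi(0^+)>0$ you need a short bootstrap --- first integrate $d\ln g/d\ln y=\Psi(g)/g\in(-(s-1),-(s-1)+\epsilon)$ to get $g(y)\geq C y^{-(s-1)+\epsilon}$, then observe that $\tfrac{d}{dy}\ln\phi=\alpha(1-\overline{c}/g)(\overline{c}/g)^{1/(s-1)}/y$ is therefore integrable at $0$, so $\ln\phi(0^+)$ is finite --- but you correctly identify this boundary analysis as the crux, and the argument closes without difficulty. What your approach buys is brevity and transparency for this particular theorem; what the paper's heavier apparatus buys is the full family of non-monomial optimal pricing functions (its Theorem \ref{optimal_pricing_functions_theorem}) and the uniform treatment of the two-segment \textsf{HUC} case, which your self-similar ansatz alone would not deliver.
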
 

Theorem \ref{lower_bound_LUC} provides the lower bound of $ \alpha $ so that there exists a strictly-increasing solution to Problem \eqref{BVP_power_LUC} above.  Based on Proposition \ref{optimality}, we can conclude that the optimal competitive ratio $ \alpha_*(\mathcal{S}) = \alpha_s^{\min} $. According to line \ref{solve_L_opt} in Algorithm \ref{principles_optimal_design},  the design of optimal pricing functions in \textsf{LUC} is equivalent to solving  Problem \eqref{BVP_power_LUC} with $ \alpha = \alpha_*(\mathcal{S}) = \alpha_s^{\min} $. In Section \ref{section_optimal_pricing_functions}, we will discuss how to solve Problem \eqref{BVP_power_LUC} to get a set of infinitely-many optimal pricing functions. 

\subsubsection{Lower Bound of $ \alpha $ in \textsf{HUC}}  Theorem \ref{lower_bound_1} below summarizes a necessary and sufficient condition for $ \alpha $ such that we can guarantee the existence of a strictly-increasing solution to  Problem \eqref{two_BVP_power_1} and this solution is unique. 

\begin{theorem}
\label{lower_bound_1}
Given a setup $ \mathcal{S} $ with $ f(y) = ay^s$ and $ \overline{p} > \overline{c} $, for any $ u\in(0,1) $, there exists a unique strictly-increasing solution to Problem \eqref{two_BVP_power_1} if and only if $ \alpha \geq \underline{\alpha}_1(u) $, where $\underline{\alpha}_1(u) $ is given by
\begin{align}\label{def_lower_bound}
\underline{\alpha}_1(u) =  
\begin{cases}
\alpha_s(u)  & \text{ if } u \in\left(0,u_s\right),\\
\alpha_s^{\min} & \text{ if } u\in\left[u_s,1\right).
\end{cases}
\end{align}
In Eq. \eqref{def_lower_bound}, $ \alpha_s(u) $ and $ u_s$ are given as follows:
\begin{equation}\label{def_alpha_gamma}
\alpha_s(u) = \frac{s-1}{u-u^s},  
u_s =
\Big(\frac{1}{s}\Big)^{\frac{1}{s-1}}.
\end{equation}
\end{theorem}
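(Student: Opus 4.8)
The plan is to solve the boundary value problem $\textsf{H}_1(u,\alpha)$ by first exploiting the scale-invariance that the power cost function confers on the ODE, and then reducing the two-point problem to a one-dimensional phase-line analysis. First I would insert $f'(y)=\overline{c}\,y^{s-1}$ and $f'^{-1}(\phi)=(\phi/\overline{c})^{1/(s-1)}$ and normalize by writing $\phi(y)=\overline{c}\,g(y)$, which turns Problem \eqref{two_BVP_power_1} into
\[
g'(y) = \alpha\,\frac{g(y) - y^{s-1}}{g(y)^{1/(s-1)}}, \quad g(0)=0,\ g(u)=1 .
\]
A direct check shows this equation is invariant under the scaling $g(y)\mapsto \lambda^{s-1}g(y/\lambda)$, and that $g$ is strictly increasing exactly where $g(y)>y^{s-1}$. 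Both facts will be used repeatedly.

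The key reduction is the change of variables $w=g(y)/y^{s-1}$ together with $t=\ln y$, under which the equation becomes autonomous: $\dot{w}=G(w)$ with $G(w)=\frac{w-1}{w^{1/(s-1)}}\big(\alpha-h(w)\big)$ and $h(w)=\frac{(s-1)w^{s/(s-1)}}{w-1}$. Elementary calculus shows that on $(1,\infty)$ the function $h$ blows up at both ends and attains a unique minimum $h(s)=s^{s/(s-1)}=\alpha_s^{\min}$ at $w=s$ (recovering the constant of Theorem \ref{lower_bound_LUC}). Hence for $\alpha>\alpha_s^{\min}$ the equation $h(w)=\alpha$ has exactly two roots $\beta_1(\alpha)\in(1,s)$ and $\beta_2(\alpha)\in(s,\infty)$, they merge at $w=s$ when $\alpha=\alpha_s^{\min}$, and there is no root when $\alpha<\alpha_s^{\min}$. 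Since $w>1$ makes the prefactor positive, $\operatorname{sign}G=\operatorname{sign}(\alpha-h)$, so $G>0$ on $(\beta_1,\beta_2)$ and $G<0$ outside; thus $\beta_1$ is repelling and $\beta_2$ attracting.

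Next I would translate the boundary conditions onto this phase line. The requirement $g(0)=0$ is the delicate point: I would show that a trajectory gives $g(0)=0$ only if $w\to\beta_1$ as $t\to-\infty$, because any trajectory with $w\to\infty$ obeys the leading balance $w\sim C y^{-(s-1)}$ and therefore $g(0)=C>0$. This observation simultaneously kills all candidate solutions when $\alpha<\alpha_s^{\min}$ (no finite fixed point survives), establishing $\underline{\alpha}_1(u)\ge\alpha_s^{\min}$. For $\alpha\ge\alpha_s^{\min}$ the two branches of the unstable manifold of $\beta_1$ sweep out $w\in(1,\beta_1)$ and $w\in(\beta_1,\beta_2)$, while the fixed points themselves give the power-law solutions $g=\beta_{1,2}\,y^{s-1}$; together these furnish solutions with $g(0)=0$ realizing every value $w\in(1,\beta_2]$. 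The second condition $g(u)=1$ reads $w(u)=u^{1-s}=:\beta^{*}$, and using the time-translation freedom $t\mapsto t-t_0$ (equivalently the scaling symmetry) I would show a strictly-increasing solution exists, and is unique, precisely when $\beta^{*}\in(1,\beta_2(\alpha)]$. Uniqueness follows because autonomous trajectories cannot cross the fixed points, so exactly one branch meets any admissible $\beta^{*}$.

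Finally I would unwind this into the stated threshold. Since $u^{1-s}$ decreases in $u$ with $u_s^{1-s}=s$, the target obeys $\beta^{*}>s$ when $u<u_s$ and $\beta^{*}\le s$ when $u\ge u_s$. In the former case the binding constraint is $\beta^{*}\le\beta_2(\alpha)$, which by monotonicity of $\beta_2$ and the identity $h(u^{1-s})=\frac{s-1}{u-u^s}=\alpha_s(u)$ is equivalent to $\alpha\ge\alpha_s(u)$; in the latter case $\beta^{*}\le s<\beta_2(\alpha)$ holds for every $\alpha\ge\alpha_s^{\min}$, so the only constraint is $\alpha\ge\alpha_s^{\min}$. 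This reproduces the piecewise formula for $\underline{\alpha}_1(u)$. I expect the main obstacle to be the rigorous treatment of the singular endpoint $y=0$ where both numerator and denominator vanish, in particular proving that $w\to\beta_1$ is both necessary and sufficient for $g(0)=0$ and correctly handling the degenerate merge $\beta_1=\beta_2=s$ at $\alpha=\alpha_s^{\min}$; once this is secured, everything else is monotonicity bookkeeping on the phase line.
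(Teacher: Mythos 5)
Your proposal is correct, and while it is built on the same central object as the paper's proof, the execution is genuinely different. The common core: your function $h(w)=\frac{(s-1)w^{s/(s-1)}}{w-1}$ becomes, under $w=\chi^{s-1}$, exactly the root condition for the paper's characteristic polynomial $P_{s}(\eta;\alpha)=\eta^s-\frac{\alpha}{s-1}\eta^{s-1}+\frac{\alpha}{s-1}$ (Lemma \ref{positive_roots}); your fixed points $\beta_{1,2}(\alpha)$ are the paper's roots $R_s^{\mp}(\alpha)$ raised to the power $s-1$, the minimum $h(s)=s^{s/(s-1)}$ is the same threshold $\alpha_s^{\min}$, and your identity $h(u^{1-s})=\alpha_s(u)$ is the paper's observation that the linear solution $\varphi=y/u$ exists iff $P_s(1/u;\alpha)=0$. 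Where you diverge is in handling the singular boundary condition $\phi(0)=0$. The paper integrates the separable ODE but then works with the backward IVP from $\varphi(u)=1$: it proves monotonicity in $(\alpha,u)$, establishes convexity or concavity of the solution according to the sign of $\alpha-\alpha_s(u)$ (Proposition \ref{convexity_of_varphi}), squeezes the solution between two linear functions to force the limit at $0$ to vanish, and separately shows (Proposition \ref{limit_finite}) that a vanishing limit forces the limiting slope to be a positive root of $P_s$, which kills Cases 3 and 4. You instead pass to logarithmic time to get an autonomous scalar equation and read everything off the phase line: $g(0^+)=0$ iff the backward trajectory converges to a fixed point, which happens exactly for target values $w(u)=u^{1-s}\in(1,\beta_2(\alpha)]$, while trajectories escaping to $w=\infty$ obey $w\sim Cy^{-(s-1)}$ and leave the positive residue $g(0^+)=C$. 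This buys a much more transparent necessity direction and dispenses with the convexity and squeeze machinery; the price, which you correctly flag, is that the two asymptotic regimes at $t\to-\infty$ (algebraic convergence to the semi-stable double root when $\alpha=\alpha_s^{\min}$, and the $w\to\infty$ leading balance) must be made rigorous --- these are precisely the facts the paper secures via Propositions \ref{convexity_of_varphi} and \ref{limit_finite} --- and that uniqueness within each branch must be tied back to Picard--Lindel\"{o}f away from the singular endpoint, as the paper does in Lemma \ref{existence_uniquess} and Lemma \ref{unique_BVP}.
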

\begin{proof}
	The proof of the above two theorems is non-trivial since the right-hand-side of the ODE in Problem \eqref{two_BVP_power_1} (also Problem  \eqref{BVP_power_LUC}) has a singular boundary condition at $ \phi(0)= 0 $ \cite{Perko2001}. The detailed proof is given in Appendix \ref{proof_lower_bound_1}.   
\end{proof}

Theorem \ref{lower_bound_1}  provides a lower bound of $ \alpha $ for each given dividing threshold $ u $. Note that $ \alpha_s(u_s) = \alpha_s^{\min} $. Thus, $ \underline{\alpha}_1(u) $ is continuous in $ u\in (0,1) $. Meanwhile, $ \underline{\alpha}_1(u) $ is non-increasing in $ u\in (0,1) $ and achieves its minimum $ \alpha_s^{\min}$ when $ u\in [u_s,1) $. However, we cannot directly conclude that the optimal competitive ratio in \textsf{HUC} is also $ \alpha_s^{\min} $.  This is because it is unclear whether there exists any strictly-increasing solution to Problem \eqref{two_BVP_power_2} when $ u\in [u_s,1) $ and $ \alpha = \alpha_s^{\min} $. To answer this question, below we give Theorem \ref{lower_bound_2}.

\begin{theorem}
	\label{lower_bound_2}
	Given a setup $ \mathcal{S} $  with $ f(y) = ay^s$ and $ \overline{p} > \overline{c}$, for any $ u\in(0,1) $, there exists a unique strictly-increasing solution to Problem \eqref{two_BVP_power_2} if and only if $ \alpha \geq \underline{\alpha}_2(u) $, where $\underline{\alpha}_2(u) $ is the unique root to the following equation
	\begin{align}\label{def_lower_bound_2}
	\int_{u\underline{\alpha}_2(u)}^{\underline{\alpha}_2(u)}\eta^{s-1} e^{-\eta}d\eta
	= \frac{\big(\underline{\alpha}_2(u)\big)^{s-1}}{\exp(u\underline{\alpha}_2(u))} - \frac{\overline{p} \big(\underline{\alpha}_2(u)\big)^{s-1}}{\overline{c} \exp(\underline{\alpha}_2(u))} .
	\end{align}
	Meanwhile, $ \underline{\alpha}_2(u) $ is strictly-increasing in $ u\in (0,1) $. 
\end{theorem}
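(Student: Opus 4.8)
The plan is to exploit the fact that, unlike Problem \eqref{two_BVP_power_1}, the ODE in Problem \eqref{two_BVP_power_2} is \emph{linear} and has no singularity on $(u,1)$ (the initial point $u>0$ is regular), so it can be solved in closed form. First I would rewrite the ODE as $\phi'(y) - \alpha\phi(y) = -\alpha\overline{c}y^{s-1}$ and multiply by the integrating factor $e^{-\alpha y}$; integrating from $u$ to $y$ with $\phi(u)=\overline{c}$ yields the explicit solution $\phi(y) = \overline{c}e^{\alpha(y-u)} - \alpha\overline{c}e^{\alpha y}\int_u^y t^{s-1}e^{-\alpha t}\,dt$. Since this is the unique solution of the initial value problem, any BVP solution must coincide with it; hence uniqueness is immediate, and the entire problem reduces to deciding for which $\alpha$ this particular $\phi$ is both strictly increasing on $(u,1)$ and satisfies the terminal inequality $\phi(1)\geq\overline{p}$.

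Next I would show that the terminal condition alone already forces strict monotonicity, so the two requirements collapse into one. Writing $g(y) \triangleq \phi(y) - \overline{c}y^{s-1}$, the ODE reads $\phi'=\alpha g$, so $\phi$ is strictly increasing iff $g>0$. The key observation is a one-sided trapping argument: wherever $g<0$ we have $g'=\alpha g - \overline{c}(s-1)y^{s-2}<0$, so once $g$ touches zero (where $g'=-\overline{c}(s-1)y^{s-2}<0$) it becomes and stays negative up to $y=1$. Consequently $g(1)=\phi(1)-\overline{c}>0$, which is exactly what $\phi(1)\geq\overline{p}>\overline{c}$ guarantees, rules out any sign change and forces $g>0$ on all of $(u,1]$. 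Therefore a strictly-increasing BVP solution exists if and only if $\phi(1)\geq\overline{p}$.

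I would then convert $\phi(1)\geq\overline{p}$ into the stated integral condition through the substitution $\eta=\alpha t$ in $\int_u^1 t^{s-1}e^{-\alpha t}\,dt$, which turns the terminal inequality into $\int_{u\alpha}^{\alpha}\eta^{s-1}e^{-\eta}\,d\eta \leq \alpha^{s-1}e^{-u\alpha} - (\overline{p}/\overline{c})\,\alpha^{s-1}e^{-\alpha}$; the threshold $\underline{\alpha}_2(u)$ is the value at which equality holds, matching Eq. \eqref{def_lower_bound_2}. To prove that this threshold exists, is unique, and separates the feasible from the infeasible $\alpha$, I would study $\alpha\mapsto\phi(1;\alpha)$. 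The boundary behavior is clear: as $\alpha\to0^+$ the ODE degenerates to $\phi'\equiv0$ so $\phi(1)\to\overline{c}<\overline{p}$, while a Watson/Laplace estimate of the integral (dominated by its lower endpoint $t=u$) gives $\phi(1)\sim\overline{c}(1-u^{s-1})e^{\alpha(1-u)}\to+\infty$. For uniqueness I would use the sensitivity function $w\triangleq\partial_\alpha\phi$, which solves $w'-\alpha w = g$ with $w(u)=0$, so $w(1)=e^{\alpha}\int_u^1 g(t)e^{-\alpha t}\,dt$; at any $\alpha$ with $\phi(1;\alpha)=\overline{p}$ the previous paragraph guarantees $g>0$, hence $w(1)>0$, i.e. every crossing of the level $\overline{p}$ is a strict upcrossing. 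A continuous function crossing a level transversally upward at each crossing can cross it only once, so $\underline{\alpha}_2(u)$ is unique and $\phi(1)\geq\overline{p}\iff\alpha\geq\underline{\alpha}_2(u)$.

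Finally, for strict monotonicity of $\underline{\alpha}_2(u)$ in $u$, I would differentiate $\phi(1;\alpha,u)=\overline{p}$ implicitly. The $u$-sensitivity $z\triangleq\partial_u\phi$ solves $z'=\alpha z$ with $z(u)=-\phi'(u)=-\alpha\overline{c}(1-u^{s-1})<0$ (from $\tfrac{d}{du}\phi(u;u)=0$), so $z(1)=z(u)e^{\alpha(1-u)}<0$; combined with $\partial_\alpha\phi(1)=w(1)>0$ this gives $\underline{\alpha}_2'(u)=-\partial_u\phi(1)/\partial_\alpha\phi(1)>0$. The main obstacle is the global single-crossing step: local monotonicity of $\phi(1;\alpha)$ is only available where $g>0$, so one cannot simply claim $\phi(1;\alpha)$ is monotone for all $\alpha$. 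The argument must instead route through the trapping lemma of the second paragraph to establish $g>0$ \emph{at every crossing}, and then close the loop with the ``upcrossings-only implies at-most-one-crossing'' principle together with the asymptotic control of $\phi(1;\alpha)$ as $\alpha\to\infty$.
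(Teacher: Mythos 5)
Your proposal is correct and follows essentially the same route as the paper: integrate the linear ODE in Problem~\eqref{two_BVP_power_2} explicitly via the integrating factor, reduce everything to the terminal condition $\phi(1;\alpha,u)\geq\overline{p}$, and use monotonicity of $\phi(1;\alpha,u)$ in $\alpha$ (increasing) and in $u$ (decreasing) to obtain the unique root $\underline{\alpha}_2(u)$ of Eq.~\eqref{def_lower_bound_2} and its strict increase in $u$. The only difference is in how the supporting monotonicity facts are justified --- the paper asserts them by analogy with Lemma~\ref{monotonicity_alpha}, whereas you derive them from the trapping argument for $g=\phi-f'$ and the variational equations for $\partial_\alpha\phi$ and $\partial_u\phi$, which fills in the same steps more explicitly.
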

\begin{proof}
	The proof of the lower bound $ \underline{\alpha}_2(u) $ is trivial since the ODE in Problem \eqref{two_BVP_power_2} can be solved in analytical forms. 
	The detailed proof is given in Appendix \ref{proof_lower_bound_2}.
\end{proof}

Based on Theorem \ref{lower_bound_1} and Theorem \ref{lower_bound_2}, to guarantee the existence of strictly-increasing solutions to Problem \eqref{two_BVP_power_1} and Problem \eqref{two_BVP_power_2} simultaneously,  $ \alpha $  must be jointly lower bounded by $ \underline{\alpha}_1(u) $ and $ \underline{\alpha}_2(u) $ for all $ u\in (0,1) $. Therefore, the lower bound of $ \alpha $ is given by
\begin{equation}
\underline{\alpha}(u) = \max\ \{\underline{\alpha}_1(u), \underline{\alpha}_2(u)\}, \forall u\in (0,1),
\end{equation} 
which follows our definition of $ \underline{\alpha}(u) $ in Definition \ref{def_achievable_region}. Note that if  $ \mathcal{R}(u,\alpha) \subset (0,1)\times [1,+\infty) $ is defined as follows:
\begin{equation}\label{achievable_region}
\mathcal{R}(u,\alpha) \triangleq  \{(u,\alpha)|\alpha \geq \underline{\alpha}(u),
u\in (0,1) \}.
\end{equation}
Then, for any given $ (u,\alpha)\in \mathcal{R}(u,\alpha) $, the resulting BVPs $\{\normalfont\textsf{H}_1(u,\alpha),\textsf{H}_2(u,\alpha)\} $ must have a strictly-increasing solution. For this reason, we will refer to $ \mathcal{R}(u,\alpha) $ as the \textit{achievable region} of $ (u,\alpha) $. 

Based on line \ref{solve_lower_bound_opt} in Algorithm \ref{principles_optimal_design}, to get the optimal competitive ratio  $ \alpha_*(\mathcal{S}) $ in \textsf{HUC}, we need to find the optimal dividing threshold $ u_* $ by solving the following problem
\begin{equation*}
u_* = \arg\min_{u\in (0,1)} \underline{\alpha}(u) = \arg\min_{u\in (0,1)}\max \left\{\underline{\alpha}_1(u), \underline{\alpha}_2(u)\right\},
\end{equation*}
where $ \underline{\alpha}_1(u) $ is analytically given in Eq. \eqref{def_lower_bound}, and $\underline{\alpha}_2(u)  $ is the unique root to Eq. \eqref{def_lower_bound_2}. 
The next section will show that the optimal dividing threshold $ u_* $ always exists. However, the uniqueness of $ u_* $ depends on the value of $\overline{p}$.

\subsection{Optimal Competitive Ratios}
To characterize the optimal dividing threshold $ u_* $, we give the following 
Proposition \ref{calculation_u_star} to show the unique existence of an intersection point between $ \underline{\alpha}_1(u) $ and $\underline{\alpha}_2(u) $, which we refer to as the \textbf{critical dividing threshold} (CDT), denoted by $ u_{\mathsf{cdt}} $.

\begin{proposition}
	\label{calculation_u_star}
	Given a setup $ \mathcal{S} $ with $ f(y) = ay^s$ and $ \overline{p}\in (\overline{c},+\infty)$, 
 there exists a unique CDT $ u_{\mathsf{cdt}}\in (0,1) $ such that $  \underline{\alpha}_1(u_{\mathsf{cdt}}) =  \underline{\alpha}_2(u_{\mathsf{cdt}}) $. Specifically, if we define $ C_s $ by
	\begin{align}\label{C_s}
	C_s \triangleq \overline{c}\cdot \Big(\frac{1}{e^s}- \frac{1}{s^s}\cdot\int_{s}^{\alpha_s^{\min}}\eta^{s-1} e^{-\eta}d\eta\Big)\cdot \exp(\alpha_s^{\min}),
	\end{align}
	 then the unique CDT can be calculated as follows:
	\begin{itemize}
		\item \underline{$ \normalfont\textsf{HUC}_1 ${\normalfont :} $ \overline{p}\in (\overline{c}, C_s] $}. In this case, the CDT is the unique root to the following equation in variable $ u_{\mathsf{cdt}}\in [u_s,1) $: 
		\begin{equation} \nonumber
	    \int_{u_{\mathsf{cdt}}\cdot\alpha_s^{\min}}^{\alpha_s^{\min}}\eta^{s-1} e^{-\eta}d\eta
		= \frac{s^s }{\exp(u_{\mathsf{cdt}}\cdot\alpha_s^{\min})} - \frac{\overline{p} s^s }{ \overline{c} \exp(\alpha_s^{\min})}. 
		\end{equation}
		\item \underline{$ \normalfont\textsf{HUC}_2 ${\normalfont :} $ \overline{p}\in (C_s, +\infty) $}. In this case, the CDT is the unique root to the following equation in variable $ u_{\mathsf{cdt}}\in(0,u_s) $: 
		\begin{align*}
		 \int_{u_{\mathsf{cdt}}\cdot\alpha_s(u_{\mathsf{cdt}})}^{\alpha_s(u_{\mathsf{cdt}})}\eta^{s-1} e^{-\eta}d\eta
		= \frac{\left(\alpha_s(u_{\mathsf{cdt}})\right)^{s-1} }{\exp(u_{\mathsf{cdt}}\cdot\alpha_s(u_{\mathsf{cdt}}))} - \frac{\overline{p}\cdot \left(\alpha_s(u_{\mathsf{cdt}})\right)^{s-1} }{ \overline{c}\cdot \exp(\alpha_s(u_{\mathsf{cdt}}))}.
		\end{align*}
	\end{itemize}
\end{proposition}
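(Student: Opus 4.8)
The plan is to reduce everything to the behavior of a single auxiliary function $g(u) \triangleq \underline{\alpha}_1(u) - \underline{\alpha}_2(u)$ on $(0,1)$ and show it has exactly one zero, which is by definition the CDT $u_{\mathsf{cdt}}$. First I would collect the monotonicity facts already in hand: by Theorem \ref{lower_bound_1}, $\underline{\alpha}_1(u)$ is continuous, equals $\alpha_s(u) = \tfrac{s-1}{u-u^s}$ on $(0,u_s)$ (strictly decreasing, with $\underline{\alpha}_1(u)\to+\infty$ as $u\to 0^+$) and equals the constant $\alpha_s^{\min}$ on $[u_s,1)$; by Theorem \ref{lower_bound_2}, $\underline{\alpha}_2(u)$ is continuous and strictly increasing. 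Hence $g$ is continuous and strictly decreasing on $(0,1)$, so it has at most one zero, which settles uniqueness immediately.

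For existence I would pin the two boundary signs and apply the intermediate value theorem. As $u\to 0^+$, $\underline{\alpha}_1(u)\to+\infty$ while $\underline{\alpha}_2(u)$ stays bounded, so $g(u)>0$ near $0$. As $u\to 1^-$, $\underline{\alpha}_1(u)=\alpha_s^{\min}$ is constant, so it remains to show $\underline{\alpha}_2(u)>\alpha_s^{\min}$ for $u$ close to $1$; this I would extract from the defining relation \eqref{def_lower_bound_2}, observing that as $u\to 1$ the left-hand integral collapses to $0$, forcing the right-hand side toward $0$, a balance that (since $\overline{p}>\overline{c}$) can only hold if $\underline{\alpha}_2(u)$ grows beyond $\alpha_s^{\min}$. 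Thus $g$ changes sign, and by strict monotonicity it has a unique root $u_{\mathsf{cdt}}\in(0,1)$.

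Next I would locate $u_{\mathsf{cdt}}$ relative to the kink $u_s$ of $\underline{\alpha}_1$, as this selects which analytic branch of $\underline{\alpha}_1$ is active at the crossing. The decisive quantity is the sign of $g(u_s)=\alpha_s^{\min}-\underline{\alpha}_2(u_s)$. Writing the defining relation \eqref{def_lower_bound_2} as $F(\alpha;u,\overline{p})=0$, with $F$ the difference of its two sides, I would argue that $\underline{\alpha}_2(u)$ is strictly increasing in $\overline{p}$ for each fixed $u$ (a larger terminal requirement $\phi(1)\geq\overline{p}$ in $\textsf{H}_2(u,\alpha)$ forces a steeper, hence larger-$\alpha$, solution; formally $\partial F/\partial\overline{p}>0$ together with the sign of $\partial F/\partial\alpha$ at the root). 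I would then check by direct substitution that $\overline{p}=C_s$ is exactly the value for which $\underline{\alpha}_2(u_s)=\alpha_s^{\min}$: plugging $u=u_s$, $\alpha=\alpha_s^{\min}$ into \eqref{def_lower_bound_2} and using the identities $u_s\,\alpha_s^{\min}=s$ and $(\alpha_s^{\min})^{s-1}=s^s$ reproduces, after solving for $\overline{p}$, precisely the definition \eqref{C_s} of $C_s$. Combined with the monotone dependence on $\overline{p}$, this gives: $\overline{p}\in(\overline{c},C_s]$ yields $g(u_s)\geq 0$, hence $u_{\mathsf{cdt}}\in[u_s,1)$ (case $\textsf{HUC}_1$), while $\overline{p}\in(C_s,+\infty)$ yields $g(u_s)<0$, hence $u_{\mathsf{cdt}}\in(0,u_s)$ (case $\textsf{HUC}_2$).

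Finally I would read off the two characterizing equations by inserting the active branch value of $\underline{\alpha}_1$ into the crossing condition $\underline{\alpha}_2(u_{\mathsf{cdt}})=\underline{\alpha}_1(u_{\mathsf{cdt}})$. In $\textsf{HUC}_1$ the active value is $\alpha_s^{\min}$, so setting $\alpha=\alpha_s^{\min}$ in \eqref{def_lower_bound_2} and again using $(\alpha_s^{\min})^{s-1}=s^s$ gives the stated equation on $[u_s,1)$; in $\textsf{HUC}_2$ the active value is $\alpha_s(u_{\mathsf{cdt}})$, so substituting $\alpha=\alpha_s(u_{\mathsf{cdt}})$ into \eqref{def_lower_bound_2} gives the stated equation on $(0,u_s)$, and uniqueness of each root is inherited from the global uniqueness of $u_{\mathsf{cdt}}$. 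I expect the main obstacle to be the rigorous boundary analysis of $\underline{\alpha}_2$ near $u=1$ and the clean justification of its strictly monotone dependence on $\overline{p}$ at the fixed point $u_s$, since $\underline{\alpha}_2$ is only defined implicitly through \eqref{def_lower_bound_2}; the remaining steps are bookkeeping with the explicit identities for $u_s$ and $\alpha_s^{\min}$.
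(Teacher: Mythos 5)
Your proposal is correct and follows essentially the same route as the paper's (very terse) proof: both arguments rest on the monotonicity of $\underline{\alpha}_1$ and $\underline{\alpha}_2$ from Theorems \ref{lower_bound_1} and \ref{lower_bound_2}, identify $C_s$ as the value of $\overline{p}$ for which $\underline{\alpha}_2(u_s)=\alpha_s^{\min}$ via the identities $u_s\alpha_s^{\min}=s$ and $(\alpha_s^{\min})^{s-1}=s^s$, and then read off the two characterizing equations by substituting the active branch of $\underline{\alpha}_1$ into Eq.~\eqref{def_lower_bound_2}. You simply spell out the intermediate-value and sign-of-$g(u_s)$ details that the paper dismisses as immediate.
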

\begin{proof}
	This corollary follows the previous two theorems regarding the lower bound $ \underline{\alpha}_1(u) $ and $ \underline{\alpha}_2(u) $.  
	The detailed proof is given in Appendix \ref{proof_calculation_u_star}.
\end{proof}

\begin{figure}
	\centering
	\subfigure[$ \normalfont\textsf{HUC}_1 ${\normalfont :} $ \overline{p} \in {(\overline{c},C_s]} $]{\includegraphics[width = 4.7 cm]{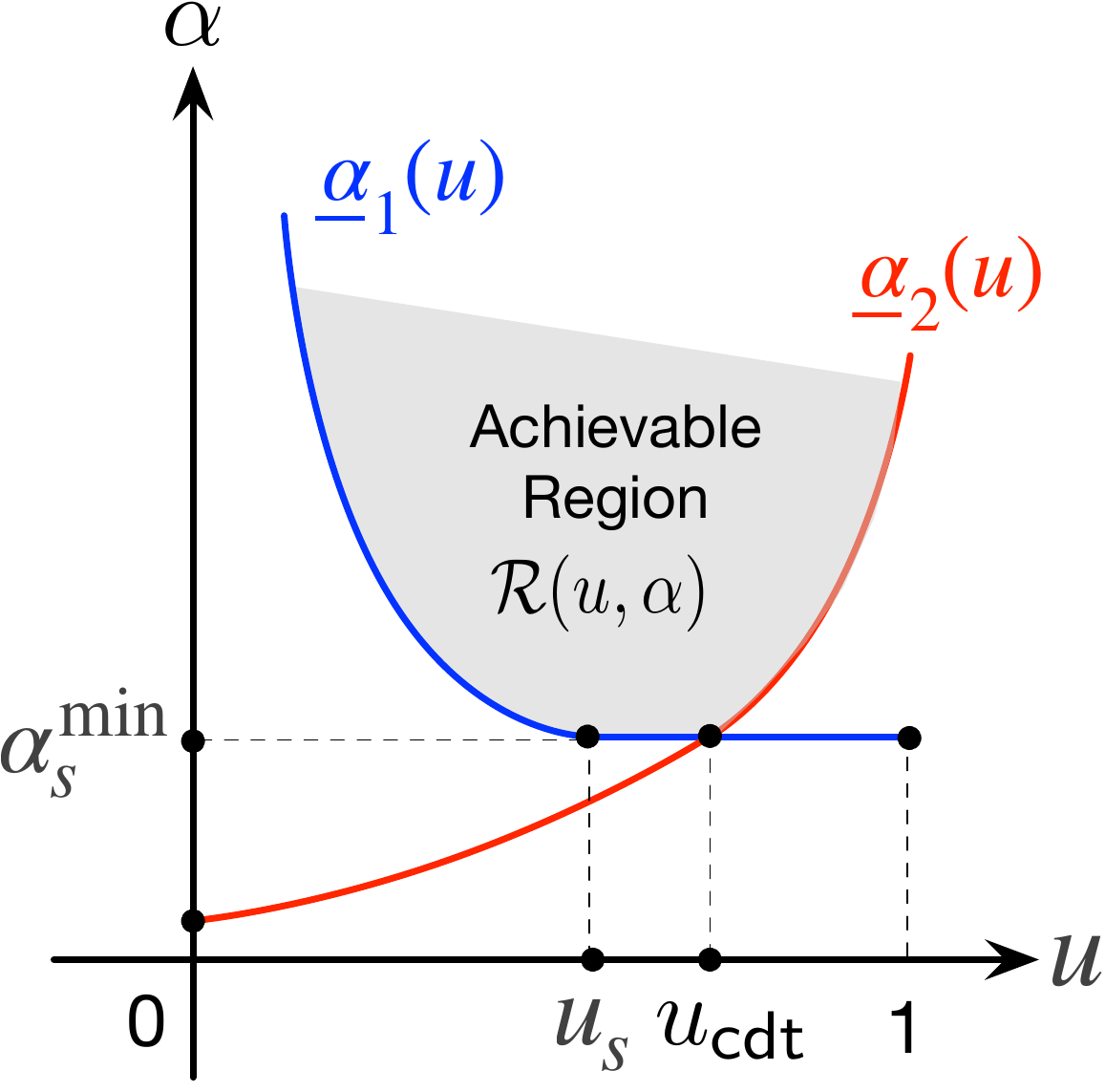}}
	\qquad 
	\subfigure[$ \normalfont\textsf{HUC}_2 ${\normalfont :}$ \overline{p} \in (C_s,+\infty) $]{\includegraphics[width = 4.7 cm]{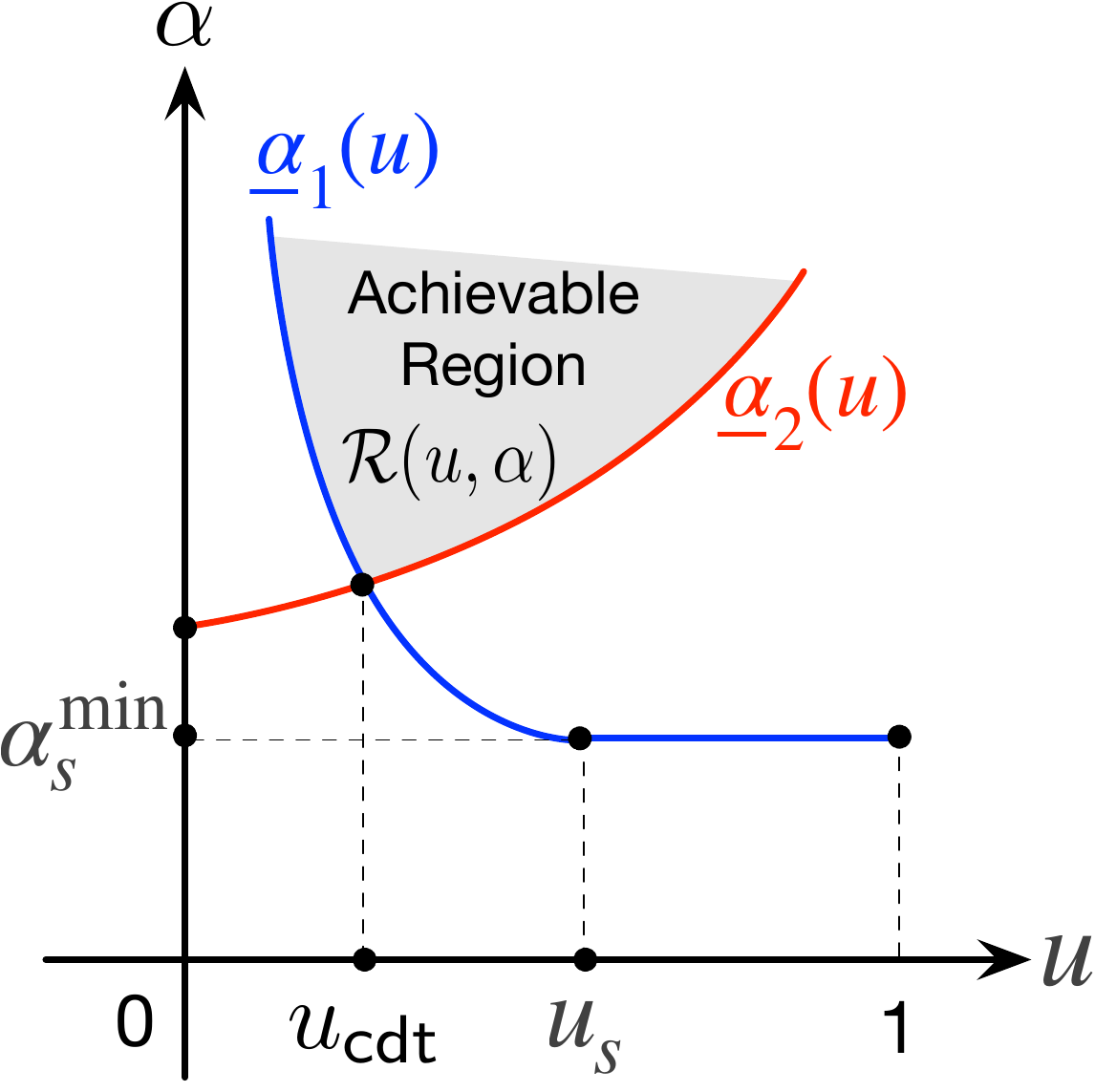}} 
	\caption{Illustration of the two lower bounds $ \underline{\alpha}_1(u) $, $ \underline{\alpha}_2(u) $, and  $ \mathcal{R}(u,\alpha) $.}
	\label{case_ABC}
\end{figure}

Fig. \ref{case_ABC} illustrates $ \underline{\alpha}_1(u) $ and $ \underline{\alpha}_2(u) $ in two cases. As can be seen from Fig. \ref{case_ABC}(a), in $ \textsf{HUC}_1 $ (i.e., $ \overline{p}\in (\overline{c}, C_s] $), the CDT $ u_{\mathsf{cdt}} \in  [u_s,1) $,  and the optimal competitive ratio $ \alpha_*(\mathcal{S}) = \underline{\alpha}(u_{\mathsf{cdt}}) = \alpha_s^{\min} $. In this case, any dividing threshold $ u\in [u_s,u_{\mathsf{cdt}}] $ and $ \alpha = \alpha_s^{\min} $  will determine an optimal pricing function that satisfies Problem \eqref{two_BVP_power_1} and Problem \eqref{two_BVP_power_2}. Therefore, the optimal dividing threshold $ u_* $ is not unique and can be any value within the interval  $  [u_s,u_{\mathsf{cdt}}] $.  In comparison, as shown in Fig. \ref{case_ABC}(b), in $ \textsf{HUC}_2 $ (i.e., $ \overline{p}\in (C_s,+\infty) $), the unique CDT $ u_{\mathsf{cdt}} $ is within the interval $ (0, u_s) $ and is the unique optimal dividing threshold (i.e., $ u_* = u_{\mathsf{cdt}}  $ and $  \alpha_*(\mathcal{S}) = \alpha_s(u_{\mathsf{cdt}}) $). In this case, the optimal pricing function is the unique solution to Problem \eqref{two_BVP_power_1} and Problem \eqref{two_BVP_power_2} with $ u = u_{\mathsf{cdt}} $ and $ \alpha = \alpha_s(u_{\mathsf{cdt}}) $.  

Corollary \ref{opt_cr} summarizes the optimal competitive ratios in \textsf{LUC} and the two sub-cases in \textsf{HUC}.

\begin{corollary}
	\label{opt_cr}
	Given a setup $ \mathcal{S} $ with $ f(y) = ay^s$,  the optimal competitive ratio $ \alpha_*(\mathcal{S}) $ is given by
	\begin{align}\label{cases_opt_cr}
	\alpha_*(\mathcal{S}) = 
	\begin{cases}
	s^{\frac{s}{s-1}} & \text{ if } \overline{p}\in (\underline{c},\overline{c}], \qquad  (\normalfont\textsf{LUC})\\
	s^{\frac{s}{s-1}} & \text{ if } \overline{p}\in (\overline{c},C_s ], \quad\   (\normalfont\textsf{HUC}_1)\\
	\frac{s-1}{u_{\mathsf{cdt}}-u_{\mathsf{cdt}}^s} & \text{ if } \overline{p}\in (C_s,+\infty), (\normalfont\textsf{HUC}_2)
	\end{cases}
	\end{align}
	where $ C_s $ and $ u_{\mathsf{cdt}} $ can be calculated based on  Proposition \ref{calculation_u_star}.
\end{corollary}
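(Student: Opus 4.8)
The plan is to treat this as a direct corollary of Proposition~\ref{optimality} together with the monotonicity and intersection structure already established in Theorem~\ref{lower_bound_1}, Theorem~\ref{lower_bound_2}, and Proposition~\ref{calculation_u_star}. For the \textsf{LUC} branch the argument is immediate: Theorem~\ref{lower_bound_LUC} states that strictly-increasing solutions to Problem~\eqref{BVP_power_LUC} exist if and only if $\alpha \geq \alpha_s^{\min} = s^{\frac{s}{s-1}}$, and Proposition~\ref{optimality} identifies $\alpha_*(\mathcal{S})$ with the infimum of such feasible $\alpha$. Hence $\alpha_*(\mathcal{S}) = s^{\frac{s}{s-1}}$ whenever $\overline{p} \in (\underline{c}, \overline{c}]$, which is the first line of Eq.~\eqref{cases_opt_cr}.

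For the \textsf{HUC} branch I would start from the characterization $\alpha_*(\mathcal{S}) = \min_{u \in (0,1)} \underline{\alpha}(u)$ with $\underline{\alpha}(u) = \max\{\underline{\alpha}_1(u), \underline{\alpha}_2(u)\}$, as in Eq.~\eqref{opt_cr_equal_R} and Definition~\ref{def_achievable_region}. The key structural input is that $\underline{\alpha}_1(u)$ is non-increasing in $u$ (it equals the decreasing branch $\alpha_s(u)$ on $(0,u_s)$ and is flat at $\alpha_s^{\min}$ on $[u_s,1)$), whereas $\underline{\alpha}_2(u)$ is strictly increasing by Theorem~\ref{lower_bound_2}, and the two cross at the unique CDT $u_{\mathsf{cdt}}$ by Proposition~\ref{calculation_u_star}. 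A standard min--max argument then applies: to the left of $u_{\mathsf{cdt}}$ the pointwise maximum equals the decreasing function $\underline{\alpha}_1$, and to the right it equals the increasing function $\underline{\alpha}_2$, so the maximum is minimized exactly at $u_{\mathsf{cdt}}$ (or, when $\underline{\alpha}_1$ is locally flat, on an interval ending at $u_{\mathsf{cdt}}$), giving $\alpha_*(\mathcal{S}) = \underline{\alpha}_1(u_{\mathsf{cdt}}) = \underline{\alpha}_2(u_{\mathsf{cdt}})$.

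It then remains to evaluate $\underline{\alpha}_1(u_{\mathsf{cdt}})$ using the two-piece formula of Eq.~\eqref{def_lower_bound}, which is where the split into $\textsf{HUC}_1$ and $\textsf{HUC}_2$ enters. Proposition~\ref{calculation_u_star} locates the CDT relative to the threshold $u_s$: when $\overline{p} \in (\overline{c}, C_s]$ we have $u_{\mathsf{cdt}} \in [u_s,1)$, so $\underline{\alpha}_1(u_{\mathsf{cdt}}) = \alpha_s^{\min} = s^{\frac{s}{s-1}}$, yielding the second line of Eq.~\eqref{cases_opt_cr}; when $\overline{p} \in (C_s, +\infty)$ we have $u_{\mathsf{cdt}} \in (0,u_s)$, so $\underline{\alpha}_1(u_{\mathsf{cdt}}) = \alpha_s(u_{\mathsf{cdt}}) = \frac{s-1}{u_{\mathsf{cdt}} - u_{\mathsf{cdt}}^s}$, yielding the third line.

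I expect no genuine obstacle here, since every nontrivial fact has been front-loaded into the preceding theorems; the only point requiring care is the min--max step, specifically justifying that the minimizer of $\max\{\underline{\alpha}_1(u), \underline{\alpha}_2(u)\}$ coincides with the crossing point. This relies on the monotonicity \emph{directions} being opposite and on the crossing being unique. In $\textsf{HUC}_1$ one must additionally note that the flat portion of $\underline{\alpha}_1$ makes the set of minimizers a whole interval $[u_s, u_{\mathsf{cdt}}]$ rather than a single point---consistent with the non-uniqueness of $u_*$ remarked upon after Proposition~\ref{calculation_u_star}---while the optimal value itself remains well-defined and equal to $\alpha_s^{\min}$.
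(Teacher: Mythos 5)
Your proposal is correct and follows essentially the same route as the paper: the \textsf{LUC} line comes directly from Theorem~\ref{lower_bound_LUC} with Proposition~\ref{optimality}, and the \textsf{HUC} lines come from minimizing $\max\{\underline{\alpha}_1(u),\underline{\alpha}_2(u)\}$ at the unique crossing point $u_{\mathsf{cdt}}$, whose location relative to $u_s$ (governed by whether $\overline{p}\le C_s$ via Proposition~\ref{calculation_u_star}) selects the appropriate branch of $\underline{\alpha}_1$. Your explicit justification of the min--max step and the remark about the interval of minimizers in $\textsf{HUC}_1$ match the paper's discussion surrounding Fig.~\ref{case_ABC}.
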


The optimal competitive ratio in \textsf{LUC} directly follows Theorem \ref{lower_bound_LUC}, and  the optimal competitive ratios in $ \textsf{HUC}_1 $ and $ \textsf{HUC}_2 $ follow Theorem \ref{lower_bound_1}, Theorem \ref{lower_bound_2}, and Proposition \ref{calculation_u_star}. Note that the first two cases of \textsf{LUC} and $ \textsf{HUC}_1 $ in Eq. \eqref{cases_opt_cr} can be combined together. However, we keep the current three-case form so that it clearly distinguishes \textsf{LUC} and \textsf{HUC}.

\subsection{Optimal Pricing Functions}\label{section_optimal_pricing_functions}
Based on Corollary \ref{opt_cr} and Algorithm \ref{principles_optimal_design}: i) to get the optimal pricing function for \textsf{LUC}, we need to solve  $ \textsf{L}(\alpha
) $ with $ \alpha = s^{\frac{s}{s-1}} $; ii) to get the optimal pricing function for $ \textsf{HUC}_1 $, we need to solve $\{ \textsf{H}_1(u,\alpha), \textsf{H}_2(u,\alpha)\} $  with any $ u\in [u_s, u_{\mathsf{cdt}}]$ and $ \alpha  =  s^{\frac{s}{s-1}} $; iii) to get the optimal pricing function for $ \textsf{HUC}_2 $, we need to solve $\{ \textsf{H}_1(u,\alpha), \textsf{H}_2(u,\alpha)\} $  with $ u=  u_{\mathsf{cdt}}$ and $ \alpha  = \frac{s-1}{u_{\mathsf{cdt}}-u_{\mathsf{cdt}}^s} $. 

To help characterize the optimal pricing functions for the above three cases,  we first focus on the following first-order initial value problem (IVP):
\begin{equation}\label{IVP}
\begin{cases}
\phi_{\mathsf{ivp}}'(y) =   \alpha\cdot \left(\phi_{\mathsf{ivp}}(y) - \overline{c}y^{s-1}\right), y\in (u,1),\\
\phi_{\mathsf{ivp}}(u) = \overline{c}.
\end{cases}
\end{equation}
Problem \eqref{IVP} is the same as Problem \eqref{two_BVP_power_2} if we exclude the second boundary condition $ \phi(1)\geq \overline{p} $. Based on the Picard-Lindel\"{o}f theorem \cite{Perko2001, ODE1973}, the IVP in Eq. \eqref{IVP} always has a unique strictly-increasing solution for all $ \alpha\in \mathbb{R} $. 
We solve Problem \eqref{IVP} with $ \alpha = \underline{\alpha}_1(u) $, and denote the unique solution by $ \phi_{\mathsf{ivp}}\big(y;u\big) $ as follows:
\begin{align}
\phi_{\mathsf{ivp}}\big(y;u\big) =  
 \overline{c}\cdot\frac{\exp\left(y\cdot\underline{\alpha}_1\big(u\big)\right)}{\left(\underline{\alpha}_1\big(u\big)\right)^{s-1}}\cdot
\int_{y\underline{\alpha}_1\left(u\right)}^{\underline{\alpha}_1\left(u\right) u}\eta^{s-1} e^{-\eta}d\eta  
+ \overline{c}\cdot\exp\big((y-u)\cdot\underline{\alpha}_1(u)\big), 
 y\in [u,1].\label{phi_2_u_opt}
\end{align}
Intuitively, if $ \phi_{\mathsf{ivp}}(1;u)\geq \overline{p} $, then $ \phi_{\mathsf{ivp}}\big(y;u\big) $ is also a solution to Problem \eqref{two_BVP_power_2}. Below in Lemma \ref{solution_to_phi_2} we show that $ \phi_{\mathsf{ivp}}(1;u)\geq \overline{p} $ holds as long as $ u\in [u_s, u_{\mathsf{cdt}}] $.

\begin{lemma}\label{solution_to_phi_2}
	Given $ \overline{p}\in (\overline{c},+\infty) $, for any $ u\in [u_s,u_{\mathsf{cdt}}]$, $ \phi_{\mathsf{ivp}}\big(y;u\big) $ is a solution to Problem \eqref{two_BVP_power_2} with $ \phi_{\mathsf{ivp}}(1;u) \geq   \overline{p} $. 
\end{lemma}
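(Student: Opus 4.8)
The plan is to use the fact that, by construction, $\phi_{\mathsf{ivp}}(y;u)$ is the unique (and, by the remark preceding \eqref{phi_2_u_opt}, strictly-increasing) solution of the IVP \eqref{IVP} with $\alpha=\underline{\alpha}_1(u)$, and that \eqref{IVP} is precisely Problem \eqref{two_BVP_power_2} with its terminal condition $\phi(1)\geq\overline{p}$ removed. Hence $\phi_{\mathsf{ivp}}(y;u)$ already satisfies the ODE on $(u,1)$, the left condition $\phi(u)=\overline{c}$, and strict monotonicity, so the only outstanding requirement for it to solve Problem \eqref{two_BVP_power_2} is the terminal inequality $\phi_{\mathsf{ivp}}(1;u)\geq\overline{p}$. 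The whole lemma thus reduces to verifying this single inequality for every $u\in[u_s,u_{\mathsf{cdt}}]$.

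First I would translate the terminal inequality into a comparison of the two thresholds $\underline{\alpha}_1(u)$ and $\underline{\alpha}_2(u)$. Since the Picard-Lindel\"{o}f theorem guarantees that the ODE together with $\phi(u)=\overline{c}$ admits exactly one solution for each parameter value, Theorem \ref{lower_bound_2} can be read as: this unique solution satisfies $\phi(1)\geq\overline{p}$ if and only if its parameter obeys $\alpha\geq\underline{\alpha}_2(u)$, with equality $\phi(1)=\overline{p}$ exactly at $\alpha=\underline{\alpha}_2(u)$. Because $\phi_{\mathsf{ivp}}(\cdot\,;u)$ is built with the parameter $\alpha=\underline{\alpha}_1(u)$, this yields the equivalence $\phi_{\mathsf{ivp}}(1;u)\geq\overline{p}\iff\underline{\alpha}_1(u)\geq\underline{\alpha}_2(u)$. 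It therefore suffices to establish $\underline{\alpha}_1(u)\geq\underline{\alpha}_2(u)$ on $[u_s,u_{\mathsf{cdt}}]$.

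Next I would settle this threshold comparison from the monotonicity facts already in hand. By Theorem \ref{lower_bound_1}, $\underline{\alpha}_1(u)=\alpha_s^{\min}$ is constant for $u\in[u_s,1)$; by Theorem \ref{lower_bound_2}, $\underline{\alpha}_2(u)$ is strictly increasing in $u$; and by Proposition \ref{calculation_u_star} the two curves meet at the unique CDT $u_{\mathsf{cdt}}$. The interval $[u_s,u_{\mathsf{cdt}}]$ is nonempty only in $\textsf{HUC}_1$, where $u_{\mathsf{cdt}}\in[u_s,1)$ forces $\underline{\alpha}_1(u_{\mathsf{cdt}})=\underline{\alpha}_2(u_{\mathsf{cdt}})=\alpha_s^{\min}$; for any $u\in[u_s,u_{\mathsf{cdt}}]$ I would then chain $\underline{\alpha}_1(u)=\alpha_s^{\min}=\underline{\alpha}_2(u_{\mathsf{cdt}})\geq\underline{\alpha}_2(u)$, the last inequality by strict monotonicity together with $u\leq u_{\mathsf{cdt}}$. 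In $\textsf{HUC}_2$ one has $u_{\mathsf{cdt}}\in(0,u_s)$, so $[u_s,u_{\mathsf{cdt}}]=\emptyset$ and the claim holds vacuously.

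The main obstacle is the equivalence in the second step: that the terminal value $\phi_{\mathsf{ivp}}(1;u)$, viewed as a function of the ODE parameter, crosses $\overline{p}$ exactly at $\underline{\alpha}_2(u)$ and remains above it for larger parameters. This monotone dependence is precisely what lets me infer the sign of $\phi_{\mathsf{ivp}}(1;u)-\overline{p}$ from the threshold ordering, and it is supplied by (the proof of) Theorem \ref{lower_bound_2}. Should a self-contained route be preferred, I would instead substitute $\alpha=\underline{\alpha}_1(u)$ into the closed form \eqref{phi_2_u_opt}, evaluate at $y=1$, and verify directly that $\phi_{\mathsf{ivp}}(1;u)\geq\overline{p}$ is equivalent to $\frac{(\underline{\alpha}_1(u))^{s-1}}{\exp(u\underline{\alpha}_1(u))}-\frac{\overline{p}(\underline{\alpha}_1(u))^{s-1}}{\overline{c}\exp(\underline{\alpha}_1(u))}\geq\int_{u\underline{\alpha}_1(u)}^{\underline{\alpha}_1(u)}\eta^{s-1}e^{-\eta}\,d\eta$, which is exactly the defining relation \eqref{def_lower_bound_2} of $\underline{\alpha}_2(u)$ evaluated at $\underline{\alpha}_1(u)$, and then conclude by the monotonicity of that expression in its argument.
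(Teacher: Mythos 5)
Your proof is correct and follows essentially the same route as the paper's: both arguments reduce the claim to the threshold comparison $\underline{\alpha}_1(u)\geq\underline{\alpha}_2(u)$ on $[u_s,u_{\mathsf{cdt}}]$ and invoke Theorem \ref{lower_bound_2}'s characterization of when the unique increasing solution satisfies the terminal condition $\phi(1)\geq\overline{p}$. The only cosmetic difference is that the paper closes by combining $\phi_{\mathsf{ivp}}(1;u_{\mathsf{cdt}})=\overline{p}$ with the monotonicity of $\phi_{\mathsf{ivp}}(y;u)$ in $u$, whereas you obtain the same endpoint inequality directly from the monotonicity of $\underline{\alpha}_2$ and the definition of the CDT.
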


We also give the following lemma to show the existence of a unique resource utilization level $ \rho_s $ such that  $ \phi_{\mathsf{ivp}}(\rho_s;u_s) = \overline{p} $. 
\begin{lemma}\label{rho_s}
	If the value of $ \rho_s $ leads to $ \phi_{\mathsf{ivp}}(\rho_s;u_s) = \overline{p}$, then $ \rho_s $ is the unique root to the following equation:
	\begin{equation}\label{calculation_rho_s}
	\int_{s}^{\alpha_s^{\min} \rho_s} \eta^{s-1}e^{-\eta}d\eta = \frac{s^s}{\exp(s)} - \frac{\overline{p} s^s}{\overline{c}\cdot \exp(\alpha_s^{\min} \rho_s)}.
	\end{equation}
\end{lemma}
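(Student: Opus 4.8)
The plan is to show that the defining condition $\phi_{\mathsf{ivp}}(\rho_s;u_s)=\overline{p}$ collapses, after substituting the explicit closed form \eqref{phi_2_u_opt}, to exactly Eq. \eqref{calculation_rho_s}, and then to obtain uniqueness not from Eq. \eqref{calculation_rho_s} directly but from the strict monotonicity of the IVP solution. The computation is reversible throughout, so the two problems have identical root sets.

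First I would record two elementary identities at the dividing threshold $u_s=(1/s)^{1/(s-1)}$. Since $u_s\in[u_s,1)$, Eq. \eqref{def_lower_bound} gives $\underline{\alpha}_1(u_s)=\alpha_s^{\min}=s^{s/(s-1)}$, so the solution $\phi_{\mathsf{ivp}}(y;u_s)$ in \eqref{phi_2_u_opt} is the one evaluated with $\alpha=\alpha_s^{\min}$. A direct computation then yields $\alpha_s^{\min}\,u_s=s$ and $(\alpha_s^{\min})^{s-1}=s^s$; these are the two cancellations that make the substitution clean.

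Next I would set $y=\rho_s$ and $u=u_s$ in \eqref{phi_2_u_opt} and impose $\phi_{\mathsf{ivp}}(\rho_s;u_s)=\overline{p}$. Using the identities above, the upper limit $\underline{\alpha}_1(u_s)\,u_s$ of the integral becomes $s$, the prefactor $(\underline{\alpha}_1(u_s))^{s-1}$ becomes $s^s$, and the trailing term $\overline{c}\exp((\rho_s-u_s)\alpha_s^{\min})$ rewrites as $\overline{c}\exp(-s)\exp(\rho_s\alpha_s^{\min})$. Dividing by $\overline{c}$, multiplying through by $s^s\exp(-\rho_s\alpha_s^{\min})$, and then flipping the orientation of the integral reproduces Eq. \eqref{calculation_rho_s} term by term. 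Because each manipulation is an equivalence, a real number $\rho$ solves \eqref{calculation_rho_s} if and only if $\phi_{\mathsf{ivp}}(\rho;u_s)=\overline{p}$.

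Finally I would establish uniqueness and, in the relevant range, existence. By the Picard--Lindel\"{o}f argument already invoked for the IVP \eqref{IVP}, the map $\phi_{\mathsf{ivp}}(\,\cdot\,;u_s)$ is strictly increasing and hence injective, so $\phi_{\mathsf{ivp}}(\rho;u_s)=\overline{p}$ has at most one root; combined with the equivalence just established, Eq. \eqref{calculation_rho_s} has a unique root $\rho_s$, and existence in $(u_s,1]$ follows from $\phi_{\mathsf{ivp}}(u_s;u_s)=\overline{c}<\overline{p}\le\phi_{\mathsf{ivp}}(1;u_s)$ (the last inequality being Lemma \ref{solution_to_phi_2} at $u=u_s$) together with the intermediate value theorem. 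I expect the only genuine subtlety to be this uniqueness step: a direct attack on \eqref{calculation_rho_s} is unattractive, since the function $G(\rho)=\int_s^{\alpha_s^{\min}\rho}\eta^{s-1}e^{-\eta}\,d\eta - s^s e^{-s} + (\overline{p}s^s/\overline{c})e^{-\alpha_s^{\min}\rho}$ has a derivative proportional to $(\alpha_s^{\min}\rho)^{s-1}-\overline{p}s^s/\overline{c}$, which changes sign once, so $G$ is not monotone and could a priori admit two roots. Routing uniqueness through the monotonicity of $\phi_{\mathsf{ivp}}$ rather than through $G$ is exactly what makes the argument go through cleanly.
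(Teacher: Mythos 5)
Your proposal is correct and follows essentially the same route as the paper: substitute the closed form \eqref{phi_2_u_opt} at $u=u_s$ into the condition $\phi_{\mathsf{ivp}}(\rho_s;u_s)=\overline{p}$, exploit the identities $\alpha_s^{\min}u_s=s$ and $(\alpha_s^{\min})^{s-1}=s^s$, and rearrange to obtain Eq.~\eqref{calculation_rho_s}. The only difference is that you make the uniqueness (and existence) step explicit via the strict monotonicity of $\phi_{\mathsf{ivp}}(\cdot\,;u_s)$ and the intermediate value theorem, together with the useful observation that uniqueness cannot be read off from the equation itself since the associated function $G$ is not monotone; the paper leaves this implicit, so your addition strengthens rather than alters the argument.
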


The proofs of the above two lemmas  are given in Appendix \ref{proof_of_solution_to_phi_2}. Based on Eq. \eqref{phi_2_u_opt}, Lemma \ref{solution_to_phi_2}, and Lemma \ref{rho_s} above, we next give Theorem \ref{optimal_pricing_functions_theorem} which summarizes the optimal pricing functions for all cases of \textsf{LUC}, $ \textsf{HUC}_1 $, and $ \textsf{HUC}_2 $. 

\begin{theorem}
	\label{optimal_pricing_functions_theorem} Given a setup $ \mathcal{S} $ with $ f(y) = ay^s$, the optimal pricing functions for $ \text{PPM}_\phi $ are determined as follows. 
	\begin{itemize} 
		\item \underline{\normalfont\textsf{LUC}: $ \overline{p} \in (\underline{c}, \overline{c}] $}. Let us define $ w \triangleq f'^{-1}(\overline{p}/s) $, then we have $ 0< w <v\leq 1 $, where $ v = f'^{-1}(\overline{p}) $. For any $ m\in [w,v] $, $ \text{PPM}_{\phi_m}  $ achieves the optimal competitive ratio of $ s^{\frac{s}{s-1}} $ if $  \phi_m $ is given by:
		\begin{equation}
		\begin{aligned}\label{opt_pricing_case_luc}
		\phi_m(y) = \begin{cases}
		0 &\text{if } y = 0,\\
		\overline{c} \big(\varphi_{\textsf{luc}}(y)\big)^{s-1} &\text{if } y\in (0,m],
		\end{cases}
		\end{aligned}
		\end{equation}
		where  for each given $ y\in (0,m] $, 
		$ \varphi_{\textsf{luc}}(y)$ is the unique root to the following equation in variable $ \varphi_{\textsf{luc}}\in(0,1] $:
		\begin{equation}\label{characteristic_poly_luc}
		\int_{1/m}^{\varphi_{\textsf{luc}}/y}\frac{\eta^{s-1}}{\eta^s-\frac{\alpha_s^{\min}}{s-1}\eta^{s-1}+\frac{\alpha_s^{\min}}{s-1}}\ d\eta = \ln\left(\frac{m}{y}\right).
		\end{equation}
		Meanwhile, when $ m = w = f'^{-1}(\overline{p}/s) $, the optimal pricing function $ \phi_w(y) $ is given by
		\begin{equation}\label{phi_w}
		\phi_w(y) = sf'(y), y\in [0,w]. 
		\end{equation}
		
		\item \underline{$ \normalfont\textsf{HUC}_1 ${\normalfont :} $\overline{p}\in (\overline{c}, C_s]$}. In this case, the CDT $ u_{\mathsf{cdt}}\in [u_s,1) $, and  for each $ u\in [u_s,u_{\mathsf{cdt}}] $,  $ \text{PPM}_{\phi_u}  $ achieves the optimal competitive ratio of $ s^{\frac{s}{s-1}} $ if $ \phi_{u} $ is given by:
		\begin{align}\label{opt_pricing_case_huc_1}
			\phi_u(y) =
			\begin{cases}
			0 &\text{if } y = 0,\\
			\overline{c} \big(\varphi_{\textsf{huc}}(y)\big)^{s-1} &\text{if } y\in (0,u),\\
			\phi_{\mathsf{ivp}}\big(y;u\big)  &\text{if } y\in [u,\rho],
			\end{cases}
		\end{align}
		where  for any given $ y\in (0,u) $, $ \varphi_{\textsf{huc}}(y) $ is the unique root to the following equation in variable $ \varphi_{\textsf{huc}}\in (0,1) $: 
		\begin{equation}\label{characteristic_poly_huc_1}
		\int_{1/u}^{\varphi_{\textsf{huc}}/y}\frac{\eta^{s-1}}{\eta^s-\frac{\alpha_s^{\min}}{s-1}\eta^{s-1}+\frac{\alpha_s^{\min}}{s-1}}\ d\eta  = \ln\left(\frac{u}{y}\right).
		\end{equation}
		In Eq. \eqref{opt_pricing_case_huc_1},  $ \rho\in [\rho_s, 1] $ is the maximum resource utilization level that satisfies $ \phi_{\mathsf{ivp}}(\rho;u) = \overline{p} $, where $ \rho_s $ is given by Lemma \ref{rho_s}. In particular, if $ u = u_s $, then $ \rho = \rho_s $; if $ u = u_{\mathsf{cdt}} $, then   $ \rho = 1 $. Meanwhile, if $ u = u_s $, the optimal pricing function $ \phi_{u_s}(y) $ can be given analytically by
		\begin{align}\label{phi_u_s}
			\phi_{u_s}(y) =
			\begin{cases}
			sf'(y) &\text{if } y\in [0,u_s),\\
			\phi_{\mathsf{ivp}}\big(y;u_s\big)  &\text{if } y\in [u_s,\rho_s].
			\end{cases}
		\end{align}
			
		\item \underline{$ \normalfont\textsf{HUC}_2 ${\normalfont :} $ \overline{p}\in  (C_s,+\infty) $}. In this case, the CDT $ u_{\mathsf{cdt}} \in (0, u_s) $, and $ \text{PPM}_{\phi_{u_{\mathsf{cdt}}}}  $ achieves the optimal competitive ratio of $ \frac{s-1}{u_{\mathsf{cdt}}-u_{\mathsf{cdt}}^s} $ if and only if  $ \phi_{u_{\mathsf{cdt}}} $ is given by:
		\begin{align}\label{pricing_function_high}
		\phi_{u_{\mathsf{cdt}}}(y) =
		\begin{cases}
			f'\left(\frac{y}{u_{\mathsf{cdt}}}\right), & \text{if } y\in [0,u_{\mathsf{cdt}}],\\
			\phi_{\mathsf{ivp}}(y;u_{\mathsf{cdt}}),  &\text{if } y\in [u_{\mathsf{cdt}},1).
			\end{cases} 
		\end{align} 
\end{itemize}
\end{theorem}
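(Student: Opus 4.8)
The plan is to build directly on Corollary \ref{opt_cr} and Algorithm \ref{principles_optimal_design}, which have already fixed, for each of \textsf{LUC}, \textsf{HUC}$_1$ and \textsf{HUC}$_2$, the precise pair $(u,\alpha)$ whose BVP must be solved. Consequently the theorem reduces to explicitly integrating the two ODE families of Theorem \ref{a_unified_BVP} at these optimal parameters and then checking the two boundary conditions and strict monotonicity. I would organize the argument by ODE type rather than by case, since the same integration recurs across \textsf{LUC} and the lower segment of both \textsf{HUC} subcases.

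First I would treat the nonlinear segment, i.e. the ODE of $\textsf{L}(\alpha)$ and $\textsf{H}_1(u,\alpha)$ in \eqref{BVP_power_LUC} and \eqref{two_BVP_power_1}. Applying the substitution $\phi(y)=\overline{c}\,\varphi(y)^{s-1}$ turns the denominator $(\phi/\overline{c})^{1/(s-1)}$ into $\varphi$ and recasts the equation as $(s-1)\varphi^{s-1}\varphi' = \alpha(\varphi^{s-1}-y^{s-1})$, which is homogeneous. Setting $t=\varphi/y$ then yields the separable form $\frac{t^{s-1}}{t^{s}-\frac{\alpha}{s-1}t^{s-1}+\frac{\alpha}{s-1}}\,dt = -\frac{dy}{y}$, whose integral is $\int \frac{t^{s-1}}{t^{s}-\frac{\alpha}{s-1}t^{s-1}+\frac{\alpha}{s-1}}\,dt = -\ln y + C$. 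Imposing the upper boundary $\phi(m)=\overline{c}$ (resp. $\phi(u)=\overline{c}$), i.e. $\varphi=1$, fixes $C$ and produces exactly the implicit characteristic equations \eqref{characteristic_poly_luc} and \eqref{characteristic_poly_huc_1} with $\alpha=\alpha_s^{\min}$, where the integration variable $\eta$ plays the role of $t=\varphi/y$ and $\varphi_{\textsf{luc}},\varphi_{\textsf{huc}}$ are the recovered $\varphi$.

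The delicate point — and the one I expect to be the main obstacle — is the left boundary $\phi(0)=0$, since the right-hand side is singular at the origin. Here I would pass to a phase-plane analysis of the $t$-equation: the denominator $t^{s}-\frac{\alpha_s^{\min}}{s-1}t^{s-1}+\frac{\alpha_s^{\min}}{s-1}$ has a root at the equilibrium $t^{*}=s^{1/(s-1)}$ (a short check using $\alpha_s^{\min}=s^{s/(s-1)}$, whence $t^{*\,s-1}=s$ and $t^{*\,s}=\alpha_s^{\min}$). Because $\ln(m/y)\to+\infty$ as $y\to0^{+}$, the integral can only match if the upper limit $\varphi/y$ tends to the pole $t^{*}$; this forces $\varphi\sim t^{*}y$ and hence $\phi(y)\sim\overline{c}(t^{*})^{s-1}y^{s-1}=s\,f'(y)\to0$, establishing $\phi(0)=0$. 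The same computation identifies the equilibrium line $\phi=s\,f'(y)$ as the degenerate closed-form solution \eqref{phi_w} and \eqref{phi_u_s} arising in the boundary cases $m=w$ and $u=u_s$, where $1/u_s=t^{*}$ makes the lower integration limit coincide with the pole. Strict monotonicity then follows because $\phi'>0$ is equivalent to $\phi>f'(y)$, i.e. $t>1$, which holds along the whole trajectory since $t$ moves monotonically between $t^{*}>1$ and $1/m>1$.

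Finally I would dispatch the linear segment $\textsf{H}_2(u,\alpha)$ in \eqref{two_BVP_power_2}: the integrating factor $e^{-\alpha y}$ gives precisely the closed form $\phi_{\mathsf{ivp}}(y;u)$ of \eqref{phi_2_u_opt}, Lemma \ref{solution_to_phi_2} supplies $\phi_{\mathsf{ivp}}(1;u)\ge\overline{p}$ on $u\in[u_s,u_{\mathsf{cdt}}]$ so the second boundary holds and the stopping level $\rho$ with $\phi_{\mathsf{ivp}}(\rho;u)=\overline{p}$ is well defined, and the endpoint identifications $\rho=\rho_s$ at $u=u_s$ (Lemma \ref{rho_s}) and $\rho=1$ at $u=u_{\mathsf{cdt}}$ follow from monotonicity of $\phi_{\mathsf{ivp}}(1;\cdot)$; continuity at the junction is automatic since both segments equal $\overline{c}$ at $y=u$. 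For \textsf{HUC}$_2$ I would instead verify by direct substitution that $\phi_{u_{\mathsf{cdt}}}(y)=f'(y/u_{\mathsf{cdt}})$ solves \eqref{two_BVP_power_1} with $\alpha=\alpha_s(u_{\mathsf{cdt}})=\frac{s-1}{u_{\mathsf{cdt}}-u_{\mathsf{cdt}}^{s}}$ — the algebra collapses precisely because this $\alpha$ makes the scaling $y\mapsto y/u_{\mathsf{cdt}}$ consistent — and that it meets $\phi(u_{\mathsf{cdt}})=\overline{c}$. The ``if and only if'' in \textsf{HUC}$_2$, as opposed to the ``if'' for the non-unique families of \textsf{LUC} and \textsf{HUC}$_1$, is then inherited from the uniqueness of the strictly-increasing solutions in Theorem \ref{lower_bound_1} and Theorem \ref{lower_bound_2} together with the uniqueness of the optimal pair $(u_{\mathsf{cdt}},\alpha_s(u_{\mathsf{cdt}}))$ from Proposition \ref{calculation_u_star}.
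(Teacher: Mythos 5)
Your overall route coincides with the paper's: both reduce the theorem to solving the BVPs of Theorem \ref{a_unified_BVP} at the optimal parameters fixed by Corollary \ref{opt_cr}; both obtain Eqs.~\eqref{characteristic_poly_luc} and \eqref{characteristic_poly_huc_1} by the substitution $\phi=\overline{c}\varphi^{s-1}$, $t=\varphi/y$ and separation of variables (exactly the derivation of Eq.~\eqref{equation_of_varphi} in Appendix~\ref{proof_lower_bound_1}); and both dispatch the linear segment and \textsf{HUC}$_2$ via the closed form \eqref{phi_2_u_opt}, Lemma~\ref{solution_to_phi_2}, Lemma~\ref{rho_s}, and the linear solution $\varphi=y/u$ of Proposition~\ref{convexity_of_varphi}. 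The one genuinely different ingredient is your treatment of the singular boundary $\phi(0)=0$: the paper simply cites Theorems~\ref{lower_bound_LUC} and \ref{lower_bound_1}, whose proofs establish $\lim_{y\to0^+}\varphi=0$ by squeezing the trajectory between two linear envelopes (Cases 1--2 in Appendix~\ref{proof_lower_bound_limited}), whereas you argue directly from the implicit integral equation that the divergence of $\ln(u/y)$ forces $\varphi/y$ into the double root $t^*=s^{1/(s-1)}$ of the characteristic polynomial, whence $\varphi\sim t^*y\to 0$. This is a clean, self-contained alternative that also explains the closed forms \eqref{phi_w} and \eqref{phi_u_s} as the equilibrium $t\equiv t^*$.

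That divergence argument, however, needs one more step. The integrand $\eta^{s-1}/P_s(\eta;\alpha_s^{\min})$ decays like $1/\eta$ at infinity, so the integral also diverges when the upper limit tends to $+\infty$; in that alternative the asymptotics give $\varphi/y\sim C/y$, i.e.\ $\varphi(0^+)=C>0$, which is precisely the failure mode $\Delta(\alpha,u)>0$ analyzed in Case-3 of Appendix~\ref{proof_lower_bound_limited}. You must therefore show the trajectory is trapped below the pole: since $t$ is monotone in $y$ and cannot cross the equilibrium $t^*$, the blow-up $t\uparrow t^*$ is forced \emph{provided the starting value of $t$ at the right endpoint is at most $t^*$}. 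For \textsf{HUC}$_1$ this is immediate from $u\ge u_s$ (i.e.\ $1/u\le t^*$) and is exactly where the restriction $u\in[u_s,u_{\mathsf{cdt}}]$ enters your argument. For \textsf{LUC} the same check exposes a normalization issue you have inherited from the statement rather than introduced: with the boundary $\varphi(m)=1$ implicit in Eq.~\eqref{characteristic_poly_luc}, the starting value $1/m$ exceeds $t^*$ whenever $\overline{p}<\overline{c}/s$ and $m$ is near $w$, and the left boundary condition then fails; the consistent normalization --- the one under which $\phi_w=sf'$ really is the $m=w$ member of the family and which matches the claim ``$\phi(v)=\overline{p}$'' in Appendix~\ref{proof_of_optimal_pricing_functions_theorem} --- is $\phi_m(m)=\overline{p}$, i.e.\ lower integration limit $v/m\in[1,t^*]$. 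Making that choice explicit closes the gap in \textsf{LUC} and renders the $m=w$ case genuinely degenerate.
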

\begin{proof}
	The optimal pricing functions in the above three cases are derived by solving the corresponding BVPs in Eq. \eqref{BVP_power_LUC} and Eq. \eqref{two_BVP_power}. 
	The details are given in  Appendix \ref{proof_of_optimal_pricing_functions_theorem}.
\end{proof}

For Theorem \ref{optimal_pricing_functions_theorem} we make the following two points. First, the optimal pricing functions in Eq. \eqref{opt_pricing_case_luc} and Eq. \eqref{opt_pricing_case_huc_1} have a separated case when $ y = 0 $.  This is because Eq. \eqref{characteristic_poly_luc} and Eq. \eqref{characteristic_poly_huc_1} are not defined at $ y= 0 $.  However, we can prove that both $ \varphi_{\mathsf{luc}} $ and $\varphi_{\mathsf{huc}} $ approach 0 from the right when $ y\rightarrow 0^+ $, and thus both $ \phi_m(y) $ and $ \phi_u(y) $ are right-differentiable at $ y = 0 $, which is consistent with the ODEs in Eq. \eqref{BVP_power_LUC} and Eq. \eqref{two_BVP_power}. Second, we emphasize that although many  parameters in Theorem \ref{optimal_pricing_functions_theorem} are in analytical forms (e.g., $ u_s, \alpha_s^{\min}$, and $ \phi_{\mathsf{ivp}}\big(y;u\big) $, etc.), numerical computations of $ u_{\mathsf{cdt}}, \varphi_{\textsf{luc}}$,  and $ \varphi_{\textsf{huc}} $ are still needed. In particular, the CDT $ u_{\mathsf{cdt}} $ can be calculated offline, while the computations of  $\varphi_{\textsf{luc}}$ and $ \varphi_{\textsf{huc}} $ must be performed in real-time (i.e., ``on-the-fly"). This should not be a concern for the online implementation of $ \text{PPM}_\phi $ since these computations are light-weight (e.g., all the root-finding can be performed efficiently by bisection searching).

\begin{figure}
	\centering
	\subfigure[\textsf{LUC}: $ \overline{p}\in {(\underline{c},\overline{c}]} $]{\includegraphics[width= 4.7 cm]{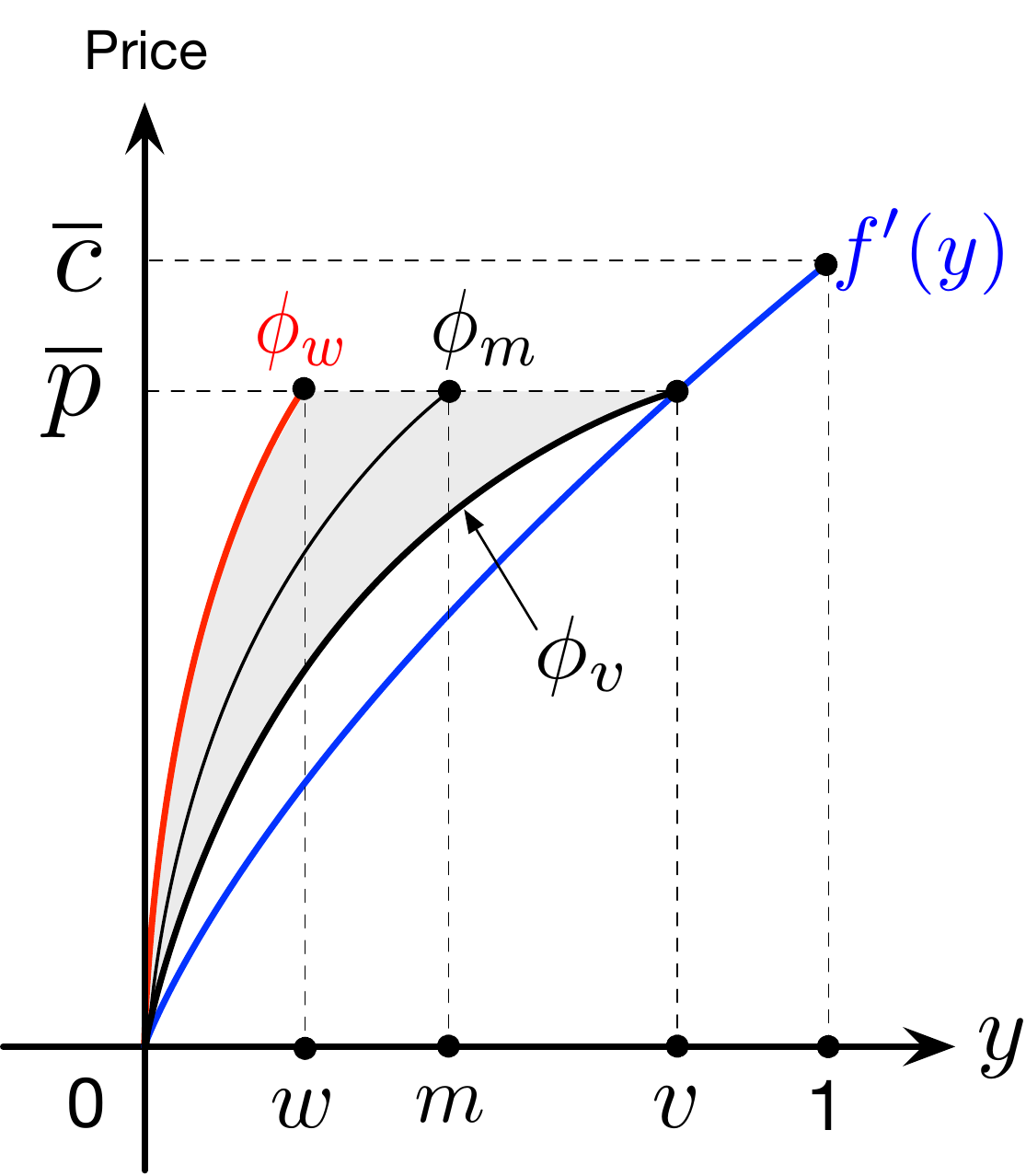}}
	\qquad
	\subfigure[$ \textsf{HUC}_1 $: $ \overline{p}\in {(\overline{c},C_s]} $]{\includegraphics[width= 4.7 cm]{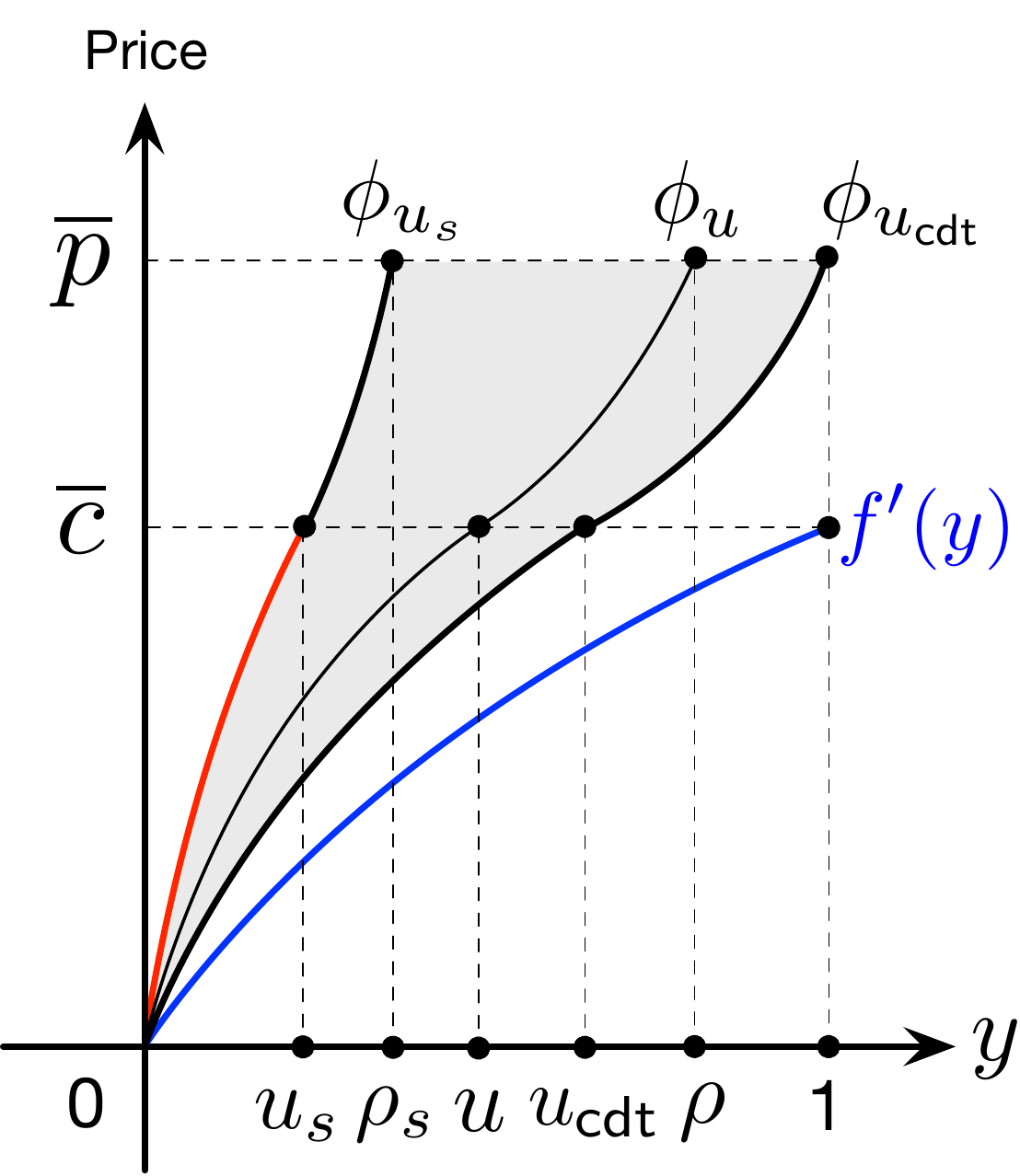}}
	\caption{Illustration of the optimal pricing functions in \textsf{LUC} and $ \textsf{HUC}_1 $. The two red curves represent the same function $ sf'(y)$ but with different domains.}
	\label{optimal_pricing_functions_figure}
\end{figure}

\subsection{Discussion of Structural Properties}
Fig. \ref{optimal_pricing_functions_figure} illustrates the optimal pricing functions for \textsf{LUC} and $ \textsf{HUC}_1 $. We do not illustrate the unique optimal pricing function for $ \textsf{HUC}_2 $ since it is similar to Fig. \ref{three_pricing_schemes}(b).  We discuss several interesting structural properties revealed by Theorem \ref{optimal_pricing_functions_theorem}.

(\textbf{Aggressiveness of Pricing Functions})  In both \textsf{LUC} and $ \textsf{HUC}_1 $, the optimal pricing functions are non-unique, while the optimal pricing function is unique in $ \textsf{HUC}_2 $. In particular, the optimal pricing functions for \textsf{LUC} and $ \textsf{HUC}_1 $ can be represented by two infinite sets  of functions  as follows:
\begin{equation}
\Omega_{\textsf{luc}} = \{\phi_m\}_{\forall m\in [w,v]}, \Omega_{\textsf{huc}_1} =\{\phi_u\}_{\forall u\in [u_s,u_{\mathsf{cdt}}]},
\end{equation}
where $ \phi_m $ and $ \phi_u  $ are given by Eq. \eqref{opt_pricing_case_luc} and Eq. \eqref{opt_pricing_case_huc_1}, respectively. Graphically, these two sets cover the grey area in Fig. \ref{optimal_pricing_functions_figure}. Specifically, as shown in Fig. \ref{optimal_pricing_functions_figure}(a), all the optimal pricing functions in $ \Omega_{\textsf{luc}} $ are lower bounded by $ \phi_v $ and upper bounded by $ \phi_w $. Similarly, in $\normalfont \textsf{HUC}_1 $, all the optimal pricing functions in $ \Omega_{\textsf{huc}_1} $ are lower bounded by $ \phi_{u_{\mathsf{cdt}}} $ and upper bounded by $ \phi_{u_s} $. In  economics, if a pricing scheme `A' sets the price cheaper than pricing scheme `B', then we say pricing scheme `A' is more aggressive than pricing scheme `B' \cite{economics1995}. In this regard, $ \phi_{u_{\mathsf{cdt}}} $ ($ \phi_v $) is the most aggressive optimal pricing function in $ \textsf{HUC}_1 $ (\textsf{LUC}), that is, $ \phi_{u_s} $ ($ \phi_w $) is the most conservative optimal pricing function in $ \textsf{HUC}_1 $ (\textsf{LUC}). Interestingly, the pricing scheme proposed by \cite{Huang2018} for the same setup of power cost functions is $ \phi_w(y) = sf'(y) $ (i.e., the red curves in Fig. \ref{optimal_pricing_functions_figure}), which is only a special case of all the optimal pricing functions characterized in $ \Omega_{\textsf{luc}} $ and $ \Omega_{\textsf{huc}_1} $. Moreover, in $ \textsf{HUC}_2 $, Theorem \ref{optimal_pricing_functions_theorem} shows that the pricing scheme $ \phi_w $ is suboptimal when $ \overline{p} $ is larger than $ C_s $. Therefore, our optimal pricing functions in Theorem \ref{optimal_pricing_functions_theorem} generalize and improve the results in \cite{Huang2018}. 
	
(\textbf{Pricing at Multiple-the-Index}) Note that the pricing function $ \phi_w $ in \textsf{LUC} and the first segment of $ \phi_{u_s} $ in $ \textsf{HUC}_1 $ can be written as 
$ sf'(y)= f'\big(s^{\frac{1}{s-1}}y\big) $,
which uses the marginal cost function $ f' $ to price the resource at $ s^{\frac{1}{s-1}} $-multiple-the-index, and the multiplicative factor $ s^{\frac{1}{s-1}}\in (e, 1) $ when $ s>1 $. In $ \textsf{HUC}_2 $, the optimal pricing function $ \phi_{u_{\mathsf{cdt}}} $ also prices the resources at $ \frac{1}{u_{\mathsf{cdt}}} $-multiple-the-index of $ f'(y) $ when $ y\in [0,u_{\mathsf{cdt}}] $.  The development of such pricing schemes is not entirely new in algorithmic mechanism design. For example, for similar setups of online CAs with supply or production costs (but without capacity limits), the authors of \cite{Blum2011} proposed a pricing scheme called ``twice-the-index" (i.e., $ \phi(y) = f'(2y) $), and the authors of \cite{Huang2018} proposed a more general pricing scheme of $ \phi(y) = f'( \beta y) $ with $ \beta>1 $.  However, to the best of our knowledge, our work here is the first to prove that such pricing schemes are optimal even if capacity limits are present, provided that the multiplicative factors are properly chosen.

\section{Extensions: The General Model}\label{extension}
In this section, we extend our previous results to more general settings of online resource allocation with heterogeneous cost functions and multiple time slots.

\subsection{The General Model} 
We consider the same problem setup as in Section \ref{problem_setup}, but make the following generalizations. First, the cost function for each resource type $ k\in\mathcal{K} $ is denoted by $ f_k $, which can be different among different resource types.  Second, if customer $ n\in\mathcal{N} $ chooses bundle $ b\in\mathcal{B} $, let $ r_k^b(t) $ denote the units of resource type $ k $ owned by customer $ n $ at time slot $ t $, where $ t\in\mathcal{T}_n $ and $ \mathcal{T}_n $ is the duration that customer $ n $ wants to own the resources in bundle $ b $. Suppose bundle $ b $  is denoted by the same vector $ (r_1^b,\cdots,r_K^b) $ as before, then $ r_k^b(t) $ is given by
\begin{align}
r_k^b(t)=
\begin{cases}
r_k^b &\text{ if } t\in\mathcal{T}_n,\\
0     &\text{ if } t\in\mathcal{T}\backslash\mathcal{T}_n,
\end{cases}
\end{align}
where $ \mathcal{T} $ denotes the total time horizon of interest. Based on the above generalizations, our extended model can account for \textit{multi-period online resource allocation with heterogeneous cost functions}. In particular,  the new offline social welfare maximization problem is given by:
\begin{subequations}\label{SWM_general}
	\begin{alignat}{3}
	&  \underset{\mathbf{x,y}}{\mathrm{maximize}}      \qquad\qquad    & & \sum_{n\in\mathcal{N}}\sum_{b\in\mathcal{B}}v_n^b x_n^b - \sum_{k\in\mathcal{K}}\sum_{t\in\mathcal{T}}f_k\big(y_k(t)\big)\\
	&\mathrm{subject\ to}\ & &   \sum_{n\in\mathcal{N}}\sum_{b\in\mathcal{B}}r_k^{b}(t) x_n^b =  y_k(t), \forall k,t, & &  \label{totald_load}\\
	&   & & \sum_{b\in\mathcal{B}}x_n^b \leq  1,\forall n, & &  \label{bidd_selection}\\
	&    & & 0 \leq y_k(t) \leq 1, \forall k,t,\\
	&   & & x_n^b \in \{0,1\},\forall n, b,\label{power_rating}
	\end{alignat}
\end{subequations}
where $ y_k(t) $ is the utilization of resource type $ k $ at time $ t $.

\subsection{Generalization of Theorem \ref{a_unified_BVP}}\label{sec_Redefinition} 
To generalize Theorem \ref{a_unified_BVP} to account for the above resource allocation model, we first need to redefine some key parameters as follows. We assume that
$
\max_{ n\in\mathcal{N},b\in\mathcal{B},r_k^b\neq 0}\ \{\frac{v_n^b}{|\mathcal{T}_n|\cdot r_k^b}\} \leq \overline{p}_k,
\underline{c}_k \triangleq f_k'(0), \text{ and } \overline{c}_k \triangleq f_k'(1), \forall k\in\mathcal{K}
$, 
where $ \overline{p}_k $, $ \underline{c}_k  $ and $ \overline{c}_k $  correspond to $ \overline{p} $, $ \underline{c}$ and $ \overline{c} $ in Section \ref{assumptions}, respectively. Here, we have an upper bound $ \overline{p}_k  $, a minimum marginal cost $ \underline{c}_k  $,  and a maximum marginal cost $ \overline{c}_k  $ for each  $ k\in\mathcal{K} $.  In particular, $ \overline{p}_k $ can be interpreted as the maximum price customers are willing to pay for purchasing a single unit of resource type $ k $ for each time slot. 

Below we give a general version of Theorem \ref{a_unified_BVP}. Specifically, we focus on the case of \textsf{HUC} only (i.e., $ \overline{p}_k> \overline{c}_k  $). The case of \textsf{LUC} (i.e., $ \overline{p}_k\leq  \overline{c}_k  $) is similar and is omitted for brevity.

\begin{theorem} \label{a_unified_BVP_general}
	For any $ k\in\mathcal{K} $, if $ f_k\in\mathcal{F} $ and the upper bound $ \overline{p}_k\in (\overline{c}_k,+\infty) $, then we have:
	\begin{itemize}
		\item {\normalfont\textsf{Sufficiency}}. For any given $ \alpha_k\geq 1 $, if $ \phi_k(y) $ is a solution to the following two first-order BVPs simultaneously:
		\begin{subequations}
			\begin{align}
			&\begin{cases}
			\phi_k'(y)  =  \alpha_k \cdot \frac{\phi_k (y) - f_k'(y)}{f_k'^{-1}(\phi_k(y))}, y\in (0,u_k), \\
			\phi_k(0) = \underline{c}_k, 
			\phi_k(u_k) = \overline{c}_k.
			\end{cases}\label{general_BVP_1}
			\\
			&\begin{cases}
			\phi_k'(y) =  \alpha_k\cdot \left(\phi_k(y) - f_k'(y)\right), y\in (u_k,1), \\
			\phi_k(u_k) = \overline{c}_k, 
			\phi_k(1) \geq \overline{p}_k,
			\end{cases}\label{general_BVP_2}
			\end{align}
		\end{subequations}
		where $ u_k\in (0,1) $ is the dividing threshold of $ \phi_k $, 
		then $ \text{PPM}_{\bm{\phi}}  $ is $ \max_{k\in\mathcal{K}} \{\alpha_k\} $-competitive.
		
		\item {\normalfont\textsf{Necessity}}. If there is an $ \alpha $-competitive online algorithm, then for all $ k\in\mathcal{K} $, there must exist a dividing threshold $ u_k\in (0,1) $ and a strictly-increasing pricing function $ \phi_k(y) $ such that $ \phi_k(y) $ satisfies Problem \eqref{general_BVP_1} and Problem \eqref{general_BVP_2} with a feasible competitive ratio parameter $ \alpha_k\in [1,\alpha] $. 
	\end{itemize} 
\end{theorem}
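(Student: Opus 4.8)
The plan is to reduce Theorem \ref{a_unified_BVP_general} to the single-resource result of Theorem \ref{a_unified_BVP} by exploiting the separable structure of Problem \eqref{SWM_general}. First I would observe that both the objective and the capacity constraints decouple across each pair $(k,t)$ of resource type and time slot: the cost $\sum_{k}\sum_{t} f_k(y_k(t))$ is a sum of independent convex terms, and each $y_k(t)$ is governed by its own copy of constraint \eqref{totald_load} with its own capacity limit in \eqref{power_rating}. Thus every $(k,t)$ behaves as an independent ``virtual resource'' carrying the common cost $f_k$, and the redefined bound $\overline{p}_k$ (with the $|\mathcal{T}_n|$ normalization) is precisely the per-unit-per-slot valuation cap that plays the role of $\overline{p}$ from Section \ref{assumptions} for every slot-copy of resource $k$. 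The only coupling across resource types enters through each customer's bundle choice, which appears solely in the utility-maximization of Eq. \eqref{utility_maximizing} and the acceptance test; as I explain below, this coupling is handled transparently by the dual.

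For sufficiency I would run the online primal-dual analysis on the relaxation of Problem \eqref{SWM_general}, exactly as in the proof of Theorem \ref{a_unified_BVP}, but now maintaining a separate dual price $p_k(t)=\phi_k(y_k(t))$ for each $(k,t)$. Setting the posted prices to $\phi_k(y_k^{(n-1)}(t))$ makes $\text{PPM}_{\bm{\phi}}$'s utility-maximizing choice in Eq. \eqref{utility_maximizing} coincide with maximizing $v_n^b-\sum_{k,t}p_k(t)r_k^b(t)$, so the dual variable $\mu_n$ for constraint \eqref{bidd_selection} can be set to the resulting nonnegative utility, and dual feasibility then holds automatically for every $(n,b)$ --- this is what neutralizes the cross-resource coupling. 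Writing the online welfare as
\begin{align*}
W_{\textsf{online}} = \sum_{n}\mu_n + \sum_{k}\sum_{t}\Big(\int_0^{y_k(t)}\phi_k(z)\,dz - f_k(y_k(t))\Big),
\end{align*}
the heart of the argument is the per-resource inequality that the two BVPs \eqref{general_BVP_1}--\eqref{general_BVP_2} are designed to guarantee: for every $(k,t)$ and every utilization level, the captured surplus $\int_0^{y}\phi_k - f_k(y)$ is at least $1/\alpha_k$ times the corresponding conjugate contribution to $W_{\textsf{opt}}$. This is exactly what the analysis of Theorem \ref{a_unified_BVP} establishes slotwise, the two-segment structure over $[0,u_k]$ and $[u_k,1]$ accounting for the \textsf{HUC} regime $\overline{p}_k>\overline{c}_k$. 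Since $\mu_n\ge \frac{1}{\max_k\alpha_k}\mu_n$ and each resource term loses at most a factor $\alpha_k\le\max_k\alpha_k$, summing over all $(k,t)$ and all $n$ and invoking weak duality yields $W_{\textsf{online}}\ge \frac{1}{\max_k\alpha_k}W_{\textsf{opt}}$, the claimed competitiveness.

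For necessity I would argue by restriction. Given any $\alpha$-competitive online algorithm, fix a resource type $k$ and consider arrival instances that demand only resource $k$ on a single time slot, with all other demands set to zero. On this family the general model collapses to the basic model of Section \ref{problem_setup} with cost $f_k$ and bound $\overline{p}_k$, and the algorithm remains $\alpha$-competitive on it. The necessity part of Theorem \ref{a_unified_BVP} (\textsf{HUC}) then furnishes a dividing threshold $u_k\in(0,1)$ and a strictly-increasing $\phi_k$ satisfying $\{\textsf{H}_1(u_k,\alpha_k),\textsf{H}_2(u_k,\alpha_k)\}$, i.e.\ Problem \eqref{general_BVP_1} and Problem \eqref{general_BVP_2}, with some $\alpha_k\le\alpha$. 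Repeating this for each $k\in\mathcal{K}$ produces the required family $\{(u_k,\phi_k,\alpha_k)\}$.

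The main obstacle I anticipate is not any single estimate but the careful justification that the slotwise decomposition is lossless. I must verify that, under Assumption \ref{assumption_small}, the accumulated payment on each $(k,t)$ is well-approximated by $\int_0^{y_k(t)}\phi_k$, that dual feasibility is preserved when a customer's bundle spans several resource types and several slots sharing a single $\mu_n$, and that taking the \emph{maximum} rather than the sum of the $\alpha_k$ is the correct aggregation --- which hinges on the fact that the shared term $\sum_n\mu_n$ can be charged at the coarsest rate $1/\max_k\alpha_k$ without harming any individual resource's bound.
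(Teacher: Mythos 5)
Your proposal is correct and follows essentially the same route as the paper: the paper's Appendix proof simply states per-resource-type analogues of its sufficiency and necessity results (Corollaries \ref{sufficiency_general} and \ref{necessity_general}) and notes they follow the same online primal-dual argument, which is exactly the slotwise decomposition with per-$(k,t)$ dual prices and the restriction-to-one-resource construction you describe. Your additional care about charging the shared $\mu_n$ term at rate $1/\max_k\alpha_k$ is the right (and implicitly assumed) aggregation step.
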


The proof of Theorem \ref{a_unified_BVP_general} is similar to that of Theorem \ref{a_unified_BVP}, and the details are given in  Appendix \ref{proof_a_unified_BVP_general}. Based on the two BVPs in Theorem \ref{a_unified_BVP_general}, for each resource type $ k\in\mathcal{K} $, we can define the minimum competitive ratio parameter $ 
\alpha_k^* $ in a similar way as Proposition \ref{optimality}. The final competitive ratio is then given by 
$ \alpha_*(\mathcal{S}) = \max\limits_{k\in\mathcal{K}}\  \{\alpha_k^*\} $. 
We can also define the lower bound of $ \alpha_k $ according to Definition \ref{def_achievable_region}. The principles in Algorithm \ref{principles_optimal_design} can thus be applied for  characterizing the competitive ratios and the corresponding pricing functions in the general case. Meanwhile, our analytical results for the setup with power cost functions also hold with some slight modifications. The details are omitted for brevity.

\section{Empirical Evaluation}
In this section we evaluate the performance of our designed online mechanism via extensive empirical experiments of online job scheduling in cloud computing. 

\subsection{Simulation Setup}
(\textbf{Supply Costs})
We consider two types of resources ($ K=2$), namely, CPU and RAM. We use the traces of one-month computing tasks in a Google cluster \cite{google_trace_analysis}. We assume each bundle $ b\in\mathcal{B} $ is given by $ (r_{\mathsf{cpu}}^b, r_{\mathsf{ram}}^b) $, where $ r_{\mathsf{cpu}}^b $ and $ r_{\mathsf{ram}}^b $ can be any value in $ \{0.001, 0.003, 0.005\} $ units of the total normalized capacity 1. Therefore,  in total we have $|\mathcal{B}| =  9 $ bundles. We assume $ T = 3600 $ time slots and each time slot is 10 seconds. The cost functions for CPU and RAM are given by $ f_{\mathsf{cpu}}(y) = a_{\mathsf{cpu}}y^{s_{\mathsf{cpu}}} $ and $ f_{\mathsf{ram}}(y) = a_{\mathsf{ram}}y^{s_{\mathsf{ram}}} $, respectively. Following \cite{Adam_speed_scaling, speed_scaling, XZhang_ToN}, we assume $ s_{\mathsf{cpu}} = 3 $  and $ s_{\mathsf{ram}} = 1.2 $. We set up the coefficients $ (a_{\mathsf{cpu}},a_{\mathsf{ram}}) = (0.223,8.38\times 10^{-6})  $ by keeping the ratio of $ a_{\mathsf{cpu}}/a_{\mathsf{ram}} $  based on  \cite{power_models}, where the dominate power consumption is from CPU. This setup of cost functions follows the typical power consumption models of data centers \cite{data_center_power}. 
The minimum marginal costs are zero and the maximum marginal costs are given by $ \overline{c}_{\mathsf{cpu}} \approx 0.67 $ and $ \overline{c}_{\mathsf{ram}} \approx  1.01\times10^{-5}  $. Since $ \overline{c}_{\mathsf{ram}} $ is much smaller than $ \overline{c}_{\mathsf{cpu}} $, our simulation mainly focuses  on the power costs of CPU consumptions. For simplicity, we write $ \overline{c}_{\mathsf{cpu}} = 0.67 $ hereinafter without the approximation sign.

(\textbf{Job Arrivals}) 
We consider the total number of jobs is $ N = 4000 $. The arrival time and duration of each job follow the job arrival and departure times in Google cluster trace \cite{google_trace_analysis}. For job $ n $, the valuation $ v_n^b $ is given by $ v_n^b =  p |\mathcal{T}_n|r_{\mathsf{cpu}}^b
$, where $ |\mathcal{T}_n| $  denotes the duration of job $ n $ and $  p $ is a random variable constructed as follows: 
\begin{enumerate}
	\item \textsf{Uniform-Exact Case (Case-UE)}. The sequences of $ p $ are uniformly distributed within $ [0, \overline{p}] $ and the pricing functions are designed based on the exact value of $ \overline{p} $.
	
	\item \textsf{Extreme-Exact Case (Case-EE)}. This extreme case evaluates the performance robustness of online mechanisms. For the first-half of the total jobs, the sequences of $ p $ are uniformly distributed within $ [0, \frac{\overline{p}}{2}] $. While for the second-half, the sequences of $ p $ are uniformly distributed  within $ [\frac{\overline{p}}{2}, \overline{p}] $.  Meanwhile, the pricing functions are designed based on the exact value of $ \overline{p} $. 
	
	\item \textsf{Uniform-Inexact Case (Case-UI)}. The sequences of $ p $ are  uniformly distributed within $ [0, \overline{p}] $. However, the pricing function is designed based on the estimated upper bound $ \overline{p}_{\mathsf{estimate}} = \overline{p}(1+\delta) $, where $ \delta\in [-0.8,2.4] $, meaning that $ \overline{p} $ can be underestimated (overestimated) for as much as 80\% (240\%). We use this case to evaluate the impact of underestimations/overestimations of $ \overline{p} $ on the performances of different online mechanisms.
	
	\item \textsf{Extreme-Inexact Case (Case-EI)}. This is a mixture of the second and third case. Specifically, the sequences of $ p $  are  generated in the same way as those in \textsf{Case-EE}, and $ \overline{p}_{\mathsf{estimate}} $ follows the same setup as \textsf{Case-UI}.  
\end{enumerate}

(\textbf{Performance Metrics})
Given any arrival instance $ \mathcal{A} $, we define the empirical ratio (ER)  by
\begin{equation*}
\text{ER}(\mathcal{A}) \triangleq \frac{W_{\mathsf{opt}}(\mathcal{A})}{W_{\mathsf{online}}(\mathcal{A})},
\end{equation*}
where $ W_{\mathsf{opt}}(\mathcal{A}) $ is the optimal objective of Problem \eqref{SWM}. For each sample of $ \mathcal{A} $, we  solve Problem \eqref{SWM} by Gurobi 8.1 via its Python API\footnote{http://www.gurobi.com}, and then evaluate ERs over 1000 samples of $ \mathcal{A} $'s to get the average ER of each online mechanism. 

(\textbf{Benchmarks}) We refer to  our proposed PPM with optimal pricing as PPM-OP, and compare it with the offline benchmark and two existing PPMs as follows:
\begin{itemize}
	\item PPM with Twice-the-index Pricing (PPM-TP). This PPM is first proposed in \cite{Blum2011} and later extended for cloud resource allocation problems in \cite{XZhang_ToN}. By PPM-TP, when $ y\in [0,0.5] $, the pricing function is $ \phi(y) = f'(2y) $; when $ y\in (0.5,1] $, the pricing function is exponential and the detailed expression is referred to \cite{XZhang_ToN}.
	\item PPM with Myopic Pricing (PPM-MP). This PPM  prices the resources based on the current marginal costs, i.e., $\phi(y) =  f'(y) $, and thus is myopic in the sense that the resources will be allocated aggressively without reservation for potential high-PUV customers in the future.
\end{itemize}

For any given resource utilization level $ y\in (0,1)$, PPM-TP always has the highest posted prices and PPM-MP always has the cheapest ones. Therefore, among the three online mechanisms, PPM-TP (PPM-MP) is the most conservative (aggressive) one\footnote{Based on \eqref{phi_w}, the most conservative optimal pricing function is $ \phi_w(y) = sf'(y)$, which is still more aggressive than $ f'(2y) = 2^s f'(y) $ when $ s >1  $.}.

\subsection{Numerical Results}
Fig. \ref{normal_case} compares the different online mechanisms in \textsf{Case-UE}. As shown in Fig. \ref{normal_case}(a), $ \overline{p} $ varies within $ [\overline{c}_{\mathsf{cpu}}, 9\overline{c}_{\mathsf{cpu}}] $, where $  \overline{c}_{\mathsf{cpu}} = 0.67 $ and $ 9\overline{c}_{\mathsf{cpu}} = 6.03 $. Note that  based on Eq. \eqref{C_s}, we have $ C_s \approx 4.21 \approx 6.28\overline{c}_{\mathsf{cpu}}  $, and thus the setup of $ \overline{p} \in [\overline{c}_{\mathsf{cpu}}, 9\overline{c}_{\mathsf{cpu}}]$  in Fig. \ref{normal_case}(a) covers all the cases of $\textsf{LUC}$, $ \textsf{HUC}_1 $,  and $ \textsf{HUC}_2 $.  We can see that the ERs of our proposed PPM-OP are roughly around $1.12 \sim 1.22$, which strictly outperforms both PPM-TP and PPM-MP. An interesting result revealed by Fig. \ref{normal_case}(a) is that the ER performance of PPM-OP (PPM-TP) first improves (deteriorates) and then deteriorates (improves) when $ \overline{p} $ increases within $ [\overline{c}_{\mathsf{cpu}}, 9\overline{c}_{\mathsf{cpu}}] $. We argue that the  ER behaviours of PPM-OP for $ \overline{p}\in [\overline{c}_{\mathsf{cpu}}, 6\overline{c}_{\mathsf{cpu}}] $ are reasonable although the optimal competitive ratios are  the same when $ \overline{p}\in [\overline{c}_{\mathsf{cpu}}, 6\overline{c}_{\mathsf{cpu}}]\subset [\overline{c}_{\mathsf{cpu}}, C_s] $. The insight is that  when $ \overline{p} $ slightly increases from $ \overline{c}_{\mathsf{cpu}} $ to $ 3\overline{c}_{\mathsf{cpu}} $, the uncertainty level of the arrival instances also slightly increases, and this is beneficial for the online posted-price control since whatever  decisions made now may have remedies in the future. However,  when $ \overline{p}>3\overline{c}_{\mathsf{cpu}} $, the ER performance of PPM-OP becomes worse whenever $ \overline{p} $ increases. This is because the uncertainty level of the arrival instances is too high so that it becomes challenging to perform online posted-price control without future information. The differences of the three online mechanisms can also be seen by their total CPU resource utilizations in Fig. \ref{normal_case}(b). PPM-MP is the most aggressive and thus the total capacity is quickly depleted (i.e., 100\% utilization). PPM-TP is the most conservative and reserves over 40\% capacity for future jobs. The total CPU resource utilization of PPM-OP (around 85\% maximum utilization) stays between those of PPM-MP and PPM-TP, and achieves a better balance between aggressiveness and conservativeness. 

\begin{figure}
	\centering
	\subfigure[Empirical Ratios]{\includegraphics[height=4.7cm]{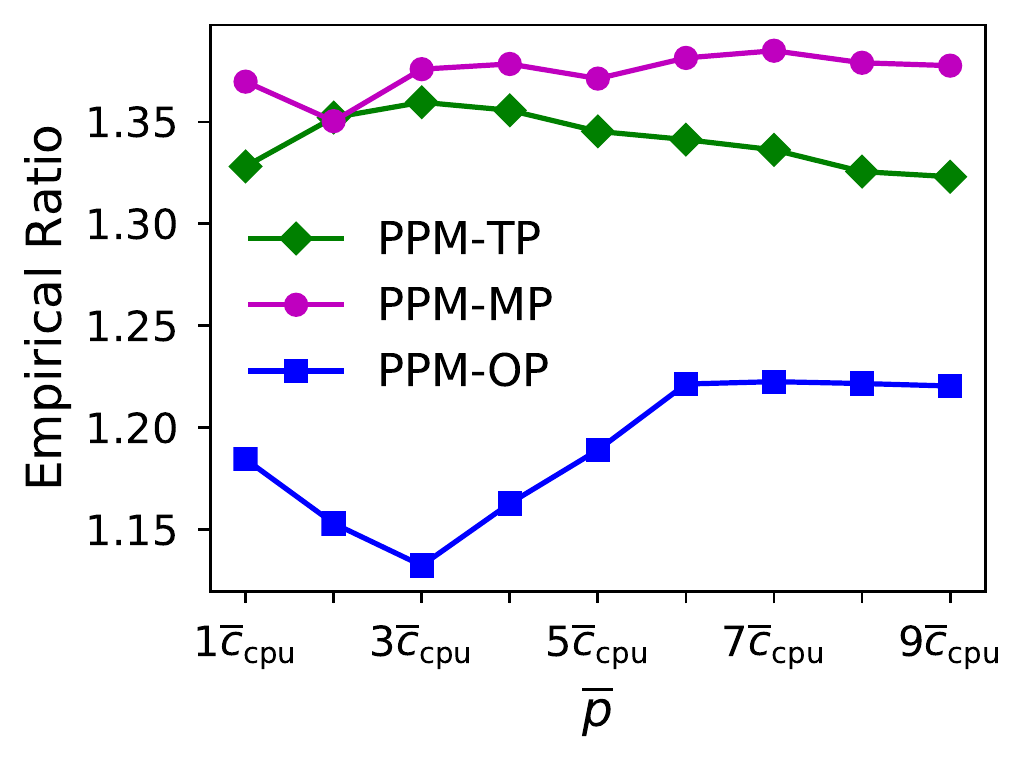} }
	\subfigure[Total Resource Utilization]{\includegraphics[height=4.7cm]{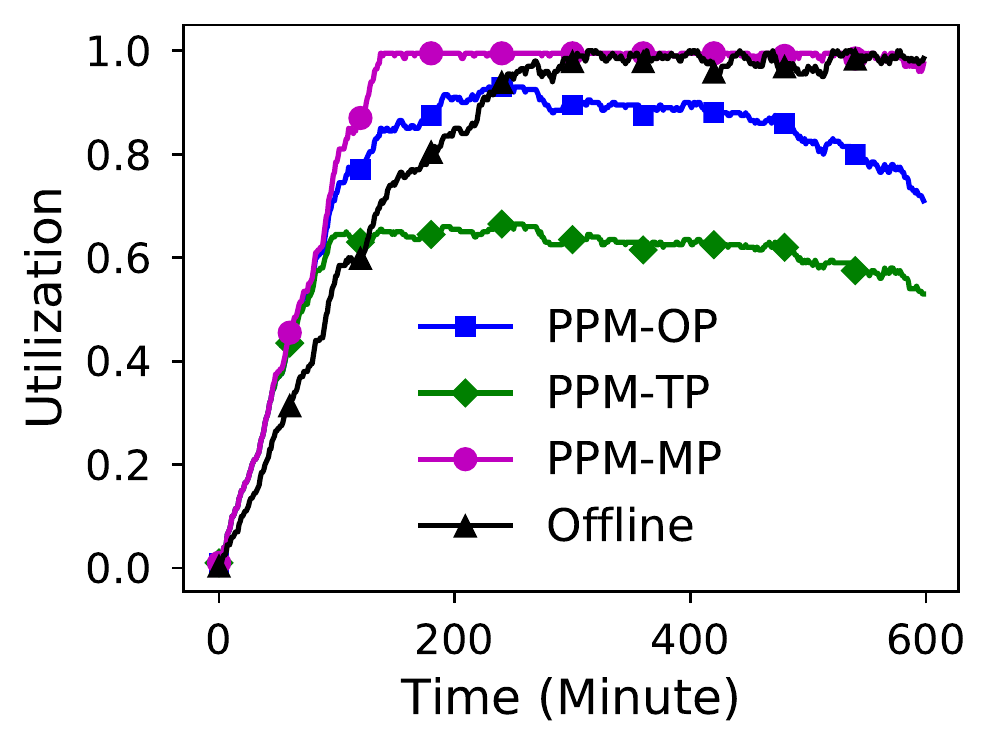}}
	\caption{ERs and total resource utilizations of different online mechanisms in \textsf{Case-UE}. Each point in the left figure is an average of 1000 instances. The right figure is for one instance of $ \overline{p} = 2 \overline{c}_{\mathsf{cpu}} = 1.34 $.}
	\label{normal_case}
\end{figure}

\begin{figure}
	\centering
	\subfigure[Empirical Ratios]{\includegraphics[height=4.7cm]{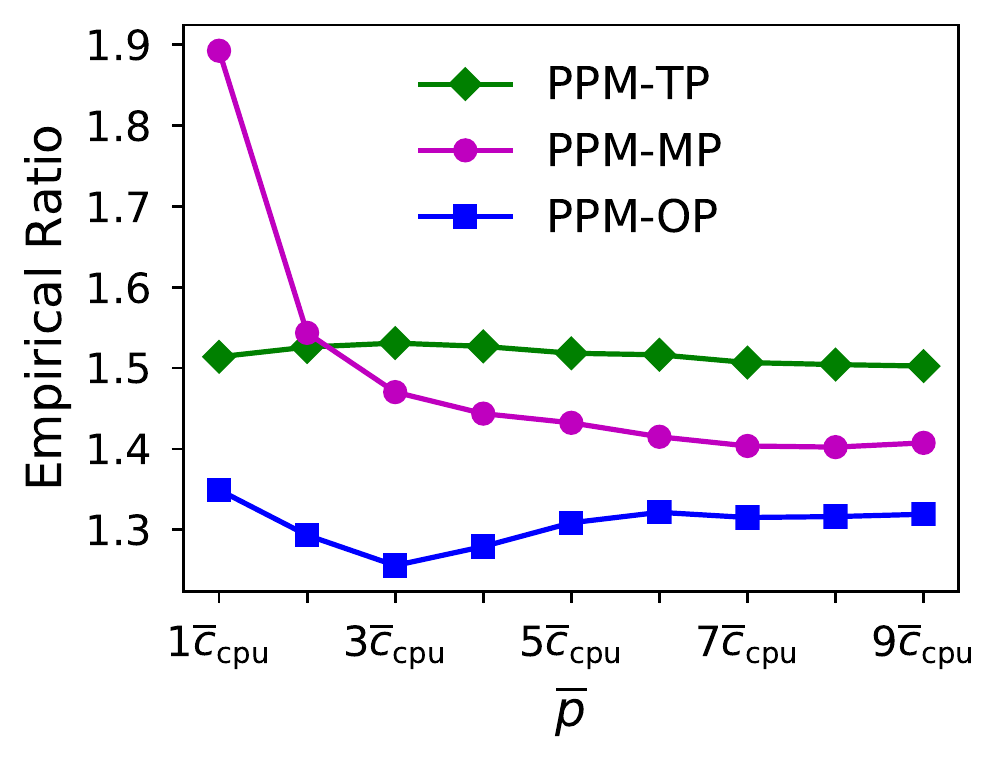} }
	\subfigure[Total Resource Utilization]{\includegraphics[height=4.7cm]{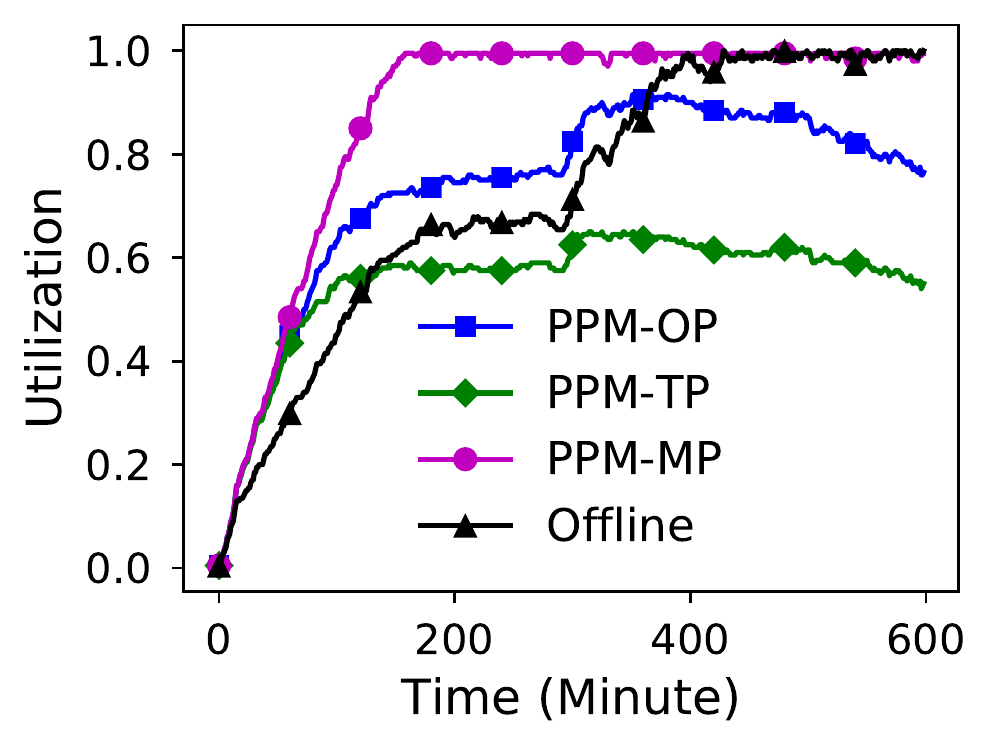}}
	\caption{ERs and total resource utilizations of different online mechanisms in \textsf{Case-EE}. Each point in the left figure is an average of 1000 instances. The right figure is for one instance of $ \overline{p} = 2 \overline{c}_{\mathsf{cpu}} = 1.34 $.}
	\label{low_high_case}
\end{figure}

Fig. \ref{low_high_case} shows the ERs of online mechanisms in \textsf{Case-EE}. The first result revealed by Fig. \ref{low_high_case}(a) is intuitive, namely, the ERs of all the three online mechanisms are worse than the ERs in \textsf{Case-UE}.  Second, our proposed PPM-OP achieves a very competitive performance even in this extreme case: the ERs of PPM-OP are always below 1.4, which outperforms PPM-TP by more than 15\% in average. Third, Fig. \ref{low_high_case}(a) also shows that the greedy mechanism PPM-MP is significantly worse than both PPM-TP and PPM-OP when $ \overline{p} $ is small, but outperforms PPM-TP when $ \overline{p} $ is large.  However, due to the greedy nature of PPM-MP,  the ERs of PPM-MP are considerably less robust than those of PPM-OP and PPM-TP, as illustrated in Fig. \ref{low_high_case}(a).  Fig. \ref{low_high_case}(b) shows the total CPU resource utilizations of different mechanisms when $ \overline{p} = 2 \overline{c}_{\mathsf{cpu}}  $. Since in \textsf{Case-EE} the first-half (second-half) of the total jobs have low (high) PUVs, the total CPU resource utilization profile of the offline benchmark depicts two distinct levels within the duration of $ t\in [0,300] $ and $ t\in [300,600] $. We can see that PPM-MP completely fails to achieve such a two-level utilization profile by quickly reaching the capacity limit before $ t = 200$ min; PPM-TP performs better than PPM-MP, but reserves too much available capacity for future jobs (too conservative). In comparison, PPM-OP shows the capability of distinguishing the two different intervals, and has a similar utilization profile to that of the offline benchmark.

\begin{figure}
	\centering
	\subfigure[\textsf{Case-UI} ($ \textsf{LUC} $: $\overline{p} = \overline{c}_{\mathsf{cpu}}$)]{\includegraphics[width = 5 cm]{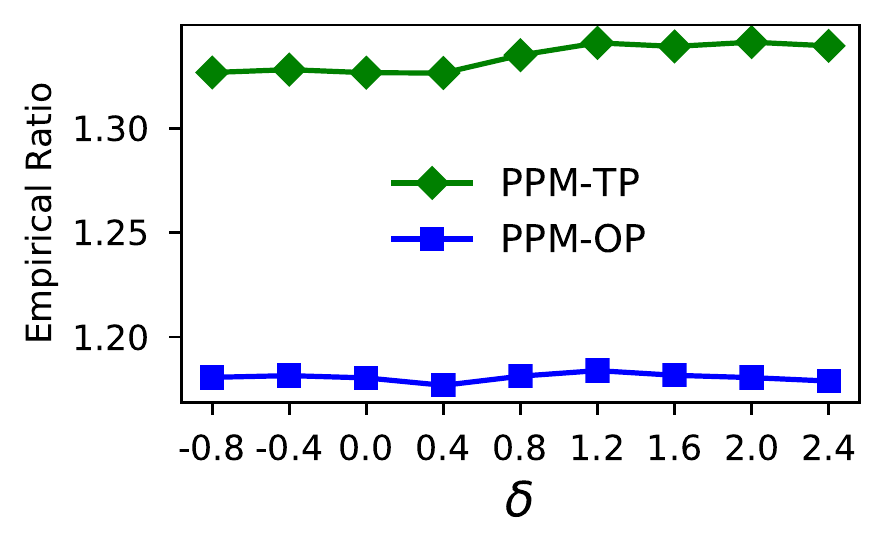}}
	\subfigure[\textsf{Case-EI} ($ \textsf{LUC} $: $\overline{p} = \overline{c}_{\mathsf{cpu}}$)]{\includegraphics[width = 5 cm]{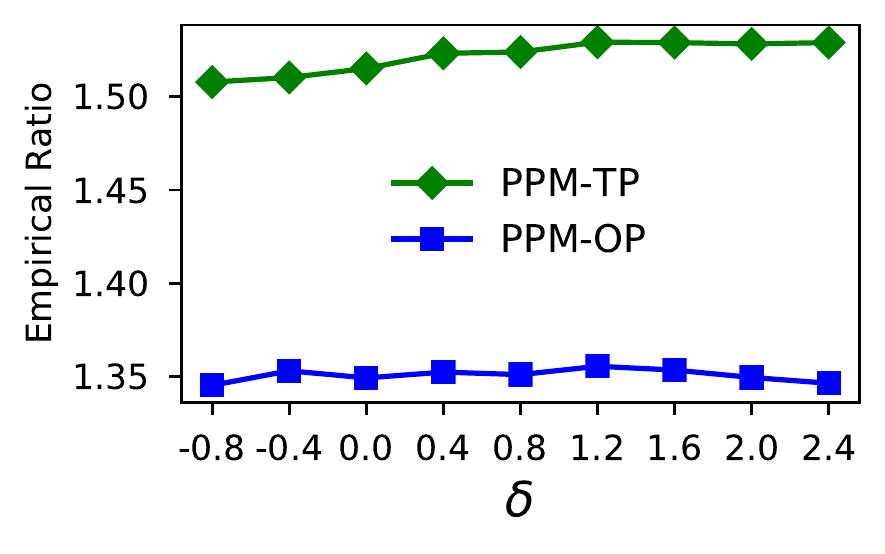}}
	\subfigure[\textsf{Case-UI} ($ \textsf{HUC}_1 $: $\overline{p} = 3\overline{c}_{\mathsf{cpu}}$)]{\includegraphics[width = 5 cm]{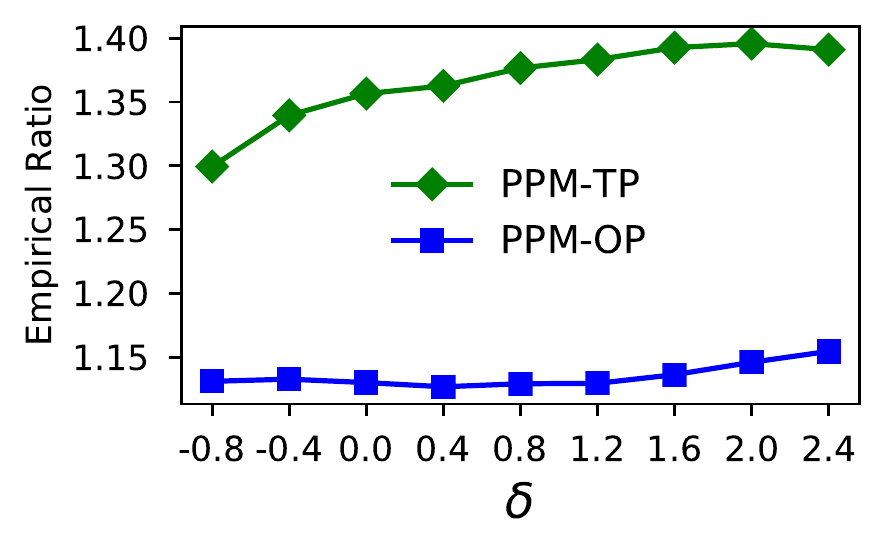}}
	\subfigure[\textsf{Case-EI} ($ \textsf{HUC}_1 $: $\overline{p} = 3\overline{c}_{\mathsf{cpu}}$)]{\includegraphics[width = 5 cm]{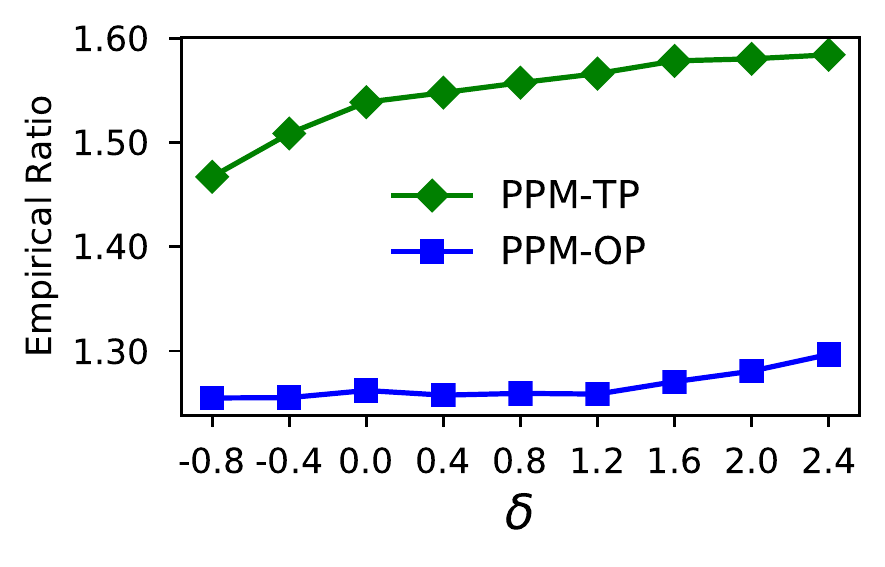}}
	\subfigure[\textsf{Case-UI} ($ \textsf{HUC}_2 $: $\overline{p} = 9\overline{c}_{\mathsf{cpu}}$)]{\includegraphics[width = 5 cm]{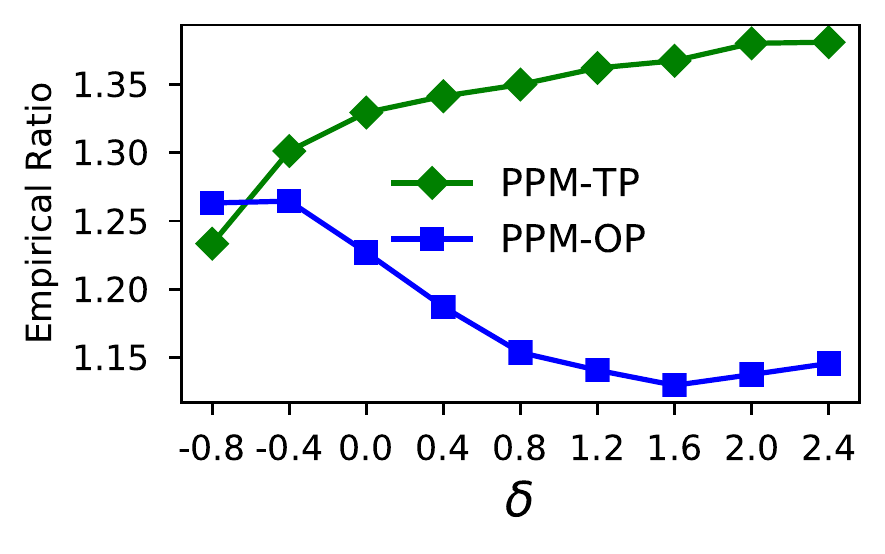}}
	\subfigure[\textsf{Case-EI} ($ \textsf{HUC}_2 $: $\overline{p} = 9\overline{c}_{\mathsf{cpu}}$)]{\includegraphics[width = 5 cm]{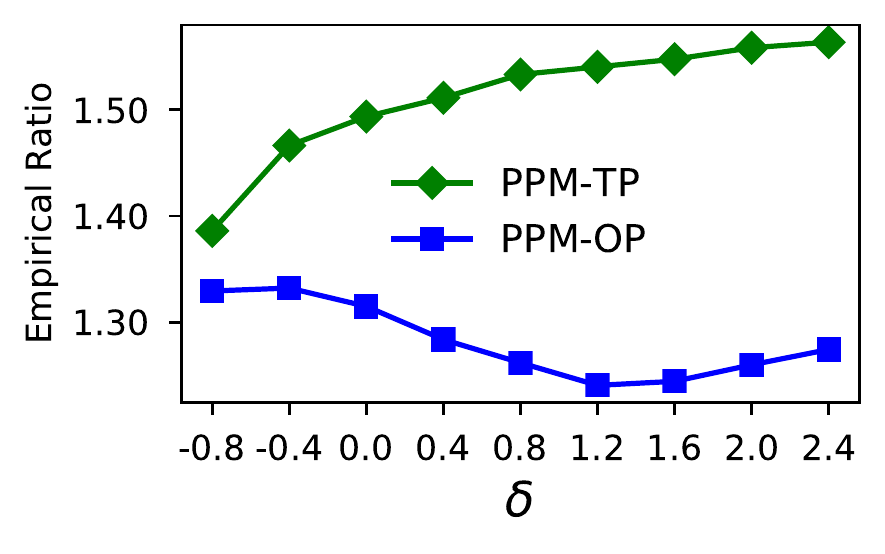}}
	\caption{Comparison between PPM-OP and PPM-TP when the estimated upper bound $ \overline{p}_{\mathsf{estimate}} $ is inexact, where  $ \overline{p}_{\mathsf{estimate}} = \overline{p}(1+\delta) $ and $ \overline{p}$ denotes the real upper bound. Each point in the figure is an average of 1000 instances.}
	\label{ER_inexact_case}
\end{figure}

We next demonstrate the impact of inexact estimations of $ \overline{p} $ on the ER performances of PPM-OP and PPM-TP (note that the performance of PPM-MP is independent of $ \overline{p} $). We perform an indepth comparison between PPM-OP and PPM-TP in both \textsf{Case-UI} and \textsf{Case-EI} with $ \overline{p} = \overline{c}_{\mathsf{cpu}} \in (0,\overline{c}_{\mathsf{cpu}}] $ (i.e., \textsf{LUC}), $ \overline{p} =  3\overline{c}_{\mathsf{cpu}}\in   (\overline{c}_{\mathsf{cpu}}, C_s] $ (i.e., $ \textsf{HUC}_1 $), and $ \overline{p} = 9\overline{c}_{\mathsf{cpu}} \in (C_s,+\infty)$ (i.e., $ \textsf{HUC}_2 $), where  $ C_s \approx 6.28\overline{c}_{\mathsf{cpu}} $. Hence, we have six cases in total, which correspond to the six sub-figures in Fig. \ref{ER_inexact_case}.  
We note that the choices of $ \overline{p} = 3 \overline{c}_{\mathsf{cpu}}  $  and $ \overline{p} = 9\overline{c}_{\mathsf{cpu}} $ have no specific reasons other than making them in $ \textsf{HUC}_1 $ and $ \textsf{HUC}_2 $, respectively.
\begin{itemize}
	\item Fig. \ref{ER_inexact_case}(a) and Fig. \ref{ER_inexact_case}(b) show that the ER performances of both PPM-OP and PPM-TP are insensitive to $ \delta $ in \textsf{LUC}. The insensitivity of PPM-TP is reasonable since the first segment of the pricing function of PPM-TP, i.e., $ \phi(y) = f'(2y) $, is independent of $ \overline{p} $. Therefore, when $ \overline{p} =  \overline{c}_{\mathsf{cpu}} $, the highest resource utilization level will not significantly exceed 50\% of the total capacity (since $ \overline{p} \leq \overline{c}_{\mathsf{cpu}} = f'(2*0.5) $). As a result, the first segment of the pricing function of PPM-TP is the major active part for most of the time slots. 
	Meanwhile, it is also not surprising that PPM-OP is insensitive to $ \delta $ in \textsf{LUC} since $ \overline{p}_{\mathsf{estimation}} $ does not influence PPM-OP when $ \overline{p}_{\mathsf{estimation}} \leq C_s \approx 6.28\overline{c}_{\mathsf{cpu}} $. 
	
	\item Fig. \ref{ER_inexact_case}(c) and Fig. \ref{ER_inexact_case}(d) show that the ER performance of PPM-TP always deteriorates with the increase of $ \delta $ in $ \textsf{HUC}_1 $ (underestimation is always better than overestimation). The ER behaviors of PPM-TP are interesting but quite reasonable since an overestimation of $ \overline{p} $ will make the second segment of the pricing function of PPM-TP over conservative, leading to a worse ER performance. Similar results have also been reported by  \cite{XZhang_ToN}.  Unlike PPM-TP, PPM-OP is insensitive to the estimation error $ \delta $ when $ \delta < C_s/\overline{p} - 1 \approx 1.1  $, meaning that as long as  the overestimation of $ \overline{p} $ does not change the design of optimal pricing functions from $ \textsf{HUC}_1 $ to  $ \textsf{HUC}_2 $, the ERs of PPM-OP will be the same. However, a larger estimation error $ \delta > 1.1 $ will slightly worsen the ER performance of PPM-OP as the optimal pricing function in $ \textsf{HUC}_2 $ is too conservative in $ \textsf{HUC}_1 $.
	
	\item Fig. \ref{ER_inexact_case}(e) and Fig. \ref{ER_inexact_case}(f) show that the ER performances of PPM-TP and PPM-OP have opposite behaviors w.r.t. the estimation error $ \delta $ in $ \textsf{HUC}_2 $. Specifically, overestimations of $ \overline{p} $ still increase the ERs of PPM-TP, similar to the results in $ \textsf{HUC}_1 $. In contrast, PPM-OP will benefit from overestimating $ \overline{p} $ when $ \delta  $ is within a certain range (e.g., when $ \delta\in (0,1.6) $ in Fig. \ref{ER_inexact_case}(e)), and then deteriorate when the estimation error $ \delta $ is too large (e.g., when $ \delta>1.6 $ in Fig. \ref{ER_inexact_case}(e)). Note that the ER behaviors of PPM-OP are very counter-intuitive since an overestimation of $ \overline{p} $ in $ \textsf{HUC}_2 $ will inevitably make the optimal pricing functions in PPM-OP more conservative, which intuitively should lead to a worse ER performance. However, Fig. \ref{ER_inexact_case}(e) and Fig. \ref{ER_inexact_case}(f) show that, the ER performance of PPM-OP will deteriorate only if the overestimation of $ \overline{p} $ exceeds some threshold (e.g., $ 1.6 $ in Fig. \ref{ER_inexact_case}(e) and $ 1.2 $ in Fig. \ref{ER_inexact_case}(f)).   
\end{itemize}

The above illustrations indicate that underestimations of $ \overline{p} $ should always be avoided when using our proposed PPM-OP. This is because a negative $ \delta $ either has no impact on the ER performance of PPM-OP in \textsf{LUC} and $ \textsf{HUC}_1 $ (the first four sub-figures in Fig. \ref{ER_inexact_case}), or makes it even worse in $ \textsf{HUC}_2 $ (the final two sub-figures in Fig. \ref{ER_inexact_case}). Meanwhile, it is generally beneficial to slightly overestimate $ \overline{p} $ when $ \overline{p} $ is larger than $ C_s $. 

\begin{figure}
	\centering
	\subfigure[$ \mathbb{N}(\overline{c}_{\mathsf{cpu}}, 1, 0, \overline{p}) $]{\includegraphics[width=4cm]{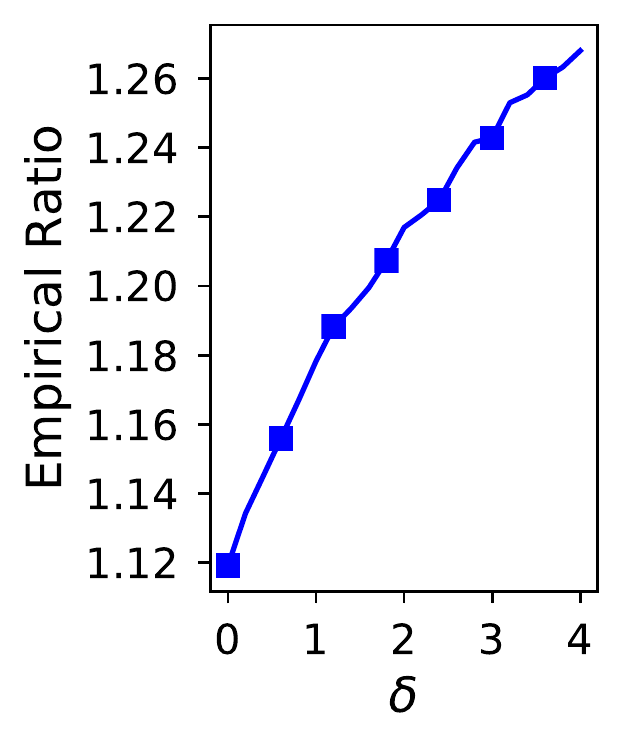}}
	\subfigure[$ \mathbb{N}(\overline{c}_{\mathsf{cpu}}, 4, 0, \overline{p}) $]{\includegraphics[width=4cm]{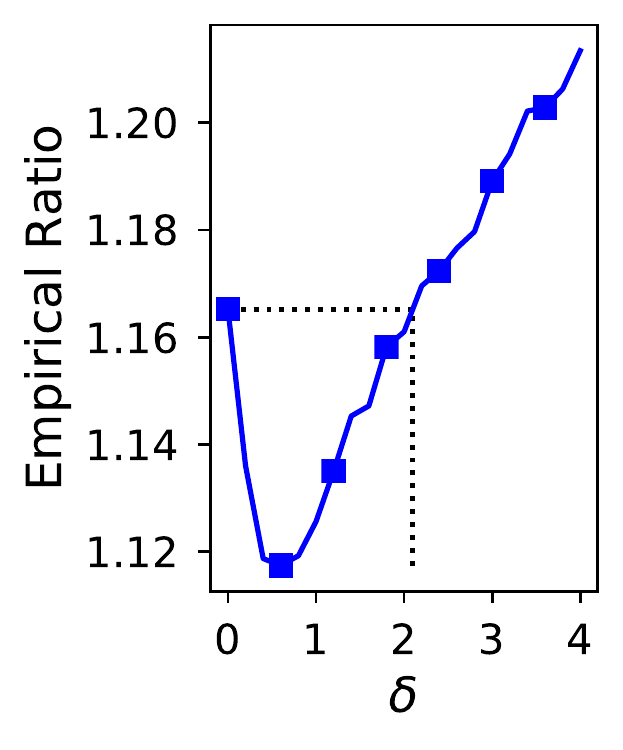}}
	\subfigure[$ \mathbb{N}(\overline{c}_{\mathsf{cpu}}, 100, 0, \overline{p}) $]{\includegraphics[width=4cm]{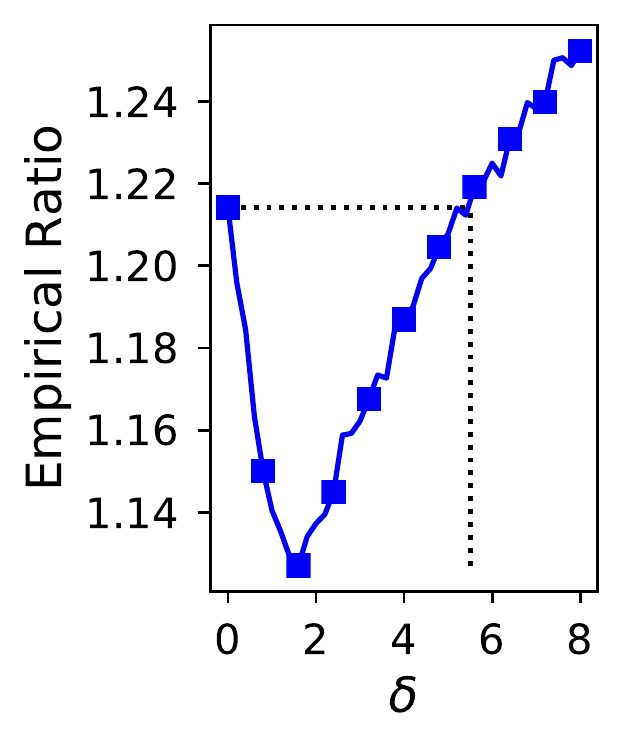}}
	\caption{Impact of overestimations of $ \overline{p} $ on the ER performance of PPM-OP. Each point in the figure is an average of 1000 instances.}
	\label{overestimation}
\end{figure}

To further evaluate the impact of overestimations of $ \overline{p} $ on the ER performance of PPM-OP, in particular, to quantify how much overestimation will lead to a worse ER performance than using the exact value of $ \overline{p} $, we change the uniform distribution of $ p $ in \textsf{Case-UI} to a truncated normal distribution as follows:
\begin{equation*}
p\sim \mathbb{N}(\mu,\sigma^2,0,\overline{p}),
\end{equation*} 
where $ \mu, \sigma, 0$, and $ \overline{p}$ denote the mean, the standard deviation, the lower bound, and the upper bound of random variable $ p $, respectively. We set $ \mu = \overline{c}_{\mathsf{cpu}} $ and $ \overline{p} = 9\overline{c}_{\mathsf{cpu}} $, and assume similarly as \textsf{Case-UI} that the optimal pricing function is designed based on the estimated upper bound $ \overline{p}_{\mathsf{estimate}} = \overline{p}(1+\delta) $, where $ \sigma>0 $ since here we only consider overestimation.  We
plot the ER performances of PPM-OP with different variances in Fig. \ref{overestimation}. It can be seen that when the variance is small, e.g., $ \sigma = 1 $ in Fig. \ref{overestimation}(a), the ER performance of PPM-OP becomes worse w.r.t. the increase of $ \delta>0 $. When the variance is higher, e.g., $ \sigma = 2 $ in Fig. \ref{overestimation}(b) and $ \sigma = 10 $ in Fig. \ref{overestimation}(c), the ER performance of PPM-OP first improves and then deteriorates w.r.t. the increase of $ \delta>0 $, similar to the results in Fig. \ref{ER_inexact_case} when $ p $ is uniformly distributed. An interesting result revealed by Fig. \ref{overestimation} is that PPM-OP can tolerate a higher estimation error of $ \overline{p} $ when the variance of $ p $ is higher. In other words, when the arrival instance is highly uncertain or volatile, it tends to be more beneficial for the provider to overestimate $ \overline{p} $.  This insight shows that when there exists no exact statistical model about future arrivals, the information uncertainty is not always a disadvantage. Instead,  the provider can artificially amplify the estimation of  $ \overline{p} $ so as to benefit from the uncertainty of arrival instances. We argue that this is another advantage of our proposed PPM-OP as the prior theoretic analysis does not provide such a guarantee.

\section{Conclusion}
We studied the online combinatorial auctions for resource allocation with supply costs and capacity limits. In the studied model, the provider charges payment from customers who purchase a bundle of resources  and incurs an increasing supply cost with respect to the total resource allocated. We focused on maximizing the social welfare. Adopting the competitive analysis framework we provided an optimal online mechanism via posted-price.  Our online mechanism is optimal in that no other online algorithms can achieve a better competitive ratio. Our theoretic results improve and generalize the results in prior work. Moreover, we validated our results via empirical studies of online resource allocation in cloud computing, and showed that our pricing mechanism is more competitive than existing benchmarks. We expect that the model and algorithms presented in this paper will find application in different paradigms of networking and computing systems. Meanwhile, leveraging techniques in artificial intelligence and machine learning to extend our model is an interesting future direction, e.g., posted-price via online learning.

\bibliographystyle{plain}
\bibliography{OCA_arXiv}

\begin{thebibliography}{10}

\bibitem{uniqueness_book1993}
R.~P. Agarwal and V.~Lakshmikantham.
\newblock {\em Uniqueness and Nonuniqueness Criteria for Ordinary Differential
  Equations}.
\newblock World Scientific, 1993.

\bibitem{ODE1973}
Vladimir Arnold.
\newblock {\em {Ordinary differential equations}}.
\newblock MIT Press, 1973.

\bibitem{covering_packing}
Y.~Azar, N.~Buchbinder, T.~H. Chan, S.~Chen, I.~R. Cohen, A.~Gupta, Z.~Huang,
  N.~Kang, V.~Nagarajan, J.~Naor, and D.~Panigrahi.
\newblock Online algorithms for covering and packing problems with convex
  objectives.
\newblock In {\em 2016 IEEE 57th Annual Symposium on Foundations of Computer
  Science (FOCS)}, volume~00, pages 148--157, Oct. 2016.

\bibitem{Bartal2003}
Yair Bartal, Rica Gonen, and Noam Nisan.
\newblock Incentive compatible multi unit combinatorial auctions.
\newblock In {\em Proceedings of the 9th Conference on Theoretical Aspects of
  Rationality and Knowledge}, TARK '03, pages 72--87, New York, NY, USA, 2003.
  ACM.

\bibitem{Blum2011}
Avrim Blum, Anupam Gupta, Yishay Mansour, and Ankit Sharma.
\newblock Welfare and profit maximization with production costs.
\newblock In {\em Proceedings of the 2011 IEEE 52nd Annual Symposium on
  Foundations of Computer Science}, pages 77--86, Washington, DC, USA, 2011.

\bibitem{Borodin1998}
Allan Borodin and Ran El-Yaniv.
\newblock {\em Online Computation and Competitive Analysis}.
\newblock Cambridge University Press, 1998.

\bibitem{Buchbinder2015}
Niv Buchbinder and R.~Gonen.
\newblock {Incentive Compatible Mulit-Unit Combinatorial Auctions: A Primal
  Dual Approach}.
\newblock {\em Algorithmica}, 72:167--190, 2015.

\bibitem{Buchbinder2009}
Niv Buchbinder and Joseph~(Seffi) Naor.
\newblock {Online Primal-Dual Algorithms for Covering and Packing}.
\newblock {\em Mathematics of Operations Research}, 34(2):270--286, May 2009.

\bibitem{PPM2010}
Shuchi Chawla, Jason~D. Hartline, David~L. Malec, and Balasubramanian Sivan.
\newblock Multi-parameter mechanism design and sequential posted pricing.
\newblock In {\em Proceedings of the Forty-second ACM Symposium on Theory of
  Computing}, pages 311--320, New York, NY, USA, 2010. ACM.

\bibitem{fog_computing}
M.~{Chiang} and T.~{Zhang}.
\newblock Fog and {IoT}: An overview of research opportunities.
\newblock {\em IEEE Internet of Things Journal}, 3(6):854--864, Dec 2016.

\bibitem{power_models}
C.~Kozyrakis D.~Economou, S.~Rivoire and P.~Ranganathan.
\newblock Full-system power analysis and modeling for server environments.
\newblock In {\em Workshop on Modeling Benchmarking and Simulation (MOBS)},
  2006.

\bibitem{data_center_power}
M.~{Dayarathna}, Y.~{Wen}, and R.~{Fan}.
\newblock Data center energy consumption modeling: A survey.
\newblock {\em IEEE Communications Surveys \& Tutorials}, 18(1):732--794,
  Firstquarter 2016.

\bibitem{CA_survey}
Sven de~Vries and Rakesh Vohra.
\newblock Combinatorial auctions: A survey.
\newblock {\em INFORMS J. Comput.}, 15(3):284--309, 2003.

\bibitem{Adword2009}
Nikhil~R. Devanur and Thomas~P. Hayes.
\newblock The adwords problem: Online keyword matching with budgeted bidders
  under random permutations.
\newblock In {\em Proc. of the 10th ACM Conference on Electronic Commerce},
  pages 71--78, New York, NY, USA, 2009. ACM.

\bibitem{Huang2017}
Nikhil~R. Devanur and Zhiyi Huang.
\newblock Primal dual gives almost optimal energy-efficient online algorithms.
\newblock {\em ACM Trans. Algorithms}, 14(1), December 2017.

\bibitem{OM_concave}
Nikhil~R. Devanur and Kamal Jain.
\newblock Online matching with concave returns.
\newblock In {\em Proceedings of the Forty-fourth Annual ACM Symposium on
  Theory of Computing (STOC)}, New York, NY, USA, 2012.

\bibitem{Huang2018}
Z.~Huang and A.~Kim.
\newblock Welfare maximization with production costs: A primal dual approach.
\newblock {\em Games and Econ. Behav.}, 2018.

\bibitem{b_matching}
Bala Kalyanasundaram and Kirk~R. Pruhs.
\newblock An optimal deterministic algorithm for online b-matching.
\newblock {\em Theor. Comput. Sci.}, 233(1-2):319--325, February 2000.

\bibitem{economics1995}
Andreu Mas-Colell, Michael~D. Whinston, and Jerry~R. Green.
\newblock {\em Microeconomic Theory}.
\newblock Oxford University Press, 1995.

\bibitem{Mehta2013}
Aranyak Mehta.
\newblock Online matching and ad allocation.
\newblock {\em Found. Trends Theor. Comput. Sci.}, 8(4):265--368, October 2013.

\bibitem{AMD2001}
Noam Nisan and Amir Ronen.
\newblock Algorithmic mechanism design.
\newblock {\em Games and Economic Behavior}, 35(1):166 -- 196, 2001.

\bibitem{Perko2001}
Lawrence Perko.
\newblock {\em {Differential Equations and Dynamical Systems}}.
\newblock Springer New York, New York, NY, 2001.

\bibitem{ODE_book}
A.~D. Polyanin and V.~F. Zaitsev.
\newblock {\em {Handbook of exact solutions for ordinary differential
  equations}}.
\newblock Chapman {\&} Hall/CRC, 2003.

\bibitem{CA_PNAS}
David Porter, Stephen Rassenti, Anil Roopnarine, and Vernon Smith.
\newblock Combinatorial auction design.
\newblock {\em Proceedings of the National Academy of Sciences},
  100(19):11153--11157, 2003.

\bibitem{google_trace_analysis}
Charles Reiss, Alexey Tumanov, Gregory~R. Ganger, Randy~H. Katz, and Michael~A.
  Kozuch.
\newblock Heterogeneity and dynamicity of clouds at scale: Google trace
  analysis.
\newblock In {\em Proceedings of the 3rd ACM Symposium on Cloud Computing
  (SOCC)}, New York, NY, USA, 2012. ACM.

\bibitem{network_slicing}
P.~{Rost}, C.~{Mannweiler}, D.~S. {Michalopoulos}, C.~{Sartori},
  V.~{Sciancalepore}, N.~{Sastry}, O.~{Holland}, S.~{Tayade}, B.~{Han},
  D.~{Bega}, D.~{Aziz}, and H.~{Bakker}.
\newblock Network slicing to enable scalability and flexibility in {5G} mobile
  networks.
\newblock {\em IEEE Communications Magazine}, 55(5):72--79, May 2017.

\bibitem{bo2018}
B.~Sun, X.~Tan, and {D.H.K. Tsang}.
\newblock Eliciting multi-dimensional flexibilities from electric vehicles: a
  mechanism design approach.
\newblock {\em IEEE Transactions on Power Systems}, 34(5):4038--4047, 2019.

\bibitem{Adam_speed_scaling}
A.~{Wierman}, L.~L.~H. {Andrew}, and A.~{Tang}.
\newblock Power-aware speed scaling in processor sharing systems.
\newblock In {\em IEEE INFOCOM 2009}, pages 2007--2015, April 2009.

\bibitem{speed_scaling}
F.~{Yao}, A.~{Demers}, and S.~{Shenker}.
\newblock A scheduling model for reduced cpu energy.
\newblock In {\em Proceedings of IEEE 36th Annual Foundations of Computer
  Science}, pages 374--382, Oct 1995.

\bibitem{XZhang_ToN}
X.~{Zhang}, Z.~{Huang}, C.~{Wu}, Z.~{Li}, and F.~C.~M. {Lau}.
\newblock Online auctions in iaas clouds: Welfare and profit maximization with
  server costs.
\newblock {\em IEEE/ACM Transactions on Networking}, 25(2):1034--1047, April
  2017.

\bibitem{knapsack}
Yunhong Zhou, Deeparnab Chakrabarty, and Rajan Lukose.
\newblock Budget constrained bidding in keyword auctions and online knapsack
  problems.
\newblock In {\em Internet and Network Economics}, pages 566--576, Berlin,
  Heidelberg, 2008. Springer Berlin Heidelberg.

\end{thebibliography}

\newpage
\appendix

\section{Proof of Theorem \ref{a_unified_BVP}}\label{proof_a_unified_BVP}
Our proof of Theorem \ref{a_unified_BVP} is based on the online primal-dual analysis and first-order two-point boundary value problems (BVPs). In the following we first give some mathematical preliminaries, and then prove the sufficient and necessary conditions in Theorem \ref{a_unified_BVP} separately. 

\subsection{Mathematical Preliminaries}
In this section we present some mathematical preliminaries to help our proof of Theorem \ref{a_unified_BVP}.

\subsubsection{Online Primal-Dual Analysis} Let us consider the following convex optimization problem:
\begin{subequations}\label{SWM_relaxed}
	\begin{alignat}{3}
	& \underset{\mathbf{x,y}}{\mathrm{maximize}} \qquad\qquad & &   \sum_{n\in\mathcal{N}}\sum_{b\in\mathcal{B}} v_n^b x_n^b - \sum_{k\in\mathcal{K}} \bar{f}\left(y_k\right), \\
	&\mathrm{subject\ to} & & \sum_{n\in\mathcal{N}}\sum_{b\in\mathcal{B}} r_k^b x_n^b \leq y_k, & &  (p_k)\label{total_utilization_relaxed} \\ 
	&    & & \sum_{b\in\mathcal{B}} x_n^b \leq 1,\forall n, & & (\mu_n) \label{bid_selection}\\
	&    & &  x_n^b \geq 0,  \forall n,b; y_k \geq 0,\forall k,
	\end{alignat}
\end{subequations}
where  $ p_k, \mu_n $ denote the corresponding dual variables of each constraint. The above convex program differs from the original social welfare maximization problem \eqref{SWM} in the following aspects. 
\begin{itemize}[leftmargin=*]
	\item First, in the objective function of Problem \eqref{SWM_relaxed}, we modify the cost function $ f $ to $ \bar{f} $ as follows:
	\begin{align}\label{bar_f}
	\bar{f}(y) =
	\begin{cases}
	f(y) &\text{ if } y\in [0,1],\\
	+\infty &\text{ if } y\in (1,+\infty). 
	\end{cases}
	\end{align}
	Therefore, $ \bar{f} $ is an extended version of $ f $ for the whole range of $ [0,+\infty) $. In optimization theory, $ \bar{f} $ is often regarded as a barrier function of $ f $. It is know that performing such a transformation does not change the optimization problem itself. 
	
	\item Second, we relax the binary status variable $ x_n^b $ to be a continuous variable within $ [0,1] $ for all $ n, b $. 
	
	\item Third, the equality  constraint in Eq. \eqref{total_utilization_original} is relaxed to be an inequality one in Eq.  \eqref{total_utilization_relaxed}. Since the cost function $ f(\cdot) $ is increasing, constraint \eqref{total_utilization_relaxed} will always be binding. 
\end{itemize}

Based on the above discussions, the only difference between Problem \eqref{SWM} and Problem \eqref{SWM_relaxed} is the relaxation of $ \{x_n^b\}_{\forall n,b} $. Given the convex program in Problem \eqref{SWM_relaxed}, the dual problem can be expressed  as follows:
\begin{subequations}\label{dual_relaxed}
	\begin{alignat}{2}
	& \underset{\bm{p},\bm{\mu}}{\mathrm{minimize}}      & \qquad & \sum_{n\in\mathcal{N}}\mu_n + \sum_{k\in\mathcal{K}} f_{\texttt{\#}}(p_k)\\
	&\mathrm{subject\ to} & & \mu_n\geq  v_n^b - \sum_{k\in\mathcal{K}} p_k r_k^b, \forall n,b, \label{dual_constraint_gamma}\\
	&   & & \bm{p} \geq \bm{0}, \bm{\mu} \geq \bm{0}, 
	\end{alignat}
\end{subequations}
where $ f_{\texttt{\#}} $ is the convex conjugate of $ \bar{f} $, and is given by
\begin{align}
f_{\texttt{\#}}(p) = \max_{y\geq 0}\ p y   - \bar{f}(y).
\end{align}
Solving the above optimization leads to the  expression of $ f_{\texttt{\#}} $ as follows:
\begin{align}\label{f_pound}
f_{\texttt{\#}}(p) =
\begin{cases}
0 &\text{ if } p\in [0,\underline{c}],\\
p f'^{-1}(p) - f(f'^{-1}(p)) &\text{ if } p\in (\underline{c},\overline{c}),\\
p - f(1)   &\text{ if } p\in [\overline{c},+\infty].
\end{cases} 
\end{align}

If we denote the optimal objective of the relaxed primal problem \eqref{SWM_relaxed} and its dual \eqref{dual_relaxed} by $ W_{\textsf{r-primal}} $ and $ W_{\textsf{r-dual}} $, respectively, then   we have
\begin{align}\label{weak_duality}
W_{\textsf{opt}}\leq W_{\textsf{r-primal}}\leq W_{\textsf{r-dual}},
\end{align}
where $ W_{\textsf{opt}} $ is the optimal objective of the original offline problem \eqref{SWM}. In particular, the first inequality in Eq. \eqref{weak_duality} is due to the relaxation of $ \{x_n^b\}_{\forall n,b} $ and the second inequality comes from  weak duality.  

The key to the design of $ \text{PPM}_{\phi} $ is to link the pricing function $ p_k^{(n)} = \phi(y_k^{(n-1)})$ to the offline shadow price $ p_k $. Specifically, when there is no future information, it is impossible to know the exact value of $ p_k $. Our idea is to design the posted price $ p_k^{(n)} $ as a function of the current total power consumption $ y_k^{(n-1)} $, and using $ p_k^{(n)} $ to approximate the exact shadow price at each round. 

Following this idea, let us denote the primal and dual objective by $ P_n $ and $ D_n $ after processing customer $ n $, respectively. Intuitively, $ P_0 $ and $ D_0 $ denote the initial values (i.e., before processing the first customer), and $ P_N $ and $ D_N $ represent the terminal values (i.e., after processing the last customer of interest). Obviously, $ P_0 = 0 $ and $ D_0 $  is given by
\begin{equation}\label{D_0}
D_0 = \sum_{k\in\mathcal{K}} f_{\texttt{\#}}\big(p_k^{(1)}\big)  =  \sum_{k\in\mathcal{K}} f_{\texttt{\#}}\left(\phi(y_k^{(0)})\right) = \sum_{k\in\mathcal{K}} f_{\texttt{\#}}\left(\phi(0)\right),
\vspace{-0.2cm}
\end{equation}
where $ \phi(0) $ represents the initial price when the resource utilization level is zero. 

(\textbf{Principles of the Online Primal-Dual Approach}) The principle of the online primal-dual approach is that, if the pricing function $ \phi $ is constructed in a certain way so that i) $ D_0  = 0 $ and the solutions found by $ \text{PPM}_{\phi} $  are feasible, and ii) the following \textbf{incremental inequality}  $ P_n - P_{n-1} \geq \frac{1}{\alpha}\left( D_n - D_{n-1}\right) $ holds for each round with a constant $ \alpha $, then $ P_N = \sum_{n=1}^N\left(P_n - P_{n-1}\right)
\geq\ \frac{1}{\alpha}\sum_{n=1}^N\left(D_n - D_{n-1}\right) = \frac{1}{\alpha} D_N $. Note that $ P_N $ denotes the social welfare achieved by $ \text{PPM}_{\phi} $, i.e., $ W_{\textsf{online}} = P_N $. Based on Eq.  \eqref{weak_duality}, we have $$ W_{\textsf{online}} = P_N \geq \frac{1}{\alpha} D_N\geq \frac{1}{\alpha} W_{\textsf{r-dual}} \geq  \frac{1}{\alpha} W_{\textsf{opt}}, $$ 
which thus indicates that $ \text{PPM}_{\phi} $ is $ \alpha $-competitive.

\subsubsection{Convex Conjugates and Properties} In the following we will heavily rely on the properties of convex conjugates and Fenchel duality.  Below we introduce some properties regarding $ f_{\texttt{\#}} $. 

\begin{lemma}[Properties of $ f_{\texttt{\#}}  $]
	$ f_{\texttt{\#}}  $ has the following properties:
	\begin{enumerate}
		\item $ f_{\texttt{\#}}(p) $ is increasing in $ p\in [\underline{c},+\infty] $ and $ f_{\texttt{\#}}(\underline{c}) = 0 $.
		\item $ f_{\texttt{\#}}(p) $ is convex and differentiable in $ p\in [\underline{c},+\infty] $, even if the original cost function $ f(y) $ is non-convex and non-differentiable.
		\item   For any $ y\in [0,1] $, if $ f'(y) = p $, then $f_{\texttt{\#}}'(p) = y  $ and $ f(y) +  f_{\texttt{\#}}(p) = py $.  
		\item  The derivative of $ f_{\texttt{\#}}(p) $ w.r.t. $ p\in [\underline{c},+\infty] $ is given by
		\begin{align}
		f_{\texttt{\#}}'(p) =
		\begin{cases}
		0 &\text{ if } p\in [0,\underline{c}),\\
		f'^{-1}(p) &\text{ if } p\in [\underline{c},\overline{c}),\\
		1   &\text{ if } p\in [\overline{c},+\infty].
		\end{cases} 
		\end{align}
	\end{enumerate}
\label{lemma_Fenchel_duality}
\end{lemma}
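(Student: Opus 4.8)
The plan is to derive all four properties from the variational definition $f_{\texttt{\#}}(p) = \max_{y\geq 0}\{py - \bar{f}(y)\}$ together with elementary Fenchel duality, rather than from differentiating the piecewise closed form. The convexity half of property 2 comes for free: for each fixed $y$ the map $p\mapsto py - \bar{f}(y)$ is affine, hence convex, in $p$, and $f_{\texttt{\#}}$ is their pointwise supremum over $y\geq 0$; a supremum of convex functions is convex. This argument never invokes convexity of $f$, which is precisely the robustness asserted by the parenthetical ``even if $f$ is non-convex.''

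Next I would locate the maximizer $y_*(p)$ in each price regime. Because $\bar{f}=+\infty$ on $(1,+\infty)$, the supremum is attained on the compact interval $[0,1]$, where $g(y):=py-f(y)$ has derivative $g'(y)=p-f'(y)$ and $f'$ is strictly increasing by Assumption \ref{assumption_cost_function}. Comparing $p$ with $f'(0)=\underline{c}$ and $f'(1)=\overline{c}$ splits into three cases: for $p\leq\underline{c}$, $g$ is non-increasing, so $y_*=0$ and $f_{\texttt{\#}}(p)=-f(0)=0$; for $\underline{c}<p<\overline{c}$, interior stationarity $f'(y)=p$ has the unique root $y_*=f'^{-1}(p)$; for $p\geq\overline{c}$, $g$ is non-decreasing, so $y_*=1$. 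Substituting each $y_*$ into $py-f(y)$ reproduces the stated closed form of $f_{\texttt{\#}}$. The derivative formula (property 4) then follows from the envelope (Danskin) theorem: since $y_*(p)$ is unique and the objective is smooth in $p$, we get $f_{\texttt{\#}}'(p)=\left.\partial_p(py-f(y))\right|_{y=y_*(p)}=y_*(p)$, which equals $0$, $f'^{-1}(p)$, and $1$ on the three regimes; equivalently one may differentiate the middle piece directly and note the $(f'^{-1})'(p)$ terms cancel because $p-f'(f'^{-1}(p))=0$. Monotonicity (property 1) is then immediate from $f_{\texttt{\#}}'(p)=y_*(p)\geq 0$, with $f_{\texttt{\#}}(\underline{c})=0$ read off from the first regime. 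Finally, property 3 is just the Fenchel equality at an attained optimum: if $f'(y)=p$ with $y\in[0,1]$, then $y=y_*(p)$, so $f_{\texttt{\#}}'(p)=y$ and $f_{\texttt{\#}}(p)=py-f(y)$, i.e., $f(y)+f_{\texttt{\#}}(p)=py$.

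There is no deep obstacle here; the proof is essentially an exercise in convex conjugation, and the only genuine care lies in the boundary bookkeeping. I would make sure to (i) justify that the supremum is truly attained inside $[0,1]$ (guaranteed by the barrier $\bar{f}$, which is what makes the three-regime structure emerge), (ii) invoke strict convexity of $f$ on $[0,1]$ to secure uniqueness of $y_*(p)$ before applying the envelope theorem, and (iii) verify that $f_{\texttt{\#}}'$ is \emph{continuous} across the junction prices, so that the piecewise-defined conjugate is genuinely $C^1$ and property 2's differentiability claim holds on all of $[\underline{c},+\infty]$. The continuity check is short: at $p=\underline{c}$ both adjacent expressions give $f'^{-1}(\underline{c})=0$, and at $p=\overline{c}$ both give $f'^{-1}(\overline{c})=1$, using $f'(0)=\underline{c}$ and $f'(1)=\overline{c}$.
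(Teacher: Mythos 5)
Your proof is correct. The paper itself omits the proof of this lemma entirely (it defers to a citation immediately after the statement), so there is no in-paper argument to compare against; your derivation --- sup-of-affine for convexity, the three-regime maximizer analysis driven by the barrier $\bar{f}$ and the strict monotonicity of $f'$, the envelope theorem for $f_{\texttt{\#}}'(p)=y_*(p)$, and the continuity check at the junction prices $\underline{c}$ and $\overline{c}$ --- is the standard and complete way to establish all four properties under Assumption \ref{assumption_cost_function}. The only caveat worth flagging is one you partly anticipate: the lemma's parenthetical claim that differentiability survives for non-convex, non-differentiable $f$ is not actually delivered by the envelope argument (which needs uniqueness of $y_*(p)$, hence strict convexity); your proof correctly covers the convexity half of that parenthetical in full generality and the differentiability half under the paper's standing assumption, which is all that is ever used.
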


We omit the proof of Lemma \ref{lemma_Fenchel_duality} for brevity.  For a detailed discussion of the properties of the conjugate function $ f_{\texttt{\#}} $, please refer to \cite{Huang2017}. 

\subsubsection{First-Order Two-Point BVPs} 
In the field of differential equations, a first-order boundary value problem (BVP) is a first-order ordinary differential equation (ODE) with  a set of additional boundary conditions. When there is only one additional condition other than the ODE, the resulting problem is a first-order initial value problem (IVP), whose standard form is written as follows:
\begin{align}
\begin{cases}
q'(\omega) = Q(\omega,q), \omega \in \Omega,\\
q(\omega_0) = q_0,
\end{cases}
\end{align}
where $ q(\omega_0) = q_0 $ is usually termed as the initial condition. When there is one more condition,  the resulting first-order two-point BVP can be written  in the following standard form
\begin{align}\label{def_BVP}
\begin{cases}
q'(\omega) = Q(\omega,q), \omega \in \Omega,\\
q(\omega_1) = q_1, q(\omega_2) = q_2,
\end{cases}
\end{align} 
where $ (\omega_1,q_1) $ and $ (\omega_2,q_2) $ are two points in the domain of $ Q $. A solution to the first-order two-point BVP in Eq. \eqref{def_BVP} is a function $ q(\omega) $ that satisfies the ODE and also satisfies the two boundary conditions simultaneously. 

Key to the analysis of IVPs and BVPs is the existence and uniqueness of solutions \cite{ODE1973, ODE_book}. For first-order IVPs, the existence and uniqueness theorem is well understood. In particular,  the Picard–Lindel\"{o}f theorem guarantees the unique exsitence of solutions as long as the function $ Q $ satisfies a certain Lipschitz continuity conditions \cite{ODE_book}. Meanwhile, there are numerous iterative  methods off-the-shelf that can solve IVPs numerically \cite{ODE1973}.  However,  for BVPs, there is no general uniqueness and existence theorem. As argued by \cite{ODE_book}, it is even non-trivial to obtain numerical solutions for some BVPs in the most basic two-point case as Eq. \eqref{def_BVP}.

\subsection{The Proof of Theorem \ref{a_unified_BVP}}
We first prove the sufficient conditions in Theorem \ref{a_unified_BVP}. 
Below we give Theorem \ref{sufficiency} which summarizes the sufficient conditions to guarantee a bounded competitive ratio for $ \text{PPM}_{\phi} $.  

\begin{theorem}[\textbf{Sufficiency}]\label{sufficiency}
	Given a setup $ \mathcal{S} $ with $ \overline{p}\in (\underline{c},+\infty) $, $\normalfont \text{PPM}_{\phi} $ is $ \alpha $-competitive if the pricing function $ \phi $ satisfies the following differential equation
	\begin{align}\label{ODE_sufficiency}
	\phi(y) - f'(y) = \frac{1}{\alpha}\cdot \frac{d f_{\normalfont\texttt{\#}}(\phi(y)) }{dy}, y\in [0,1]
	\end{align}
	with the following boundary conditions:
	\begin{align}
	\begin{cases}
	\phi(0) =  \underline{c}, 
	\phi(v)\geq \overline{p}, & \text{ if } \overline{p}\in (\underline{c},\overline{c}], \quad\  (\textsf{LUC})\\
	\phi(0) =  \underline{c},
	\phi(1)\geq \overline{p},  & \text{ if } \overline{p}\in (\overline{c},+\infty), (\textsf{HUC})
	\end{cases}
	\end{align}  
	where $ v \triangleq f'^{-1}(\overline{p}) $. 
\end{theorem}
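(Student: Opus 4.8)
The plan is to prove Theorem \ref{sufficiency} by the online primal-dual method outlined above, pairing the primal solution produced by $\text{PPM}_\phi$ with a feasible point of the dual \eqref{dual_relaxed} and verifying the per-round incremental inequality. Concretely, I would take as primal variables the $\{x_n^b\}$ and $\{y_k^{(n)}\}$ generated by Algorithm \ref{PM}, and as dual variables the final posted prices $p_k = \phi(y_k^{(N)})$ together with $\mu_n = \big(v_n^{b_*} - \pi_n\big)^+$, i.e.\ the nonnegative part of the utility captured by customer $n$. The first (routine) check is the initialization $D_0 = 0$: since $\phi(0) = \underline{c}$ and $f_{\texttt{\#}}(\underline{c}) = 0$ by Lemma \ref{lemma_Fenchel_duality}, Eq.\ \eqref{D_0} gives $D_0 = \sum_k f_{\texttt{\#}}(\underline{c}) = 0$.

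Next I would establish feasibility. Dual feasibility of constraint \eqref{dual_constraint_gamma} follows from monotonicity: because utilization is non-decreasing and $\phi$ is increasing, $\phi(y_k^{(N)}) \geq \phi(y_k^{(n-1)})$, so for every bundle $b$, $v_n^b - \sum_k \phi(y_k^{(N)}) r_k^b \leq v_n^b - \sum_k \phi(y_k^{(n-1)}) r_k^b \leq \mu_n$, the last step holding because $\mu_n$ is the (truncated) maximized utility at the prices seen by customer $n$. Primal feasibility reduces to checking that the capacity constraint \eqref{capacity_limits} is never violated, and this is exactly where the boundary condition enters. By Assumption \ref{assumption_bound} every PUV is at most $\overline{p}$, so once the price on some resource $k$ reaches $\overline{p}$ the utility of any bundle using $k$ is non-positive and no further units of $k$ are sold; in \textsf{LUC} the condition $\phi(v) \geq \overline{p}$ with $v = f'^{-1}(\overline{p}) \leq 1$, and in \textsf{HUC} the condition $\phi(1) \geq \overline{p}$, force this price level to be attained at a utilization no larger than the capacity $1$. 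Combined with Assumption \ref{assumption_small} (infinitesimal demands, so utilization rises continuously), this yields $y_k^{(N)} \leq 1$ for all $k$, and rejections can be attributed solely to non-positive utility.

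The core of the argument is the incremental inequality $P_n - P_{n-1} \geq \frac{1}{\alpha}(D_n - D_{n-1})$. For a customer who leaves empty-handed both sides vanish, so only the accepting case matters. There $P_n - P_{n-1} = v_n^{b_*} - \sum_k \big(f(y_k^{(n)}) - f(y_k^{(n-1)})\big)$ and $D_n - D_{n-1} = \mu_n + \sum_k \big(f_{\texttt{\#}}(\phi(y_k^{(n)})) - f_{\texttt{\#}}(\phi(y_k^{(n-1)}))\big)$, with $\mu_n = v_n^{b_*} - \pi_n$. Treating $r_k^{b_*}$ as infinitesimal via Assumption \ref{assumption_small}, the finite differences of $f$ and $f_{\texttt{\#}}\circ\phi$ become $f'(y_k^{(n-1)})r_k^{b_*}$ and $\frac{d f_{\texttt{\#}}(\phi(y_k))}{dy_k}r_k^{b_*}$. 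Substituting the defining ODE \eqref{ODE_sufficiency} in the form $\frac{d f_{\texttt{\#}}(\phi(y))}{dy} = \alpha\big(\phi(y) - f'(y)\big)$ and simplifying, the gap $\alpha(P_n - P_{n-1}) - (D_n - D_{n-1})$ collapses to $(\alpha - 1)\big(v_n^{b_*} - \sum_k \phi(y_k^{(n-1)})r_k^{b_*}\big) = (\alpha-1)(v_n^{b_*} - \pi_n) \geq 0$, since $\alpha \geq 1$ and the acceptance rule guarantees $v_n^{b_*} - \pi_n \geq 0$. Telescoping over all rounds with $D_0 = 0$ gives $W_{\textsf{online}} = P_N \geq \frac{1}{\alpha}D_N$, and weak duality \eqref{weak_duality} yields $D_N \geq W_{\textsf{r-dual}} \geq W_{\textsf{opt}}$, establishing $\alpha$-competitiveness.

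Finally I would note how the unified ODE \eqref{ODE_sufficiency} recovers the case split of Theorem \ref{a_unified_BVP}: by Lemma \ref{lemma_Fenchel_duality}, $\frac{d f_{\texttt{\#}}(\phi(y))}{dy} = f_{\texttt{\#}}'(\phi(y))\phi'(y)$ equals $f'^{-1}(\phi(y))\phi'(y)$ while $\phi(y) < \overline{c}$ and equals $\phi'(y)$ once $\phi(y) \geq \overline{c}$, so \eqref{ODE_sufficiency} reduces to the $\textsf{H}_1$-type ODE on $(0,u)$ and the $\textsf{H}_2$-type ODE on $(u,1)$, with the dividing threshold $u$ being precisely the level where $\phi(u) = \overline{c}$. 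The step I expect to require the most care is the capacity feasibility argument, since it is the only place the boundary condition $\phi(v)\geq\overline{p}$ (resp.\ $\phi(1)\geq\overline{p}$) is used and it relies on passing to the infinitesimal-demand limit of Assumption \ref{assumption_small}; by contrast, the incremental inequality, once the ODE is substituted, is a short algebraic identity.
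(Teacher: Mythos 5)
Your proposal is correct and follows the same online primal--dual route as the paper: the same relaxed primal/dual pair, the same assignment $p_k=\phi(y_k^{(N)})$ and $\mu_n = v_n^{b_*}-\pi_n$, the initialization $D_0=0$ from $\phi(0)=\underline{c}$ and $f_{\texttt{\#}}(\underline{c})=0$, the per-round incremental inequality under the infinitesimal-demand limit of Assumption \ref{assumption_small}, and telescoping plus weak duality. Two details differ, both in your favor. First, since the theorem states the ODE as an equality, you compute the per-round gap exactly, $\alpha(P_n-P_{n-1})-(D_n-D_{n-1})=(\alpha-1)(v_n^{b_*}-\pi_n)\ge 0$, whereas the paper derives the differential \emph{inequality} $\phi(y)-f'(y)\ge\frac{1}{\alpha}\,\frac{d f_{\texttt{\#}}(\phi(y))}{dy}$ and separately invokes $\mu_n\ge 0$; these are equivalent, but your identity makes transparent where $\alpha\ge 1$ is used. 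Second, and more substantively, you justify the boundary condition $\phi(v)\ge\overline{p}$ (resp.\ $\phi(1)\ge\overline{p}$) by showing that the price reaches $\overline{p}$ no later than the capacity is exhausted, so every rejection is due to non-positive utility, $\mu_n=0$ can be taken for rejected customers, and dual feasibility (hence $D_N\ge W_{\textsf{opt}}$) survives; the paper instead integrates the ODE and argues that $\phi(1)\ge\overline{p}$ makes the resulting integral inequality dominate the one required by a worst-case arrival instance. Your framing is the more standard primal--dual one and pinpoints the actual failure mode when the boundary condition is dropped (a capacity-blocked customer with positive residual utility would force $D_n-D_{n-1}>0=P_n-P_{n-1}$); the paper's version connects the condition to the tight instance used later in the necessity proof. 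Both are at the same level of rigor modulo the infinitesimal-demand idealization, which you correctly flag as the delicate point.
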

\begin{proof}
	The proof of this theorem is based on showing that once the pricing function $ \phi $ satisfies the conditions in Theorem \ref{sufficiency}, then the following incremental inequality 
	\begin{align}
	P_n - P_{n-1} \geq \frac{1}{\alpha}\left( D_n - D_{n-1}\right) 
	\end{align}
	holds at each round with $ D_0 = 0 $. 
	To prove the above incremental inequality holds at each round, we only need to focus on the case when customer $ n $ indeed purchases a bundle of resources, say bundle $ b_* $. Otherwise, $ P_n- P_{n-1} = D_n - D_{n-1}  = 0 $ and the incremental inequality holds obviously. 
	
	We first calculate the change of the primal objective after processing customer $ n $. Based on Problem \eqref{SWM_relaxed}, we can calculate the difference between $ P_n $ and $ P_{n-1} $ as follows: 
	\begin{align*}
	P_n - P_{n-1} 
	=\ &  v_n^{b_*} -  \sum_{k\in\mathcal{K}}\left(\bar{f}\left(y_k^{(n)}\right)-\bar{f}\left(y_k^{(n-1)}\right)\right) \nonumber\\
	=\ & \mu_n + \sum_{k\in\mathcal{K}} p_k^{(n)} r_k^{b_*} -  \sum_{k\in\mathcal{K}}\left(\bar{f}\left(y_k^{(n)}\right)-\bar{f}\left(y_k^{(n-1)}\right)\right)\\
	\stackrel{(i)}{=} \ & \mu_n + \sum_{k\in\mathcal{K}} \phi\left(y_k^{(n-1)}\right) r_k^{b_*} -  \sum_{k\in\mathcal{K}}\left(\bar{f}\left(y_k^{(n)}\right)-\bar{f}\left(y_k^{(n-1)}\right)\right)\\
	\stackrel{(ii)}{=}\ & \mu_n + \sum_{k\in\mathcal{K}} \phi\left(y_k^{(n-1)}\right) \left(y_k^{(n)} - y_k^{(n-1)}\right)-  \sum_{k\in\mathcal{K}}\left(\bar{f}\left(y_k^{(n)}\right)-\bar{f}\left(y_k^{(n-1)}\right)\right),
	\end{align*}
	where $ (i) $ comes from constraint \eqref{dual_constraint_gamma} in the dual problem, namely, we set $ \mu_n =  v_n^{b_*} - \sum_{k\in\mathcal{K}} \phi (y_k^{(n-1)} ) r_k^{b_*}$, and $ (ii) $ is because $ r_k^{b_*} = y_k^{(n)} - y_k^{(n-1)} $ based on line \ref{update} in Algorithm \ref{PM}.
	
	Similarly, we calculate the change of the dual objective after processing customer $ n $. Based on Problem \eqref{dual_relaxed}, we have
	\begin{align}
	D_n - D_{n-1}
	=  \mu_n +\sum_{k\in\mathcal{K}} f_{\texttt{\#}}\left(\phi(y_k^{(n)})\right) - \sum_{k\in\mathcal{K}} f_{\texttt{\#}}\left(\phi(y_k^{(n-1)})\right),
	\end{align}
	where $  \phi(y_k^{(n)}) $ denotes the posted price after processing customer $ n $ (i.e., the posted price for customer $ n+1 $). Since $ \mu_n \geq 0 $ holds for all $ n\in\mathcal{N} $, to guarantee the incremental inequality holds at each round, the following inequality must be satisfied:
	\begin{align}\nonumber
	&\sum_{k\in\mathcal{K}} \phi(y_k^{(n-1)}) \left(y_k^{(n)} - y_k^{(n-1)}\right)-  \sum_{k\in\mathcal{K}}\left(\bar{f}(y_k^{(n)})-\bar{f}(y_k^{(n-1)})\right)\\
	\geq\ & \frac{1}{\alpha} \left(\sum_{k\in\mathcal{K}} f_{\texttt{\#}}\left(\phi(y_k^{(n)})\right) - \sum_{k\in\mathcal{K}} f_{\texttt{\#}}\left(\phi(y_k^{(n-1)})\right)\right).
	\end{align}
	Since the posted-price is designed for each type of resource, the above inequality holds if the following inequality holds
	\begin{align}
	\phi(y_k^{(n-1)}) \left(y_k^{(n)} - y_k^{(n-1)}\right)-  \left(\bar{f}(y_k^{(n)})-\bar{f}(y_k^{(n-1)})\right)
	\geq \frac{1}{\alpha} \left( f_{\texttt{\#}}\big(\phi(y_k^{(n)})\big) -  f_{\texttt{\#}}\big(\phi(y_k^{(n-1)})\big)\right),\label{before_diff}
	\end{align}
	which can be equivalently written as follows: 
	\begin{align*}
	& \phi(y_k^{(n-1)}) - \frac{\bar{f}\left(y_k^{(n-1)} + r_k^{b_*}\right)-\bar{f}\left(y_k^{(n-1)}\right)}{y_k^{(n-1)} + r_k^{b_*} - y_k^{(n-1)}}\\
	\geq\ & \frac{1}{\alpha} \cdot \frac{\phi\left(y_k^{(n-1)} + r_k^{b_*}\right)  - \phi\left(y_k^{(n-1)}\right)}{y_k^{(n-1)} + r_k^{b_*} - y_k^{(n-1)}} \cdot  \frac{ f_{\texttt{\#}}\left(\phi\left(y_k^{(n-1)} + r_k^{b_*}\right)\right) -  f_{\texttt{\#}}\left(\phi\left(y_k^{(n-1)}\right)\right)}{\phi\left(y_k^{(n-1)} + r_k^{b_*}\right)  - \phi\left(y_k^{(n-1)}\right)}.
	\end{align*}
	Since  $ r_k^{b_*} $ is very small (Assumption \ref{assumption_small}),  the above equality can be written as follows:
	\begin{align}
	\phi(y_k^{(n-1)}) - \bar{f}'(y_k^{(n-1)}) 
	\geq \frac{1}{\alpha}\cdot   \phi'(y_k^{(n-1)})\cdot f_{\texttt{\#}}'\big(\phi(y_k^{(n-1)})\big).
	\end{align}
	Therefore, if the above inequality holds for any realization of $ y_k^{(n-1)}\in [0,1) $, namely,
	\begin{align}\label{ODI}
	\phi\left(y\right) - \bar{f}'\left(y\right) 
	\geq \frac{1}{\alpha} \cdot  \phi'\left(y\right)\cdot f_{\texttt{\#}}'\left(\phi(y)\right) = \frac{1}{\alpha} \cdot \frac{d f_{\normalfont\texttt{\#}}\big(\phi(y)\big) }{dy}, \forall y\in [0,1],
	\end{align}
	then the incremental inequality $ P_n - P_{n-1} \geq \frac{1}{\alpha}\left( D_n - D_{n-1}\right) $ holds at each round when $ y\in [0,1] $. Recall that when $ y\in [0,1] $, $ \bar{f} = f $, and thus the above inequality in Eq. \eqref{ODI} can be written as 
	\begin{align}\label{ODI_final}
	\phi\left(y\right) - f'\left(y\right) 
	\geq  \frac{1}{\alpha} \cdot \frac{d f_{\normalfont\texttt{\#}}\big(\phi(y)\big) }{dy}, \forall y\in [0,1].
	\end{align}
	Therefore, if Eq. \eqref{ODI_final} holds for all $ y\in [0,1] $, then the incremental inequality holds at each round when $ y\in [0,1] $. However, we emphasize that \textit{this does not mean the incremental inequality holds at each around for all $ y\in [0,+\infty) $}.  
	
	We next show why we need  the two boundary conditions of $ \phi(0)= \underline{c} $ and $ \phi(1) \geq \overline{p} $. First, according to Eq. \eqref{D_0}, when $ \phi(0) = \underline{c} $, we have $ D_0 =  \sum_k f_{\texttt{\#}}\big(\phi(y_k^{(0)})\big) = \sum_k f_{\texttt{\#}}(\underline{c}) = 0 $, where we use the property of  $ f_{\texttt{\#}}(\underline{c}) = 0  $ based on Lemma \ref{lemma_Fenchel_duality}. Therefore, the boundary condition of $ \phi(0) = \underline{c} $ is to guarantee that $ D_0 = 0 $.  Second, taking integration on both sides of Eq. \eqref{ODI_final} leads to
	\begin{align}\label{int_inequality}
	\int_{0}^{y}\left(\phi(\eta)- f'(\eta)\right) d\eta = \int_{0}^{y}\phi(\eta)  d\eta - f(y) \geq \frac{1}{\alpha}\Big(f_{\texttt{\#}}\big(\phi(y)\big) - f_{\texttt{\#}}\big(\phi(0)\big)\Big) = \frac{1}{\alpha} f_{\texttt{\#}}\big(\phi(y)\big). 
	\end{align} 
	As can be seen from Fig. \ref{three_pricing_schemes}, the left-hand-side of Eq. \eqref{int_inequality} is the area of the grey region between $ \phi(y) $ and $ f'(y) $.  Based on Eq. \eqref{f_pound}, the above inequality in Eq. \eqref{int_inequality} can be written as follows:
	\begin{align}\label{int_inequality_two_cases}
	\int_{0}^{y} \phi(\eta)d\eta- f(y) \geq 
	\begin{cases}
	\frac{1}{\alpha}\Big(\phi(y) \cdot f'^{-1}\left(\phi(y)\right) -  f\big(f'^{-1}(\phi(y))\big)\Big) &\text{ if } \phi(y) \in  (\underline{c},\overline{c}],\\
	\frac{1}{\alpha}\big(\phi(y) -  f(1)\big) &\text{ if } \phi(y) \in  (\overline{c},+\infty),
	\end{cases}
	\end{align}
	Let us first focus on  the second case when $ \phi(y) > \overline{c} $, where $ y\in [0,1] $. The above integral inequality must hold for any $ y\in [0,1] $. Therefore, when $ y = 1 $, the second case of the right-hand-side of Eq. \eqref{int_inequality_two_cases} is given by
	\begin{align}\label{int_inequality_y_1}
	\int_{0}^{1} \phi(\eta)d\eta- f(1) \geq \frac{1}{\alpha} \cdot \big(\phi(1) -  f(1)\big).
	\end{align}
	On the other hand, when $ \overline{p}\in (\overline{c},+\infty) $, $ \text{PPM}_\phi $ is $ \alpha$-competitive indicates that the pricing function must satisfy the following inequality
	\begin{align}\label{int_inequality_p_1}
	\int_{0}^{1} \phi(\eta)d\eta- f(1) \geq \frac{1}{\alpha} \cdot \big( \overline{p} -  f(1)\big).
	\end{align}
	Note that the rationality of Eq. \eqref{int_inequality_p_1}	follows the same analogy to our analysis in Section \ref{section_theorem_1} regarding the special arrival instance $ \mathcal{A}_v $ when $ \overline{p}\in (\underline{c},\overline{c}] $. Based on Eq. \eqref{int_inequality_y_1} and Eq. \eqref{int_inequality_p_1}, to guarantee Eq. \eqref{int_inequality_p_1} holds, it suffices to have $ \phi(1) \geq \overline{p} $.  
	
	Therefore, \textit{when Eq. \eqref{ODI_final} holds for all $ y\in [0,1] $ with the boundary conditions of $ \phi(0) = \underline{c} $ and $ \phi(1)\geq \overline{p} $, then the incremental inequality holds at each round for all $ y\in [0,+\infty) $}.  Summarizing the above analysis, when $ \overline{p}\in (\overline{c},+\infty) $ (i.e., \textsf{HUC}), if the differential equation in Eq. \eqref{ODE_sufficiency} holds with the boundary conditions of $ \phi(0) = \underline{c} $ and $ \phi(1)\geq \overline{p} $, then $ \text{PPM}_\phi $ is $ \alpha $-competitive. We skip the proof for the case of \textsf{LUC} as it is similar to that of \textsf{HUC}. Hence, we finish the proof of Theorem \ref{sufficiency}. 
\end{proof}	
		
As we mentioned in Section \ref{section_theorem_1}, the division of the two cases of \textsf{LUC} and \textsf{HUC} is not artificial, it is derived from the online primal-dual analysis of the original social welfare problem in Eq. \eqref{SWM}. Note that substituting  $ f_{\normalfont\texttt{\#}}' $ into the differential equation in Eq. \eqref{ODE_sufficiency} leads to the two BVPs in \textsf{HUC} in Theorem \ref{a_unified_BVP}. \textbf{We thus complete the proof of the sufficient conditions in Theorem \ref{a_unified_BVP}}. 

We next prove the necessity of Theorem \ref{a_unified_BVP} by giving the following Theorem \ref{necessity}.  

\begin{theorem}[\textbf{Necessity}]\label{necessity}
	Given a setup $ \mathcal{S} $ with $ \overline{p}\in (\underline{c},+\infty) $, if there exists an $ \alpha $-competitive online algorithm, then there must exist a strictly increasing function $ \psi(\eta) $ that satisfies:   
		\begin{equation}\label{integral_equality}
		\begin{cases}
		\int_{0}^{p} \eta \psi'(\eta)d\eta - f\big(\psi(p)\big)  =   \frac{f_{\normalfont \texttt{\#}}(p)}{\alpha},\forall p\in [\underline{c},\overline{p}],\\
		\psi\left(\underline{c}\right) = 0, \psi(\overline{p}) \leq  1. 
		\end{cases}
		\end{equation}
\end{theorem}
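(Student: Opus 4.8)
The plan is to turn an arbitrary $\alpha$-competitive algorithm against itself through a threat-based (online-search) adversary, read off a monotone ``utilization-versus-threshold'' response $\psi$, and then promote the resulting integral \emph{inequality} into the required \emph{equality}. First I would reduce to a single resource type: since the lower bound needs only one hard instance, it suffices to restrict the bundles in $\mathcal{B}$ to single units of one resource, which decouples Problem \eqref{SWM} into $K$ independent single-resource problems, so necessity for one of them yields necessity overall. For this resource I would feed the algorithm a continuum of customers (each of infinitesimal demand, in the spirit of Assumption \ref{assumption_small}) whose per-unit valuations increase continuously from $\underline{c}$ up to $\overline{p}$, with an \emph{abundant} supply at every level $\eta$, and let the adversary stop the stream at any level $p\in[\underline{c},\overline{p}]$. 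Let $\psi(p)$ be the total utilization committed after all customers with PUV at most $p$ have been processed; because the algorithm is online, $\psi(p)$ is well defined independently of whether the stream continues, is non-decreasing, satisfies $\psi(\overline{p})\le 1$ by the capacity limit, and (discarding the never-profitable customers below $\underline{c}$) obeys $\psi(\underline{c})=0$.

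The crux of the instance is that, although abundant PUV-$p$ customers are present at the top when the stream stops, the algorithm has \emph{irrevocably} acquired part of its utilization earlier at strictly lower PUVs, whereas the offline optimum buys its entire allocation at the top price $p$. Concretely, stopping at $p$ gives online welfare exactly $W_{\textsf{online}}=\int_{\underline{c}}^{p}\eta\,d\psi(\eta)-f(\psi(p))$ (valuation booked at the PUV of acquisition, minus the incurred cost), while the offline optimum fills to $\min\{f'^{-1}(p),1\}$ entirely at PUV $p$, so that by Lemma \ref{lemma_Fenchel_duality} one gets $W_{\textsf{opt}}=p\,f'^{-1}(p)-f\big(f'^{-1}(p)\big)=f_{\texttt{\#}}(p)$ in \textsf{LUC} and $W_{\textsf{opt}}=p-f(1)=f_{\texttt{\#}}(p)$ once $p>\overline{c}$. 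Hence $\alpha$-competitiveness applied at every stopping level forces
\begin{equation*}
\int_{0}^{p}\eta\,\psi'(\eta)\,d\eta-f(\psi(p))\ \ge\ \frac{1}{\alpha}\,f_{\texttt{\#}}(p),\qquad \forall\,p\in[\underline{c},\overline{p}],
\end{equation*}
which is exactly \eqref{integral_equality} with ``$\ge$'' in place of ``$=$''. This is the mirror image, under the change of variables $\phi=\psi^{-1}$, of the integral inequality \eqref{int_inequality} used in the sufficiency proof, and the greedy choice $\psi=f'^{-1}$ shows it is genuinely restrictive.

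It remains to exhibit one strictly increasing $\psi$ for which the inequality is \emph{tight} while still respecting $\psi(\overline{p})\le 1$. I would define $\psi^\star$ as the solution of the initial value problem obtained by differentiating \eqref{integral_equality}, namely $\psi^{\star\prime}(p)\,\big(p-f'(\psi^\star(p))\big)=\tfrac{1}{\alpha}f_{\texttt{\#}}'(p)$ with $\psi^\star(\underline{c})=0$; since $f_{\texttt{\#}}'>0$ and one checks $p-f'(\psi^\star(p))>0$ along the trajectory, $\psi^\star$ is strictly increasing and satisfies the equality by construction. To control its right endpoint I would compare $\psi^\star$ with the algorithm's $\psi$: writing $G[\cdot](p)$ for the left-hand side above, both vanish at $p=\underline{c}$, $G[\psi]\ge\tfrac1\alpha f_{\texttt{\#}}=G[\psi^\star]$, and an integration by parts at a first crossing point $p_0$ (where $\psi^\star$ would overtake $\psi$) gives $G[\psi]-G[\psi^\star]=-\int_{\underline c}^{p_0}(\psi-\psi^\star)\,d\eta\le 0$, forcing agreement up to $p_0$ and then, to leading order just beyond it, $G[\psi]-G[\psi^\star]\approx (\psi-\psi^\star)'(p_0)\,\big(p_0-f'(\psi^\star(p_0))\big)(p-p_0)<0$, contradicting $G[\psi]\ge G[\psi^\star]$. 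Thus $\psi^\star\le\psi$ everywhere, whence $\psi^\star(\overline{p})\le\psi(\overline{p})\le 1$, and $\psi^\star$ is the desired function.

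The step I expect to be the main obstacle is the construction and regularity of $\psi^\star$ at the left endpoint: at $p=\underline{c}$ the defining ODE degenerates to the indeterminate form $0/0$ (both $f_{\texttt{\#}}'(\underline{c})=f'^{-1}(\underline{c})=0$ and $p-f'(\psi^\star)=0$ vanish), so existence, uniqueness, and strict monotonicity near the singular boundary cannot be read off from Picard--Lindel\"of and require the singular-boundary ODE analysis later invoked for Theorem \ref{lower_bound_1}; a secondary difficulty is making the first-crossing comparison rigorous when $\psi$ is merely monotone rather than differentiable, which I would handle by treating $\psi$ as a super-solution and $\psi^\star$ as a sub-solution of the integral relation instead of the differential one. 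Finally, the change of variables $\phi=(\psi^\star)^{-1}$ converts \eqref{integral_equality} into the ODE of Theorem \ref{a_unified_BVP} with boundary data $\phi(0)=\underline{c}$ and $\phi(\psi^\star(\overline{p}))\ge\overline{p}$, which is precisely how this lemma feeds the necessity halves of \textsf{LUC} and \textsf{HUC}.
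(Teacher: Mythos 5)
Your construction of the hard instance and the derivation of the integral inequality coincide with the paper's proof: the paper likewise feeds the algorithm a continuum of customer groups indexed by $\eta\in[\underline{c},p]$ with PUV $\eta$ (with group demand exactly $f_{\texttt{\#}}'(\eta)$ rather than ``abundant'' supply, an immaterial difference), reads off the monotone response $\psi$, computes $W_{\textsf{opt}}=f_{\texttt{\#}}(p)$ via Lemma~\ref{lemma_Fenchel_duality}, and obtains exactly the inequality $\int_{\underline{c}}^{p}\eta\psi'(\eta)d\eta-f(\psi(p))\geq\frac{1}{\alpha}f_{\texttt{\#}}(p)$ for every stopping level $p$. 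Where you diverge is in promoting this inequality to an equality attained by a strictly increasing function. The paper does it by a deformation argument: raising $\psi$ pointwise while pinning its endpoints can only decrease the left-hand side, and an explicit two-segment strictly increasing family parameterized by $(\epsilon,\xi)$ interpolates between a value below $\frac{1}{\alpha}f_{\texttt{\#}}(p)$ and the algorithm's value above it, so equality is reached by tuning the parameters. You instead define $\psi^\star$ as the solution of the differentiated (singular) IVP and prove $\psi^\star\leq\psi$ by a first-crossing comparison, which yields $\psi^\star(\overline{p})\leq 1$ for free. Your route makes it transparent that equality holds for \emph{all} $p$ simultaneously (the paper's interpolation is phrased pointwise in $p$ and is less explicit on this), but it shifts the entire burden onto the well-posedness of the singular IVP at $p=\underline{c}$ --- exactly the $0/0$ degeneracy you flag --- and onto verifying $p-f'(\psi^\star(p))>0$ along the whole trajectory, which is not automatic (it is in effect the existence question that Theorems~\ref{lower_bound_LUC} and~\ref{lower_bound_1} later resolve for power costs) and cannot be extracted from the comparison $\psi^\star\leq\psi$ alone without also knowing $\psi(p)\leq f'^{-1}(p)$. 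Neither difficulty is fatal, and the paper's own treatment of this step is comparably informal, but a complete write-up along your lines would need the singular-boundary analysis up front rather than as a deferred remark.
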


\begin{proof}
	An online algorithm is $ \alpha $-competitive indicates that the social welfare achieved by this online algorithm is at least $ 1/\alpha $ of the optimal offline social welfare for all possible arrival instances. In the following we first prove that, if there exists an $ \alpha $-competitive online algorithm, then there must exist a monotonically non-decreasing function $ y = \psi(p) $ such that
	\begin{align}\label{necessity_inequality}
	\int_{\underline{c}}^{p} \eta \psi'(\eta)d\eta - f\big(\psi(p)\big)  \geq \frac{1}{\alpha} f_{\texttt{\#}}(p)
	\end{align}
	holds for all $ p\in [\underline{c},\overline{p}] $ with $ \psi(\underline{c}) = 0 $ and $ \psi(\overline{p}) \leq 1 $. After that, we will prove that there exists a strictly-increasing function $ \psi(p) $ that satisfies the inequality in Eq. \eqref{necessity_inequality} with equality.   
	
	Our proof is based on constructing a special arrival instance such that any $ \alpha $-competitive online algorithm must satisfy the inequality in Eq. \eqref{necessity_inequality} in order to achieve at least  $ \frac{1}{\alpha} $ of the offline optimal social welfare in hindsight. Specifically,  for any $ p \in  [\underline{c}, \overline{p}] $, we construct a special arrival instance $ \mathcal{A}_{p} $ as follows.  \textit{Let us assume for each $\eta \in  [\underline{c}, p] $, there is a continuum of groups of customers indexed by $ \eta $, where each group $ \eta $ contains infinitely-many identical customers and has a total demand of $ f_{\texttt{\#}}'(\eta) $ (i.e., each customer's demand is infinitesimally small). The PUV of all the customers in group $ \eta $ is $ \eta $, namely, the total valuation of all the customers in this group is given by $ v_{\eta} = \eta  f_{\texttt{\#}}'(\eta) $.}  Note that $ f_{\texttt{\#}}'(\eta) $ is the maximum units of resource that can be provided when the marginal cost is $ \eta $ per unit. Based on Lemma \ref{lemma_Fenchel_duality}, when $ \eta \in  [\underline{c}, \overline{c}]$, $ f_{\texttt{\#}}'(\eta) =  f'^{-1}(\eta) $; when $ \eta\in  (\overline{c},+\infty)$, $ f_{\texttt{\#}}'(\eta) = 1 $. 
	
	For a given arrival instance $ \mathcal{A}_{p} $ with $ p\in [\underline{c}, \overline{p}] $, the social welfares achieved by the optimal offline algorithm and the $ \alpha $-competitive online algorithm are given as follows:
	\begin{itemize}[leftmargin=*]
		\item \textbf{Offline}: the optimal offline result in hindsight is to allocate $ f_{\texttt{\#}}'(p) $ units of resources to the customers in the last group, i.e., group $ p $, and none to all the previous continuum of customers. The optimal social welfare is thus
		\begin{align}
		W_{\textsf{opt}} = p  f_{\texttt{\#}}'(p) - f\left(f_{\texttt{\#}}'(p)\right) = f_{\texttt{\#}}(p),
		\end{align}
		where we use the third property of Fenchel duality in Lemma \ref{lemma_Fenchel_duality}.
		
		\item \textbf{Online}: for the $ \alpha $-competitive online algorithm, let $ y = \psi(\eta) $ denote the total resource consumption after processing the customers in group  $ \eta \in  [\underline{c}, p] $, and thus $ \psi(\eta) $ represents the resources sold to the continuum of groups of customers in $ [\underline{c},\eta] $. Intuitively, $ \psi(\underline{c}) = 0 $ and  $ \psi(\eta) $ is monotonically non-decreasing in $ \eta\in [\underline{c}, p] $. The social welfare achieved by this online algorithm is thus the total valuation minus the total cost, namely,
		\begin{align}
		W_{\textsf{online}}
		= \int_{\psi(\underline{c})}^{\psi(p)} \eta d(\psi(\eta)) -   
		f\big(\psi(p)\big)  =  \int_{\underline{c}}^{p} \eta \psi'(\eta)d\eta - f\big(\psi(p)\big)
		\end{align}
	\end{itemize}
	
	The online algorithm is $ \alpha $-competitive means that
	\begin{align}\label{integral_appendix}
	\int_{\underline{c}}^{p} \eta \psi'(\eta)d\eta - f\big(\psi(p)\big)  \geq \frac{1}{\alpha} f_{\texttt{\#}}(p)
	\end{align}
	holds for all $ p \in [\underline{c},\overline{p}] $. According to the definition of $ \psi $, we have $ \psi(\underline{c}) = 0 $ and $ \psi(p)\leq 1 $ holds for all $ p\in [\underline{c},\overline{p}] $, and thus $ \psi(\overline{p})\leq 1 $ holds as well. Therefore, if there exists an $ \alpha $-competitive online algorithm, then there must exit a non-decreasing function $ \psi(\eta) $ that satisfies Eq. \eqref{integral_appendix}. 
	
	Next we prove that there exists a strictly-increasing function $ \psi(p) $ that satisfies Eq. \eqref{integral_appendix} with equality. Suppose for a given $ p\in [\underline{c},\overline{p}] $,  $ \psi(\eta) $ is a feasible solution to Eq. \eqref{integral_appendix} and there is another non-decreasing function $ \hat{\psi}(\eta) $ such that $ \psi(\eta) \leq   \hat{\psi}(\eta) $ and  $ \hat{\psi}(p)  = \psi(p) \triangleq \psi_p $, then we have
	 \begin{align}
	 \int_{\underline{c}}^{p} \eta \psi'(\eta)d\eta = p\psi(p) - \int_{\underline{c}}^{p} \psi(\eta)d\eta \geq  p\hat{\psi}(p) - \int_{\underline{c}}^{p} \hat{\psi}(\eta)d\eta = \int_{\underline{c}}^{p} \eta \hat{\psi}'(\eta)d\eta,
	 \end{align}
	 which indicates that we can find another function $ \hat{\psi} $ so that the following inequality holds:
	 \begin{align}\label{two_solution_inequality}
	 \int_{\underline{c}}^{p} \eta \psi'(\eta)d\eta -f(\psi(p)) \geq \int_{\underline{c}}^{p} \eta \hat{\psi}'(\eta)d\eta -f(\hat{\psi}(p))\geq \frac{1}{\alpha} f_{\texttt{\#}}(p).
	 \end{align}
	 This means that for any given solution $ \psi(\eta) $ to Eq. \eqref{integral_appendix}, it is always possible to get a new solution  $ \hat{\psi}(\eta) $ to Eq. \eqref{integral_appendix} by pushing $ \psi(\eta) $ up while keeping the initial and terminal points fixed (i.e., $ \psi(\underline{c}) = \hat{\psi}(\underline{c}) =  0$ and $ \psi(p) = \hat{\psi}(p) = \psi_p$).
	 
	 Recall that for a given $ p\in [\underline{c},\overline{p}] $, when $ \psi(\eta) $ is a feasible solution to Eq. \eqref{integral_appendix}, we have
	 \begin{align}
	 \int_{\underline{c}}^{p} \eta \psi'(\eta)d\eta -f(\psi(p))  = p\psi(p) -f(\psi(p)) - \int_{\underline{c}}^{p} \psi(\eta)d\eta \geq   \frac{1}{\alpha} f_{\texttt{\#}}(p). 
	 \end{align}
	 Based on the above analysis, we can  prove that there always exists a strictly-increasing function $ \psi_*(\eta)\geq \psi(\eta) $ and $ \psi_*$ has the same boundary conditions as $ \psi $ such that
	 \begin{align}\label{psi_star_equality}
	 p\psi(p) -f(\psi(p)) - \int_{\underline{c}}^{p} \psi(\eta)d\eta \geq p\psi_*(p) -f(\psi_*(p)) - \int_{\underline{c}}^{p} \psi_*(\eta)d\eta =   \frac{1}{\alpha} f_{\texttt{\#}}(p). 
	 \end{align}
	 We can prove this as follows. Since $ \psi(p) = \psi_*(p) $ and $ \psi(\eta)\leq \psi_*(\eta) $, the first inequality definitely holds. We just need to prove that  such a strictly-increasing function $ \psi_*(\eta) $ exists so that the second equality in Eq. \eqref{psi_star_equality} holds.
	 
	 \begin{figure}
	 	\centering
	 	\includegraphics[width=4.7cm]{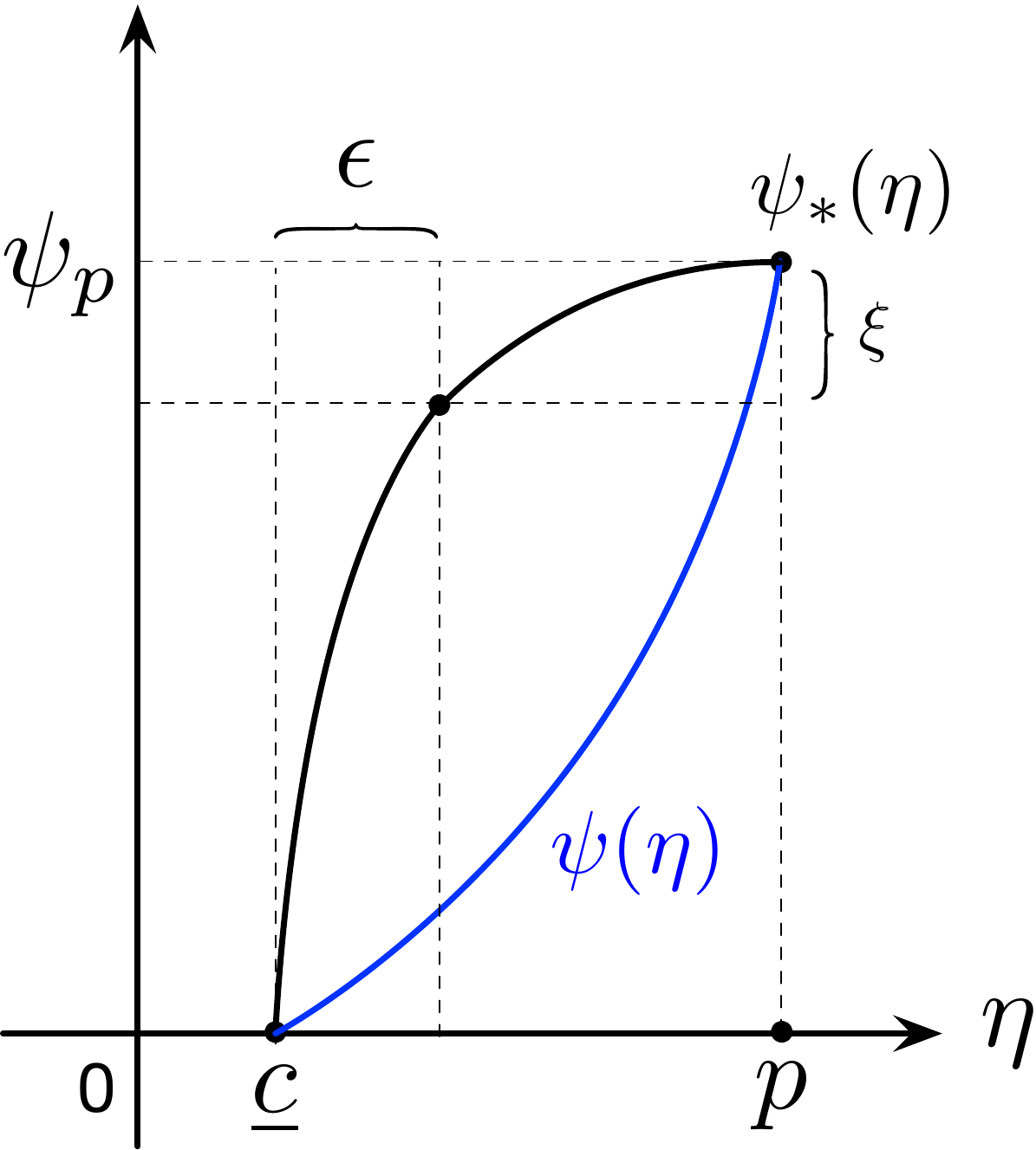}
	 	\caption{Illustration of how to construct a strictly-increasing function $ \psi_*(\eta) $ to satisfy Eq. \eqref{integral_appendix} with equality.}
	 \end{figure}
 
	 Our proof is based on constructing a strictly-increasing function as follows: suppose $ \epsilon \in (0,p-\underline{c}] $ and $ \xi \in (0,\psi_p] $. We assume that $ \psi_*(\eta) $ is strictly-increasing in $ \eta\in [\underline{c}, \underline{c}+\epsilon] $ with $ \psi_*(\underline{c}) = 0 $ and $ \psi_*(\underline{c}+\epsilon) =  \psi_p -  \xi $; $ \psi_*(\eta) $ is strictly-increasing in $ \eta\in [\underline{c}+\epsilon, \overline{p}] $ with $ \psi_*(\underline{c}+\epsilon) = \psi_p -  \xi $ and $ \psi_*(p) = \psi_p $. For such a function $ \psi_*(\eta) $, when $ \xi = 0 $, we have 
	 \begin{align*}
	 & p\psi_*(p) -f(\psi_*(p)) - \int_{\underline{c}}^{p} \psi_*(\eta)d\eta\\
	 =\ &p\psi_*(p) -f(\psi_*(p)) -\psi_*(p)(p-\underline{c} - \epsilon) - \int_{\underline{c}}^{\underline{c}+\epsilon} \psi_*(\eta)d\eta\\
	 =\ & \psi_*(p)(\underline{c}+\epsilon) -f(\psi_*(p)) - \int_{\underline{c}}^{\underline{c}+\epsilon} \psi_*(\eta)d\eta.
	 \end{align*}
	 In particular, when $ \epsilon $ approaches 0, based on the mean value theorem, we have
	 \begin{align*}
	 p\psi_*(p) -f(\psi_*(p)) - \int_{\underline{c}}^{p} \psi_*(\eta)d\eta = p\psi_*(p) -f(\psi_*(p)) < 0 \leq \frac{1}{\alpha} f_{\texttt{\#}}(p).
	 \end{align*}
	
	 On the other hand, $ \psi(\eta) $ is a feasible solution to Eq. \eqref{integral_appendix} means that we have at least $ \int_{\underline{c}}^{p}\psi_*(\eta)d\eta = \int_{\underline{c}}^{p}\psi(\eta)d\eta $ so that $ p\psi_*(p) -f(\psi_*(p)) - \int_{\underline{c}}^{p} \psi_*(\eta)d\eta \geq    \frac{1}{\alpha} f_{\texttt{\#}}(p) $. Therefore, it is always possible to adjust the values of $ \epsilon \in (0,p-\underline{c}] $ and $ \xi \in (0,\psi_p] $ to get a strictly-increasing function $ \psi_*(\eta) $ so that Eq. \eqref{psi_star_equality} holds, namely, Eq. \eqref{integral_appendix} holds with equality. Hence, we complete the proof of Theorem \ref{necessity}.  
\end{proof}

Notice that, based on Theorem \ref{sufficiency} and Theorem \ref{necessity}, if we assume $ p = \phi(y) $ and $ y = \psi(p) $,  then $ \phi $ and $\psi $ are inverse to each other since $ \phi $ and $ \psi $ are both strictly increasing. In particular, the following two equations are basically equivalent to each other:
\begin{align}
\int_{\underline{c}}^{p} \eta \psi'(\eta)d\eta - f\big(\psi(p)\big)  =  \frac{1}{\alpha} f_{\texttt{\#}}(p) \Leftrightarrow \int_{0}^{y}\phi(\eta)d\eta - f(y) = \frac{1}{\alpha}  f_{\normalfont\texttt{\#}}(\phi(y)).
\end{align}
Therefore, if there exits an $ \alpha $-competitive online algorithm, there must exit a strictly-increasing function $ y= \psi(p) $ that satisfies Eq. \eqref{integral_equality} for all $ p\in [\underline{c},\overline{p}] $, and the inverse of $ y = \psi(p) $, denoted by $ p = \psi^{-1}(y) $, is the pricing function that satisfies the conditions in Theorem \ref{sufficiency}.  \textbf{Therefore, we complete the proof of the necessary conditions in Theorem \ref{a_unified_BVP}}.

\subsection{Proof of Theorem \ref{a_unified_BVP_general}}\label{proof_a_unified_BVP_general}
The proof of Theorem \ref{a_unified_BVP_general} is similar to that of Theorem \ref{a_unified_BVP}. In particular, the following two corollaries directly follow Theorem \ref{sufficiency} and Theorem \ref{necessity}, respectively.

\begin{corollary}[\textbf{Sufficiency}]\label{sufficiency_general}
	Given a setup $ \mathcal{S} $ with $ \overline{p}_k\in (\overline{c}_k,+\infty), \forall k\in\mathcal{K}$, $\normalfont \text{PPM}_{\bm{\phi}} $ is $ \max\{\alpha_k\} $-competitive if  $\bm{\phi}=\{\phi_k\}_{\forall k} $  and for all $ k\in\mathcal{K} $, the pricing function $ \phi_k $ satisfies
	\begin{align}\label{ODE_sufficiency_general}
	\begin{cases}
	\phi_k(y) - f_k'(y) = \frac{1}{\alpha_k}\cdot \frac{d f_{\normalfont\texttt{\#}}^k(\phi_k(y)) }{d\phi_k}, y\in (0,1),\\
	 \phi_k(0) =  \underline{c}_k, \phi_k(1)\geq \overline{p}_k.
	\end{cases}
	\end{align} 
\end{corollary}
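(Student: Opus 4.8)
The plan is to replay the online primal--dual argument of Theorem \ref{sufficiency}, now carried out jointly over all resource--time pairs $(k,t)$ and with the single ratio $\alpha$ replaced by $\max_{k}\alpha_k$. First I would relax Problem \eqref{SWM_general} exactly as in the scalar case: replace each $f_k$ by its barrier extension $\bar f_k$, relax $x_n^b\in\{0,1\}$ to $x_n^b\ge 0$, and turn the load-balance equalities \eqref{totald_load} into inequalities. The resulting dual then carries a shadow price $p_k(t)$ for every $(k,t)$ and a multiplier $\mu_n$ for each bundle constraint \eqref{bidd_selection}, with the conjugate $f_{\texttt{\#}}^k$ appearing in the dual objective. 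As before, $\text{PPM}_{\bm\phi}$ posts the price $\phi_k\big(y_k^{(n-1)}(t)\big)$ on resource $k$ at slot $t$ and the analysis sets $\mu_n = v_n^{b_*}-\sum_{k,t}\phi_k\big(y_k^{(n-1)}(t)\big)r_k^{b_*}(t)\ge 0$.

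Next I would track the primal and dual objectives $P_n,D_n$ and establish the incremental inequality $P_n-P_{n-1}\ge \tfrac{1}{\alpha^*}\big(D_n-D_{n-1}\big)$ with $\alpha^*\triangleq\max_k\alpha_k$ and $D_0=0$. The difference computations are identical to those in Theorem \ref{sufficiency} but now decompose as a sum over $(k,t)$; since Assumption \ref{assumption_small} makes each increment $r_k^{b_*}(t)$ infinitesimal, the condition \eqref{ODE_sufficiency_general} is precisely the differential inequality guaranteeing that, for each individual pair $(k,t)$,
\begin{equation*}
\phi_k\big(y_k^{(n-1)}(t)\big)\big(y_k^{(n)}(t)-y_k^{(n-1)}(t)\big)-\big(\bar f_k(y_k^{(n)}(t))-\bar f_k(y_k^{(n-1)}(t))\big)\ \ge\ \frac{1}{\alpha_k}\Big(f_{\texttt{\#}}^k(\phi_k(y_k^{(n)}(t)))-f_{\texttt{\#}}^k(\phi_k(y_k^{(n-1)}(t)))\Big).
\end{equation*}
Each right-hand dual increment is nonnegative because $f_{\texttt{\#}}^k$ is increasing (Lemma \ref{lemma_Fenchel_duality}), $\phi_k$ is increasing, and utilization is nondecreasing.

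The main obstacle --- and the only genuinely new step relative to Theorem \ref{sufficiency} --- is aggregating the heterogeneous local ratios $\{\alpha_k\}$ into the single factor $\alpha^*$. Here I would exploit the nonnegativity of each dual increment: since $\alpha_k\le\alpha^*$ we have $\tfrac{1}{\alpha_k}\ge\tfrac{1}{\alpha^*}$, so the per-pair inequality above with $1/\alpha_k$ \emph{immediately} implies the same inequality with $1/\alpha^*$. Summing over all $(k,t)$ and adding $\mu_n\big(1-\tfrac{1}{\alpha^*}\big)\ge 0$ (valid because $\alpha^*\ge 1$ and $\mu_n\ge 0$) yields $P_n-P_{n-1}\ge\tfrac{1}{\alpha^*}\big(D_n-D_{n-1}\big)$. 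Telescoping over $n$ and invoking weak duality as in Theorem \ref{sufficiency} gives $W_{\textsf{online}}=P_N\ge\tfrac{1}{\alpha^*}D_N\ge\tfrac{1}{\alpha^*}W_{\textsf{opt}}$, establishing $\max_k\alpha_k$-competitiveness.

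Finally I would dispatch the boundary conditions resource by resource, which require no coupling across $k$: the condition $\phi_k(0)=\underline c_k$ forces $D_0=\sum_{k,t}f_{\texttt{\#}}^k(\underline c_k)=0$ via $f_{\texttt{\#}}^k(\underline c_k)=0$ (Lemma \ref{lemma_Fenchel_duality}), while $\phi_k(1)\ge\overline p_k$ certifies the worst-case instance that saturates resource $k$ up to the highest admissible price $\overline p_k$, exactly as in the high-uncertainty analysis of Theorem \ref{sufficiency}. Because the time dimension only introduces independent replicas of the single-slot argument, no additional work is needed there, which completes the proof.
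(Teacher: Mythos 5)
Your proposal is correct and follows essentially the same route as the paper, which itself omits the details of this corollary by noting that it follows the same online primal--dual principle as Theorem \ref{sufficiency}. The only genuinely new ingredient --- aggregating the heterogeneous ratios via $1/\alpha_k \ge 1/\max_k\alpha_k$ together with the nonnegativity of each dual increment and of $\mu_n$, then summing over the resource--time pairs $(k,t)$ --- is handled correctly.
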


\begin{corollary}[\textbf{Necessity}]\label{necessity_general}
	Given a setup $ \mathcal{S} $ with $ \overline{p}_k\in (\overline{c}_k,+\infty), \forall k\in\mathcal{K}$, if there exists an $ \alpha $-competitive online algorithm, then for all $ k\in\mathcal{K} $ there must exist a strictly increasing function $ \psi_k(\eta) $ and a constant $ \alpha_k\in [1,\alpha] $ that satisfy:   
	\begin{equation}\label{integral_equality_general}
	\begin{cases}
	\int_{0}^{p} \eta \psi_k'(\eta)d\eta - f_k\big(\psi_k(p)\big)  =   \frac{f_{\normalfont \texttt{\#}}^k(p)}{\alpha_k},\forall p\in (\underline{c}_k,\overline{p}_k),\\
	\psi_k\left(\underline{c}_k\right) = 0, \psi_k(\overline{p}_k) \leq  1. 
	\end{cases}
	\end{equation}
\end{corollary}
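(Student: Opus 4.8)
The plan is to reduce the claim to the single-resource necessity result, Theorem \ref{necessity}, by confining the adversary to arrival instances that stress one resource type in one time slot at a time. First I would fix an arbitrary resource type $k\in\mathcal{K}$ together with a single slot $t_0\in\mathcal{T}$, and restrict attention to instances in which every customer demands a bundle containing only resource $k$ (so $r_j^b=0$ for all $j\ne k$) and wants it for the single slot $\mathcal{T}_n=\{t_0\}$. For such instances the objective of Problem \eqref{SWM_general} collapses to $\sum_{n,b}v_n^b x_n^b-f_k\big(y_k(t_0)\big)$, since every other cost term vanishes, and the per-unit-valuation bound of Section \ref{sec_Redefinition} reduces to $v_n^b/r_k^b\le\overline p_k$ because $|\mathcal{T}_n|=1$. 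Thus the welfare, the feasibility constraints, and the valuation bound of this restricted family are \emph{formally identical} to the single-resource, single-slot model of Theorem \ref{necessity}, under the substitutions $f\to f_k$, $\overline p\to\overline p_k$, $\underline c\to\underline c_k$, and $f_{\normalfont\texttt{\#}}\to f_{\normalfont\texttt{\#}}^k$.

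Next I would replay the construction of Theorem \ref{necessity} verbatim on this family. For each $p\in(\underline c_k,\overline p_k)$ I build the instance $\mathcal{A}_p$ of a continuum of groups indexed by $\eta\in[\underline c_k,p]$, where group $\eta$ has total demand $(f_{\normalfont\texttt{\#}}^k)'(\eta)$ and per-unit valuation $\eta$; Assumption \ref{assumption_small} makes each individual demand infinitesimal, legitimising the continuum description. By the third property of Lemma \ref{lemma_Fenchel_duality} the optimal offline welfare equals $f_{\normalfont\texttt{\#}}^k(p)$ (serve only the top group), while an online algorithm, described through its consumption trajectory $y=\psi_k(\eta)$, earns $\int_{\underline c_k}^{p}\eta\,\psi_k'(\eta)\,d\eta-f_k\big(\psi_k(p)\big)$. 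Since the algorithm is globally $\alpha$-competitive and this isolated instance contributes welfare only through resource $k$, its guarantee forces
\[
\int_{\underline c_k}^{p}\eta\,\psi_k'(\eta)\,d\eta-f_k\big(\psi_k(p)\big)\ \ge\ \tfrac1\alpha\,f_{\normalfont\texttt{\#}}^k(p),\qquad \psi_k(\underline c_k)=0,\quad \psi_k(\overline p_k)\le 1.
\]
The push-up argument of Theorem \ref{necessity} then upgrades this non-decreasing, inequality-satisfying trajectory into a strictly-increasing $\psi_k$ meeting \eqref{integral_equality_general} with equality at the parameter $\alpha_k=\alpha$: pushing up only decreases the left-hand side while fixing the endpoints, so the extremal strictly-increasing trajectory attaining equality is dominated by the algorithm's own trajectory and therefore inherits $\psi_k(\overline p_k)\le1$. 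As $\alpha=\alpha_k\in[1,\alpha]$ trivially, this establishes the claimed membership; more aggressive choices $\alpha_k<\alpha$ may also be feasible but are not needed.

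The hard part will be justifying the decoupling rigorously — that for the \emph{necessity} direction it is legitimate to restrict the adversary to instances touching a single resource type and a single slot. The point to stress is that necessity only requires \emph{exhibiting} hard instances: the restricted single-$k$, single-$t_0$ instances are genuine members of the arrival-instance space of Problem \eqref{SWM_general}, so any globally $\alpha$-competitive algorithm must in particular be $\alpha$-competitive on each of them, and the per-resource conclusion follows with no interaction across resource types or slots. A secondary technical point is the simultaneous validity of \eqref{integral_equality_general} for a single $\psi_k$ over all $p\in(\underline c_k,\overline p_k)$; this is handled exactly as in Theorem \ref{necessity}, by observing that equality holding for every $p$ is precisely the integral form of the governing ODE, whose strictly-increasing solution with $\psi_k(\underline c_k)=0$ is the maximally pushed-up (extremal) trajectory and hence automatically respects $\psi_k(\overline p_k)\le1$. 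With these two points in place, repeating the remaining steps of Theorem \ref{necessity} for each $k$ completes the proof, and the inverse $\phi_k=\psi_k^{-1}$ recovers the pricing function obeying the two BVPs \eqref{general_BVP_1}--\eqref{general_BVP_2}.
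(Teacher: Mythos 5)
Your proposal is correct and follows essentially the same route as the paper, which simply asserts that this corollary ``directly follows'' Theorem \ref{necessity} applied per resource type; your reduction to single-resource, single-slot instances and the verbatim replay of the $\mathcal{A}_p$ construction is exactly the intended argument. The two points you flag as delicate (legitimacy of restricting the adversary, and the push-up to equality) are handled the same way the paper handles them in the proof of Theorem \ref{necessity}.
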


We skip the proofs of the above two corollaries since they follow the same principle as our previous proof of Theorem \ref{a_unified_BVP}. Theorem \ref{a_unified_BVP_general} directly follows the above two corollaries. Note that here we only consider the case of \textsf{HUC}. The discussion of \textsf{LUC} is similar and thus is omitted for brevity.

\section{Proof of Theorem \ref{lower_bound_LUC} and Theorem \ref{lower_bound_1}}
\label{proof_lower_bound_1}

\subsection{Preliminaries}
We first give some preliminaries to aid our following proofs of the two lower bounds in Theorem \ref{lower_bound_LUC} and Theorem \ref{lower_bound_1}.

\subsubsection{Characteristic Polynomial}
The first step of our lower bound analysis is to show that the ODE of Problem \eqref{BVP_power_LUC} and Problem \eqref{two_BVP_power_1}, i.e.,  the following ODE
\begin{align}\label{original_ODE}
\phi'(y) =  \alpha\cdot \frac{\phi(y) - f'(y)}{(\phi(y)/\overline{c})^{\frac{1}{s-1}}},
\end{align}
can be expressed in a separable form of differential equations. In particular, when we assume $ \varphi = (\phi/\overline{c})^{\frac{1}{s-1}} $, we have
\begin{align}\label{def_ODE}
\varphi' = \alpha \frac{\varphi^{s-1} - y^{s-1}}{(s-1)\varphi^{s-1}} =  \alpha\cdot
\frac{\left(\varphi/y\right)^{s-1} - 1}{(s-1)\left(\varphi/y\right)^{s-1}}.
\end{align}
Let us assume $ \chi = \varphi/y $, then the ODE in Eq. \eqref{def_ODE} becomes
\begin{align}\label{ODE_chi}
\frac{-\chi^{s-1}}{\chi^s-\frac{\alpha}{s-1}\chi^{s-1}+\frac{\alpha}{s-1}} d\chi = \frac{1}{y}dy.
\end{align}
Taking integration on both sides of Eq. \eqref{ODE_chi} leads to 
\begin{align}\label{integration_equation}
\int_0^{\chi}\frac{-\eta^{s-1}}{\eta^s-\frac{\alpha}{s-1}\eta^{s-1}+\frac{\alpha}{s-1}} d\eta = \ln(y)+C,
\end{align}
where $ C $ is any real constant.  Let us define $ P_s\left(\eta;\alpha\right) $  as
\begin{align}\label{cp}
P_{s}\left(\eta;\alpha\right) \triangleq
\eta^s-\frac{\alpha}{s-1}\eta^{s-1}+\frac{\alpha}{s-1}.
\end{align}
Note that Eq. \eqref{cp}  is the denominator of the left-hand-side of Eq. \eqref{integration_equation}.  This polynomial is referred to as the \textbf{characteristic polynomial} hereinafter, where the notation $ P_{s}\left(\eta;\alpha\right)  $ means that the characteristic polynomial is in degree $ s $ with variable $ \eta $ for a given $ \alpha\geq 1 $. 

The characteristic polynomial plays a critical role in our following lower bound analysis of $ \alpha $. In particular, the existence of positive roots to equation $ P_s\left(\eta;\alpha\right) = 0 $ is summarized in the following Lemma \ref{positive_roots}.

\begin{lemma}\label{positive_roots}
	Given $ \alpha\geq 1 $ and $ s> 1 $,  $P_{s}\left(\eta;\alpha\right) = 0 $ has at most two positive roots in variable $ \eta $. In particular, when $ \alpha < \alpha_s^{\min} $, $P_{s}\left(\eta;\alpha\right) = 0 $ has no positive root; when $ \alpha > \alpha_s^{\min} $, $P_{s}\left(\eta;\alpha\right) = 0 $ has two positive roots; when $ \alpha = \alpha_s^{\min} $, $P_{s}\left(\eta;\alpha\right) = 0 $ has a double positive root. 
\end{lemma}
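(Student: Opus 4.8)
The plan is to treat $P_s(\cdot\,;\alpha)$ as a single-variable function on $(0,+\infty)$ and reduce the counting of its positive roots to a sign analysis of its unique interior minimum. First I would differentiate in $\eta$:
\begin{equation*}
P_s'(\eta;\alpha) = s\eta^{s-1} - \alpha\eta^{s-2} = \eta^{s-2}\left(s\eta - \alpha\right).
\end{equation*}
Since $\eta^{s-2}>0$ for every $\eta>0$, the only critical point in $(0,+\infty)$ is $\eta^{*}=\alpha/s$. The factorization shows $P_s'<0$ on $(0,\eta^{*})$ and $P_s'>0$ on $(\eta^{*},+\infty)$, so $P_s(\cdot\,;\alpha)$ is strictly decreasing and then strictly increasing, i.e. strictly unimodal with a single global minimum at $\eta^{*}$. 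This at once yields the ``at most two positive roots'' claim, since a strictly unimodal function can vanish at most once on each of its two monotone branches.

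Next I would pin down the boundary behaviour. Because $\alpha\geq 1$ and $s>1$, we have $P_s(0;\alpha)=\tfrac{\alpha}{s-1}>0$, and $P_s(\eta;\alpha)\to+\infty$ as $\eta\to+\infty$ since the leading term $\eta^s$ dominates. Combined with unimodality, the exact number of positive roots is then governed entirely by the sign of the minimum value $P_s(\eta^{*};\alpha)$: no positive root when this value is positive, a single double root (at $\eta^{*}$) when it is zero, and two distinct positive roots (one on each side of $\eta^{*}$) when it is negative.

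It remains to evaluate the minimum and locate the sign change. Writing $\beta=\alpha/s$ and substituting $\eta^{*}=\beta$, a direct computation collapses the three terms to
\begin{equation*}
P_s(\eta^{*};\alpha) = \frac{\beta}{s-1}\left(s - \beta^{s-1}\right).
\end{equation*}
Because $\beta>0$ and $s-1>0$, the sign of the minimum equals the sign of $s-\beta^{s-1}$, which is negative precisely when $\beta^{s-1}>s$, i.e. $\alpha/s > s^{1/(s-1)}$, equivalently $\alpha > s^{s/(s-1)} = \alpha_s^{\min}$. Matching the three sign regimes of $P_s(\eta^{*};\alpha)$ to the cases $\alpha<\alpha_s^{\min}$, $\alpha=\alpha_s^{\min}$, and $\alpha>\alpha_s^{\min}$ then gives exactly the stated conclusions.

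I expect the only mildly delicate point to be the algebraic simplification of $P_s(\eta^{*};\alpha)$, where the $\eta^s$ and $\eta^{s-1}$ contributions must be combined carefully so that the common factor $\beta/(s-1)$ emerges cleanly and the threshold $\alpha_s^{\min}=s^{s/(s-1)}$ is recovered; everything else is a routine monotonicity-and-limits argument.
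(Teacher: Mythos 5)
Your proposal is correct and follows essentially the same route as the paper: differentiate to get $P_s'(\eta;\alpha)=\eta^{s-2}(s\eta-\alpha)$, conclude unimodality with the unique minimizer $\eta^*=\alpha/s$, note $P_s(0;\alpha)=\tfrac{\alpha}{s-1}>0$, and read off the root count from the sign of $P_s(\eta^*;\alpha)$, with the threshold $\alpha_s^{\min}=s^{s/(s-1)}$. Your explicit simplification of the minimum value to $\tfrac{\beta}{s-1}(s-\beta^{s-1})$ with $\beta=\alpha/s$ is a correct (and slightly more detailed) version of the algebra the paper leaves implicit.
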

\begin{proof}
	We can prove that the characteristic polynomial is a unimodal function in $ \eta\in [0,+\infty) $. Taking derivative of $  P_{s}\left(\eta,\alpha \right)  $ w.r.t. $ \eta\in [0,+\infty)$, we have
	\begin{align}
	\frac{d P_{s}\left(\eta;\alpha \right)}{d\eta} = s\eta^{s-1}-\alpha \eta^{s-2} = \eta^{s-2}\left(s\eta-\alpha \right).
	\end{align}
	Therefore, $P_{s}\left(\eta,\alpha\right) $ is decreasing when $\eta\in [0,\alpha/s]  $, and is increasing when $ \eta\in (\alpha/s,+\infty) $. Since $ P_s(0;\alpha) = \frac{\alpha}{s-1}>0 $, to have at least one positive root, we must have
	\begin{align}
	P_{s}\left(\frac{\alpha}{s};\alpha\right) = \left(\frac{\alpha}{s}\right)^s-\frac{\alpha}{s-1}\left(\frac{\alpha}{s}\right)^{s-1}+\frac{\alpha}{s-1} \leq 0,
	\end{align}
	which thus leads to  
	$ \alpha\geq   s^{s/(s-1)} = \alpha_{s}^{\min} $. In particular, when $ \alpha = \alpha_{s}^{\min} $, we have a double positive root, which is $ \eta = \alpha_{s}^{\min}/s = s^{1/(s-1)}$.   
\end{proof}

Based on  Lemma \ref{positive_roots}, for any $ \alpha \geq \alpha_{s}^{\min}$, we denote the two positive roots of $ P_{s}\left(\eta;\alpha\right) = 0 $  by $ R_{s}^{-}(\alpha) $ and $ R_{s}^{+}(\alpha) $, where  $ R_{s}^{-}(\alpha)\leq  R_{s}^{+}(\alpha) $. In particular, when $ \alpha =  \alpha_s^{\min} $, we have $ R_{s}^{-}(\alpha)= R_{s}^{+}(\alpha) $ and thus we have a double positive root. We will see in subsequent sections that the positive roots of the characteristic polynomial play a critical role throughout the proofs of Theorem \ref{lower_bound_LUC} and Theorem \ref{lower_bound_1}.

\subsubsection{Preliminaries of IVP and BVPs}
To prove the existence of monotonically-increasing solutions to Problem \eqref{BVP_power_LUC} and Problem \eqref{two_BVP_power_1},  let us first focus on the following BVP:
\begin{align}\label{BVP_1_varphi}
\text{\textbf{BVP}}(\varphi; \alpha,u) 
\begin{cases}
\varphi'  = \alpha\cdot \frac{\varphi^{s-1} - y^{s-1}}{(s-1)\varphi^{s-1}}, 0<y<u, \\
\varphi(0) = 0, \varphi(u) = 1.
\end{cases}
\end{align} 
For any $ \alpha \geq 1 $ and $ u\in (0,1) $, we denote the solution of $\text{\textbf{BVP}}(\varphi; \alpha,u) $ (if it exists) by $ \varphi_{\textsf{bvp}} \big(y;\alpha,u\big) $.  Note that the only difference between $ \text{\textbf{BVP}}(\varphi; \alpha,u)  $ and Problem \eqref{two_BVP_power_1} is the rescaling of the coordinates. In the following, we may refer to these two BVPs interchangeably. We will call $ \phi(y) $ as the (original) pricing function while $ \varphi(y) $ the \textbf{scaled-pricing function}. Similarly, we will also call $ f'(y)=\overline{c}y^{s-1} $ the (original)  marginal cost and $ f_{\varphi}'(y) = (f'(y)/\overline{c})^{1/(s-1)} = y $ as the \textbf{scaled-marginal cost}. We will see in the subsequent analyses that performing such an equivalent transformation helps us reveal rich structural properties of the ODE in Eq. \eqref{original_ODE}.

Directly working with BVPs is usually very challenging \cite{Perko2001}. Worse yet is that our $ \text{\textbf{BVP}}(\varphi; \alpha,u) $ consists  of a singular boundary condition, namely the right-hand-side of the ODE is undefined when $ \varphi(0) = 0 $. A typical idea is to approach  $ \text{\textbf{BVP}}(\varphi; \alpha,u) $ via its associated IVP, and thus we define $ \text{\textbf{IVP}}(\varphi; \alpha,u)  $ as follows:
\begin{align}\label{IVP_1_varphi}
\text{\textbf{IVP}}(\varphi; \alpha,u) 
\begin{cases}
\varphi'  = \alpha\cdot \frac{\varphi^{s-1} - y^{s-1}}{(s-1)\varphi^{s-1}}, 0<y<u, \\
\varphi(u) = 1,
\end{cases}
\end{align} 
We denote the solution of $\text{\textbf{IVP}}(\varphi; \alpha,u) $ (if it exists) by $ \varphi_{\textsf{ivp}}\big(y;\alpha,u\big) $. Intuitively,  when we have 
$$ \lim_{y\rightarrow 0^+} \varphi_{\textsf{ivp}}(y;\alpha,u) = 0, $$
then $ \varphi_{\textsf{ivp}}\left(y;\alpha,u\right) $ is also a solution to  $ \text{\textbf{BVP}}(\varphi; \alpha,u)  $, i.e., $ \varphi_{\textsf{ivp}}\left(y;\alpha,u\right) = \varphi_{\textsf{bvp}} (y;\alpha,u)  $. Note that we check the limit of $ \varphi_{\textsf{ivp}}\left(y;\alpha,u\right)  $ when $ y $ approaches 0 since   $ \varphi_{\textsf{ivp}}\left(y;\alpha,u\right) $ may be undefined at $ y = 0 $. 

Solving $\text{\textbf{IVP}}(\varphi; \alpha,u) $ is trivial since we only have one initial condition. In particular, substituting the initial condition of $ \varphi(u) = 1 $ into Eq. \eqref{integration_equation} leads to
\begin{align}
\int_0^{1/u}\frac{-\eta^{s-1}}{P_s\left(\eta,\alpha\right)} d\eta = \ln(u)+C,
\end{align}
which thus indicates that $ \varphi_{\textsf{ivp}}\left(y;\alpha,u\right) $ is the root to the following equation in variable $ \varphi $:
\begin{align}\label{equation_of_varphi}
\int_{1/u}^{\varphi/y}\frac{\eta^{s-1}}{P_s\left(\eta,\alpha\right)} d\eta = \ln\left(\frac{u}{y}\right).
\end{align}
Below we give some standard results regarding the existence, uniqueness and monotonicity of $ \varphi_{\textsf{ivp}}\left(y;\alpha,u\right) $.

\subsubsection{Existence, Uniqueness and Monotonicity} Our pricing function design is related to the existence and uniqueness property of solutions to $ \text{\textbf{IVP}}(\varphi; \alpha,u) $  in the following lemma.   

\begin{lemma}\label{existence_uniquess}
	For each $ (\alpha, u)\in [1,+\infty)\times (0,1) $,  $\normalfont\text{\textbf{IVP}}(\varphi; \alpha,u) $ has a unique solution  $ \varphi_{\textsf{ivp}}\left(y;\alpha,u\right)  $ that is defined over $y\in (0,u] $.
\end{lemma}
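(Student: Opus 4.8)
The plan is to establish existence and uniqueness of $\varphi_{\textsf{ivp}}(y;\alpha,u)$ on the interval $(0,u]$ by working backwards from the initial point $u$ toward $0$, and to handle the singular behaviour at the left endpoint separately. First I would recast the ODE $\varphi' = \alpha\frac{\varphi^{s-1}-y^{s-1}}{(s-1)\varphi^{s-1}}$ in the standard IVP form $q'(\omega)=Q(\omega,q)$ and observe that the right-hand side $Q(y,\varphi)=\frac{\alpha}{s-1}\bigl(1 - (y/\varphi)^{s-1}\bigr)$ is continuously differentiable (hence locally Lipschitz in $\varphi$) on the open region where $\varphi>0$ and $y\ge 0$. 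Since the initial data $\varphi(u)=1$ lies strictly inside this region (as $u\in(0,1)$ so $y=u$ and $\varphi=1$ give $\varphi>0$), the Picard–Lindel\"of theorem cited earlier in the excerpt guarantees a unique local solution through $(u,1)$.

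The main work is then to extend this local solution leftward all the way down to $y\to 0^+$ while keeping $\varphi$ strictly positive, so that the solution does not escape the region of validity before reaching the endpoint. For this I would show two monotonicity/positivity facts. I would argue that along the solution, $\varphi(y)$ stays above the scaled marginal cost, i.e. $\varphi(y)>y$ for $y\in(0,u)$: at $y=u$ we have $\varphi=1>u$, and wherever $\varphi=y$ the derivative $\varphi'=0$ while $\frac{d}{dy}(y)=1>0$, so the curve $\varphi(y)$ cannot cross down through the diagonal $\varphi=y$ as $y$ decreases. This keeps $\varphi^{s-1}-y^{s-1}\ge 0$, hence $\varphi'\ge 0$, so $\varphi$ is non-decreasing and in particular bounded above by $1$ and bounded below by $0$ on $(0,u]$. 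Because $\varphi$ remains trapped in a compact region bounded away from the singular set $\{\varphi=0\}$ for every $y$ bounded away from $0$, the standard continuation theorem for ODEs lets me extend the unique solution to the whole half-open interval $(0,u]$.

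The delicate part, and the step I expect to be the main obstacle, is the left endpoint $y=0$, where $Q$ is genuinely undefined (division by $\varphi^{s-1}$ with $\varphi\to 0$). Uniqueness on $(0,u]$ does not by itself require evaluating the ODE at $0$, so I would phrase the claim carefully as existence and uniqueness over $(0,u]$ (which is exactly how the lemma is stated), and defer the question of the limiting value $\lim_{y\to 0^+}\varphi_{\textsf{ivp}}$ to the subsequent analysis via the separated form \eqref{equation_of_varphi}. Concretely, since any two solutions agreeing at $y=u$ must coincide on any subinterval $[\delta,u]$ with $\delta>0$ by the Lipschitz estimate there, and $\delta>0$ is arbitrary, the two solutions agree on all of $(0,u]$; this yields uniqueness without ever touching the singularity. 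I would make the monotonicity bound $0<\varphi(y)\le 1$ quantitative enough to rule out blow-up or vanishing in finite $y>0$, which is what guarantees the maximal interval of existence extends down to (but not necessarily including) $0$. The remaining subtlety — whether $\varphi_{\textsf{ivp}}(y;\alpha,u)\to 0$ as $y\to0^+$, which is what connects the IVP to the BVP — is precisely the content handled separately using the characteristic polynomial $P_s(\eta;\alpha)$ and its positive roots, so I would not attempt it inside this lemma.
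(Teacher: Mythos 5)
Your proposal is correct and follows essentially the same route as the paper: the paper likewise invokes the Picard--Lindel\"{o}f theorem for local existence and uniqueness at the initial point $\varphi(u)=1$ and then asserts that the solution extends over all of $(0,u]$. The only difference is that you explicitly justify the continuation step via the barrier bound $\varphi(y)\geq y$ (which keeps the trajectory in a compact set away from the singular set $\{\varphi=0\}$ on every $[\delta,u]$); the paper leaves that extension unproved here and establishes the same bound separately as Lemma \ref{monotonicity_omega}, so your version is, if anything, more self-contained.
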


Lemma \ref{existence_uniquess} follows one of the most important theorems in ODEs, namely the  Picard–Lindel\"{o}f theorem for the existence and uniqueness of solutions to IVPs.  We refer the details to \cite{Perko2001}, \cite{ODE1973}, \cite{uniqueness_book1993}. Basically the Picard–Lindel\"{o}f theorem guarantees that there always exists a unique solution to $ \text{\textbf{IVP}}(\varphi; \alpha,u)  $, defined on a small neighbourhood of the initial point $ \varphi(u) = 1 $, as long as the right-hand-side of the ODE in $ \text{\textbf{IVP}}(\varphi; \alpha,u)  $ is Lipschitz continuous within that neighbourhood. Moreover, this unique solution extends to the whole region of $ y\in (0,u] $. Based on this existence and uniqueness property, we can prove the following monotonicity properties in Lemma \ref{monotonicity_omega} and Lemma \ref{monotonicity_alpha}.

\begin{lemma}\label{monotonicity_omega}
	Given $ \alpha\geq 1 $, $ \varphi_{\textsf{ivp}}(y;\alpha,u)$ is non-decreasing in $ y\in(0,u] $ and lower bounded by $ f_\varphi'(y) $ at each point in $ (0,u] $.
\end{lemma}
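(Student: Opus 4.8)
The plan is to reduce both assertions to a single geometric fact: the solution curve $\varphi_{\textsf{ivp}}(\cdot;\alpha,u)$ stays above the diagonal $\varphi = y$, which is precisely the scaled-marginal cost $f_\varphi'(y) = y$, throughout $(0,u]$. The observation that drives everything is that, because $s>1$ and $\alpha \geq 1$, the denominator $(s-1)\varphi^{s-1}$ in the ODE of $\text{\textbf{IVP}}(\varphi; \alpha,u)$ is strictly positive wherever the solution is defined and positive, so the sign of $\varphi'$ coincides with the sign of $\varphi^{s-1} - y^{s-1}$, hence with the sign of $\varphi - y$. Consequently the lower bound $\varphi \geq y$ immediately yields $\varphi' \geq 0$, i.e.\ monotonicity; the entire burden is therefore to establish the lower bound.

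To establish the lower bound I would argue by a first-touch comparison. Set $g(y) = \varphi_{\textsf{ivp}}(y;\alpha,u) - y$. By Lemma \ref{existence_uniquess} the solution is defined, and hence positive, on $(0,u]$, and the terminal condition gives $g(u) = 1 - u > 0$ since $u \in (0,1)$. The crucial computation is at any point $y_1$ where the curve touches the diagonal, i.e.\ $g(y_1)=0$: there $\varphi^{s-1} = y_1^{s-1}$, so the numerator of the ODE vanishes, $\varphi'(y_1)=0$, and therefore $g'(y_1) = \varphi'(y_1) - 1 = -1 < 0$. In words, every contact with the diagonal is transversal, with $g$ strictly decreasing through it.

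From this I would run the first-touch argument. Let $y^* = \inf\{\, y\in(0,u] : g(\tau) > 0 \text{ for all } \tau\in[y,u]\,\}$, which is well defined and satisfies $y^* < u$ by continuity of $g$ together with $g(u)>0$. If $y^* > 0$, then continuity forces $g(y^*) = 0$ while $g(\tau) > 0$ for all $\tau \in (y^*,u]$, so the one-sided difference quotient $g(\tau)/(\tau-y^*)$ is positive for $\tau>y^*$ and hence $g'(y^*) \geq 0$; this contradicts $g'(y^*) = -1$. Thus $y^* = 0$, i.e.\ $g(y) > 0$ on $(0,u)$, and together with $g(u) = 1-u > 0$ we obtain $\varphi_{\textsf{ivp}}(y;\alpha,u) > y = f_\varphi'(y)$ on all of $(0,u]$. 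Feeding $\varphi \geq y$ back into the ODE gives $\varphi' \geq 0$, which settles the monotonicity claim.

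The main obstacle, and the point requiring care, is the singular boundary at $y=0$: the right-hand side of the ODE is undefined on $\varphi = 0$, so I must ensure that the positivity of $\varphi$ needed for the sign analysis holds on the whole open interval, and that the first-touch argument, which I would phrase so as to evaluate $g$ only at interior points (on each compact $[\epsilon,u]$ if needed), never appeals to a value at the singular point $y=0$. Since the lemma asserts the bound only on $(0,u]$ and Lemma \ref{existence_uniquess} already supplies a solution defined, and thus positive, throughout $(0,u]$, this is a matter of careful phrasing rather than a genuine analytic difficulty.
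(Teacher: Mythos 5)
Your proposal is correct and follows essentially the same route as the paper: first establish the lower bound $\varphi_{\textsf{ivp}}(y;\alpha,u)\geq y=f_\varphi'(y)$ by a first-touch contradiction at the diagonal (where the ODE forces $\varphi'=0$, hence $g'=-1$, incompatible with the curve lying above the diagonal to the right of the touch point), and then read off monotonicity from the sign of the ODE's right-hand side. Your phrasing of the contradiction via the one-sided difference quotient of $g$ at the infimum point is a slightly cleaner version of the paper's argument about $\varphi$ decreasing just to the left of the touch point, but it is the same idea.
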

\begin{proof}
	Please refer to Appendix \ref{proof_monotonicity_omega}.
\end{proof}

Lemma \ref{monotonicity_omega} guarantees that for any $ \alpha\geq 1 $ and $ u\in (0,1) $, the unique solution to $ \text{\textbf{IVP}}(\varphi; \alpha,u)  $ is a feasible scaled-pricing function (i.e., posted prices are always larger than or equal to the marginal costs, and thus no negative social welfare contribution will be introduced). Below we give Lemma \ref{monotonicity_alpha}, which states that $ \varphi_{\textsf{ivp}}(y;\alpha,u)  $ is also monotonic in $ (\alpha, u)\in [1,+\infty)\times (0,1] $.

\begin{lemma}\label{monotonicity_alpha}
	$ \varphi_{\textsf{ivp}}(y;\alpha,u) $ is  continuous and non-increasing   in $ (\alpha, u)\in [1,+\infty)\times (0,1] $.
\end{lemma}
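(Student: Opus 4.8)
The plan is to treat the statement in two parts. Continuity in $(\alpha,u)$ will follow from the standard theory of continuous dependence of ODE solutions on parameters and on the location of the initial condition, which is the companion of the Picard–Lindel\"of existence-and-uniqueness result already invoked in Lemma \ref{existence_uniquess}. Writing the right-hand side of $\text{\textbf{IVP}}(\varphi;\alpha,u)$ as $h(y,\varphi;\alpha)=\frac{\alpha}{s-1}\big(1-(y/\varphi)^{s-1}\big)$, one checks that $h$ is jointly continuous and locally Lipschitz in $\varphi$ on $\{\varphi>0\}$, and in particular on the region $\{\varphi\ge y\}$ in which every solution lives by Lemma \ref{monotonicity_omega}. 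Hence on any compact subinterval $[\delta,u]\subset(0,u]$ the flow depends continuously on $(\alpha,u)$, and letting $\delta\to 0^+$ (using that the solution extends to all of $(0,u]$) transfers continuity to the whole domain.

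For monotonicity in $\alpha$ I would use an ODE comparison argument. Fix $y,u$ and take $\alpha_1<\alpha_2$. On the region $\{\varphi\ge y\}$ the factor $1-(y/\varphi)^{s-1}$ is nonnegative, so $h(y,\varphi;\alpha)$ is nondecreasing in $\alpha$; by Lemma \ref{monotonicity_omega} both solutions stay in this region. Since the two solutions share the common terminal datum $\varphi(u)=1$ and are obtained by integrating backward from $y=u$ down to $0$, the comparison principle for ODEs with ordered vector fields yields $\varphi_{\textsf{ivp}}(y;\alpha_2,u)\le\varphi_{\textsf{ivp}}(y;\alpha_1,u)$ for all $y\in(0,u]$; equivalently $\varphi_{\textsf{ivp}}$ is non-increasing in $\alpha$.

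For monotonicity in $u$ the cleanest route avoids differentiating the implicit relation \eqref{equation_of_varphi} and instead combines the $y$-monotonicity of Lemma \ref{monotonicity_omega} with non-crossing of solutions. Fix $\alpha$ and take $u_1<u_2$, writing $\varphi_i=\varphi_{\textsf{ivp}}(\cdot;\alpha,u_i)$; both solve the same ODE $\varphi'=h(y,\varphi;\alpha)$, the second on the larger interval $(0,u_2]$. Since $\varphi_2$ is non-decreasing in $y$ and $u_1<u_2$, we have $\varphi_2(u_1)\le\varphi_2(u_2)=1=\varphi_1(u_1)$. Thus at the right endpoint of the common interval $(0,u_1]$ the two solutions are ordered, and because solutions of the same Lipschitz ODE cannot cross, the ordering propagates backward: $\varphi_{\textsf{ivp}}(y;\alpha,u_2)\le\varphi_{\textsf{ivp}}(y;\alpha,u_1)$ for all $y\in(0,u_1]$. (Should one prefer an analytic check, differentiating \eqref{equation_of_varphi} gives a formula for $\partial\varphi/\partial u$ whose sign is governed by that of $P_s(\varphi/y;\alpha)$, and the substitution $\chi=\varphi/y$ with $t=\ln y$ in \eqref{ODE_chi} renders the equation autonomous, so that by uniqueness a trajectory never crosses the equilibria $R_s^-(\alpha),R_s^+(\alpha)$; hence $P_s(\varphi/y;\alpha)$ keeps a constant sign along each trajectory and the sign of $\partial\varphi/\partial u$ is fixed, confirming the same conclusion.)

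The main obstacle throughout is the singular boundary at $y=0$, where $h$ is undefined because $\varphi\to 0$: the comparison and continuous-dependence theorems are stated on interior intervals, so each argument must first be carried out on $[\delta,u]$ with $\delta>0$ and then extended by $\delta\to 0^+$, relying on Lemma \ref{existence_uniquess} to guarantee that the solutions remain defined and uniformly bounded below by $f_\varphi'(y)=y$ on all of $(0,u]$. A secondary point needing care is the degenerate case in which the initial value $1/u$ coincides with a root of $P_s(\,\cdot\,;\alpha)$, so that the trajectory sits on an equilibrium; this is a single exceptional value of $u$, and the desired monotonicity extends to it by the continuity established in the first step.
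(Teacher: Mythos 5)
Your proposal is correct and follows essentially the same route as the paper: monotonicity in $u$ via the $y$-monotonicity of Lemma \ref{monotonicity_omega} combined with non-crossing of trajectories (uniqueness), and monotonicity in $\alpha$ via the ordering of the vector field in $\alpha$ on $\{\varphi\ge y\}$, which is exactly the mechanism the paper implements by comparing derivatives at $y=u$ and ruling out a first intersection point. Your treatment of continuity and of the singular boundary at $y=0$ is somewhat more careful than the paper's one-line assertion, but it is the same argument in substance.
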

\begin{proof}
	The proof is given in Appendix \ref{proof_monotonicity_alpha}.
\end{proof}

We also have the following Lemma \ref{unique_BVP}, which shows that if $ \varphi_{\textsf{ivp}}(y;\alpha,u) $ approaches 0 when $ y \rightarrow 0^+ $, then it must be the unique solution to $ \text{\textbf{BVP}}(\varphi; \alpha,u)  $.

\begin{lemma}\label{unique_BVP}
	For any $ u\in (0,1) $ and $ \alpha\geq 1 $, $ \varphi_{\textsf{ivp}}(y;\alpha,u) $ is the unique solution to $\normalfont \text{\textbf{BVP}}(\varphi; \alpha,u) $ if and only if $ \lim\limits_{y\rightarrow 0^+}\varphi_{\textsf{ivp}}(y;\alpha,u) = 0 $. 
\end{lemma}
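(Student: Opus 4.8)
\textbf{Plan of proof for Lemma \ref{unique_BVP}.}
The claim is an \emph{if and only if} characterization: $\varphi_{\textsf{ivp}}(y;\alpha,u)$ is the unique solution to $\text{\textbf{BVP}}(\varphi; \alpha,u)$ precisely when $\lim_{y\to 0^+}\varphi_{\textsf{ivp}}(y;\alpha,u)=0$. The plan is to exploit the fact that $\text{\textbf{IVP}}(\varphi; \alpha,u)$ and $\text{\textbf{BVP}}(\varphi; \alpha,u)$ differ only in that the BVP imposes the extra boundary condition $\varphi(0)=0$, while both share the ODE and the condition $\varphi(u)=1$. Since Lemma \ref{existence_uniquess} already guarantees that $\varphi_{\textsf{ivp}}(y;\alpha,u)$ is the \emph{unique} function on $(0,u]$ satisfying the ODE together with $\varphi(u)=1$, the entire content of the lemma reduces to matching this single IVP solution against the singular left boundary condition at $y=0$.

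\textbf{Forward direction ($\Leftarrow$).}
First I would assume $\lim_{y\to 0^+}\varphi_{\textsf{ivp}}(y;\alpha,u)=0$ and verify that the IVP solution, extended by continuity to $\varphi(0)=0$, genuinely solves the BVP. By Lemma \ref{existence_uniquess} it satisfies the ODE on $(0,u]$ and the right boundary condition $\varphi(u)=1$; the assumed limit supplies exactly the missing left boundary condition $\varphi(0)=0$. Hence it is \emph{a} solution to $\text{\textbf{BVP}}(\varphi; \alpha,u)$. For uniqueness, I would argue that any solution $\tilde\varphi$ of the BVP is in particular a solution of the ODE on $(0,u]$ with $\tilde\varphi(u)=1$, so by the uniqueness clause of Lemma \ref{existence_uniquess} it must coincide with $\varphi_{\textsf{ivp}}(y;\alpha,u)$ on the open interval, and therefore everywhere by continuity. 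This shows the IVP solution is the unique BVP solution.

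\textbf{Reverse direction ($\Rightarrow$).}
Conversely, suppose $\varphi_{\textsf{ivp}}(y;\alpha,u)$ is a solution to $\text{\textbf{BVP}}(\varphi; \alpha,u)$. By Definition of the BVP, a solution must satisfy the boundary condition $\varphi(0)=0$; since any BVP solution agrees with $\varphi_{\textsf{ivp}}$ on $(0,u]$ (again by Lemma \ref{existence_uniquess}), continuity forces $\lim_{y\to 0^+}\varphi_{\textsf{ivp}}(y;\alpha,u)=0$. This is essentially immediate once the IVP-uniqueness identification is in place.

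\textbf{Main obstacle.}
The delicate point is the singular boundary behavior at $y=0$, where the right-hand side of the ODE is undefined (the denominator $(s-1)\varphi^{s-1}$ vanishes as $\varphi\to 0$). I would need to justify carefully that the limit $\lim_{y\to 0^+}\varphi_{\textsf{ivp}}(y;\alpha,u)$ exists at all, so that the phrase ``equals $0$'' is meaningful; here monotonicity from Lemma \ref{monotonicity_omega} is the key tool, since a non-decreasing function on $(0,u]$ that is bounded below by the scaled-marginal cost $f_\varphi'(y)=y$ has a well-defined limit as $y\to 0^+$. I would also want to confirm that a function continuous on $[0,u]$, differentiable on $(0,u]$, and satisfying the ODE on $(0,u]$ qualifies as a bona fide BVP solution even though the ODE is not required (or even definable) at the singular endpoint; this is a matter of adopting the natural convention that the ODE is imposed on the open interval only, consistent with how $\text{\textbf{BVP}}(\varphi; \alpha,u)$ is written in Eq. \eqref{BVP_1_varphi}.
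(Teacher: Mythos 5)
Your proposal is correct and follows essentially the same route as the paper: both directions reduce to the uniqueness clause of Lemma \ref{existence_uniquess} (any BVP solution satisfies the ODE with $\varphi(u)=1$ and hence must coincide with $\varphi_{\textsf{ivp}}$), with the left boundary condition supplied or forced by the limit at $y\to 0^+$. Your added care about why the limit exists (via the monotonicity in Lemma \ref{monotonicity_omega}) and about the convention at the singular endpoint is a welcome refinement of the paper's terser argument, but not a different proof.
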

\begin{proof}
	The necessity is obvious, and the sufficiency  can be proved by contradiction. Since for a given $ (\alpha, u)\in [1,+\infty)\times (0,1) $, there exists a unique solution to $ \text{\textbf{IVP}}(\varphi; \alpha,u) $, and thus if $ \lim\limits_{y\rightarrow 0^+}\varphi_{\textsf{ivp}}(y;\alpha,u) = 0 $ and $ \varphi_{\textsf{ivp}}(y;\alpha,u) $ is not the unique solution for $ \text{\textbf{BVP}}(\varphi; \alpha,u) $, then there must exist another solution for $ \text{\textbf{IVP}}(\varphi; \alpha,u) $, leading to a contradiction with Lemma \ref{existence_uniquess}.
\end{proof}

Note that Lemma \ref{unique_BVP} does not directly states any condition to show the existence of solutions to $ \text{\textbf{BVP}}(\varphi; \alpha,u) $ in terms of $ \alpha $ and $ u $. In fact  it is unclear at the moment whether there exists a feasible design of $ (\alpha, u) $ so that $\lim\limits_{y\rightarrow 0^+} \varphi_{\textsf{ivp}}(y;\alpha,u) = 0 $. We answer this question in the next section.

\subsection{Structural Properties}\label{structural_properties}
Based on the characteristic polynomial, below we give an important structural property of $\text{\textbf{IVP}}(\varphi; \alpha,u) $.
\begin{proposition}\label{convexity_of_varphi}
	For any $ u\in (0,1) $ and $ \alpha > \alpha_s^{\min} $,  $ \varphi_{\textsf{ivp}}\left(y;\alpha,u\right) $ has the following properties:
	\begin{itemize}
		\item If $ \alpha = \alpha_{s}(u) $, $ \varphi_{\textsf{ivp}}(y;\alpha,u)$ is linear in $ y \in(0,u] $, given by
		\begin{align}\label{linear_solution_u}
		\varphi_{\textsf{ivp}}\left(y; \alpha_{s}(u),u\right) = \frac{y}{u}.
		\end{align} 
		
		\item If $ \alpha >  \alpha_{s}(u)  $, $ \varphi_{\textsf{ivp}}(y;\alpha,u)$ is strictly convex in $ y \in(0,u] $.
		
		\item If $ \alpha < \alpha_{s}(u) $, $  \varphi_{\textsf{ivp}}\left(y;\alpha,u\right) $ is strictly concave in $ y \in(0,u] $.
	\end{itemize}
	Recall that $ \alpha_s(u) = \frac{s-1}{u-u^{s}} $, which is defined in Eq. \eqref{def_alpha_gamma}. 
\end{proposition}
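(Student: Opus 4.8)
The plan is to reduce the convexity or concavity of $\varphi_{\textsf{ivp}}(y;\alpha,u)$ entirely to the sign of the characteristic polynomial $P_s$ evaluated along the trajectory, exploiting the scaled variable $\chi=\varphi/y$ already introduced to obtain Eq.~\eqref{ODE_chi}. First I would rewrite the ODE as $\varphi'=\frac{\alpha}{s-1}\big(1-\chi^{-(s-1)}\big)$ and differentiate once more in $y$. Using $\varphi''=\alpha\,\chi^{-s}\chi'$ together with the separated form in Eq.~\eqref{ODE_chi}, which is equivalent to $\chi'=-P_s(\chi;\alpha)/(y\chi^{s-1})$, I obtain
\[
\varphi''=\frac{-\alpha\,P_s(\chi;\alpha)}{y\,\chi^{2s-1}}.
\]
Since $\alpha>0$, $y>0$ and $\chi>0$, the sign of $\varphi''$ is exactly opposite to the sign of $P_s(\chi;\alpha)$, so it suffices to track $\operatorname{sign}P_s(\chi(y);\alpha)$ for $y\in(0,u]$.

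Next I would pin down this sign at the right endpoint $y=u$, where $\varphi(u)=1$ and hence $\chi(u)=1/u$. A short computation using the definition \eqref{cp} gives $P_s(1/u;\alpha)=u^{-s}+\frac{\alpha}{s-1}\big(1-u^{-(s-1)}\big)$, which vanishes precisely at $\alpha=\alpha_s(u)=\frac{s-1}{u-u^s}$ and, because the coefficient $\frac{1}{s-1}\big(1-u^{-(s-1)}\big)$ is negative for $u\in(0,1)$, is strictly decreasing in $\alpha$. Therefore $P_s(1/u;\alpha)<0$ when $\alpha>\alpha_s(u)$ and $P_s(1/u;\alpha)>0$ when $\alpha<\alpha_s(u)$. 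The boundary case $\alpha=\alpha_s(u)$ I would treat directly: here $P_s(1/u;\alpha_s(u))=0$, so the constant $\chi\equiv 1/u$ is an equilibrium of the $\chi$-dynamics, giving $\varphi_{\textsf{ivp}}(y;\alpha_s(u),u)=y/u$; by the uniqueness in Lemma~\ref{existence_uniquess} this is the only solution, so $\varphi''\equiv 0$ and the asserted linear form holds.

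The remaining task is to propagate the endpoint sign of $P_s(\chi;\alpha)$ across the whole interval $(0,u]$. For $\alpha>\alpha_s^{\min}$, Lemma~\ref{positive_roots} supplies two positive roots $R_s^{-}(\alpha)\le R_s^{+}(\alpha)$, and the constant functions $\chi\equiv R_s^{\pm}$ are equilibria of $\chi'=-P_s(\chi;\alpha)/(y\chi^{s-1})$. By Lemma~\ref{monotonicity_omega} we have $\varphi\ge f_\varphi'(y)=y$, i.e.\ $\chi\ge 1>0$, so the trajectory remains in the regular region bounded away from the singularity $\chi=0$; the uniqueness of Lemma~\ref{existence_uniquess} then forbids $\chi(y)$ from ever touching $R_s^{\pm}$. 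Consequently $P_s(\chi(y);\alpha)$ is continuous and can change sign only by $\chi$ passing through a root, which cannot happen, so its sign is constant on $(0,u]$ and equals $\operatorname{sign}P_s(1/u;\alpha)$. Combining this with the endpoint analysis yields $\varphi''>0$ (strict convexity) for $\alpha>\alpha_s(u)$ and $\varphi''<0$ (strict concavity) for $\alpha<\alpha_s(u)$. I expect the main obstacle to be making the no-crossing argument rigorous at the singular left boundary $y\to 0^{+}$, where $\varphi\to 0$ and the right-hand side of the original ODE is undefined; the lower bound $\chi\ge 1$ from Lemma~\ref{monotonicity_omega} is the crucial ingredient that keeps the $\chi$-dynamics away from $\chi=0$ and lets the standard uniqueness theorem apply uniformly on $(0,u]$.
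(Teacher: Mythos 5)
Your proposal is correct, and the core identity is the same as the paper's: both arguments reduce $\operatorname{sign}\varphi''$ to $-\operatorname{sign}P_s(\varphi/y;\alpha)$ (your derivation via $\varphi''=\alpha\chi^{-s}\chi'$ and the separated form \eqref{ODE_chi} is equivalent to the paper's direct differentiation of the ODE, and in fact your exponent $\chi^{2s-1}$ is the correct one --- the paper's displayed $(y/\varphi)^{-2}$ is a typo for $(y/\varphi)^{2s-2}$), and both handle the boundary case $\alpha=\alpha_s(u)$ by verifying that $y/u$ solves the ODE because $P_s(1/u;\alpha_s(u))=0$ and invoking uniqueness. Where you genuinely diverge is in how the sign of $P_s(\chi(y);\alpha)$ is propagated over $(0,u]$: the paper compares $\varphi_{\textsf{ivp}}(y;\alpha,u)$ against the linear solution $y/u$ using the monotonicity in $\alpha$ (Lemma \ref{monotonicity_alpha}) and then chains inequalities on $P_s$ through its dependence on $\alpha$ and the location of its roots, whereas you evaluate the sign once at the regular endpoint $y=u$ (where $\chi(u)=1/u$ and $P_s(1/u;\alpha)$ is strictly decreasing in $\alpha$ with zero exactly at $\alpha_s(u)$) and then rule out any sign change by observing that $\chi$ would have to cross an equilibrium $R_s^{\pm}(\alpha)$, which Picard--Lindel\"{o}f uniqueness at the interior crossing point forbids since $R_s^{\pm}(\alpha)\neq 1/u$ when $\alpha\neq\alpha_s(u)$. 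Your route is self-contained and arguably tighter: it does not need Lemma \ref{monotonicity_alpha}, and it sidesteps the paper's somewhat delicate step of locating $\chi$ relative to \emph{both} roots of $P_s(\cdot;\alpha_s(u))$. Also, the obstacle you flag at the end is not really one: the no-crossing argument only ever needs to be applied at a hypothetical interior crossing point $y_0\in(0,u)$, where $\varphi(y_0)\geq y_0>0$ places you squarely in the Lipschitz region, so nothing singular has to be controlled as $y\to 0^+$ for the purposes of this proposition.
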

\begin{proof}
	Please refer to Appendix \ref{proof_of_convexity} for the detailed proof.	We emphasize that $ \varphi $ being linear does not necessarily mean the original pricing function $ \phi $ is linear since $ \phi(y) = \overline{c} \left(\varphi_{\textsf{ivp}}(y;\alpha,u)\right)^{s-1} $. The same argument also applies to the convexity and concavity of $ \phi $ and $ \varphi $. 
\end{proof}

\begin{corollary}\label{two_linear_solutions}
	For any $ \alpha > \alpha_s^{\min} $, $ \varphi_{\textsf{ivp}}(y;\alpha,u) $ is given by
	\begin{align}\label{linear_solution_alpha}
	\varphi_{\textsf{ivp}}(y;\alpha,u) =
	\begin{cases}
	y R_{s}^{+}(\alpha)
	&\text{if } u = \frac{1}{R_{s}^{+}(\alpha)} \in [0,u_s],\\
	y R_s^{-}(\alpha) &\text{if } u = \frac{1}{R_s^{-}(\alpha)} \in [u_s,1].
	\end{cases} 
	\end{align}
	In particular, when $ \alpha = \alpha_s^{\min} $,  the two linear solutions reduce to one as follows: $ \varphi_{\textsf{ivp}}(y;\alpha,u) = y/u_s $, that is, $ R_s^{-}(\alpha_s^{\min}) = R_s^{+}(\alpha_s^{\min}) = 1/u_s $.
\end{corollary}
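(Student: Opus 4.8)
The plan is to reduce the corollary to the linear case already isolated in Proposition \ref{convexity_of_varphi}, and then to re-read that linear case through the characteristic polynomial $P_s(\eta;\alpha)$ rather than through the index $u$. Proposition \ref{convexity_of_varphi} tells us that whenever $\alpha = \alpha_s(u)$ the unique solution of $\text{\textbf{IVP}}(\varphi;\alpha,u)$ is the straight line $\varphi_{\textsf{ivp}}\left(y;\alpha_s(u),u\right) = y/u$. The corollary simply fixes $\alpha$ and asks for which $u$ this linear phenomenon occurs, so the first step is to solve $\alpha_s(u) = \alpha$ for $u$ and to translate the answer into the language of the roots $R_s^{\pm}(\alpha)$.

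First I would observe that a linear ansatz $\varphi(y) = cy$ solves the ODE $\varphi' = \alpha\cdot\frac{\varphi^{s-1} - y^{s-1}}{(s-1)\varphi^{s-1}}$ if and only if, after the $y$-dependence cancels, the slope $c$ satisfies $(s-1)c^s - \alpha c^{s-1} + \alpha = 0$, i.e.\ $P_s(c;\alpha) = 0$. Hence the admissible slopes are exactly the positive roots of the characteristic polynomial, which for $\alpha > \alpha_s^{\min}$ are $R_s^-(\alpha)$ and $R_s^+(\alpha)$ by Lemma \ref{positive_roots}. Imposing the initial condition $\varphi(u) = 1$ forces $cu = 1$, i.e.\ $u = 1/c$; combined with the uniqueness guaranteed by Lemma \ref{existence_uniquess}, this shows that whenever $u = 1/R_s^{\pm}(\alpha)$ the genuine IVP solution $\varphi_{\textsf{ivp}}$ must coincide with the line $y R_s^{\pm}(\alpha)$.

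Next I would pin down which root lands in which interval. It is cleanest to work on the $u$-side: $\alpha_s(u) = (s-1)/(u-u^s)$ is unimodal on $(0,1)$, attaining its minimum $\alpha_s^{\min}$ at $u = u_s$ (because $u - u^s$ is maximized there) and blowing up at both endpoints. Thus for $\alpha > \alpha_s^{\min}$ the equation $\alpha_s(u) = \alpha$ has exactly two roots, one in $(0,u_s)$ and one in $(u_s,1)$; substituting $c = 1/u$ into $\alpha_s(u) = \alpha$ reproduces $P_s(c;\alpha) = 0$, confirming that these two $u$-values are precisely $1/R_s^+(\alpha)$ and $1/R_s^-(\alpha)$. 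Since the smaller $u$ corresponds to the larger slope $c = 1/u$, the case $u = 1/R_s^+(\alpha)$ sits in $[0,u_s]$ and the case $u = 1/R_s^-(\alpha)$ sits in $[u_s,1]$, matching the two branches of the claim.

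Finally the boundary $\alpha = \alpha_s^{\min}$ follows directly from Lemma \ref{positive_roots}: the double root there is $\eta = \alpha_s^{\min}/s = s^{1/(s-1)} = 1/u_s$, so $R_s^-(\alpha_s^{\min}) = R_s^+(\alpha_s^{\min}) = 1/u_s$ and both branches collapse to $\varphi_{\textsf{ivp}}(y;\alpha,u) = y/u_s$. I expect the main obstacle to be the bookkeeping that glues the two parametrizations together — verifying the monotone correspondence $u = 1/R_s^{\pm}(\alpha)$ and correctly assigning each reciprocal root to its side of $u_s$ — which ultimately rests on the unimodality of $P_s$ (equivalently, of $\alpha_s$) already established in Lemma \ref{positive_roots}.
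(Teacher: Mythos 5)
Your proof is correct and follows essentially the same route as the paper's: substituting a linear function into the ODE reduces the problem to $P_s(1/u;\alpha)=0$, so the admissible slopes are exactly the positive roots of the characteristic polynomial, and uniqueness of the IVP solution does the rest. The only addition is that you spell out the assignment of $1/R_s^{+}(\alpha)$ and $1/R_s^{-}(\alpha)$ to the intervals $[0,u_s]$ and $[u_s,1]$ via the unimodality of $\alpha_s(u)$, which the paper leaves implicit.
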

\begin{proof}
	Eq. \eqref{linear_solution_u} and \eqref{linear_solution_alpha} are basically equivalent to each other. In fact, if we substitute $ \varphi_{\textsf{ivp}}\left(y; \alpha,u\right) = y/u $ back in the ODE, we have $ P_s(\frac{1}{u};\alpha) = 0 $, and thus the corollary follows. 	
\end{proof}

\begin{figure}
	\centering
	\includegraphics[width = 6.5 cm]{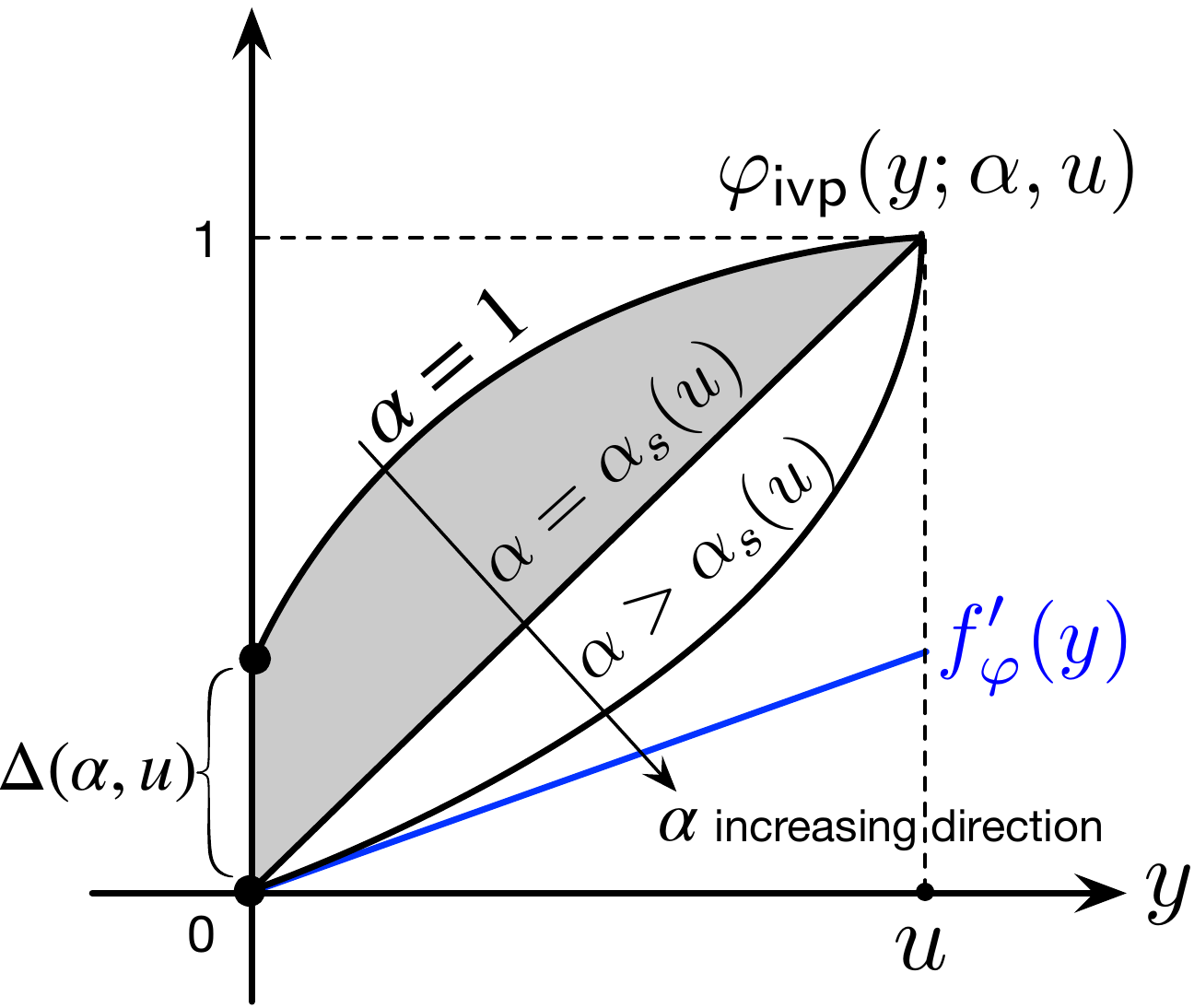}
	\caption{Illustration of $ \varphi_{\textsf{ivp}}(y;\alpha,u)$ in three cases when $ \alpha =1 $,  $\alpha =  \alpha_s(u)  $ and $ \alpha > \alpha_s(u) $. The scaled-marginal cost $ f'_\varphi(y) = y $.  The grey region is for the cases when $ 1\leq \alpha \leq \alpha_s(u)  $. }
	\label{bounded_L}
	\vspace{-0.4cm}
\end{figure}

We illustrate the above properties of $ \varphi_{\textsf{ivp}}\left(y; \alpha,u\right) $ in Fig. \ref{bounded_L}. It is obvious that  $ \varphi_{\textsf{ivp}}\left(y; \alpha_{s}(u),u\right) = y/u $  satisfies both the ODE and the two boundary conditions, namely
$ \varphi_{\textsf{ivp}}\left(0; \alpha_s(u),u\right) = 0 $ and $ \varphi_{\textsf{ivp}}\left(u; \alpha_s(u),u\right) = 1 $. Therefore, $ \varphi_{\textsf{ivp}}\left(y; \alpha_s(u),u\right) $ is the unique solution to $\text{\textbf{BVP}}(\varphi; \alpha_s(u),u) $. Since our target is to get an as small $ \alpha $ as possible, it is interesting to know whether there exists any  $\alpha\in [1,\alpha_s(u))$ such that $ \varphi_{\textsf{ivp}}\left(y; \alpha,u\right) $ is also the unique solution to $\text{\textbf{BVP}}(\varphi; \alpha,u) $ for some $ u\in (0,1) $, namely the grey region in Fig. \ref{bounded_L}.  

For notational convenience, let us define $ \Delta(\alpha,u) $ and $ \Delta'(\alpha,u) $ by:
\begin{align}
\Delta(\alpha,u) \triangleq \lim_{\omega\rightarrow 0^+} \varphi_{\textsf{ivp}}(y;\alpha,u),  \Delta'(\alpha,u) \triangleq \lim_{y\rightarrow 0^+} \varphi'_{\textsf{ivp}}(y;\alpha,u).
\end{align}
Therefore, $ \Delta(\alpha,u) $ captures the distance between $ \varphi_{\textsf{ivp}}(y;\alpha,u) $ and $ f'_\varphi(y) $ when $ y \rightarrow  0^+ $, as shown in Fig. \ref{bounded_L}. Recall that the necessary condition for $ \varphi_{\textsf{ivp}}\left(y; \alpha,u\right) $ being the unique solution to $\text{\textbf{BVP}}(\varphi; \alpha,u) $ is to have $ \Delta(\alpha,u) = 0$.  Below we give a proposition to show the necessary condition for $ \Delta(\alpha,u) = 0$ in terms of $ \alpha $ and $ u $. 

\begin{proposition}\label{limit_finite}
	For any $ u\in (0,1] $ and $ 1\leq \alpha \leq  \alpha_s(u) $, we have
	\begin{align}
	0\leq \Delta(\alpha,u)   \leq 1 - \frac{\alpha}{\alpha_s(u)}.
	\end{align}
	Meanwhile, if $ \Delta(\alpha,u) = 0 $, 
	then  $ P_s(\Delta'(\alpha,u);\alpha) = 0 $.  
\end{proposition}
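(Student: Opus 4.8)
The plan is to prove the two assertions separately, working throughout with the scaled ODE \eqref{def_ODE} rewritten as $\varphi'(y)=\frac{\alpha}{s-1}\bigl(1-(y/\varphi)^{s-1}\bigr)$, and with its integral form. First I would note that $\Delta(\alpha,u)$ is well defined: by Lemma \ref{monotonicity_omega} the solution $\varphi_{\textsf{ivp}}(y;\alpha,u)$ is non-decreasing in $y$ and bounded below by $f'_\varphi(y)=y$, so as $y\to 0^+$ it decreases while staying above $0$; the monotone limit therefore exists, and $\varphi_{\textsf{ivp}}\ge y>0$ gives the left inequality $\Delta(\alpha,u)\ge 0$ at once.

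For the upper bound I would integrate the ODE over $(\epsilon,u]$ and send $\epsilon\to 0^+$. Since the integrand lies in a bounded range, passing to the limit is legitimate and yields
\[
1-\Delta(\alpha,u)=\frac{\alpha}{s-1}\int_{0}^{u}\Bigl(1-\frac{y^{s-1}}{\varphi_{\textsf{ivp}}(y;\alpha,u)^{s-1}}\Bigr)\,dy .
\]
Because $\alpha\le\alpha_s(u)$ and $\varphi_{\textsf{ivp}}$ is non-increasing in $\alpha$ (Lemma \ref{monotonicity_alpha}), I can compare against the explicit linear solution $\varphi_{\textsf{ivp}}(y;\alpha_s(u),u)=y/u$ of Proposition \ref{convexity_of_varphi} to get $\varphi_{\textsf{ivp}}(y;\alpha,u)\ge y/u$, hence $y^{s-1}/\varphi_{\textsf{ivp}}^{s-1}\le u^{s-1}$. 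Substituting this into the identity gives $1-\Delta(\alpha,u)\ge\frac{\alpha}{s-1}\int_0^u(1-u^{s-1})\,dy=\frac{\alpha}{s-1}(u-u^s)=\alpha/\alpha_s(u)$, which is exactly the claimed right inequality (and it collapses to equality at $\alpha=\alpha_s(u)$, consistent with $\varphi=y/u$).

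For the second assertion I would assume $\Delta(\alpha,u)=0$ and pass to the normalized variable $\chi=\varphi/y$, which by Eqs. \eqref{ODE_chi} and \eqref{cp} satisfies $\chi'=-P_s(\chi;\alpha)/(y\chi^{s-1})$. Under the logarithmic time change $t=-\ln y$ this becomes the \emph{autonomous} scalar equation $\dot\chi=P_s(\chi;\alpha)/\chi^{s-1}$, with $t\to+\infty$ as $y\to0^+$. Two a priori bounds confine $\chi$: first $\chi\ge 1$ from $\varphi\ge y$; second, under the hypothesis $\Delta=0$ the integral identity of the previous paragraph gives $\varphi(y)\le\frac{\alpha}{s-1}y$, so $\chi\le\frac{\alpha}{s-1}$. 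Thus $\chi$ is a bounded trajectory of a one-dimensional autonomous flow and must be eventually monotone and converge to an equilibrium $\chi_0$; by Lemma \ref{positive_roots} the equilibria are precisely the positive roots of $P_s(\cdot;\alpha)$, so $P_s(\chi_0;\alpha)=0$. Finally, since $\varphi'=\chi+\chi'y=\chi-P_s(\chi;\alpha)/\chi^{s-1}\to\chi_0-0=\chi_0$, the limit $\Delta'(\alpha,u)$ exists and equals $\chi_0$, whence $P_s(\Delta'(\alpha,u);\alpha)=0$.

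I expect the crux to be the rigorous justification that the bounded trajectory $\chi$ genuinely converges (rather than oscillating or drifting) as $y\to0^+$; the $1/y$ singularity in the $\chi$-equation is what makes a naive argument near $y=0$ delicate, and the time change $t=-\ln y$ is precisely the device that removes it and licenses the standard monotone-convergence argument for scalar autonomous ODEs, using the sign pattern of $P_s$ (positive outside its roots, negative between them) from Lemma \ref{positive_roots}. A secondary point worth recording is the consistency check that when $\alpha<\alpha_s^{\min}$ the polynomial $P_s(\cdot;\alpha)$ has no positive root, so no finite equilibrium exists; the boundedness of $\chi$ then forces a contradiction, showing $\Delta=0$ simply cannot occur in that regime, so the conclusion $P_s(\Delta';\alpha)=0$ is only ever invoked when a root is available, i.e. when $\alpha\ge\alpha_s^{\min}$.
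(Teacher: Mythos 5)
Your proof is correct, and both halves take a genuinely different route from the paper's. For the upper bound, the paper invokes the concavity of $\varphi_{\textsf{ivp}}$ on $1\le\alpha\le\alpha_s(u)$ (Proposition \ref{convexity_of_varphi}) and bounds the solution by its tangent line at $y=u$, whose value at $y=0$ is exactly $1-\alpha/\alpha_s(u)$; you instead integrate the ODE over $(0,u]$ and use only the comparison $\varphi_{\textsf{ivp}}(y;\alpha,u)\ge y/u$ coming from monotonicity in $\alpha$ together with the explicit linear solution at $\alpha_s(u)$. The two computations land on the same constant, but yours bypasses the concavity machinery entirely (which the paper formally states only for $\alpha>\alpha_s^{\min}$, so your route is cleaner on the range $1\le\alpha<\alpha_s^{\min}$). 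For the second claim, the paper passes to the limit in the ODE directly, implicitly assuming that $\Delta'(\alpha,u)=\lim_{y\to0^+}\varphi'_{\textsf{ivp}}$ exists and is finite and positive, and then reads off $P_s(\Delta'(\alpha,u);\alpha)=0$ by the power rule of limits; your phase-line argument with the time change $t=-\ln y$ actually \emph{proves} that this limit exists, by confining $\chi=\varphi/y$ to $[1,\alpha/(s-1)]$ and invoking monotone convergence of bounded scalar autonomous trajectories to equilibria of the locally Lipschitz field $P_s(\chi;\alpha)/\chi^{s-1}$. That is a strictly stronger and more careful argument, and your closing observation that no positive equilibrium exists when $\alpha<\alpha_s^{\min}$ recovers Corollary \ref{no_solution_BVP1} essentially for free.
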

\begin{proof}
	The proof is given in Appendix \ref{proof_limit_finite}.
\end{proof}

The above proposition shows that for any given $ u\in (0,1) $ and $ 1\leq \alpha \leq  \alpha_s(u) $, $ \Delta(\alpha,u) $ can be any value within $ [0, 1 - \alpha/\alpha_s(u)]$. In particular, when $ \Delta(\alpha,u) = 0$, we have $ P_\gamma(\Delta'(\alpha,u),\alpha)= 0 $, namely $ \Delta'(\alpha,u) $ is one of the positive roots of the characteristic polynomial. Note that as a special case, when $u = \frac{1}{R_s^{\text{+}}(\alpha)}$ or $\frac{1}{R_s^{\text{-}}(\alpha)} $, the proposition clearly holds according to Corollary \ref{two_linear_solutions}. 

The above necessary condition for $ \Delta(\alpha,u) = 0 $ is very useful in our following lower bound analysis. In fact, based on Proposition \ref{limit_finite}, we immediately have the following two corollaries. 

\begin{corollary}\label{no_solution_BVP1}
	For all $ u\in (0,1) $, $\normalfont\text{\textbf{BVP}}(\varphi;\alpha, u)$ has no solution if $ \alpha<\alpha_s^{\min}$. 
\end{corollary}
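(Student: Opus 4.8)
The plan is to reduce the non-existence of a solution to $\text{\textbf{BVP}}(\varphi;\alpha,u)$ to a statement about the limiting behaviour of the associated initial value problem at the singular endpoint $y=0$, and then to invoke the fact that the characteristic polynomial $P_s(\cdot;\alpha)$ has no positive root when $\alpha<\alpha_s^{\min}$ (Lemma \ref{positive_roots}). First I would fix an arbitrary $u\in(0,1)$ and an $\alpha$ with $1\le\alpha<\alpha_s^{\min}$. By Lemma \ref{existence_uniquess}, $\text{\textbf{IVP}}(\varphi;\alpha,u)$ has a unique solution $\varphi_{\textsf{ivp}}(y;\alpha,u)$ on $(0,u]$; since any solution of $\text{\textbf{BVP}}(\varphi;\alpha,u)$ solves the same ODE subject to the same right-endpoint condition $\varphi(u)=1$, it must coincide with $\varphi_{\textsf{ivp}}$ on $(0,u]$. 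Hence the BVP admits a solution if and only if this unique candidate also meets the left boundary condition, i.e. $\Delta(\alpha,u)=\lim_{y\to 0^+}\varphi_{\textsf{ivp}}(y;\alpha,u)=0$, which is precisely the equivalence recorded in Lemma \ref{unique_BVP}. It therefore suffices to prove $\Delta(\alpha,u)\neq 0$.

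Next I would argue by contradiction. Since $\alpha<\alpha_s^{\min}\le\alpha_s(u)$, Proposition \ref{limit_finite} applies. Suppose $\Delta(\alpha,u)=0$. Then Proposition \ref{limit_finite} yields $P_s(\Delta'(\alpha,u);\alpha)=0$, so the slope limit $\Delta'(\alpha,u)$ is a real root of the characteristic polynomial. By Lemma \ref{monotonicity_omega} the solution $\varphi_{\textsf{ivp}}$ is non-decreasing, hence $\Delta'(\alpha,u)\ge 0$; moreover $\Delta'(\alpha,u)=0$ is impossible because $P_s(0;\alpha)=\tfrac{\alpha}{s-1}>0$. Thus $\Delta'(\alpha,u)$ is a \emph{positive} root of $P_s(\cdot;\alpha)=0$. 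But Lemma \ref{positive_roots} asserts that for $\alpha<\alpha_s^{\min}$ the equation $P_s(\eta;\alpha)=0$ has no positive root, a contradiction. Therefore $\Delta(\alpha,u)\neq 0$ for every $u\in(0,1)$, and by the equivalence established above $\text{\textbf{BVP}}(\varphi;\alpha,u)$ has no solution.

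The only delicate point is the treatment of the singularity at $y=0$. The Picard--Lindel\"{o}f uniqueness from Lemma \ref{existence_uniquess} holds only away from $y=0$, so I would stress that the matching of an arbitrary BVP solution with $\varphi_{\textsf{ivp}}$ takes place on the open-to-the-left interval $(0,u]$, \emph{before} passing to the limit at the singular point; this is exactly the regime in which uniqueness is valid, so the reduction is legitimate. The remaining content, namely that $\Delta'(\alpha,u)$ is well defined, finite, and satisfies the characteristic equation whenever $\Delta(\alpha,u)=0$, is already packaged in Proposition \ref{limit_finite}. Consequently the corollary follows essentially immediately once that proposition and Lemma \ref{positive_roots} are in hand, and I expect no substantial additional computation beyond verifying that $\Delta'(\alpha,u)$ cannot be zero.
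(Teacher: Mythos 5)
Your proposal is correct and follows essentially the same route as the paper: reduce existence of a BVP solution to the condition $\Delta(\alpha,u)=0$ via the unique IVP solution, invoke Proposition \ref{limit_finite} to conclude $\Delta'(\alpha,u)$ would then be a positive root of $P_s(\cdot;\alpha)$, and contradict Lemma \ref{positive_roots}. Your extra care in ruling out $\Delta'(\alpha,u)=0$ and in checking that $\alpha<\alpha_s^{\min}\leq\alpha_s(u)$ so that Proposition \ref{limit_finite} applies only makes explicit what the paper leaves implicit.
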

\begin{corollary}\label{global_lower_bounded}
	For any $ \epsilon> 0 $, there are no $ (\alpha_s^{\min}-\epsilon) $-competitive online algorithms.
\end{corollary}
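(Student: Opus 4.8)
The plan is to argue by contradiction and reduce the nonexistence of a fast online algorithm to the nonexistence of an admissible scaled–pricing function that is already recorded in Corollary \ref{no_solution_BVP1}. Suppose that for some $\epsilon>0$ an $(\alpha_s^{\min}-\epsilon)$-competitive online algorithm exists, and write $\alpha:=\alpha_s^{\min}-\epsilon$. If $\alpha<1$ the claim is immediate, since by the definition in Eq. \eqref{def_cr} every online mechanism has competitive ratio at least $1$; hence we may assume $\alpha\in[1,\alpha_s^{\min})$. By the necessity half of Theorem \ref{a_unified_BVP}, such an algorithm forces the existence of a strictly-increasing pricing function $\phi$ with $\phi(0)=\underline{c}=0$ solving $\phi'(y)=\alpha\,(\phi(y)-f'(y))/f'^{-1}(\phi(y))$ on an initial interval $(0,\tau)$ — this is exactly $\textsf{L}(\alpha)$ in \textsf{LUC} (with $\tau=v$) and the first boundary-value problem $\textsf{H}_1(u,\alpha)$ in \textsf{HUC} (with $\tau=u$). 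Passing to the scaled variable $\varphi=(\phi/\overline{c})^{1/(s-1)}$ as in Eq. \eqref{def_ODE} turns this into a strictly-increasing solution $\varphi$ of $\varphi'=\alpha\,(\varphi^{s-1}-y^{s-1})/\big((s-1)\varphi^{s-1}\big)$ on $(0,\tau)$ with the singular boundary behaviour $\lim_{y\to 0^+}\varphi(y)=0$.

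In \textsf{HUC} the reduction is immediate: the condition $\phi(u)=\overline{c}$ gives $\varphi(u)=1$, so $\varphi$ is precisely a solution of $\text{\textbf{BVP}}(\varphi;\alpha,u)$ for a dividing threshold $u\in(0,1)$, which Corollary \ref{no_solution_BVP1} forbids whenever $\alpha<\alpha_s^{\min}$ — a contradiction. In \textsf{LUC} the endpoint value $\phi(v)\ge\overline{p}$ with $\overline{p}\le\overline{c}$ need not reach $\overline{c}$, so $\varphi$ need not attain the value $1$ inside its domain and cannot be matched to $\text{\textbf{BVP}}(\varphi;\alpha,u)$ verbatim. To bridge this gap I would exploit the scale-invariance of the ODE (its right-hand side depends on $(\varphi,y)$ only through $\chi=\varphi/y$, cf. Eq. \eqref{ODE_chi}): picking any $y_1\in(0,\tau)$, setting $c_1:=\varphi(y_1)$ and defining $\psi(y):=\varphi(c_1 y)/c_1$ produces another solution of the same ODE with $\psi(0^+)=0$ and $\psi(u)=1$ at $u=y_1/c_1$, where the bound $\varphi\ge f'_\varphi(y)=y$ of Lemma \ref{monotonicity_omega} forces $u\in(0,1)$. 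By the uniqueness in Lemma \ref{existence_uniquess}, $\psi=\varphi_{\textsf{ivp}}(\cdot;\alpha,u)$, so $\Delta(\alpha,u)=\lim_{y\to 0^+}\psi(y)=0$, and Corollary \ref{no_solution_BVP1} again yields a contradiction.

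The delicate point — and the step I expect to require the most care — is precisely this \textsf{LUC} reduction, namely certifying that the singular boundary condition $\varphi(0^+)=0$ is genuinely incompatible with $\alpha<\alpha_s^{\min}$ without presupposing that the solution reaches the normalizing value $1$. An alternative, and arguably cleaner, resolution that sidesteps the rescaling is to invoke Proposition \ref{limit_finite} directly: any strictly-increasing solution with $\Delta(\alpha,\cdot)=0$ must have initial slope $\Delta'(\alpha,\cdot)$ satisfying $P_s(\Delta'(\alpha,\cdot);\alpha)=0$, i.e. $\Delta'(\alpha,\cdot)$ is a positive root of the characteristic polynomial; but by Lemma \ref{positive_roots} the equation $P_s(\eta;\alpha)=0$ has no positive root when $\alpha<\alpha_s^{\min}$, and $P_s(0;\alpha)=\alpha/(s-1)>0$ rules out a zero slope as well. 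Either way the assumed algorithm cannot exist, so no $(\alpha_s^{\min}-\epsilon)$-competitive online algorithm exists for any $\epsilon>0$. Equivalently, one may package the argument as: Corollary \ref{no_solution_BVP1} forces $\alpha_*(\mathcal{S})\ge\alpha_s^{\min}$ through Proposition \ref{optimality}, whence Corollary \ref{lower_bound_alpha_f_p} delivers the claim.
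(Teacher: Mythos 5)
Your proposal is correct and follows essentially the same route as the paper: the paper's own proof of this corollary is exactly your ``alternative resolution,'' i.e.\ combining Proposition \ref{limit_finite} (any solution with $\Delta(\alpha,u)=0$ forces $P_s(\Delta'(\alpha,u);\alpha)=0$) with Lemma \ref{positive_roots} (no positive root when $\alpha<\alpha_s^{\min}$), chained back through the necessity of Theorem \ref{a_unified_BVP}. Your extra rescaling argument for the \textsf{LUC} case is a valid refinement of a point the paper passes over silently, but it does not change the substance of the argument.
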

\begin{proof}
	The above two corollaries are equivalent to each other. Note that if $  \Delta(\alpha,u) = 0 $, then $\Delta'(\alpha,u) $ must be a positive root of the characteristic polynomial. However, according to Lemma \ref{positive_roots}, when $ \alpha  <  \alpha_s^{\min} $ the characteristic polynomial has no positive root. Therefore, for any $ u\in (0,1)$, we have $  \Delta( \alpha,u) \neq  0 $ if $ \alpha<\alpha_s^{\min} $,  which implies that $\text{\textbf{BVP}}(\varphi;\alpha,u)$  has no solution, and thus no online algorithm can achieve a competitive ratio that is smaller than $ \alpha_s^{\min} $ with zero additive loss\footnote{Note that Corollary \ref{global_lower_bounded} is not a new result and was first proved in \cite{Huang2018}. However, here we provide a new proof, which is much simpler and more intuitive.}.	
\end{proof}

\subsection{Lower Bound (Proof of Theorem \ref{lower_bound_1})}
\label{proof_lower_bound_limited}
This section presents the formal proof of Theorem \ref{lower_bound_1}.  We show that for each given $ u\in (0,1) $, the necessary and sufficient  condition for the existence of a unique solution to $ \text{\textbf{BVP}}(\varphi; \alpha,u)  $ is to have $ \alpha  \geq \underline{\alpha}_1(u)$, where $ \underline{\alpha}_1(u) $ is given in Eq. \eqref{def_lower_bound} and is revisited as follow:
\begin{align}
\underline{\alpha}_1(u) = 
\begin{cases}
\alpha_s(u)  & \text{ if } u \in\left(0,u_s\right),\\
\alpha_s^{\min} & \text{ if } u\in\left[u_s,1\right).
\end{cases}
\end{align} 
We illustrate $  \underline{\alpha}_1(u) $ in Fig. \ref{four_cases} to help our following analysis. Below we give the details of the proof.

\begin{figure}
	\centering
	\includegraphics[width = 7.5 cm]{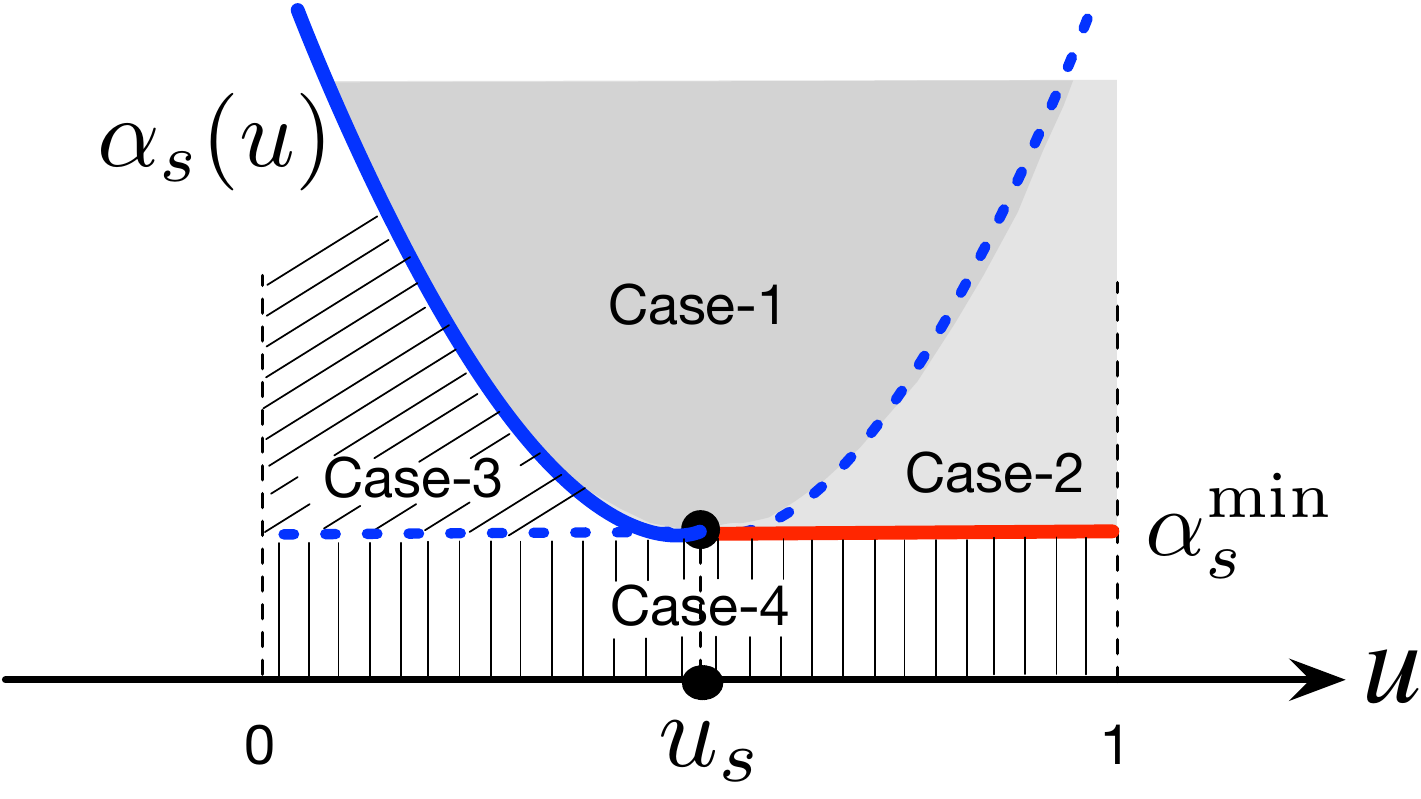}
	\caption{Illustration of the lower bound $ \underline{\alpha}_1(u) $ (the blue real curve and the red real curve) with four cases to prove Theorem \ref{lower_bound_1}. In the figure, the blue dashed curve denotes function $ \alpha_s(u) = \frac{s-1}{u-u^s}$ when $ u\in [u_s,1) $. }
	\label{four_cases}
\end{figure}

We first prove the sufficiency of Theorem \ref{lower_bound_1}, that is, if $ \alpha \geq \underline{\alpha}_1(u) $, then we can find a unique solution for $\text{\textbf{BVP}}(\varphi;\alpha,u)$, namely \textbf{Case-1} and \textbf{Case-2} in Fig. \ref{four_cases}. 
	
	\underline{\textbf{Case-1}: $ \alpha \geq  \alpha_s(u) $ and $ u\in (0,1) $.} Based on Lemma \ref{monotonicity_omega}, Lemma \ref{monotonicity_alpha} and Proposition \ref{convexity_of_varphi}, the unique solution to $\text{\textbf{IVP}}(\varphi;\alpha,u)$ is convex and stays within the following range
	\begin{align}
	y R_s^-(\alpha)\leq \varphi_{\textsf{ivp}}\left(y; \alpha,u\right) \leq \frac{y}{u}, \forall y\in (0,u].
	\end{align}
	The upper limit holds because of the monotonicity w.r.t. $ \alpha $, i.e.,  $ \varphi_{\textsf{ivp}}\left(y; \alpha,u\right)\leq \varphi_{\textsf{ivp}}\left(y; \alpha_s(u),u\right) = y/u $,
	while for the lower bound, we can prove it by contradiction. Suppose $ \varphi_{\textsf{ivp}}\left(y; \alpha,u\right) $ is not lower bounded by $ y R_s^-(\alpha) $, then these two functions must have at least one intersection points (other than the singular point). Let us denote one of these intersection points by $ y = u_0 $. Then it is easy to see that  $\text{\textbf{IVP}}(\varphi;\alpha,u_0)$ has two solutions: one is linear and the other one is convex, which contradicts with the uniqueness property of $\text{\textbf{IVP}}(\varphi;\alpha,u_0)$. Therefore, the lower bound holds. Based on the squeeze theorem, when $ y \rightarrow 0^+$, we have  
	$  \Delta(\alpha,u) = \lim_{y\rightarrow 0^+} \varphi_{\textsf{ivp}}\left(y; \alpha,u\right) = 0 $, 
	which means that $ \varphi_{\textsf{ivp}}\left(y; \alpha,u\right) $ is the unique solution to $\text{\textbf{BVP}}(\varphi;\alpha,u)$. 
	
	\underline{\textbf{Case-2}: $  \alpha_s^{\min}\leq  \alpha \leq   \alpha_s(u) $ and $ u\in [u_s,1) $.}	
	We argue that for any $ y\in (0,u] $, the unique solution $ \varphi_{\textsf{ivp}}\left(y; \alpha,u\right) $ is concave and stays within $ [y/u,y R_s^-(\alpha)] $, namely,
	\begin{align}
	\frac{y}{u}\leq \varphi_{\textsf{ivp}}\left(y; \alpha,u\right) \leq y R_s^-(\alpha), \forall y\in (0,u].
	\end{align}
	As illustrated by the concave curve \texttoptiebar{$ AD $} in Fig. \ref{concave}(a),   the lower bound is represented by  the straight line $ \overline{AD} $ and the upper bound is represented by  $ \overline{AC} $. The lower limit follows the monotonicity of $ \varphi_{\textsf{ivp}}\left(y; \alpha,u\right) $ in $ \alpha $, i.e., 
	$\varphi_{\textsf{ivp}}\left(y; \alpha,u \right)\geq \varphi_{\textsf{ivp}}\left(y; \alpha_s(u),u\right) = y/u $, and we can prove  the upper limit by contradiction in the same way as \textbf{Case-1}. Based on the lower bound and upper bound of $ \varphi_{\textsf{ivp}}\left(y; \alpha,u\right) $, we  have 
	$ \Delta(\alpha,u) = \lim_{y\rightarrow 0^+} \varphi_{\textsf{ivp}}\left(y; \alpha,u\right) = 0 $, 
	and thus $ \varphi_{\textsf{ivp}}\left(y; \alpha,u\right) $ is the unique solution to $\text{\textbf{BVP}}(\varphi;\alpha,u)$.  Moreover, in this case, we have
	\begin{align}
	\Delta'(\alpha,u) = \lim\limits_{y\rightarrow 0^+} \varphi'_{\textsf{ivp}}\left(y; \alpha,u\right) =   R_s^-(\alpha)= \frac{1}{u_{\texttt{\#}}},
	\end{align}
	where $ u_{\texttt{\#}}\in [u_s,u] $ is such that $ \alpha_s(u_{\texttt{\#}}) = \alpha $. This means that in Fig. \ref{concave}(a), $ \overline{AC} $ is a tangent line for \texttoptiebar{$ AD $} at point $ y = 0 $.  In particular, if $ \alpha = \alpha_s^{\min} $, then the upper bound of $ \varphi_{\textsf{ivp}}\left(y; \alpha,u\right) $ is $ y/u_s $, as illustrated by $ \overline{AB} $ in Fig. \ref{concave}(a). When $ \alpha = \alpha_s(u) $, both the upper bound and lower bound become $ y/u $, namely, $ \varphi_{\textsf{ivp}}\left(y; \alpha_s(u),u\right) =y/u $, which follows Proposition \ref{convexity_of_varphi}.
	
	\begin{figure}
		\centering
		\includegraphics[width = 12.5cm]{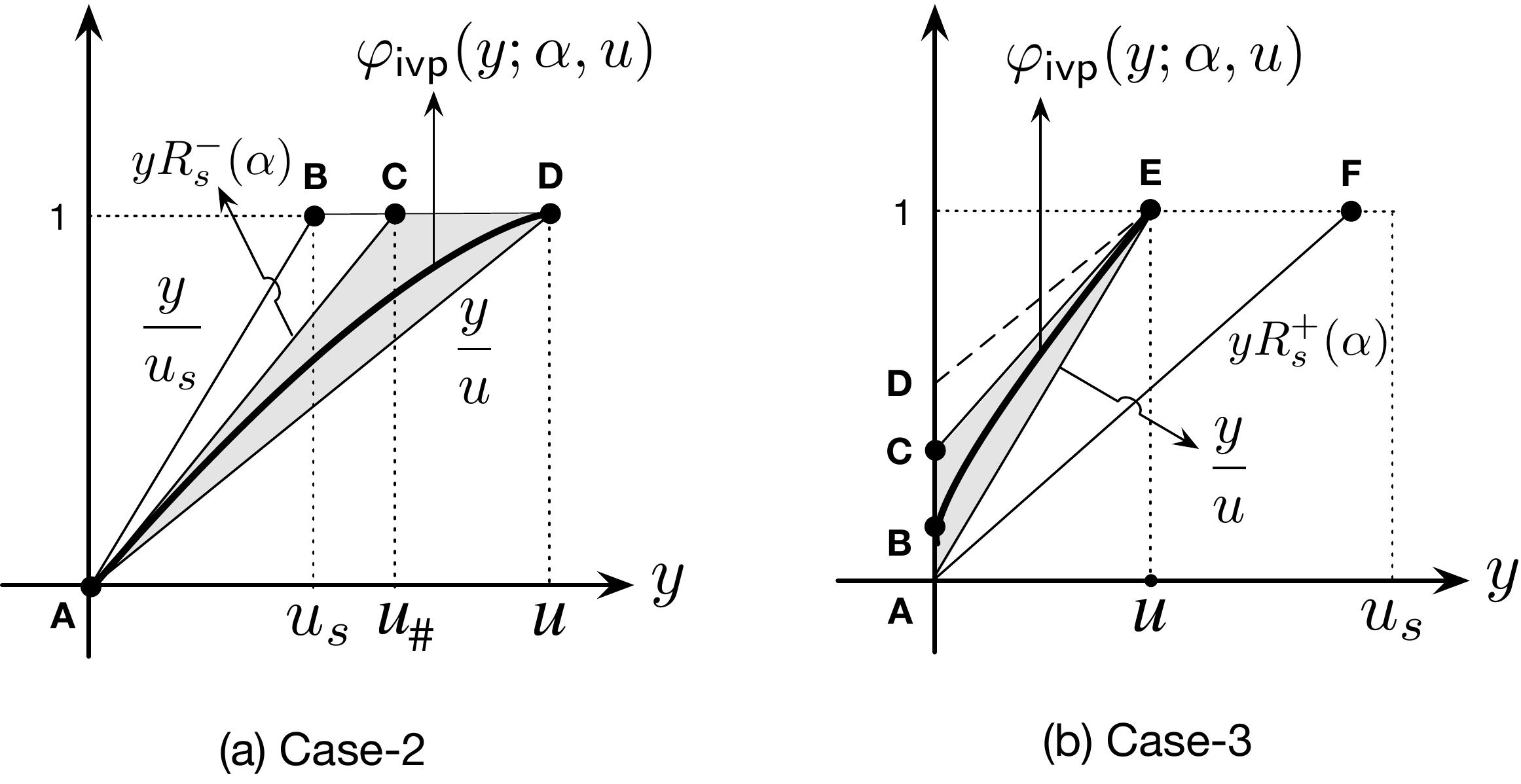}
		\caption{
			The scaled-pricing function in Case-2  and Case-3. In subfigure (a), the slopes of $ \overline{AB} $ and $ \overline{AD} $ are $ \frac{1}{u_s}$ and $ \frac{1}{u} $, respectively, and $ \overline{AC} $  represents function $ y R_s^{-}(\alpha) $ or $ \frac{y}{u_{\texttt{\#}}} $, where $ u_{\texttt{\#}} $ is such that $ \alpha_{s}(u_{\texttt{\#}}) = \alpha $. \texttoptiebar{$ AD $} represents the scaled-pricing function $ \varphi_{\textsf{ivp}}\left(y; \alpha,u\right) $. In subfigure(b), the slope of $ \overline{AE} $ is $ \frac{1}{u} $. The slope of $ \overline{AF} $ is $ R_s^{+}(\alpha) $, and $ \overline{DE} $ is parallel to $ \overline{AF} $.  \texttoptiebar{$ BE $} represents the scaled-pricing function $ \varphi_{\textsf{ivp}}\left(y; \alpha,u\right) $. $ \overline{CE} $ is a tangent line for \texttoptiebar{$ BE $} at $ y = u $.   
		}
		\label{concave}
	\end{figure}
	
	In summary, for any $ u\in (0,1) $, when $ \alpha\geq \underline{\alpha}_1(u) $, there exists a unique solution for $\text{\textbf{BVP}}(\varphi;\alpha,u)$, we thus complete the proof of sufficiency.  Below we prove the necessity of the theorem, namely, if $ \alpha < \underline{\alpha}_1(u) $, then there is no solution for $\text{\textbf{BVP}}(\varphi;\alpha,u)$.
	
	\underline{\textbf{Case-3}: $ \alpha_s^{\min} \leq \alpha <  \alpha_s(u) $ and $ u\in (0, u_s) $}. Based on Proposition \ref{convexity_of_varphi}, we know that $ \varphi_{\textsf{ivp}}\left(y; \alpha,u\right) $ is strictly concave in $ y\in (0,u] $, and thus the monotonicity of $  \varphi'_{\textsf{ivp}}\left(y; \alpha,u\right) $ implies that
	\begin{align}
	\Delta'(\alpha,u) = \lim\limits_{y\rightarrow 0^+}\varphi'_{\textsf{ivp}}\left(y; \alpha,u\right) >\varphi'_{\textsf{ivp}}\left(u; \alpha,u\right) = \alpha\cdot \frac{1 - u^{s-1}}{(s-1)}.
	\end{align}
	Note that when $\alpha <  \alpha_s(u) $, we can show that
	\begin{align}
	P_s\left(\varphi'_{\textsf{ivp}}\left(u; \alpha,u\right),\alpha\right) > 0, 
	\end{align}
	which thus indicates that
	\begin{align}\label{derivative_larger}
	\Delta'(\alpha,u) > \varphi'_{\textsf{ivp}}\left(u; \alpha,u\right)>  R_s^+(\alpha)\geq R_s^-(\alpha).
	\end{align}
	Therefore,  it is impossible to have $\Delta(\alpha,u) =  0 $, since otherwise based on Corollary \ref{limit_finite},  $ \Delta'(\alpha,u) $ must be one of the positive roots of the characteristic polynomial, which contradicts with the inequality in Eq. \eqref{derivative_larger}. 
	Therefore,  $ \Delta(\alpha,u) \neq   0 $, and thus there exists no solution for $\text{\textbf{BVP}}(\varphi;\alpha,u)$.
	
	We illustrate \textbf{Case-3} in Fig. \ref{concave}(b). In this case, the solution to $\text{\textbf{IVP}}(\varphi;\alpha,u)$ always satisfies 
	\begin{align}\label{boundary_case_3}
	\frac{y}{u} \leq \varphi_{\textsf{ivp}}\left(y; \alpha,u\right) \leq \varphi_{\textsf{ivp}}'(u;\alpha,u)\cdot y+ 1 -  \frac{\alpha}{\alpha_s(u)},
	\end{align}
	where the lower limit follows the monotonicity  property w.r.t. $ \alpha $  and the upper limit follows Proposition \ref{limit_finite}. We illustrate the lower and upper limits by $ \overline{AE} $ and $ \overline{CE} $  in Fig. \ref{concave}(b), respectively. Since $ \varphi_{\textsf{ivp}}'(u;\alpha,u) >  R_s^+(\alpha) $, $ \overline{CE} $ must be upper bounded by $ \overline{DE} $ as well, where $ \overline{DE} $ is parallel to $ \overline{AF} $, i.e., the slopes of $ \overline{DE} $ and $ \overline{AF} $ are $ R_s^+(\alpha)  $. In summary, in this case, we have
	\begin{align}
	0<\Delta(\alpha,u) < 1 - \frac{\alpha}{\alpha_s(u)},
	\end{align}
	where the upper bound is equal to the length of $ \overline{AC} $ in Fig. \ref{concave}(b).  
		
	\underline{\textbf{Case-4}: $ \alpha  <  \alpha_s^{\min} $ and $ u\in (0,1) $}. In this case, based on Corollary \ref{no_solution_BVP1} we can directly conclude that $\Delta(\alpha,u)  > 0 $, and thus there exists no solution for $\text{\textbf{BVP}}(\varphi;\alpha,u)$.
	
	Summarizing the analysis of the above four cases,  for any $ u\in (0,1) $, $ \alpha \geq \underline{\alpha}_1(u) $ is the necessary and sufficient condition for the unique existence of solutions to $\text{\textbf{BVP}}(\varphi;\alpha,u)$. \textbf{We thus complete the proof of Theorem \ref{lower_bound_1}}.

Theorem \ref{lower_bound_LUC} is a special case of Theorem \ref{lower_bound_1} and can be proved similarly. Hence, we skip the details.

\section{Proof of Theorem \ref{lower_bound_2}}\label{proof_lower_bound_2}
The proof of this theorem is based on the analytical solution to the ODE in Eq. \eqref{two_BVP_power_2}. Here we briefly sketch the proof. The general solution to Problem \eqref{two_BVP_power_2} is as follows:
	\begin{align}\label{general_solution}
	\phi(y;\alpha,u) = e^{\alpha y}\cdot \left( \int_0^{y}\alpha e^{-\alpha \eta}  f'(\eta) d\eta + D\right) = e^{\alpha y}\cdot \left(D - \frac{\overline{c}}{\alpha^{s-1}}\cdot \Gamma(s,\alpha y)\right),
	\end{align}
	where $ D $ is a constant and $ \Gamma(s,\alpha y) $ is the incomplete Gamma function defined as follows:
	\begin{align}
	\Gamma(s,\alpha y) \triangleq \int_0^{\alpha y} \eta^{s-1}e^{-\eta}d\eta.
	\end{align}
	Substituting the first boundary condition into Eq. \eqref{general_solution} leads to the following solution
	\begin{align}\label{phi_y_alpha_u}
	\phi(y;\alpha,u) =   e^{\alpha y}\cdot \left(\frac{\overline{c}}{e^{\alpha u}} + \frac{\overline{c}}{\alpha^{s-1}}\cdot \Gamma(s,\alpha u) - \frac{\overline{c}}{\alpha^{s-1}}\cdot \Gamma(s,\alpha y)\right). 
	\end{align} 
	
	It is easy to prove that $ \phi(y;\alpha,u) $ is always a strictly-increasing function in $ y\in [u,1] $. Similar to Lemma \ref{monotonicity_alpha}, we can prove that $ \phi(y;\alpha,u) $ is increasing in $ \alpha \in (0,+\infty) $ and decreasing in $ u \in (0,1) $. The second boundary condition of  $ \phi(1;\alpha,u) \geq  \overline{p} $ indicates that
	\begin{align}\label{phi_1_p_bar}
	\phi(1;\alpha,u) =  e^{\alpha}\cdot \left(\frac{\overline{c}}{e^{\alpha u}} + \frac{\overline{c}}{\alpha^{s-1}}\cdot \Gamma(s,\alpha u) - \frac{\overline{c}}{\alpha^{s-1}}\cdot \Gamma(s,\alpha)\right)\geq \overline{p}. 
	\end{align}
	
	Based on the monotonicity of $ \phi(1;\alpha,u) $ in $ y $ and $ \alpha $, for each given $ u\in (0,1) $, solving the inequality in Eq. \eqref{phi_1_p_bar} leads to $ \alpha\geq \underline{\alpha}_2(u) $, where $ \underline{\alpha}_2(u) $ is the unique root that satisfies 
	\begin{align}\label{equation_alpha_2_u}
	\int_{u \underline{\alpha}_2(u)}^{\underline{\alpha}_2(u) } \eta^{s-1}e^{-\eta}d\eta =   \frac{\left(\underline{\alpha}_2(u)\right)^{s-1}}{\exp\left(u\underline{\alpha}_2(u)\right)} - \frac{\overline{p}\left(\underline{\alpha}_2(u)\right)^{s-1}}{\overline{c} \exp\left(\underline{\alpha}_2(u)\right)}.
	\end{align}
	Based on the monotonicity of $ \phi(y;\alpha,u) $ in  $ u\in (0,1) $, $ \underline{\alpha}_2(u) $ is strictly-increasing in $ u\in (0,1) $. 
	\textbf{We thus complete the proof of Theorem \ref{lower_bound_2}}.

\section{Proof of Proposition \ref{calculation_u_star}}\label{proof_calculation_u_star}
The proof of this proposition is trivial by following Theorem \ref{lower_bound_1} and Theorem \ref{lower_bound_2}. In particular, if we substitute $ u = u_s = \left(\frac{1}{s}\right)^{\frac{1}{s-1}} $ and $\alpha =  \underline{\alpha}_2(u_s) = \alpha_s^{\min} = s^{\frac{s}{s-1}} $ into Eq. \eqref{phi_y_alpha_u}, we have
\begin{align}\label{phi_1_alpha_s_min_u_s}
\phi(1;\alpha_s^{\min},u_s) =  e^{\alpha_s^{\min}}\cdot \left(\frac{\overline{c}}{e^{\alpha_s^{\min}u_s}} + \frac{\overline{c}}{(\alpha_s^{\min})^{s-1}}\cdot \Gamma(s,\alpha_s^{\min} u_s) - \frac{\overline{c}}{(\alpha_s^{\min})^{s-1}}\cdot \Gamma(s,\alpha)\right).
\end{align}
Simplifying the right-hand-side of Eq. \eqref{phi_1_alpha_s_min_u_s} leads to the expression of $ C_s $ in Eq. \eqref{C_s}.

Based on the monotonicity of $ \underline{\alpha}_2(u) $ in $ u\in [u_s,1) $ and Eq. \eqref{equation_alpha_2_u}, the two cases of $ \textsf{HUC}_1 $ and $ \textsf{HUC}_2 $ in Proposition \ref{calculation_u_star} directly follow.

\section{Proof of Lemma \ref{solution_to_phi_2} and Lemma \ref{rho_s}}\label{proof_of_solution_to_phi_2}
\textbf{Proof of Lemma \ref{solution_to_phi_2}}. 
Based on the calculation of $ u_{\mathsf{cdt}} $ in Proposition \ref{calculation_u_star}, when $ y = 1 $ and $ u = u_{\mathsf{cdt}} $, we have $ \phi_{\textsf{ivp}}(1;u_{\mathsf{cdt}}) = \overline{p}$. Therefore, $ \phi_{\textsf{ivp}}(y;u_{\mathsf{cdt}}) $ is a feasible solution to Problem \eqref{two_BVP_power_2}. For any $  u\in (0,u_{\mathsf{cdt}}]$, based on the monotonicity of $ \underline{\alpha}_1(u) $ and $ \underline{\alpha}_2(u) $, we have $ \underline{\alpha}_1(u) \geq \underline{\alpha}_2(u) $.  Theorem \ref{lower_bound_2} shows that when $ \alpha = \underline{\alpha}_1(u) \geq \underline{\alpha}_2(u) $, Problem \eqref{two_BVP_power_2} has a unique solution. We can prove that this unique solution must be the same as the solution to Problem \eqref{IVP}, since otherwise there would be two different solutions for the IVP in Eq.  \eqref{IVP}, leading to contradictions.  On the other hand, the monotonicity of $ \phi_{\textsf{ivp}}(y;u) $ in $ u $ implies that 
	\begin{align*}
	\phi_{\textsf{ivp}}(y;u) \geq \phi_{\textsf{ivp}}(y;u_{\mathsf{cdt}}), \forall u\in [u_s,u_{\mathsf{cdt}}],
	\end{align*} 
	which indicates that $ \phi_{\textsf{ivp}}(1;u)\geq  \phi_{\textsf{ivp}}(1;u_{\mathsf{cdt}}) = \overline{p} $. The lemma thus follows.
	
	\textbf{Proof of Lemma \ref{rho_s}}. Based on $ \phi_{\textsf{ivp}}(\rho_s;u_s) = \overline{p} $,  the value of $ \rho_s $ satisfies
	\begin{align}
	\frac{\overline{c}}{e^{\alpha_s^{\min} u_s}} - \frac{\overline{p}}{e^{\alpha_s^{\min} \rho_s}} = \frac{\overline{c}}{(\alpha_s^{\min})^{s-1}}\cdot \int_{\alpha_s^{\min} u_s}^{\alpha_s^{\min} \rho_s} \eta^{s-1}e^{-\eta}d\eta, 
	\end{align}
	which indicates that
	\begin{align}
	\int_{\alpha_s^{\min} u_s}^{\alpha_s^{\min} \rho_s} \eta^{s-1}e^{-\eta}d\eta = \frac{(\alpha_s^{\min})^{s-1}}{e^{\alpha_s^{\min} u_s}} - \frac{\overline{p} (\alpha_s^{\min})^{s-1}}{\overline{c} e^{\alpha_s^{\min} \rho_s}} .
	\end{align}
	Therefore, we have
	\begin{align}
	\int_{\alpha_s^{\min} u_s}^{\alpha_s^{\min} \rho_s} \eta^{s-1}e^{-\eta}d\eta = \frac{s^s}{\exp(\alpha_s^{\min} u_s)} - \frac{\overline{p} s^s}{\overline{c} \exp(\alpha_s^{\min} \rho_s)}.
	\end{align}
	Since $ \alpha_s^{\min} u_s = s $, Eq. \eqref{calculation_rho_s} in Lemma \ref{rho_s} follows.

\section{Proof of Theorem \ref{optimal_pricing_functions_theorem}}\label{proof_of_optimal_pricing_functions_theorem}
The optimal pricing functions in the three cases are obtained by solving the corresponding BVPs. Below we briefly sketch the proof.

\underline{\textsf{LUC}}: Based on Theorem \ref{lower_bound_LUC},  given $ v = f'^{-1}(\overline{p}) $,  and $ \alpha = \alpha_s^{\min} $, Problem \eqref{BVP_power_LUC} has a feasible solution $ \phi(y)  = \overline{c} \left(\varphi_{\textsf{luc}}(y)\right)^{s-1}$, where $ \varphi_{\textsf{luc}}(y) $ is the unique root to the following equation in variable $ \varphi_{\textsf{luc}} $:
\begin{equation}\label{unique_root_luc}
\int_{\frac{1}{v}}^{\frac{\varphi_{\textsf{luc}}}{y}}\frac{\eta^{s-1}}{P_s\left(\eta;\alpha_s^{\min}\right)} d\eta = \ln\left(\frac{v}{y}\right), y\in (0,v].
\end{equation}
In this case, $ \phi(v) = \overline{p} $. Since we just need $ \phi(v)\geq \overline{p} $,  based on Proposition \ref{convexity_of_varphi}, $ \phi_w(y) = sf'(y) $ is also a feasible solution to Problem \eqref{BVP_power_LUC}. When $ \phi_w(y) = \overline{p} $, the resource utilization level $ y $ is $ y =  f'^{-1}(\overline{p}/s) \triangleq w $. Therefore, based on the monotonicity property of $ \phi(y) $, for any $ m\in [w,v] $, we can find an optimal pricing function $\phi_m $ that is given by Eq. \eqref{opt_pricing_case_luc}. 

\underline{$ \textsf{HUC}_1 $}:  Based on Theorem \ref{lower_bound_1}, for any $ u\in [u_s,u_{\textsf{cdt}}] $ and $ \alpha = \underline{\alpha}_1(u) = \alpha_s^{\min} $, Problem \eqref{two_BVP_power_1} has a  unique solution $ \phi(y)  = \overline{c} \left(\varphi_{\textsf{huc}}(y)\right)^{s-1} $, where $ \varphi_{\textsf{huc}}(y) $ is the unique root to the following equation in variable $ \varphi_{\textsf{huc}} $:
\begin{equation}\label{unique_root_u}
\int_{\frac{1}{u}}^{\frac{\varphi_{\textsf{huc}}}{y}}\frac{\eta^{s-1}}{P_s\left(\eta;\alpha_s^{\min}\right)} d\eta = \ln\left(\frac{u}{y}\right), y\in (0,u]. 
\end{equation} 
The optimal pricing function in Eq. \eqref{opt_pricing_case_huc_1} thus follows. In particular, when $ u = u_s $, Proposition \ref{convexity_of_varphi} implies that the analytical solution to Eq. \eqref{unique_root_u} is given by $ \varphi_{\textsf{huc}} = \frac{y}{u_s} $. In this case, the optimal pricing function can be given by Eq. \eqref{phi_u_s}. 

\underline{$ \textsf{HUC}_2 $}:  Based on Proposition \ref{convexity_of_varphi} and Eq. \eqref{phi_2_u_opt}, the unique solution to Problem \eqref{two_BVP_power_2} directly follows when we have $ u = u_{\textsf{cdt}}\in (0,u_s) $ and $ \alpha = \underline{\alpha}_1(u_{\textsf{cdt}}) = \alpha_s(u_{\textsf{cdt}}) =  \frac{s-1}{u_{\textsf{cdt}} - u_{\textsf{cdt}}^s}  $,

\section{Proof of Proposition \ref{monotonicity_omega}} \label{proof_monotonicity_omega}
Below we revisit $ \textbf{IVP}(\varphi; \alpha,u)  $ for a better reference. 
\begin{align}\label{IVP_1_varphi_appendix}
\textbf{IVP}(\varphi; \alpha,u) 
\begin{cases}
\varphi'  = \alpha \frac{\varphi^{s-1} - y^{s-1}}{(s-1)\varphi^{s-1}}, 0<y<u, \\
\varphi(u) = 1,
\end{cases}
\end{align} 
According to Lemma \ref{existence_uniquess}, the above IVP has a unique solution which is denoted by $  \varphi_{\textsf{ivp}}(y;\alpha,u) $. 

We first prove that $  \varphi_{\textsf{ivp}}(y;\alpha,u) \geq  f_\varphi'(y) = y $ holds for all $ y\in (0,u] $. Note that when $ y = u\in (0,1) $, we have $  \varphi_{\textsf{ivp}}(u;\alpha,u) >  f_\varphi'(u) $ and $ \varphi_{\textsf{ivp}}'(u;\alpha,u)>0 $. Therefore, if $  \varphi_{\textsf{ivp}}(y;\alpha,u)\geq f_\varphi'(y) $ does not hold for all $ y\in (0,u] $, then there must exists at least one point within $ (0,u) $, say $ y_0\in (0,u) $, such that $ \varphi_{\textsf{ivp}}(y;\alpha,u) >  f_\varphi'(y) $ for all $ y\in (y_0,u] $, $ \varphi_{\textsf{ivp}}(y_0;\alpha,u) =   f_\varphi'(y_0) $, and $ \varphi_{\textsf{ivp}}(y;\alpha,u) < f_\varphi'(y) $ for all $ y\in (y_0-\epsilon,y_0) $, where $ \epsilon $ is a small positive value. However, when $ \varphi_{\textsf{ivp}}(y;\alpha,u) < f_\varphi'(y) $, $ \varphi_{\textsf{ivp}}'(y;\alpha,u) $ is negative according to the ODE, and thus $ \varphi_{\textsf{ivp}}(y;\alpha,u) $ is decreasing in $ (y_0-\epsilon,y_0) $. This means that $ \varphi_{\textsf{ivp}}(y;\alpha,u) > \varphi_{\textsf{ivp}}(y_0;\alpha,u) = f_\varphi'(y_0)>f_\varphi'(y) $ for all $y\in (y_0-\epsilon,y_0)  $, leading to a contradiction. Therefore, $  \varphi_{\textsf{ivp}}(y;\alpha,u) \geq  f_\varphi'(y) = y $ always holds for all $ y\in (0,u] $, and the monotonicity directly follows.

\section{Proof of Proposition \ref{monotonicity_alpha}}\label{proof_monotonicity_alpha}
The continuity direct follows since $\varphi_{\textsf{ivp}}(y;\alpha,u)$ is well defined for all  $ (\alpha, u)\in [1,+\infty)\times (0,1] $.

We first prove the monotonicity in $ u\in(0,1)$  by contradiction. Suppose we have $ u_1\in (0,1) $ and $ u_2\in (0,1) $, and assume w.l.o.g. that $ u_1>u_2 $, we can prove that  $ \varphi_{\textsf{ivp}}(y;\alpha,u_1) <  \varphi_{\textsf{ivp}}(y;\alpha,u_2) $ holds for all $ y\in (0,u_2) $. The idea is that these two functions cannot have any intersection point, since otherwise the IVP with the same ODE as  $ \textbf{IVP}(\varphi; \alpha,u)  $ but with the initial condition defined at the intersection point will have at least two solutions, namely  $ \varphi_{\textsf{ivp}}(y;\alpha,u_1) $ and $ \varphi_{\textsf{ivp}}(y;\alpha,u_2) $, which is impossible due to the uniqueness property. Note that it is also impossible for $ \varphi_{\textsf{ivp}}(y;\alpha,u_1) > \varphi_{\textsf{ivp}}(y;\alpha,u_2) $ since if this is the case, then $ \varphi_{\textsf{ivp}}(y;\alpha,u_1)  $ is not monotonic in $ y\in (0,u_1) $. Therefore, when $ u_1>u_2 $, we always have $ \varphi_{\textsf{ivp}}(y;\alpha,u_1) < \varphi_{\textsf{ivp}}(y;\alpha,u_2) $. 

We now prove the monotonicity in $ \alpha \in [1,+\infty) $. Suppose we have $ \alpha_1 $ and $ \alpha_2 $, and assume w.l.o.g. that $ \alpha_1>\alpha_2 $. We need to prove that $ \varphi_{\textsf{ivp}}(y;\alpha_1,u) <  \varphi_{\textsf{ivp}}(y;\alpha_2,u) $ for all $ y \in (0,u] $.  Based on the ODE in Eq. \eqref{IVP_1_varphi_appendix}, when $ \alpha _1>\alpha_2 $, the derivative of $ \varphi $ at $ y = u $ satisfies
\begin{align}
\varphi_{\textsf{ivp}}'(u;\alpha_1,u) > \varphi_{\textsf{ivp}}'(u;\alpha_2,u).  
\end{align}
Therefore, there must exist a small interval on the left-hand-side of $ u $, say $ [u-\sigma,u] $, where $ \sigma $ is a small positive value, such that $ \varphi_{\textsf{ivp}}(y,\alpha_1,u)< \varphi_{\textsf{ivp}}(y,\alpha_2,u)$ for all $ y \in  [u-\sigma,u] $. This can be easily proved based on the definition of derivative, which is omitted for brevity.

Now suppose $ \varphi_{\textsf{ivp}}(y;\alpha_1,u) <  \varphi_{\textsf{ivp}}(y;\alpha_2,u) $ does not hold for all $ y\in (0,u] $, then there must exist an intersection point, say $ u_0 $, such that $ \varphi(y;\alpha_1,u)< \varphi(y;\alpha_2,u)$ when $ y\in (u_0,u] $, and $ \varphi(y;\alpha_1,u)\geq  \varphi(y;\alpha_2,u)$ when $ y \in (u_0-\epsilon,u_0] $, where $ \epsilon $ is a very small positive value. Now let us consider two new IVPs with the same initial condition defined at point $ y = u_0 $, and denote the unique solutions to these two new IVPs by $ \varphi_{\mathrm{new}}(y;\alpha_1,u_0) $ and $ \varphi_{\mathrm{new}}(y;\alpha_2,u_0) $, according to the uniqueness property, we must have
\begin{align}
\varphi_{\mathrm{new}}(y;\alpha_1,u_0) = \varphi_{\textsf{ivp}}(y;\alpha_1,u),\forall y\in (0,u_0),\\
\varphi_{\mathrm{new}}(y;\alpha_1,u_0)= \varphi_{\textsf{ivp}}(y;\alpha_1,u), \forall y\in (0,u_0).
\end{align}
Since $ \varphi_{\textsf{ivp}}(y;\alpha_1,u)\geq  \varphi_{\textsf{ivp}}(y;\alpha_2,u)$ when $ y \in (u_0-\epsilon,u_0] $, which means that we cannot find a small interval on the left-hand-side of $ u_0 $, say $ [u_0-\hat{\sigma},u_0] $, such that $ \varphi_{\mathrm{new}}(y;\alpha_1,u_0)< \varphi_{\mathrm{new}}(y;\alpha_2,u_0)$. However, this contradicts with the fact that
\begin{align}
\varphi_{\mathrm{new}}'(u_0;\alpha_1,u_0) > \varphi_{\mathrm{new}}'(u_0;\alpha_2,u_0).  
\end{align}
Therefore, we have $ \varphi_{\textsf{ivp}}(y;\alpha_1,u)<  \varphi_{\textsf{ivp}}(y;\alpha_2,u)$ for all $ y\in (0,u] $. 

\section{Proof of Proposition \ref{convexity_of_varphi}}\label{proof_of_convexity}
Let us revisit the ODE of  $\textbf{IVP}(\varphi; \alpha,u) $ as follows:
\begin{align}
\varphi' =  \alpha\cdot \frac{\varphi^{s-1} - y^{s-1}}{(s-1)\varphi^{s-1}},
\end{align}
which can be written as follows:
\begin{align*}
\varphi^{s-1}-  y^{s-1} =  \frac{s-1}{\alpha} \varphi^{s-1} \varphi',
\end{align*}
Let us take derivative w.r.t. $ y $ in both sides, and after some simple manipulation, we have the following equation:
\begin{align*}
\varphi'' = \frac{-(s-1)\left(\varphi'\right)^2+\alpha\varphi'-\alpha\left(\frac{y}{\varphi}\right)^{s-2}}{\varphi}
\end{align*}

1) If $ \alpha = \alpha_s(u) $, we prove that the following equality
\begin{align}\label{quadratic_varphi}
-(s-1)\left( \varphi' \right)^2+\alpha \varphi' -\alpha\left(\frac{y}{\varphi}\right)^{s-2} = 0
\end{align}
holds for all $ y\in (0,u] $, which means $ \varphi'' = 0 $ and thus leads to the linearity of $ \varphi $.  We prove it by finding such a linear solution. Let us assume $ \varphi = Ay + B $ and  substitute it into Eq. \eqref{quadratic_varphi}, we have
\begin{equation}\label{key}
-(s-1)A^2+A \alpha_s(u)-\alpha_s(u)\cdot\left(\frac{1}{A}\left(1-\frac{B}{\varphi}\right)\right)^{s-2} = 0.
\end{equation}
To make the above equation hold for all  $ y\in (0,u] $, we let $ B = 0 $ and $ A $ be the solution to the following equation
\begin{equation}\label{key}
A^{s} - \frac{\alpha_s(u)}{s-1}A^{s-1}+\frac{\alpha_s(u)}{s-1} = P_s\big(A;\alpha_s(u)\big) = 0.
\end{equation}
Substituting $ \alpha_s(u) = \frac{s-1}{u-u^{s}} $ into the above equation leads to 
\begin{align}
A^{s} - \frac{A^{s-1}}{u-u^s}+\frac{1}{u-u^s} = 0.
\end{align}
Note that $ A = \frac{1}{u} $ is always a solution to the above equation for all $ u\in (0,1) $. 
Therefore, we have
\begin{align}
\varphi_{\textsf{ivp}}(y;\alpha_s(u),u) = \frac{y}{u}. 
\end{align}

2) If $ \alpha > \alpha_s(u) $, we prove that the following inequality 
\begin{align}\label{convex_inequality}
-(s-1)\left( \varphi'\right)^2+\alpha \varphi'-\alpha\left(\frac{y}{\varphi}\right)^{s-2} > 0
\end{align}
holds for all $ y\in (0,u] $, and thus $ \varphi''>0 $, leading to the convexity of $ \varphi $ in Proposition \ref{convexity_of_varphi}. 

In fact, according to the original ODE, we have 
\begin{align}
\varphi' = \frac{\alpha}{s-1}-\frac{\alpha}{s-1}\left(\frac{y}{\varphi}\right)^{s-1}.
\end{align}
Substituting the above equation to the left-hand-side of \eqref{convex_inequality}, we have
\begin{align*}
\ & -(s-1)\left( \varphi'\right)^2+\alpha \varphi'-\alpha\left(\frac{y}{\varphi}\right)^{s-2} \\
=\ & -(s-1)\left(\frac{\alpha}{s-1}-\frac{\alpha}{s-1}\left(\frac{y}{\varphi}\right)^{s-1}\right)^2+  \left(\frac{\alpha}{s-1}-\frac{\alpha}{s-1}\left(\frac{y}{\varphi}\right)^{s-1}\right)-\alpha\left(\frac{y}{\varphi}\right)^{s-2}\\
=\ & \frac{\alpha^2}{s-1}\left(\frac{y}{\varphi}\right)^{s-1} - \frac{\alpha^2}{s-1}\left(\frac{y}{\varphi}\right)^{2s-2} -\alpha\left(\frac{y}{\varphi}\right)^{s-2}\\
= \ & \alpha \left(\frac{y}{\varphi}\right)^{s-2}\left(\frac{\alpha}{s-1}\cdot\frac{y}{\varphi} - \frac{\alpha}{s-1}\left(\frac{y}{\varphi}\right)^{s} -1\right)\\
=\ & \alpha \left(\frac{y}{\varphi}\right)^{s-2}\cdot\frac{-1}{\left(\frac{\varphi}{y}\right)^{s}}\cdot P_s\left(\frac{\varphi}{y};\alpha\right)\\
=\ & -\alpha \left(\frac{y}{\varphi}\right)^{-2}\cdot P_s\left(\frac{\varphi}{y};\alpha\right), 
\end{align*}
where $ P_s\left(\frac{\varphi}{y};\alpha\right) $ is the characteristic polynomial with variable $ \varphi/y $. 

When $ \alpha > \alpha_s(u) $, according to the monotonicity of $ \varphi $ in $ \alpha $, we have $ \varphi < y/u $,
and thus $ \varphi/y < 1/u $. Therefore, when $ u\in (0,u_s) $, we have $ 1/u = R_s^+(\alpha_s(u)) $ and thus
\begin{align}
P_s\left(\frac{\varphi}{y};\alpha\right)  < P_s\left(\frac{\varphi}{y};\alpha_s(u)\right) <  P_s\left(R_s^+(\alpha_s(u));\alpha_s(u)\right) = 0. 
\end{align}
When $ u\in [u_s,1) $, we have
\begin{align}
P_s\left(\frac{\varphi}{y};\alpha\right)  < P_s\left(\frac{\varphi}{y};\alpha_s(u)\right) <  P_s\left(R_s^-(\alpha_s(u));\alpha_s(u)\right) = 0. 
\end{align}
Therefore, we have 
\begin{align}
-\alpha \left(\frac{y}{\varphi}\right)^{-2}\cdot P_s\left(\frac{\varphi}{y};\alpha\right) > 0,
\end{align}
which indicates that $ \varphi'' >  0 $, namely, $ \varphi $ is strictly convex.

3) If $ \alpha > \alpha_s(u) $, we can use the same approach to prove
\begin{align*}
-(s-1)\left( \varphi'\right)^2+\alpha  \varphi'-\alpha \left(\frac{y}{\varphi}\right)^{s-2} < 0,
\end{align*}
which leads to the concavity of $ \varphi $. We thus complete the proof.

\section{Proof of Proposition \ref{limit_finite}}
\label{proof_limit_finite}
The first part of this proposition directly follows Proposition \ref{convexity_of_varphi}. Specifically, based on the concavity of $ \varphi_{\textsf{ivp}}(y;\alpha,u)$, when $ 1\leq \alpha \leq  \alpha_s(u) $, we have 
\begin{align*}
\varphi_{\textsf{ivp}}(y;\alpha,u) \leq\ & \varphi_{\textsf{ivp}}'(u;\alpha,u)(y-u) + \varphi_{\textsf{ivp}}(u;\alpha,u)\\
=\ &  \frac{\alpha(1 - u^{s-1})}{(s-1)}y+ 1 -  \frac{\alpha(1 - u^{s-1})}{(s-1)}u\\
= \ &\frac{\alpha(1 - u^{s-1})}{(s-1)}y+ 1 -  \frac{\alpha}{\alpha_s(u)},
\end{align*}
which means that $ \Delta(\alpha,u) \leq 1 - \alpha/\alpha_s(u) $. In particular, when $ \alpha = \alpha_s(u) $, we have $ \Delta(\alpha,u) = 0 $.

For the second part, note that when $ u = \frac{1}{R_{\gamma}^{\texttt{+}}(\alpha)}, \frac{1}{R_{\gamma}^{\texttt{-}}(\alpha)} $, the above corollary holds naturally based on Corollary \ref{two_linear_solutions}. However, Corollary \ref{limit_finite} extends the limit to general $ u\in (0,1] $ as long as $ \Delta(\alpha,u) = 0 $.  We sketch the proof as follows. If  
$\Delta(\alpha,u) = 0 $,  then  we have 
\begin{align}\label{key}
\lim\limits_{y\rightarrow 0^+} \frac{\varphi_{\textsf{ivp}}\left(y; \alpha,u\right)}{y} 
=  \lim\limits_{y\rightarrow 0^+}\varphi'_{\textsf{ivp}}\left(y; \alpha,u\right) = \Delta'(\alpha,u)
\end{align}
Based on the  ODE of $ \textbf{IVP}(\varphi; \alpha,u)  $ in  Eq. \eqref{def_ODE}, we have
\begin{align}
\lim\limits_{y\rightarrow 0^+}\varphi'_{\textsf{ivp}}\left(y; \alpha,u\right) = \frac{\alpha}{s-1} \left(1- \lim\limits_{y\rightarrow 0^+} \left(\frac{y}{\varphi_{\textsf{ivp}}\left(y; \alpha,u\right)}\right)^{s-1}\right)
\end{align}
Therefore,  according to the power rule of limits, we have
\begin{align}
\Delta'(\alpha,u)
= \frac{\alpha}{s-1} \left(1-  \left(\frac{1}{\Delta'(u,\alpha)}\right)^{s-1}\right)
\end{align}
It is obvious that if $ \Delta'(\alpha,u) $ is not  finite and  positive, the above equation cannot hold. For finite $ \Delta'(\alpha,u) $, after a simple manipulation, the above equation indicates that $ P_s(\Delta'(\alpha,u);\alpha)= 0 $, and thus the proposition follows.

\end{document}